\documentclass[reqno]{amsart}
\usepackage{fullpage,amsfonts,amssymb,amsthm,mathrsfs,graphicx,color,framed,esint,stackrel,amsmath}
\usepackage{cancel,bm}
\usepackage{tikz}
\usepackage{bbm}
\usetikzlibrary{decorations.markings}
\usetikzlibrary{decorations.markings,decorations.pathmorphing}
\def\xr{4}
\def\yr{1}

\usepackage{booktabs,longtable}
\usepackage[initials]{amsrefs}

\DeclareMathAlphabet\mathbfcal{OMS}{cmsy}{b}{n}

\usepackage[colorlinks=true, pdfstartview=FitV, linkcolor=blue, citecolor=blue, urlcolor=blue]{hyperref}
\newcommand{\be}{\begin{equation}}
\newcommand{\ee}{\end{equation}}
\newcommand{\bea}{\begin{eqnarray}}
\newcommand{\eea}{\end{eqnarray}}
\newcommand{\bes}{\begin{equation*}}
\newcommand{\ees}{\end{equation*}}
\newcommand{\beas}{\begin{eqnarray*}}
\newcommand{\eeas}{\end{eqnarray*}}

\renewcommand{\d}{{\mathrm d}}

\newcommand{\im}{\mathrm{i}}
\newcommand{\e}{\mathrm{e}}

\def\tr{\mathop{\mathrm{tr}}\limits}

\newtheorem{theo}{Theorem}[section]
\newtheorem{thm}{Theorem}[section]

\newtheorem{quest}[theo]{Question}
\newtheorem{lem}[theo]{Lemma}
\newtheorem{rem}[theo]{Remark}
\newtheorem{problem}[theo]{Riemann-Hilbert Problem}
\newtheorem{remark}[theo]{Remark}
\newtheorem{prop}[theo]{Proposition}
\newtheorem{cor}[theo]{Corollary}

\newtheorem{defn}[thm]{Definition}

\DeclareMathOperator{\sgn}{sgn}

\newcommand{\E}{\mathbb{E}}

\DeclareFontFamily{U}{mathx}{}
\DeclareFontShape{U}{mathx}{m}{n}{<-> mathx10}{}
\DeclareSymbolFont{mathx}{U}{mathx}{m}{n}
\DeclareMathAccent{\widecheck}{0}{mathx}{"71}

\begin{document}
\title{Joint moments of characteristic polynomials in the circular Jacobi ensemble and Painlev\'e equations}
\author{Thomas Bothner}
\address{School of Mathematics, University of Bristol, Fry Building, Woodland Road, Bristol, BS8 1UG, United Kingdom}
\email{thomas.bothner@bristol.ac.uk}

\author{Fei Wei}
\address{School of Mathematical Sciences, Beijing University of Posts and Telecommunications, Beijing, China}
\email{weif0831@gmail.com}

\date{\today}

\keywords{Painlev\'e equations, integrable systems, random matrices,
characteristic polynomials}
\subjclass[2020]{Primary 34M55; Secondary 33E17, 60B20}

\begin{abstract}

In this paper, we establish a connection between joint moments of characteristic
polynomials and their derivatives in the Circular Jacobi Ensemble, a
generalisation of the Circular Unitary Ensemble, and solutions of nonlinear
Painlev\'e equations. For finite $N$, we show that these joint moments are
characterised by a solution of the $\sigma$-Painlev\'e V equation for all real
moment exponents in their admissible range. Under an appropriate large-$N$ scaling limit, we further prove that the
limiting joint moments admit a representation in terms of a solution of the
$\sigma$-Painlev\'e {\rm III}$'$ equation for a certain range of moment
exponents. As applications, we answer a question posed by Assiotis et al.\
\cite{AGS} by showing that the characteristic function of a distinguished
random variable is connected with the $\sigma$-Painlev\'e {\rm III}$'$
equation for complex parameters. Furthermore, we investigate joint moments involving higher-order derivatives
of characteristic polynomials in the Circular Jacobi Ensemble. As a
consequence, we extend a result of Assiotis et al.\ \cite{AGKW} concerning
the joint moments of a sequence of random variables arising from the
ergodic decomposition of Hua--Pickrell measures, from real parameters to
complex parameters.
\end{abstract}

\maketitle

\tableofcontents

\section{Introduction}

The joint moments of characteristic polynomials and their derivatives for Haar-distributed random unitary matrices have attracted significant attention over the past two decades; see, for example, \cite{conreyetal,Dehaye2008,Forrester_2006,Hughes,HKO,keating2000random,W}, and more recently \cite{AAKP,ACRS,ABGS,AGKW,AGKW2,AKW,Bailey_2019,B,Basor_2019,KW,keating-fei,SW}. One major reason for this interest originates from the close connection of such polynomials with the corresponding moments of the Riemann zeta function. Another important reason is the rich interplay with other research areas, such as random variables arising from the Hua–Pickrell determinantal point process \cite{AGKW,AKW}, and integrable systems of Painlev\'e type \cite{ABGS,AGKW,Bailey_2019,Basor_2019,keating-fei}.\smallskip

In this paper, we investigate the joint moments of the characteristic polynomial and its derivative for a generalisation of the Circular Unitary Ensemble, which is the unitary group $\mathbb{U}(N)$ endowed with the probability measure
\bea\label{generalized measure}
\mu_{N}^{(\delta)}
:=
\mathrm{const}\cdot
\det(I-U)^{\overline{\delta}}
\det(I-U^{-1})^{\delta}
\times
(\text{the Haar measure on } \mathbb{U}(N)),
\eea
where $\delta\in\mathbb{C}$ with $\textnormal{Re}(2\delta)>-1$. When $\delta=0$, this setup reduces to the Circular Unitary Ensemble. The measure \eqref{generalized measure} is referred to as the
finite-dimensional Hua--Pickrell measure on the unitary group by Borodin and
Olshanski \cite{Borodin_2001}. They studied the corresponding Hua--Pickrell
measures on infinite-dimensional Hermitian matrices and their ergodic
properties. We will gather more background for \eqref{generalized measure} in Subsection \ref{background}. In the work of Forrester and Witte \cite{WF}, the measure \eqref{generalized measure} is referred to as the cJUE distribution. In Bourgade, Nikeghbali and Rouault \cite{BNR2}, the measure \eqref{generalized measure} is referred to as the analogue of the Ewens measure, and the unitary group endowed with this measure is referred to as the Circular Jacobi ensemble. Following \cite{WF,BNR2}, we shall refer to $\mathbb{U}(N)$ endowed with the measure \eqref{generalized measure} as the Circular Jacobi Ensemble (CJE), and there is a substantial body of literature related to \eqref{generalized measure}; see, for example, \cite{B,BNR1,BNR2,H,N,P1,P2}. In particular, the joint eigenvalue probability density function of the CJE is given by the following probability measure, 
\bea\label{definitionofJacobimeasure}
\nu_{N}^{(\delta)}(\mathrm{d}\theta_{1},\ldots,\mathrm{d}\theta_{N})=\frac{1}{\mathrm{D}_{N,\delta}}\prod_{1\leq j<k\leq N}\left|\mathrm{e}^{\im\theta_{j}}-e^{\im\theta_{k}}\right|^2\prod_{j=1}^{N}\big(1-\mathrm{e}^{-\im\theta_{j}}\big)^{\delta}\big(1-\mathrm{e}^{\im\theta_{j}}\big)^{\overline{\delta}}\mathrm{d}\theta_j,\ \ \ \theta_j\in[0,2\pi),
\eea
where $\mathrm{D}_{N,\delta}>0$ is a normalisation constant given by
$
\mathrm{D}_{N,\delta}
=
2^{N^2+N(\delta+\overline{\delta})}
N!\,\mathrm{C}_{N,\overline{\delta}},
$
with $\mathrm{C}_{N,\overline{\delta}}$ defined below in
$(\ref{definitionoftheconstant0407})$. We are interested in a particular joint moment integral, taken against the measure \eqref{generalized measure}. Namely, for $A\in\mathbb{U}(N)$ and $(s,h)\in\mathbb{R}\times\mathbb{C}$ with $-1<\textnormal{Re}(2h)<2s+1$, the integral
\bea\label{object}
F_{N,\delta}(s,h):=\int_{\mathbb{U}(N)}|Z_A'(0)|^{2h}|Z_A(0)|^{2s-2h}\mathrm{d}\mu_{N}^{(\delta)}(A).
\eea
Here, $Z_A:[0,2\pi)\rightarrow\mathbb{R}$ relates to the ordinary characteristic polynomial $V_A(\theta)$ of $A\in\mathbb{U}(N)$ via
\begin{equation}\label{polydef}
	Z_A(\theta):=\exp\left(\frac{\im N}{2}(\theta+\pi)-\frac{\im}{2}\sum_{j=1}^N \theta_j\right) V_A(\theta);\ \ \ \ V_A(\theta):=\det\big(I-\e^{-\im\theta}A\big)=\prod_{j=1}^N\Big(1-\e^{\im(\theta_j-\theta)}\Big),
\end{equation}
and $\{\theta_j\}_{j=1}^N\subset[0,2\pi)$ are the eigenangles of $A$. In particular, $|Z_A(\theta)|=|V_A(\theta)|$ on $[0,2\pi)$. In recent works by Forrester \cite{F} and Assiotis, Gunes, Soor \cite{AGS} an analogue of \eqref{object}, denoted $F_{N,\beta,\delta}(s,h)$, was investigated, wherein \eqref{generalized measure} in the right hand side of \eqref{object} is replaced by its analogue in the Circular Jacobi $\beta$-Ensemble. For $\beta=2$, this is simply \eqref{object}. When $\delta = 0$, an explicit formula for $F_{N,\beta,0}(s,h)$ with $2h\in\mathbb{N}$ is given in \cite{F} in terms of combinatorial sums. For general $\delta$ and admissible real-valued $(s,h)$, the convergence of $F_{N,\beta,\delta}(s,h)$ (after suitable scaling) as $N \to \infty$ has been investigated in \cite[Theorem $1.1$]{AGS}.\smallskip

A central theme of this paper is to show that \eqref{object} admits
a representation in terms of Painlev\'e special functions
and to
identify the corresponding $\sigma$-Painlev\'e structures. To the best of our knowledge, for $\delta \neq 0$, it is unknown whether explicit, of Painlev\'e-type or otherwise, formulae\,exist for $F_{N,\beta,\delta}(s,h)$ when $\beta = 2$, either for integer exponents $(s,h)$ or, more generally, for non-integer ones. Addressing this question is one of the main goals of this paper. 

\subsection{$F_{N,\delta}(s,h)$ through the lens of integrable systems}
We begin with our first main result, which expresses \( F_{N,\delta}(s,h) \) in terms of a solution to the \( \sigma \)-Painlev\'e V equation.

\begin{prop}\label{mainthreorem}
Let $\delta \in \mathbb{C}$ with $\textnormal{Re}(2\delta) > -1$. Let $s \in \mathbb{R}$ and  with $2s + \textnormal{Re}(2\delta) > -1$. Set $w = s + \delta$, and let $N \in\mathbb{N}$. Let $\sigma_{N}(z;w)$ be the solution to the $\sigma$-Painlev\'e V equation
\begin{align}\label{transformpainleve}
	\left( 2z{\frac {{\rm d}^{2}\sigma_{N}}{{\rm d}{z}^{2}}}   \right) ^{2} 
= &\left(2\sigma_{N}-N(w+\overline{w})-N^2 +4\, \left( {\frac {{\rm d}\sigma_{N}}{{\rm d}z}} 
 \right) ^{2}-\left( 2z +2w -2\,\overline{w}   \right) {\frac {{\rm d}\sigma_{N}}{
{\rm d}z}}  \right) ^{2} \nonumber\\
&-\left( N-2\,{\frac {{\rm d}\sigma_{N}}{{\rm d}z}} 
 \right)  \left( N+ 2\,{\frac {{\rm d}\sigma_{N}}{{\rm d}z}}  
 \right)  \left( N+2\,\overline{w}+2\,{\frac {{\rm d}\sigma_{N}}{{\rm d}z}}  \right)  \left( N+2w-2\,{\frac {{\rm d}\sigma_{N}}{{\rm d}z}} \right),
\end{align}
given explicitly by
\bea\label{0403formula}
\sigma_{N}(z;w)=\frac{Nz}{2}+z\,n_1^{11}\!\left(-z,\frac{1}{2}(N+w),\frac{1}{2}(N+\overline{w}),N\right),
\eea
where $n_1^{11}(z,\mu,\nu,n)$ is the $(1,1)$-entry of $n_1(z,\mu,\nu,n)$ appearing in the unique solution to RHP \ref{RHP2}. Then 
for any $h\in \mathbb{R}$ with $0\leq 2h<2s+\textnormal{Re}(2\delta)+1$, or for any $h\in\mathbb{C}\setminus\mathbb{R}$
with $0<\textnormal{Re}(2h)<2s+\textnormal{Re}(2\delta)+1$ and $\textnormal{Re}(h)\notin \mathbb{Z}_{\geq 1}$,
\begin{align*}
F_{N,\delta}(s,h)
 =F_{N,\delta}(s,0)\times
 \begin{cases} \vspace{.2cm}
\dfrac{(-1)^{h}}{2}\displaystyle \dfrac{\d^{2h}}{\d\lambda^{2h}}\big(g_{1}(\lambda;w)+g_{2}(\lambda;w)\big)\bigg|_{\lambda=0}, \quad h \in \mathbb{Z}_{\geq 0}\bigskip, \\[1.0ex]
-\dfrac{\sin(\pi h)}{\pi} \Gamma(2h-2M)\displaystyle\int_0^\infty \dfrac{\d^{2M+1}}{\d\lambda^{2M+1}}(g_{1}\big(\lambda;w)+g_{2}(\lambda;w)\big)\frac{\d\lambda}{\lambda^{2h-2M}}, \quad  h\notin \mathbb{Z}_{\geq 0},
 \end{cases}
\end{align*}
where $M$ in the last case is a non-negative integer such that $\textnormal{Re}(h) \in (M, M+1)$. Here, for $\lambda > 0$, 
\begin{equation*}
g_1(\lambda;w) = \exp\left[ \int_0^\lambda \sigma_{N}(t;w)\frac{\d t}{t} \right],\ \ \ \ \textnormal{resp.}\ \ \ \ g_2(\lambda;w) = \exp\left[ \int_0^\lambda \sigma_{N}(t;\overline{w}) \frac{\d t}{t} \right],
\end{equation*}
where the integration path is taken in $\mathbb{H}_+ \setminus E_{N,w}$, resp. in $\mathbb{H}_+ \setminus E_{N,\overline{w}}$, with $E_{N,w}$ a discrete subset of $\mathbb{H}_+=\{z\in\mathbb{C}:\,\textnormal{Re}(z)>0\}$, avoiding the poles of $\sigma_{N}(z;w)$.
\end{prop}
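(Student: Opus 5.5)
The strategy is the standard one for joint moments with a derivative: write $|Z_A'(0)|^{2h}$ through a Fourier/Laplace-type integral representation, so that $F_{N,\delta}(s,h)$ becomes an $h$-dependent integral transform of a one-parameter family of Toeplitz-type averages. These averages will turn out to be Painlevé V tau functions.

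First, since $Z_A$ is real-valued on $[0,2\pi)$, the logarithmic derivative at $0$ is real:
\[
\frac{Z_A'(0)}{Z_A(0)}=\frac12\sum_{j=1}^N\cot\Big(\frac{\theta_j}{2}\Big)=:X_A .
\]
Hence
\[
F_{N,\delta}(s,h)=\int |Z_A(0)|^{2s}\,|X_A|^{2h}\,\mathrm{d}\mu_N^{(\delta)}(A)
=F_{N,\delta}(s,0)\,\E_{s}\big[|X|^{2h}\big],
\]
where $\E_s$ is expectation under the CJE with the combined weight $(1-\e^{-\im\theta})^{w}(1-\e^{\im\theta})^{\overline w}$, with $w=s+\delta$. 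Next I would introduce the moment generating functions
\[
g_1(\lambda)=\E_s\big[\e^{-\lambda X}\big],\qquad g_2(\lambda)=\E_s\big[\e^{\lambda X}\big].
\]
The symmetry $\theta\mapsto-\theta$ sends $X\mapsto -X$ and $w\mapsto\overline w$, so $g_2(\lambda;w)=g_1(\lambda;\overline w)$. Then $\frac12(g_1+g_2)$ is the (two-sided) moment generating function of $|X|$.

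For $h\in\mathbb{Z}_{\ge0}$, the formula follows by differentiating $2h$ times at $\lambda=0$. The factor $(-1)^h$ arises because the $2h$-th derivative of $\frac12(g_1+g_2)$ at $0$ equals $\E[X^{2h}]$ up to conventions; I would fix the sign by checking $h=1$. For non-integer $h$, I would use the classical identity, valid for $\textnormal{Re}(2h)\in(2M,2M+2)$,
\[
|x|^{2h}=-\frac{\sin(\pi h)}{\pi}\,\Gamma(2h-2M)\int_0^\infty \frac{\d^{2M+1}}{\d\lambda^{2M+1}}\big(\cosh\text{-type kernel in }\lambda x\big)\,\frac{\d\lambda}{\lambda^{2h-2M}} .
\]
This comes from the Mellin representation of $\int_0^\infty(1-\cos)\lambda^{-1-2h}$, analytically continued by integrating by parts $2M+1$ times. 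I would then apply Fubini, which needs $\E_s|X|^{2\textnormal{Re}h}<\infty$. That finiteness holds precisely when $2\textnormal{Re}h<2s+\textnormal{Re}(2\delta)+1$, by a tail estimate $P(|X|>t)\sim t^{-(2\textnormal{Re} w+1)}$ coming from an eigenvalue near $1$. The exclusion $\textnormal{Re}(h)\notin\mathbb{Z}_{\ge1}$ is where $\Gamma$ and $\sin$ degenerate.

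The core step is to identify $g_1$ with a Painlevé V tau function. Write $\e^{-\lambda X}=\prod_j \exp\big(-\tfrac{\lambda}{2}\cot(\theta_j/2)\big)$. Via Heine/Andréief, $g_1$ is the ratio of Toeplitz determinants with symbol
\[
(1-\e^{-\im\theta})^{w}(1-\e^{\im\theta})^{\overline w}\exp\big(-\tfrac{\lambda}{2}\cot\tfrac{\theta}{2}\big),
\]
a Fisher–Hartwig singularity coalescing with an essential singularity at $z=1$. Under a Cayley transform $x=\cot(\theta/2)$, this becomes a Hankel determinant on $\mathbb{R}$ with weight $(1+x^2)^{-N-\textnormal{Re} w}(x\pm\im)^{\cdots}\e^{-\lambda x/2}$. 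I would then use the orthogonal polynomial RHP (RHP \ref{RHP2}, with parameters $\mu=\frac12(N+w)$, $\nu=\frac12(N+\overline w)$, $n=N$). The step I expect to be the main obstacle is the Lax pair computation: I would take the $\lambda$-derivative of $\log g_1$ to get $\frac{\d}{\d\lambda}\log g_1 = \sigma_N(\lambda)/\lambda$ with $\sigma_N$ as in \eqref{0403formula} (the $Nz/2$ accounts for normalisation), and derive \eqref{transformpainleve} from the compatibility of the $z$- and $\lambda$-Lax equations. This can be done either by citing the Forrester–Witte cJUE/PV result with matching parameters, or by eliminating entries of the Lax matrices.

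Finally, I would handle the boundary condition $g_1(0)=1$ and the path of integration. The RHP is solvable for $\lambda>0$ but may fail on a discrete set $E_{N,w}$ in the half-plane. So I would argue that $\sigma_N/t$ is meromorphic, and that residues at its poles are integers since $g_1$ is entire in $\lambda$; this makes $\exp\int_0^\lambda$ path independent in $\mathbb{H}_+\setminus E_{N,w}$. I would also check that $\sigma_N(t)=O(t)$ as $t\to0$, so that the integral converges at $0$.
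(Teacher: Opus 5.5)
There is a genuine gap, and it begins with your definition of $g_1,g_2$. You set $g_1(\lambda)=\E_s[\e^{-\lambda X}]$ and $g_2(\lambda)=\E_s[\e^{\lambda X}]$ for real $\lambda$. But $X$ has only polynomial tails; you quote $\mathbb{P}(|X|>t)\sim t^{-(2\textnormal{Re}(w)+1)}$ yourself. So both expectations are $+\infty$ for every real $\lambda\neq0$. The same problem shows up elsewhere in your plan:
\begin{itemize}
\item your Toeplitz symbol $\exp(-\frac{\lambda}{2}\cot\frac{\theta}{2})$ is not integrable at $\e^{\im\theta}=1$;
\item the Cayley-transformed weight containing $\e^{-\lambda x/2}$ is not integrable on $\mathbb{R}$;
\item a cosh-type kernel integrated against $\lambda^{-(2h-2M)}$ diverges at infinity.
\end{itemize}
Checking $h=1$ does not repair the sign either. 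Even formally, with moment generating functions the formula at $h=1$ reads $F_{N,\delta}(s,1)=-F_{N,\delta}(s,0)\E_s[X^2]<0$.

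In the statement, $g_1$ is defined through $\sigma_N$. What the paper establishes (Lemma \ref{detailedanalysisofthesolution2}) is that $g_1(\lambda;w)=\E_N^{(w)}[\e^{-\im\frac{\lambda}{2}\sum_j x_j^{(N)}}]$ is a \emph{characteristic function}. With the symmetry $w\leftrightarrow\overline{w}$ this gives $g_1+g_2=2\E_N^{(s+\overline{\delta})}[\cos(\frac{\lambda}{2}\sum_j x_j^{(N)})]$. The factor $(-1)^h$ then comes from $\partial_\lambda^{2h}\cos(\lambda y)|_{\lambda=0}=(-1)^h y^{2h}$. The non-integer case is the cosine Mellin identity you allude to, not a cosh one. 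So everything must be done with $\e^{\mp\im\lambda X}$.

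Even after replacing $\lambda$ by $\im\lambda$, your passage ``Hankel determinant on $\mathbb{R}$, then RHP \ref{RHP2}'' lacks its key idea. RHP \ref{RHP2} is not an orthogonal polynomial problem on $\mathbb{R}$. It comes from the weight $\e^{-z(\zeta-1)}\zeta^{-2\mu}(\zeta-1)^{-2\nu}$ on the loop $\Sigma$ around $[1,\infty)$. By contrast, the weight $\e^{-\im\lambda x/2}(1+\im x)^{-N-w}(1-\im x)^{-N-\overline{w}}$ on $\mathbb{R}$ decays only like $|x|^{-2N-2\textnormal{Re}(w)}$. It lacks the finite moments needed for the standard RHP when $\textnormal{Re}(w)$ is small.

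The paper bridges this on the Fourier side. Proposition \ref{Hankelrepresentation} expresses the characteristic function through a Hankel determinant of $U(1-2\nu,2-2\nu-2\mu+j+k,\xi)$. Lemma \ref{lem1} then uses \eqref{e2} to turn it into moments of $\e^{-\xi(\zeta-1)}\zeta^{-2\mu}(\zeta-1)^{-2\nu}$ on $\Sigma$. You can also see this directly: for $\lambda>0$, deform $\mathbb{R}$ into a loop around the cut from $x=-\im$ to $-\im\infty$ and substitute $x=-\im(2\zeta-1)$. Up to constants and a triangular change of monomials this yields exactly these moments, times $\e^{-N\lambda/2}$. That factor is the source of $\sigma_N=u_N-\frac{N}{2}z$ in \eqref{defofsigma}. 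Only then do the constant-jump transformation \eqref{e11}, the Lax pair and Proposition \ref{prop4} give $\lambda\frac{\d}{\d\lambda}\log g_1=\sigma_N$ with $\sigma_N$ as in \eqref{0403formula}.

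Two smaller points:
\begin{itemize}
\item $g_1$ is not entire, only analytic in $\mathbb{H}_+$. Your integer-residue argument still works there, because $\sigma_N(z)/z$ is the logarithmic derivative of the continuation $\e^{-Nz/2}\mathcal{J}_N$.
\item Fubini is not absolutely justified as $\lambda\to\infty$ when $2\textnormal{Re}(h)-2M\leq1$. You need either a regularisation as in Lemma \ref{kernalexpresion}, or integration by parts using the exponential decay of $g_{1,2}$ and their derivatives, which follows from \eqref{largezasymptotics}.
\end{itemize}
To its credit, once the characteristic-function interpretation is in place, your probabilistic route gives the evenness of $g_1+g_2$ near $\lambda=0$ for free. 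The paper has to obtain this in Proposition \ref{gap structure} from the small-$z$ expansions of $u_N$.
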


When $w \in \mathbb{R}$, the large-$N$ limit version of Proposition \ref{mainthreorem} has been studied in \cite[Theorem 1.1]{BW}.
When $w \in \mathbb{C} \setminus \mathbb{R}$, the poles of the function $z \mapsto \sigma_N(z;w)$ at finite $N$ may accumulate inside the region
$\mathrm{Re}(z)>0$ as $N$ grows. But the locations of such accumulation points
are unclear, and so it becomes difficult to choose pole-avoiding integration paths in Proposition
\ref{mainthreorem}, as $N\rightarrow\infty$. For this reason, we do not employ the
Riemann-Hilbert asymptotic analysis method used in \cite{BW}
to study the large-$N$ behaviour of $F_{N,\delta}(s,h)$. Instead, we use the
probabilistic approach introduced in \cite{AGS,AKW}, based on a family of random variables, to analyse the
large-$N$ limit of $F_{N,\delta}(s,h)$. This allows us to obtain a limiting version
in terms of a $\sigma$-Painlev\'e III$^{\prime}$ solution for positive integers
$h$ and parameters $(s,\delta)$ under the same assumptions as in Proposition
\ref{mainthreorem}; see Proposition \ref{representationforintegerh} below.\smallskip

We now discuss more details used in the probabilistic approach of \cite{AGS,AKW} for $F_{N,\delta}(s,h)$. In \cite{AKW}, Assiotis, Keating, and Warren successfully employed certain random variables arising from a determinantal point process on $\mathbb{R}\setminus\{0\}$ to study the convergence properties of $F_{N,0}(s,h)/N^{s^2+2h}$ for real parameters $s$ and $h$ with $2s > -1$ and $0 < 2h < 2s + 1$. The key reason that this approach works is, for real $s$, the quantity $F_{N,0}(s,h)$ can be related to the moments of the trace of a Hermitian matrix with respect to the finite-dimensional Hua–Pickrell measure. Furthermore, based on results in \cite{AGS}, the convergence of $F_{N,\delta}(s,h)$ (under suitable scaling) holds in the range $s, h \in \mathbb{R}$, $\delta \in \mathbb{C}$ with $\textnormal{Re}(2\delta)>-1$, provided that $s + \textnormal{Re}(\delta) > 0$ and $0 < 2h < 2s + \textnormal{Re}(2\delta) + 1$. In this paper, we extend the convergence to the full range required for the statement of Proposition \ref{mainthreorem}.

\begin{prop}\label{convergencetheorem}
Let $\delta \in \mathbb{C}$ with $\textnormal{Re}(2\delta) > -1$. Let $s \in \mathbb{R}$ with $2s + \textnormal{Re}(2\delta) > -1$. Let $h \in \mathbb{C}$ with $0<\textnormal{Re}(2h)<2s+\textnormal{Re}(2\delta)+1$. Then \footnote{
There is a typo in \cite[Theorem 1.1]{AGS}, namely, a missing term $\frac{2s}{\beta}(\delta+\overline{\delta})$ in the exponent of $N$. This arises because, in the expression of $F_{\beta,\delta}(s,0)$ given in \cite{AGS}, the authors relied on \cite[Lemma 4.14]{DHR}, where there is a missing factor of $\log n$ multiplying $\frac{z}{\beta}(\delta+\overline{\delta})$. Here, we have corrected this typo for the case $\beta=2$.}\footnote{As stated in \cite[Theorem~1.1]{AGS}, for general $\beta$, the validity range of $(s,\delta)$ for the analogue of \eqref{convergenceformula} is subject to certain restrictions, while for $\beta=2$, the validity range of \eqref{convergenceformula} can be stated as in Proposition~\ref{convergencetheorem}, due to the corresponding validity range of $\mathrm{d}(s,\delta)$.}

\bea\label{convergenceformula}
 \lim_{N\to \infty} \frac{1}{N^{s^2+s(\delta+\overline{\delta})+2h }}F_{N,\delta}(s,h)  = \mathrm{d}(s,\delta)2^{-2h } \mathbb{E}\left[\left|\mathsf{X}_{1}(s+\overline{\delta})\right|^{2h}\right],
\eea
where 
\bea\label{defofds}
\mathrm{d}(s,\delta)=\frac{G(1+\delta+\overline{\delta})G(1+\delta+s)G(1+\overline{\delta}+s)}{G(1+\delta+\overline{\delta}+2s)G(1+\delta)G(1+\overline{\delta})},
\eea
and $G(z)$ is the Barnes G-function, cf. \cite[$\S 5.17$]{NIST}.
\end{prop}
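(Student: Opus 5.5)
We follow the probabilistic route of \cite{AKW,AGS}: write $F_{N,\delta}(s,h)/F_{N,\delta}(s,0)$ as a moment of a random variable built from eigenangles under a tilted CJE measure, prove convergence in distribution of that variable, and add uniform integrability to pass the moment to the limit. The first step is an exact identity. Since $Z_A$ is real on $[0,2\pi)$, we have
\[
\frac{Z_A'(0)}{Z_A(0)}=\frac12\sum_{j=1}^N\cot\Big(\frac{\theta_j}{2}\Big).
\]
Hence
\[
F_{N,\delta}(s,h)=F_{N,\delta}(s,0)\,2^{-2h}\,\mathbb{E}^{(s,\delta)}_N\Big[\Big|\sum_j\cot(\theta_j/2)\Big|^{2h}\Big].
\]
Here $\mathbb{E}^{(s,\delta)}_N$ is the expectation under the CJE with parameter $\delta$ replaced by $w=s+\delta$ (respectively $s+\overline\delta$). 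The $|Z_A(0)|^{2s}$ weight is exactly absorbed into $(1-e^{-i\theta})^{\delta}(1-e^{i\theta})^{\overline\delta}$, and the resulting phase factor has modulus one. The Cayley transform $x_j=\cot(\theta_j/2)$ maps this tilted measure to the finite Hua--Pickrell measure on $N\times N$ Hermitian matrices with complex parameter $s+\overline\delta$. Under this map the sum becomes the trace $\operatorname{tr}H$. The prefactor is computed by a Selberg/Morris integral: $F_{N,\delta}(s,0)=\prod$ of Gamma ratios. Barnes $G$ asymptotics then give $F_{N,\delta}(s,0)\sim \mathrm d(s,\delta)N^{s^2+s(\delta+\overline\delta)}$, with the corrected $\log N$ term noted in the footnote.

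Next we invoke the known convergence from \cite{AGS} and the Borodin--Olshanski ergodic decomposition. These give $\frac1N\operatorname{tr}H_N\to \mathsf X_1(s+\overline\delta)$ in distribution, and in fact almost surely under the consistent coupling of Hua--Pickrell measures. The point process of the $N/x_j$ converges to the Hua--Pickrell determinantal process, and the principal-value sum of its points is $\mathsf X_1$. We expect this part to work for all $\operatorname{Re}(2(s+\delta))>-1$ without change, since the kernel convergence only requires that range.

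It remains to show uniform integrability of $|N^{-1}\operatorname{tr}H_N|^{2h}$, which gives convergence of $\mathbb{E}|\cdot|^{2h}$. For complex $h$ we use $|x^{2h}|=|x|^{2\operatorname{Re}h}$, so it suffices to bound $\sup_N\mathbb{E}|N^{-1}\operatorname{tr}H_N|^{2\operatorname{Re}h+\epsilon}$ for some $\epsilon>0$ with $2\operatorname{Re}h+\epsilon<2s+\operatorname{Re}(2\delta)+1$. This uniform moment bound is the main obstacle. The extension beyond \cite{AGS} is exactly the regime $-\tfrac12<s+\operatorname{Re}\delta\le 0$, where the heavy tails of the one-point density, of order $|x|^{-2-2(s+\operatorname{Re}\delta)}$, matter most.

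To get the bound, we first try the approach of \cite{AKW}. Split $\operatorname{tr}H_N$ into the contribution of eigenvalues with $|x_j|\le N$ and those with $|x_j|>N$. For the bulk part, use second-moment bounds from the determinantal structure: variances of linear statistics computed from the explicit Jacobi-type kernel, uniformly in $N$. For the large eigenvalues, the expected number exceeding $tN$ decays like $t^{-(2s+2\operatorname{Re}\delta+1)}$ uniformly in $N$; this follows from the one-point function and the Christoffel--Darboux kernel asymptotics. A union/tail estimate then shows $\mathbb P(|\operatorname{tr}H_N|>tN)\lesssim t^{-(2s+2\operatorname{Re}\delta+1)}$, which is enough. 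With complex $\delta$ the weight is not positive on the real line, but $|x_j|$ statistics only see the modulus, and the measure remains a genuine probability measure. So the determinantal bounds go through, with $\operatorname{Re}\delta$ controlling the tails and $\operatorname{Im}\delta$ only affecting bounded oscillatory factors.

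Combining the uniform integrability with the prefactor asymptotics yields \eqref{convergenceformula}.
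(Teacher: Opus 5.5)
\textbf{Verdict: the approach is viable in principle, but the one step that carries the whole extension is asserted rather than proved, so as written there is a gap.}

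Your architecture is the paper's. It consists of the exact identity of Proposition~\ref{arepresentationofasanexpectation}, the Barnes-$G$ asymptotics of $F_{N,\delta}(s,0)$ (Lemma~\ref{dsillustration}), almost sure convergence of $\textnormal{Tr}(X_N)/N$ to $\mathsf{X}_1(w)$ with $w=s+\overline{\delta}$ \cite{Borodin_2001}, and convergence of the $2h$-th moments (Proposition~\ref{fullrange}). You also correctly place the only new difficulty in $-1<\textnormal{Re}(2w)\le 0$, since for $\textnormal{Re}(w)>0$ the uniform bound is Lemma~\ref{uniform integrability}.

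In that regime your argument rests entirely on uniform-in-$N$ estimates for the finite-$N$ correlation kernel of the complex-parameter pseudo-Jacobi ensemble. Specifically, you need $\mathbb{E}\#\{j:|x_j|>uN\}\le Cu^{-(2\textnormal{Re}(w)+1)}$ for all $u\ge 1$ and all $N$, plus $N$-uniform bounds on the mean and variance of $N^{-1}\sum_{|x_j|\le N}x_j$. Nothing you cite provides these in this regime, and they are the whole content of the extension.

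The truncated mean is the delicate part. For $\textnormal{Im}(w)\neq 0$ the one-point density is asymmetric, and the untruncated first moment (equal to $\textnormal{Im}(w)/\textnormal{Re}(w)$ where it exists) degenerates as $\textnormal{Re}(w)\downarrow 0$. So the remark that $\textnormal{Im}(\delta)$ only produces bounded oscillatory factors does not settle it.

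The tail step also does not follow as written. A union bound controls only the event that some $|x_j|>tN$, not the contribution of the points with $N<|x_j|\le tN$. To fix this you would need $|\sum_j a_j|^p\le \sum_j |a_j|^p$ for $p<1$, combined with the layer-cake formula applied to your counting bound for every $u\ge 1$.

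The paper (this is what following \cite{AKW} actually means) never touches finite-$N$ kernel asymptotics. It invokes \cite[Lemma 2.12]{AKW}, which reduces moment convergence to $\mathbb{E}\big[\big(\sum_i(\alpha_i^+)^2+\sum_i(\alpha_i^-)^2+\gamma_2\big)^h\big]<\infty$ for the limiting ergodic parameters. Using $\gamma_2=0$ almost surely \cite{Q}, this becomes two conditions on the explicit ${}_1F_1$-kernel of $\mathcal{C}^{(w)}$: $\int_{|x|\ge R}|x|^{2h}\mathrm{K}^{(w)}(x,x)\,\mathrm{d}x<\infty$ and $\int_{|x|<R}x^2\mathrm{K}^{(w)}(x,x)\,\mathrm{d}x<\infty$. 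Both follow from elementary asymptotics of ${}_1F_1$ at $0$ and $\infty$.

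These are the $N=\infty$ analogues of your tail and bulk estimates, with two advantages. They need no uniformity in $N$. And since only the nonnegative quantity $\sum_i\alpha_i^2$ enters, no cancellation between positive and negative points, and hence no control of an asymmetric mean, is required. Complex $h$ is then handled by Scheff\'e's theorem, which is equivalent to your Vitali argument.

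Two smaller corrections:
\begin{itemize}
\item The weight $(1+\mathrm{i}x)^{-N-w}(1-\mathrm{i}x)^{-N-\overline{w}}=(1+x^2)^{-N-\textnormal{Re}(w)}\mathrm{e}^{2\textnormal{Im}(w)\arctan x}$ is in fact positive, not merely a probability measure after normalisation.
\item The principal-value description of $\mathsf{X}_1(w)$ is known only for real $w$ \cite{Q}. For complex $w$ you must rely on the Borodin--Olshanski almost sure convergence alone. Also, the relevant rescaled points are $x_j/N$, not $N/x_j$.
\end{itemize}
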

\subsection{Background for the random variable $\mathsf{X}_{1}(w)$}\label{background}

In the following, we discuss some background about the real-valued random variable $\mathsf{X}_{1}(w)$ that appears in \eqref{convergenceformula}. Let $\mathbf{H}$ be the space of infinite Hermitian matrices, i.e.,
\[
\mathbf{H}=\big\{ X=[X_{ij}]_{i,j=1}^\infty:\ X_{ij}\in \mathbb{C},\, \overline{X_{ij}} = X_{ji} \big\}.
\]
Let $\mathbf{U}(\infty)$ be the group of infinite unitary matrices $U=[U_{ij}]_{i,j=1}^\infty$ with finitely many entries $U_{ij} \neq \delta_{ij}$. The same can be realised as the inductive limit of the unitary groups $\mathbb{U}(N)$ with respect to the natural inclusions $\mathbb{U}(N) \hookrightarrow \mathbb{U}(N+1)$. Now $\mathbf{U}(\infty)$ acts on $\mathbf{H}$ via conjugation and $\mathbf{U}(\infty)$-invariant measures on the space $\mathbf{H}$ have been studied in many fields, such as in random matrix theory and in harmonic analysis on infinite dimensional groups. The $\mathbf{U}(\infty)$-invariant measures relevant to our study are the \textit{Hua-Pickrell measures} on $\mathbf{H}$, denoted by 
\begin{equation*}
	\Big\{m^{(w)}:\ w\in \mathbb{C},\,\textnormal{Re}(2w)>-1\Big\},
\end{equation*}
and which are defined as the projective limit of a sequence of probability measures $m^{(w)}_N$ on $\mathbf{H}(N)$, the space of complex Hermitian $N\times N$ matrices with $N\in\mathbb{N}$. The existence of this limit is due to the consistency property of the family of measures 
\begin{equation*}
	\Big\{m^{(w)}_N:\ w\in\mathbb{C},\ \textnormal{Re}(2w)>-1\Big\}_{N\in\mathbb{N}}
\end{equation*}
with each $m^{(w)}_N$ of the form
\begin{align}\label{defofmNw}
\text{constant} \cdot \det\big((I+\im X)^{-w-N}\big) \det\big((I-\im X)^{-\overline{w}-N}\big)\times (\text{the Lebesgue measure on~} \mathbf{H}(N)).
\end{align}
We assume that $z^{\alpha}$ takes its principal value and the pair $(\mathbf{H}(N), m^{(w)}_N)$ is called the $N$-th pseudo-Jacobi ensemble, cf. \cite{Borodin_2001}. The joint eigenvalue density function in this ensemble is given by the following probability measure on $\mathbb{R}^N/\sigma(N)$, cf. \cite[formula (1.2)]{Borodin_2001}, where $\sigma(N)$ is the $N$-th symmetric group,
\bea\label{defofmathfrakmN}
\mathfrak{m}_N^{(w)} (\d x_1,\ldots,\d x_N) = \frac{1}{\mathrm{C}_{N,w}}\prod_{1\leq j\leq k\leq N}|x_{k}-x_{j}|^2
\prod_{j=1}^N (1+\im x_j)^{-w-N} (1-\im x_j)^{-\overline{w}-N}\mathrm{d}x_{j}.
\eea
Here
\bea\label{definitionoftheconstant0407}
\mathrm{C}_{N,w}=\frac{(2\pi)^{N}}{2^{N(N+w+\overline{w})}}\prod_{j=0}^{N-1}\frac{\Gamma(1+j)\Gamma(N+w+\overline{w}-j)}{\Gamma(N+w-j)\Gamma(N+\overline{w}-j)}.
\eea
as follows from the Selberg integral formula; see, for example, \cite[Eq.~(1.19)]{ForresterWarnaar_2008} and \cite{Mehta_2004}. In order to describe $\mathsf{X}_1(w)$, we let $X_N$ be a random matrix in $\mathbf{H}(N)$ with 
\begin{equation*}
	\textnormal{Law}(X_N) = m^{(w)}_N.
\end{equation*}
As proven by Borodin and Olshanski \cite[Theorem 5.1]{Borodin_2001}, $\textnormal{Tr}(X_N)/N$ converges, as $N\rightarrow\infty$, almost surely to a random variable which is precisely $\mathsf{X}_{1}(w)$, for any $w$ with $\textnormal{Re}(2w)>-1$. Throughout, for any fixed $w\in \mathbb{C}$ with $\textnormal{Re}(2w)>-1$, we denote expectations with respect to the measures $\mathfrak{m}_N^{(w)}$ and $m^{(w)}$ by $\E_N^{(w)}[\cdot]$ and $\E[\cdot]$, respectively. In particular,
\begin{equation*}
	\E\left[\mathrm{e}^{\frac{\im t}{2N}\mathrm {Tr}(X_{N})}\right]=\E_{N}^{(w)}\left[\mathrm{e}^{\frac{\im t}{2N}\mathrm {Tr}(X_{N})}\right],\ \ \ t\in\mathbb{R},
\end{equation*}
and by the above discussion we also have that
\bea\label{the convergence for the expectation}
\lim_{N\rightarrow \infty}\E_{N}^{(w)}\left[\mathrm{e}^{\frac{\im t}{2N}\mathrm {Tr}(X_{N})}\right]=\E\Big[\mathrm{e}^{\frac{\im t}{2}\mathsf{X}_{1}(w)}\Big]
\eea
uniformly on compact subsets of $\mathbb{R}$.\smallskip

At this point we briefly review the connection between $\mathsf{X}_{1}(w)$ and the ergodic decomposition of the Hua-Pickrell measures $m^{(w)}$ on $\mathbf{H}$. Any ergodic $\mathbf{U}(\infty)$-invariant probability measure on $\mathbf{H}$ can be parameterised by the points $\omega\in \Omega$ \cite[Proposition 4.1]{Borodin_2001}, where the space $\Omega$ is given by
\bea\label{definitionofOmega}
\Omega &=& \Big\{
\omega=(\alpha^+, \alpha^-, \gamma_{1}, \delta)\in \mathbb{R}_{\geq 0}^\infty \times \mathbb{R}_{\geq 0}^\infty \times \mathbb{R} \times \mathbb{R}_{\geq 0}:
\alpha^+ = (\alpha^+_1 \geq\alpha^+_2 \geq \cdots \geq 0), \nonumber\\
&& 
\,\, \,\, \ \ \alpha^- = (\alpha^-_1 \geq \alpha^-_2 \geq \cdots \geq 0),\ \ \ 
\sum (\alpha^+_i)^2 + \sum (\alpha^-_i)^2 \leq \delta,\ \gamma_{1} \in \mathbb{R}
\Big\}.
\eea
It is known by \cite[Proposition 4.4]{Borodin_2001} that there exists a unique probability measure $\mathcal{M}^{(w)}$ on $\Omega$ such that
\[
m^{(w)}=\int_\Omega \mathcal{E}_{\omega} \mathcal{M}^{(w)} (\mathrm{d}\omega),
\]
where
each ergodic measure $\mathcal{E}_{\omega}$ is determined by its Fourier transform. 
Moreover, for the parameters $\alpha^+$ and $\alpha^-$, Borodin and Olshanski showed in \cite[Theorem 6.1]{Borodin_2001} that, under $\mathcal{M}^{(w)}$,
\begin{align}\label{defOfmathcalC}
\{\alpha_{i}^+\} \sqcup \{-\alpha_{i}^-\} \stackrel{\mathsf{d}}{=} \mathcal{C}^{(w)},
\end{align} where $\mathcal{C}^{(w)}$ is a determinantal point process. For the parameter $\gamma_2:=\delta-\sum(\alpha_{i}^{+})^2-\sum(\alpha_{i}^{-})^2$, under $\mathcal{M}^{(w)}$, it was proven in \cite[Section 7]{Borodin_2001} that it is almost surely zero for $w=0$ and by Qiu \cite[Theorem 1.1]{Q} that it is almost surely zero for general $w$ with $\textnormal{Re}(2w)>-1$. For the parameter $\gamma_1$, with respect to the measure $\mathcal{M}^{(w)}$,
\begin{equation*}
	\gamma_1 \stackrel{\mathsf{d}}{=} \mathsf{X}_{1}(w).
\end{equation*}
Specifically for $w\in \mathbb{R}$ with $2w>-1$, Qiu \cite[Theorem 1.2]{Q} gave an identification for $\mathsf{X}_{1}(w)$, which is almost surely
\bea\label{describtionofX1w}
\lim_{m\rightarrow \infty}
\left[
\sum_{i=1}^\infty \alpha_i^+ \mathbbm{1}_{\{i\in \mathbb{N}: \alpha_i^+ > m^{-2}\}}
-
\sum_{i=1}^\infty \alpha_i^- \mathbbm{1}_{\{i\in \mathbb{N}:\alpha_i^- > m^{-2}\}}
\right].
\eea
To the best of our knowledge, it is unknown whether (\ref{describtionofX1w}) holds for $w\in \mathbb{C}\setminus\mathbb{R}$.

\subsection{Painlev\'e equation for the characteristic function of $\mathsf{X}_1(w)$}

We now state our third main result, namely that the characteristic function 
\begin{equation}\label{chara}
	t\mapsto \E\left[\mathrm{e}^{-\frac{\im t}{2} \mathsf{X}_{1}(w)}\right]
\end{equation}
is a tau-function associated with the $\sigma$-Painlev\'e III$'$ equation, for any $w\in \mathbb{C}$ with $\textnormal{Re}(2w)>-1$, see Theorem \ref{maintheoreminnote2}. This gives an affirmative answer to the following question proposed in \cite[Remark 1.7]{AGS}: 

\begin{quest}\label{que0}
Is the characteristic function \eqref{chara} connected to the 
$\sigma$-Painlev\'e~\textnormal{III$'$} equation for admissible $w\in\mathbb{C}$?
\end{quest}
Here, in our notation, the random variable $\mathsf{X}_{1}(w)$ is identical to the random variable $\mathsf{X}_{\beta}(w)$ with $\beta=2$ in \cite{AGS}. Note that, for real $w$, Question \ref{que0} was answered previously in \cite{ABGS}. In detail,
\begin{theo}\label{citetheorem}\cite[Theorem 1.2]{ABGS}
Let $w\in \mathbb{R}$ with $w>-\frac{1}{2}$ and let $t\in \mathbb{R}\setminus\{0\}$.
Define
\[
\tau_w(t) := t \frac{\d}{\d t} \ln \E\left[\mathrm{e}^{-\frac{\im t}{2}\mathsf{X}_{1}(w)}\right].
\]
Then $\tau_w=\tau_w(t)$ satisfies the $\sigma$-Painlev\'e {\rm III$'$} equation 
\bea\label{theequationsatisfiedbytauw}
\left( t \frac{\d^2 \tau_w}{\d t^2} \right)^2
= -4 t \left(  \frac{\d \tau_w}{\d t} \right)^3
+4\big(w^2+\tau^{(w)}\big) \left(  \frac{\d \tau_w}{\d t} \right)^2
+t  \frac{\d \tau_w}{\d t} -\tau_w
\eea
with boundary conditions
\[
\begin{cases} \vspace{.2cm}
\tau_w(0) = 0, & \textnormal{for } w>0, \\
\frac{\d}{\d t} \tau_w(t) \Big|_{t=0} = 0, & \textnormal{for } w>\frac{1}{2}.
\end{cases}
\]
\end{theo}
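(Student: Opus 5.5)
The plan is to obtain \eqref{theequationsatisfiedbytauw} as the scaling limit of a finite-$N$ $\sigma$-Painlev\'e V equation satisfied by the finite-$N$ characteristic function of $\mathrm{Tr}(X_N)/N$, and then read off the boundary conditions from moment bounds on $\mathsf{X}_1(w)$.

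\textbf{Finite-$N$ step.} For real $w>-\frac12$, Andr\'eief's identity applied to \eqref{defofmathfrakmN} gives
\[
\phi_N(t):=\E_N^{(w)}\Big[\mathrm{e}^{-\frac{\im t}{2N}\mathrm{Tr}(X_N)}\Big]
=\frac{N!}{\mathrm{C}_{N,w}}\det\Big[\int_{\mathbb{R}}x^{j+k}(1+x^2)^{-w-N}\mathrm{e}^{-\frac{\im t x}{2N}}\,\d x\Big]_{j,k=0}^{N-1},
\]
a Hankel determinant. Its weight is of Cauchy type, multiplied by a Fourier factor. Under the Cayley transform $x=\cot(\theta/2)$ it becomes a Toeplitz determinant on the circle. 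The resulting symbol carries a Fisher--Hartwig singularity of exponent $w$ at $\theta=0$ together with an essential singularity $\exp\big(-\frac{\im t}{2N}\cot\frac{\theta}{2}\big)$ at the same point.

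For weights of this type, the isomonodromy theory of Forrester--Witte \cite{WF} (or the Jimbo--Miwa--Okamoto theory of the associated Riemann--Hilbert problem) shows that $z\frac{\d}{\d z}\ln\phi_N$, up to an explicit affine correction, solves a $\sigma$-Painlev\'e V equation in the variable $z=t/N$. Concretely, I would derive the Lax pair from the Riemann--Hilbert problem for orthogonal polynomials with this weight, with a pole of order $2$ at $x=\pm\im$, and then compute the Jimbo--Miwa--Ueno Hamiltonian. This recovers an equation of the form \eqref{transformpainleve} with $\overline{w}=w$ and with $z$ replaced by $t/N$.

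\textbf{Scaling step.} Set $\tau_{N}(t):=t\frac{\d}{\d t}\ln\phi_N(t)$. Substituting $z=t/N$ into the finite-$N$ equation and expanding in powers of $N$, the leading terms should cancel identically. At the next order the surviving terms should collapse to \eqref{theequationsatisfiedbytauw}. This is the standard confluence PV$\to$PIII$'$, and I would verify it term by term, tracking the shift $N(w+\overline{w})+N^2$.

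To pass to the limit in the equation itself, I need $\tau_N\to\tau_w$ together with its first two derivatives, locally uniformly on $\mathbb{R}\setminus\{0\}$. The convergence \eqref{the convergence for the expectation} gives only $\phi_N\to\phi$ uniformly on compacts, so it does not suffice on its own.

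\textbf{Main obstacle: upgrading the convergence.} This is where I expect the real difficulty. I would try two routes.
\begin{itemize}
\item \emph{Analytic continuation.} The law of $\mathrm{Tr}(X_N)/N$ has polynomially decaying tails, so $\phi_N$ is smooth but not analytic at $0$. Instead, I would show that $\phi_N$ extends analytically to a sector around $t>0$, for example by contour-deforming the Fourier integrals in the Hankel entries. I would then obtain uniform bounds there. Vitali's theorem would upgrade uniform convergence to convergence of all derivatives.
\item \emph{Direct Riemann--Hilbert analysis.} Alternatively, I would perform a Deift--Zhou steepest descent analysis of the Riemann--Hilbert problem as $N\to\infty$, with a local parametrix built from the model problem for PIII$'$. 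This produces the limit equation directly.
\end{itemize}
In either case, I must also ensure that $\phi$ does not vanish, so that $\tau_w$ is well defined; for real $t$ this can be checked locally away from any zeros.

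\textbf{Boundary conditions.} Since $\E|\mathsf{X}_1(w)|^k<\infty$ exactly when $k<2w+1$ (by the tail of the Cauchy-type density, inherited in the limit), the following hold.
\begin{itemize}
\item If $w>0$, the first moment is finite. Hence $\phi$ is $C^1$ with $\phi(0)=1$, and $\tau_w(t)=t\phi'(t)/\phi(t)\to0$ as $t\to0$.
\item If $w>\frac12$, the second moment is finite. Hence $\phi\in C^2$, and
\[
\tau_w'(t)=\frac{\phi'}{\phi}+t\Big(\frac{\phi'}{\phi}\Big)'\;\longrightarrow\;\phi'(0).
\]
By the symmetry of $\mathsf{X}_1(w)$ for real $w$, which is inherited from the invariance $x\mapsto -x$ of \eqref{defofmathfrakmN}, we have $\phi'(0)=-\frac{\im}{2}\E[\mathsf{X}_1(w)]=0$.
\end{itemize}
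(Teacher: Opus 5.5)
Your plan follows the same route as the paper's proof of Theorem~\ref{maintheoreminnote2}, which reduces to this statement for real $w$. The steps are the same: a finite-$N$ $\sigma$-PV equation for $\phi_N(t)=\E_N^{(w)}[\mathrm{e}^{-\frac{\im t}{2N}\mathrm{Tr}(X_N)}]$, analytic continuation to $\mathbb{H}_+$ with $N$-uniform bounds, Montel/Vitali to get convergence of derivatives, and passage to the limit in the equation. In that limit two orders, $N^4$ and $N^3$, cancel before \eqref{theequationsatisfiedbytauw} appears at order $N^2$. Your derivation of the boundary values directly from $\phi\in C^1$, resp.\ $C^2$, and the symmetry of $\mathsf{X}_1(w)$ is fine, and slightly more direct than the paper's route through $\tau_{N,w}(0)$ and $\tau_{N,w}'(0)$.

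\textbf{The gap is the step you call the main obstacle.} Contour-deforming the entries of an $N\times N$ determinant gives analyticity, but no bound uniform in $N$. The paper gets the bound from a Laguerre-type representation (Lemma~\ref{transformation}, Andr\'eief, Propositions~\ref{integralrepresentation} and~\ref{theanalyticfunction}):
\[
\phi_N(z)=\frac{\mathrm{e}^{-z/2}}{N!\prod_{j=1}^{N}\Gamma(j)\Gamma(2w+j)}\int_{(0,\infty)^N}\prod_{i=1}^{N}\Big(y_i+\frac{z}{N}\Big)^{w}y_i^{w}\mathrm{e}^{-y_i}\prod_{1\leq j<k\leq N}(y_k-y_j)^2\,\d y .
\]
This is manifestly analytic on $\mathbb{H}_+$. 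Since $|y+z/N|^{w}\leq(y+M_K/N)^{w}$ for $w\geq0$ and $|y+z/N|^{w}\leq y^{w}$ for $w<0$, it gives $|\phi_N(z)|\leq\mathrm{e}^{M_K/2}$ on any compact $K\subset\mathbb{H}_+\cap\{|z|\leq M_K\}$. Montel and the identity theorem then finish the argument.

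\textbf{The same formula also fixes your other weak spots.}
\begin{enumerate}
\item \emph{Finite-$N$ Riemann--Hilbert step.} The formula is a Hankel determinant for the weight $y^{w}(y+s)^{w}\mathrm{e}^{-y}$ on $(0,\infty)$, with $s=t/N$. It has all moments and the standard PV singularity structure: Fuchsian at $0$ and $-s$, rank one at $\infty$. By contrast, the orthogonal-polynomial RHP you propose on $\mathbb{R}$, with weight $\omega=(1+x^2)^{-w-N}\mathrm{e}^{-\im tx/(2N)}$, lacks the moments $\int x^{N}\pi_N\omega$ and $\int x^{N+1}\pi_N\omega$ needed for $m_1,m_2$ unless $w$ is large. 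You would first have to rotate the contour into the lower half-plane, which is what the paper's $U$-function entries and the contour $\Sigma$ do. Also, $x=\pm\im$ are Fuchsian root singularities, not double poles.
\item \emph{Non-vanishing of $\phi$.} For $w\geq0$ the integrand dominates its value at $z=0$, so $\phi_N(t)\geq\mathrm{e}^{-t/2}$ and hence $\phi(t)\geq\mathrm{e}^{-|t|/2}>0$. Some positivity result of this kind is needed for $\tau_w$ to be defined on all of $\mathbb{R}\setminus\{0\}$; for $-\tfrac12<w<0$ the paper relies on \cite{ABGS}. ``Checking locally away from zeros'' only gives the equation off the zero set, as in Theorem~\ref{maintheoreminnote2}.
\item \emph{Moments of $\mathsf{X}_1(w)$.} Finiteness of $\E|\mathsf{X}_1(w)|^{k}$ for $k<2w+1$ does not follow from the tail of the finite-$N$ density alone. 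It needs the $N$-uniform bound of Lemma~\ref{uniform integrability} together with Fatou's lemma.
\end{enumerate}
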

The following result extends this connection between \eqref{chara} and the
$\sigma$-Painlev\'e equation to general $w\in\mathbb{C}$ with
$\operatorname{Re}(2w)>-1$.

\begin{theo}\label{maintheoreminnote2}
Let $w\in \mathbb{C}$ with $\textnormal{Re}(2w)>-1$. Then there is a discrete set
$\mathcal{Z}_{w}\subset \mathbb{R}_{+}$ consisting of the zeros of
\begin{equation*}
	\mathbb{R}_+\ni t\mapsto
	\mathbb{E}\Big[\e^{-\frac{\im t}{2}\mathsf{X}_1(w)}\Big],
\end{equation*}
such that the function
\begin{equation}\label{defofthelimitsolution0701}
\tau_w(t)
:=
t \frac{\d}{\d t}
\ln \E\Big[\mathrm{e}^{-\frac{\im t}{2}\mathsf{X}_{1}(w)}\Big],
\end{equation}
defined for
\[
t\in \mathbb{R}\setminus
\left(
\{0\}\cup\mathcal{Z}_{w}\cup(-\mathcal{Z}_{\overline{w}})
\right),
\]
satisfies the following $\sigma$-Painlev\'e {\rm III$'$} equation
\begin{equation}\label{limitequation}
\left(t\frac{\d^{2}\tau_w}{\d t^{2}}\right)^{2}
=
-4 t \left(\frac{\d\tau_w}{\d t}\right)^{3}
+4\bigg((\textnormal{Re}(w))^2+\tau_w\bigg)
\left(\frac{\d\tau_w}{\d t}\right)^{2}
+\Big(t+2\im\,\textnormal{Im}(w^2)\Big)
\frac{\d\tau_w}{\d t}
-\tau_w-(\textnormal{Im}(w))^2,
\end{equation}
with boundary conditions
\bea\label{boundaryconditionsforthelinitcaseatt=0}
\begin{cases}\vspace{.2cm}
\tau_w(0) = 0, & \textnormal{for } \textnormal{Re}(w)>0, \\
\frac{\d}{\d t} \tau_w(t) \Big|_{t=0} =-\frac{\im}{2}\frac{\textnormal{Im}(w)}{\textnormal{Re}(w)}, & \textnormal{for }\textnormal{Re}(w)>\frac{1}{2}.
\end{cases}
\eea
\end{theo}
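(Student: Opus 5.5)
The idea is to obtain Theorem \ref{maintheoreminnote2} as a large-$N$ limit of the finite-$N$ $\sigma$-Painlev\'e V structure in Proposition \ref{mainthreorem}. We identify the characteristic function of $\mathsf{X}_1(w)$ with a limit of the tau-function $g_1(\lambda;w)$, and then pass to the limit in \eqref{transformpainleve}.

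\textbf{Step 1: finite-$N$ identity.} Relate the finite-$N$ characteristic function to $g_1$. By the Cayley transform $x_j=\tan(\theta_j/2)$-type change of variables, the measure $\mathfrak{m}^{(w)}_N$ maps to the CJE-type weight on $\mathbb{U}(N)$ with exponent $w=s+\delta$. Under this map, $\mathrm{e}^{\frac{\im t}{2N}\mathrm{Tr}X_N}$ becomes an exponential of $\sum_j \cot(\theta_j/2)$. A Laplace-type identity (the one underlying Proposition \ref{mainthreorem}) should then give
\[
\E_N^{(w)}\big[\mathrm{e}^{-\frac{\im \lambda}{2}\mathrm{Tr}X_N}\big]=g_1(\lambda;w),\qquad \lambda>0,
\]
and likewise $g_2$ for $\lambda<0$ via $w\mapsto\overline w$, since $\overline{\E_N^{(w)}[\mathrm{e}^{\im\lambda \mathrm{Tr}X_N/2}]}=\E_N^{(\overline w)}[\cdots]$. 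In other words, $\sigma_N(\lambda;w)=\lambda\frac{\d}{\d\lambda}\ln \E_N^{(w)}[\mathrm{e}^{-\frac{\im\lambda}{2}\mathrm{Tr}X_N}]$ is analytic off the zero set of this expectation. We set $\lambda=t/N$ and define
\[
\tau_{N}(t):=\sigma_N(t/N;w)-\frac{t}{2}+c_N,
\]
with $c_N$ chosen to kill the constant $N(w+\overline w)+N^2$ terms; the natural normalization is $c_N=0$ after the linear shift $Nz/2$ is removed.

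\textbf{Step 2: scaling the equation.} Substitute $z=t/N$ and $\sigma_N=\tau_N+\frac{t}{2}$ (up to corrections) into \eqref{transformpainleve} and expand in $1/N$. In the product term, $(N-2\sigma')(N+2\sigma')\approx N^2$ at leading order. The other two factors, $(N+2\overline w+2\sigma')(N+2w-2\sigma')$, combined with the first squared bracket, produce cancellations of the $O(N^4)$ and $O(N^3)$ terms. The $O(N^2)$ remainder should be exactly \eqref{limitequation}, with coefficients $(\textnormal{Re}\,w)^2$, $2\im\,\textnormal{Im}(w^2)$ and $(\textnormal{Im}\,w)^2$ coming from $w,\overline w$ combinations such as $(w+\overline w)^2/4$ and $(w-\overline w)^2/4$. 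This is routine algebra; I would verify it symbolically.

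\textbf{Step 3: passing to the limit.} By \eqref{the convergence for the expectation}, $\E^{(w)}_N[\mathrm{e}^{-\frac{\im t}{2N}\mathrm{Tr}X_N}]\to\E[\mathrm{e}^{-\frac{\im t}{2}\mathsf{X}_1(w)}]$ locally uniformly on $\mathbb{R}$. To differentiate, I would extend to a complex neighbourhood. Since $\mathrm{Re}(2w)>-1$, the tails of $\mathrm{Tr}X_N/N$ are heavy, so analyticity in $t$ fails across $0$. Instead, I would work in the half-plane $\textnormal{Im}\,t<0$ for $t>0$ and use analyticity there, with uniform bounds from the density $(1+\im x)^{-w-N}$, to get locally uniform convergence of holomorphic functions. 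By Weierstrass, all derivatives converge, and Hurwitz gives that the zeros converge to the discrete set $\mathcal{Z}_w$. Hence $\tau_N\to\tau_w$ together with two derivatives away from $\{0\}\cup\mathcal{Z}_w\cup(-\mathcal{Z}_{\overline w})$, and the limiting equation holds. This uniform analytic control, and justifying the derivative convergence, is the main obstacle. A fallback is to use the Fourier representation with explicit decay of the law of $\mathsf{X}_1(w)$.

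\textbf{Step 4: boundary conditions.} For $\textnormal{Re}\,w>0$, $\mathsf{X}_1$ has a finite moment of order below $2\textnormal{Re}(w)+1>1$. Hence the characteristic function is $1+O(|t|^{1+\epsilon'})$ or $1+o(|t|)$, which gives $\tau_w(0)=0$. For $\textnormal{Re}\,w>\frac12$ the first moment exists, so $\tau_w'(0)=-\frac{\im}{2}\E[\mathsf{X}_1(w)]$. We compute this as $\lim_N \E^{(w)}_N[\mathrm{Tr}X_N]/N$. From the one-point density of \eqref{defofmathfrakmN}, and using a Selberg/Aomoto-type integral, $\E_N[\mathrm{Tr}X_N]=N\,\textnormal{Im}(w)/\textnormal{Re}(w)$ exactly. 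Uniform integrability (moments of order $>1$ bounded uniformly in $N$, by the same Selberg estimates) then gives \eqref{boundaryconditionsforthelinitcaseatt=0}.
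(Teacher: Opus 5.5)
Your architecture matches the paper's: scale the finite-$N$ $\sigma$-Painlev\'e V equation by $z=t/N$, pass to the limit via normal families, and read the boundary data off moments. However, the bookkeeping in Steps 1--2 is wrong, and Step 3 has a real gap.

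On the bookkeeping: you note yourself that $\sigma_N(\lambda;w)=\lambda\frac{\d}{\d\lambda}\ln\E_N^{(w)}[\e^{-\frac{\im\lambda}{2}\mathrm{Tr}X_N}]$. So the function converging to $\tau_w$ is $\sigma_N(t/N;w)$ itself, which is the paper's $\tau_{N,w}=u_N(t/N;w)-\frac t2$. Subtracting another $\frac t2$ double-counts the shift $\frac{Nz}{2}$. Your $\tau_N$ would then converge to $\tau_w-\frac t2$, which does not satisfy \eqref{limitequation}. Moreover, $\frac{\d\sigma_N}{\d z}=N\tau'$ is of order $N$, so $(N-2\sigma_N')(N+2\sigma_N')=N^2(1-4\tau'^2)$, not $\approx N^2$. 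Put $\tau=\sigma_N(t/N;w)$, $a=1+2\tau'$, $b=1-2\tau'$. Then \eqref{transformpainleve} reads exactly
\[
4N^2(t\tau'')^2=\big(N^2ab+N(wa+\overline{w}b)-2(\tau-t\tau')\big)^2-N^2ab\,(Na+2\overline{w})(Nb+2w).
\]
The $N^4$ and $N^3$ terms cancel identically. The $N^2$ coefficient, $4(t\tau'')^2=(wa-\overline{w}b)^2-4ab(\tau-t\tau')$, is \eqref{limitequation}, and no constant $c_N$ enters.

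The genuine gap is Step 3. To get $\tau_N,\tau_N',\tau_N''\to\tau_w,\tau_w',\tau_w''$ at real $t>0$, you need a continuation of $t\mapsto\E_N^{(w)}[\e^{-\frac{\im t}{2N}\mathrm{Tr}X_N}]$ to an open set \emph{containing} $(0,\infty)$, bounded uniformly in $N$ on compacts. Your choice fails on two counts:
\begin{itemize}
\item The half-plane $\{\textnormal{Im}\,t<0\}$ does not contain $(0,\infty)$, so convergence there gives no control of derivatives on the real axis.
\item Off the real axis the expectation does not even converge: $|\e^{-\im tx/(2N)}|$ grows exponentially in one direction of $x$ against a polynomially decaying density. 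So there are no ``uniform bounds from the density'' to invoke.
\end{itemize}
The correct domain is $\mathbb{H}_+=\{\textnormal{Re}\,t>0\}$. The missing ingredient is a representation in which $t$ enters analytically and controllably. The paper converts the Hankel determinant of $U$-functions, via Lemma~\ref{transformation} and Andr\'eief's identity, into the Laguerre-type integral \eqref{definitionofFN}, where $t$ appears only through $(y_i+t/N)^{\overline{w}}$ with $y_i>0$. Analyticity on $\mathbb{H}_+$ is then immediate. The moduli $|y_i+z/N|^{\textnormal{Re}\,w}$ can be controlled by comparison with the real-parameter ensemble $\E_N^{(\textnormal{Re}\,w)}$. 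The phases $\e^{\textnormal{Im}(w)\arg(y_i+z/N)}$ must be handled collectively, since bounding each by $\e^{\frac{\pi}{2}|\textnormal{Im}\,w|}$ costs $\e^{\frac{\pi}{2}N|\textnormal{Im}\,w|}$. Montel's theorem plus the identity theorem then give locally uniform convergence of all derivatives, since every subsequential limit equals $\E[\e^{-\frac{\im t}{2}\mathsf{X}_1(w)}]$ on $(0,\infty)$. After that your Hurwitz argument goes through. Your fallback via decay of the law of $\mathsf{X}_1(w)$ does not supply this, because what is needed is control of the finite-$N$ functions.

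Step 4 is essentially correct and a mild variant of the paper's. The identity $\E_N^{(w)}[\mathrm{Tr}X_N]=N\,\textnormal{Im}(w)/\textnormal{Re}(w)$ does hold: with $P$ the density of $\mathfrak{m}_N^{(w)}$, integrate $\sum_j\partial_{x_j}[(1+x_j^2)P]=(2N\,\textnormal{Im}\,w-2\,\textnormal{Re}\,w\sum_jx_j)P$. The paper instead reads this value off from \eqref{initialconditionnew}. Two corrections:
\begin{itemize}
\item $\phi(t)=1+O(|t|^{1+\epsilon})$ is false when $\E[\mathsf{X}_1(w)]\neq0$. For $\tau_w(0)=0$ you only need $\phi'$ bounded near $0$.
\item $\tau_w'(0)=\phi'(0)$ requires $t\phi''(t)\to0$, i.e.\ the second moment (available for $\textnormal{Re}(2w)>1$), not just the first.
\end{itemize}
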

We remark that the restriction
$
t\notin \{0\}\cup\mathcal{Z}_w\cup(-\mathcal{Z}_{\overline{w}})
$
is natural. Indeed, for $w\in\mathbb{C}\setminus \mathbb{R}$, the characteristic function of
$\mathsf{X}_1(w)$ may vanish for some values of $t$. This is in contrast
with the real case considered in Theorem \ref{citetheorem}, where it was
proved in \cite{ABGS} that the characteristic function is strictly
positive.\smallskip

As an application of Theorem \ref{maintheoreminnote2}, we now compute the second moment of $\mathsf{X}_{1}(w)$.

\begin{cor}\label{almostsurely}
Let $w\in \mathbb{C}$ with $\textnormal{Re}(2w)>1$. Then
\bea\label{thesecondmomentofXw}
\mathbb{E}\left[(\mathsf{X}_{1}(w))^2\right]=\frac{1+4\,\textnormal{Im}^2(w)}{(2\,\textnormal{Re}(w)+1)(2\,\textnormal{Re}(w)-1)}>0.
\eea
In particular, $\mathsf{X}_{1}(w)$ is not almost surely zero.
\end{cor}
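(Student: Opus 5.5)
The plan is to read off the second moment from the small-$t$ behaviour of $\tau_w$ in Theorem \ref{maintheoreminnote2}. Write $r=\textnormal{Re}(w)$, $q=\textnormal{Im}(w)$ and $\varphi(t)=\E[\e^{-\frac{\im t}{2}\mathsf{X}_1(w)}]$.

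\textbf{Step 1: finiteness of the second moment.} For $\textnormal{Re}(2w)>1$ I first check that $\E[|\mathsf{X}_1(w)|^{2}]<\infty$. Proposition \ref{convergencetheorem} with $h=1$ requires $2<2s+\textnormal{Re}(2\delta)+1$, i.e.\ $2\,\textnormal{Re}(w)>1$ after identifying $w$ with $s+\overline{\delta}$. It then gives $\E[|\mathsf{X}_1(w)|^{2h}]<\infty$ for $\textnormal{Re}(2h)<2r+1$, which includes $h=1$.

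\textbf{Step 2: small-$t$ expansion of $\tau_w$.} Let $m_1=\E[\mathsf{X}_1(w)]$ and $m_2=\E[\mathsf{X}_1(w)^2]$. Then $\varphi(t)=1-\frac{\im m_1}{2}t-\frac{m_2}{8}t^2+o(t^2)$, and hence
\[
\tau_w(t)=at+bt^2+o(t^2),\qquad a=-\frac{\im m_1}{2},\qquad b=-\frac{m_2-m_1^2}{4}.
\]
The boundary condition \eqref{boundaryconditionsforthelinitcaseatt=0} gives $a=-\frac{\im q}{2r}$, so $m_1=q/r$.

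\textbf{Step 3: extracting $b$ from \eqref{limitequation}.} I substitute $\tau_w=at+bt^2+ct^3+\dots$ into \eqref{limitequation} and compare coefficients.
\begin{itemize}
\item The $t^0$ coefficient is $4r^2a^2+4\im rq\,a-q^2$. With the value of $a$ it vanishes identically.
\item The $t^1$ coefficient is $b(16r^2a+8\im rq)$, which also vanishes. So neither order determines $b$.
\item At order $t^2$, the coefficient of the unknown $c$ is $24r^2a+12\im rq=0$, so $c$ drops out.
\end{itemize}
What remains at order $t^2$ is
\[
4b^2=16r^2b^2-4a^2b+b,\qquad\text{i.e.}\qquad b\big((16r^2-4)b-4a^2+1\big)=0.
\]
The root $b=0$ must be excluded. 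I would first try to argue this directly: $b=0$ would mean $\tau_w$ is linear to second order, which should be incompatible with the $t^2$ balance once the next-order terms are examined. If that fails, I would fall back on a limiting argument from finite $N$ that pins down the correct branch.

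Taking the nonzero root and using $a^2=-q^2/(4r^2)$ gives
\[
b=-\frac{q^2+r^2}{4r^2(4r^2-1)},\qquad m_2=\frac{q^2}{r^2}+\frac{q^2+r^2}{r^2(4r^2-1)}=\frac{1+4q^2}{(2r+1)(2r-1)},
\]
which is \eqref{thesecondmomentofXw}. Positivity is immediate since $2r>1$, and then $\mathsf{X}_1(w)\neq0$ with positive probability.

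\textbf{Main obstacle: regularity at $t=0$.} The comparison in Step 3 formally uses $\tau_w''$ near $0$, i.e.\ $\varphi'''$, which needs third moments. These need not exist when $2r+1\le 3$. Because $c$ cancels, only the weaker information $\tau_w'(t)=a+2bt+o(t)$ and $t\tau_w''(t)\to0$ in a suitable averaged sense is really needed.

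I would obtain this in one of two ways:
\begin{itemize}
\item Rewrite \eqref{limitequation} in a form involving only $\tau_w$, $\tau_w'$ and $(t\tau_w'')^2$. Then integrate it against test functions near $0$, or take limits along $t\to0$, so that only $\varphi,\varphi',\varphi''$, which are controlled by the second moment, appear.
\item Alternatively, run the same coefficient computation at finite $N$. There $\E_N^{(w)}$ moments of $\textnormal{Tr}(X_N)/N$ of any needed order are finite in a smaller but sufficient range. I would then pass to the limit using \eqref{the convergence for the expectation} and uniform integrability from Proposition \ref{convergencetheorem}.
\end{itemize}
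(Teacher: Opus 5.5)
Your expansion and the $t^2$-balance are correct, and your nonzero root is exactly the paper's $\mathsf{n}_1$. The paper derives the corollary from Proposition \ref{maintheorem2innote3}, i.e.\ $\E[\mathsf{X}_1(w)^2]=-4(\mathsf{n}_0^2+\mathsf{n}_1)$ with $\mathsf{n}_0=a$, $\mathsf{n}_1=b$, and it obtains $\mathsf{n}_1$ from the finite-$N$ equation before passing to the limit.

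\textbf{The gap: excluding $b=0$.} This exclusion is precisely the nondegeneracy the corollary asserts, and neither of your suggested routes supplies it as stated.
\begin{itemize}
\item The direct route cannot work. $\tau(t)=at$ with $a=-\frac{\im q}{2r}$ solves \eqref{limitequation} exactly: the right-hand side collapses to $(2ra+\im q)^2=0$. It also satisfies both conditions in \eqref{boundaryconditionsforthelinitcaseatt=0}. It is the $\tau$-function of the point mass at $q/r$, so no higher-order analysis of the ODE can rule it out.
\item Repeating the coefficient comparison in \eqref{theequationsatisfiedbytauNt} does not help by itself either. $\tau_{N,w}(t)=at$ is again an exact solution, and the $t^2$-balance becomes
\[
b_N\Big((16r^2-4)\,b_N-4a^2+1+\frac{2|w|^2}{rN}\Big)=0 ,
\]
which still has the root $b_N=0$.
\end{itemize}

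\textbf{The missing input: nondegeneracy at finite $N$.} We have $-4b_N=\mathrm{Var}\big(\mathrm{Tr}(X_N)/N\big)>0$. Indeed, the law of $\mathrm{Tr}(X_N)$ under $\mathfrak{m}_N^{(w)}$ has a density, and for $2r>1$ it has a finite second moment. This forces $b_N=\frac{4a^2-1-2|w|^2/(rN)}{16r^2-4}$. As a check, for $N=1$ this gives the variance $\frac{|w|^2}{r^2(2r-1)}$ of the density proportional to $(1+x^2)^{-1-r}\e^{2q\arctan x}$.

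\textbf{Why the finite-$N$ comparison is rigorous and closes the argument.}
\begin{itemize}
\item At finite $N$, $\tau_{N,w}$ has a convergent Puiseux/log-power expansion at $0$. Its terms beyond $t^2$ are $\mathcal{O}(t^{\min(3,1+2r)}|\ln t|)$.
\item The lower-order contributions of the non-integer powers are multiples of $2ra+\im q=0$, so they do not disturb the $t^2$-balance.
\item Finally $b_N\to b$. The first two moments of $\mathrm{Tr}(X_N)/N$ converge to those of $\mathsf{X}_1(w)$, by almost sure convergence together with Lemma \ref{uniform integrability} applied with an exponent in $(2,2r+1)$.
\end{itemize}
This yields $b=\frac{4a^2-1}{16r^2-4}\neq 0$. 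It also removes your regularity concern at $t=0$, since you never need to differentiate the limiting equation there. For finiteness of the second moment and convergence of moments, cite Lemmas \ref{convergenceat0} and \ref{uniform integrability} rather than Proposition \ref{convergencetheorem}.
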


The above result \eqref{thesecondmomentofXw} provides a partial answer to the question posed in \cite[Remark 1.6]{AGS}, which asks whether $\mathsf{X}_{1}(w)$ is not almost surely zero in the full range $w \in \mathbb{C}$ with $\textnormal{Re}(2w) > -1$. Corollary \ref{almostsurely} confirms this for $\textnormal{Re}(2w) > 1$. Prior to this, it was known that the statement holds for real values $w$ with $2w > -1$, as shown in \cite{AKW}.\smallskip

As a consequence of Proposition \ref{convergencetheorem} and Corollary \ref{almostsurely}, we obtain the following result concerning the leading coefficients of $F_{N,\delta}(s,h)$ in \eqref{object} for positive integers $h$.

\begin{prop}\label{representationforintegerh}
Let $\delta \in \mathbb{C}$ with $\textnormal{Re}(2\delta) > -1$. Let $s \in \mathbb{R}$ and  with $2s + \textnormal{Re}(2\delta) > -1$. Let $\mathrm{d}(s,\delta)$ be as in
(\ref{defofds}). Then for any $h\in\mathbb{N}$ with $2h<2s+\textnormal{Re}(2\delta)+1$,
\bea\label{expressionfortheabsolutemomentsofXw}
 \lim_{N\to \infty} \frac{F_{N,\delta}(s,h)}
 {N^{s^2+s(\delta+\overline{\delta})+2h }}
 =
 (-1)^{h}\mathrm{d}(s,\delta)
 \frac{\mathrm{d}^{2h}}{\mathrm{d}t^{2h}}
 \exp\bigg[\int_{0}^{t}
 \tau_w(u)\frac{\mathrm{d}u}{u}\bigg]\Bigg|_{t=0},\ \ \ w=s+\overline{\delta},
\eea
with $\tau_w(t)$ exactly as in Theorem
\ref{maintheoreminnote2}. The integration path in \eqref{expressionfortheabsolutemomentsofXw} lies in a neighbourhood of $0$ that is disjoint from $\mathcal{Z}_{w} \cup (-\mathcal{Z}_{\overline{w}})$, for $w=s+\overline{\delta}$. In particular, when $h=1$,
\bea\label{convergenceformula2}
 \lim_{N\to \infty} \frac{F_{N,\delta}(s,1)}
 {N^{s^2+s(\delta+\overline{\delta})+2}}
 =
 \frac{\mathrm{d}(s,\delta)}{4}
 \frac{1+4\,\textnormal{Im}^2(\delta)}
 {4(s+\textnormal{Re}(\delta))^2-1}
\eea
holds for $s\in \mathbb{R}$ with $2s+\textnormal{Re}(2\delta)>1$.
\end{prop}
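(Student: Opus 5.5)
Starting from Proposition \ref{convergencetheorem}, which applies because $h\in\mathbb{N}$ satisfies $0<2h<2s+\textnormal{Re}(2\delta)+1$, the limit on the left of \eqref{expressionfortheabsolutemomentsofXw} equals $\mathrm{d}(s,\delta)2^{-2h}\mathbb{E}[|\mathsf{X}_1(w)|^{2h}]$ with $w=s+\overline{\delta}$. Since $\mathsf{X}_1(w)$ is real-valued and $h$ is a positive integer, $|\mathsf{X}_1(w)|^{2h}=\mathsf{X}_1(w)^{2h}$. The task is therefore to express the ordinary $2h$-th moment through derivatives at $t=0$ of the characteristic function $\phi_w(t):=\mathbb{E}[\mathrm{e}^{-\frac{\im t}{2}\mathsf{X}_1(w)}]$. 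Note that $\mathrm{Re}(2w)=2s+\mathrm{Re}(2\delta)$.

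The first step is differentiability. The moment $\mathbb{E}[|\mathsf{X}_1(w)|^{2h}]$ is finite because it is the limit in Proposition \ref{convergencetheorem}. Alternatively, one can use Fatou's lemma together with $\mathrm{Tr}(X_N)/N\to\mathsf{X}_1(w)$ a.s. and the uniform boundedness of the scaled moments, which follows from the convergent sequence in \eqref{convergenceformula}. Hence $\phi_w\in C^{2h}(\mathbb{R})$, and dominated convergence gives
\[
\frac{\mathrm{d}^{2h}}{\mathrm{d}t^{2h}}\phi_w(t)\Big|_{t=0}=\Big(-\frac{\im}{2}\Big)^{2h}\mathbb{E}\big[\mathsf{X}_1(w)^{2h}\big]=(-1)^h2^{-2h}\mathbb{E}\big[\mathsf{X}_1(w)^{2h}\big].
\]
Multiplying by $(-1)^h\mathrm{d}(s,\delta)$ recovers the limit computed above.

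The second step identifies $\phi_w$ with $\exp[\int_0^t\tau_w(u)\,\mathrm{d}u/u]$ near $t=0$. Since $\phi_w(0)=1$ and $\phi_w$ is continuous, there is a neighbourhood $U$ of $0$ on which $\phi_w\neq0$. This $U$ is disjoint from $\mathcal{Z}_w\cup(-\mathcal{Z}_{\overline{w}})$, the zero set from Theorem \ref{maintheoreminnote2}, because $\phi_w(-t)=\overline{\phi_{\overline{w}}(t)}$; this symmetry should follow from the conjugation symmetry of \eqref{defofmathfrakmN} under $x\mapsto-x$, $w\mapsto\overline{w}$. On $U\setminus\{0\}$ we have $\tau_w(u)/u=(\ln\phi_w)'(u)$, so $\int_0^t\tau_w(u)\,\mathrm{d}u/u=\ln\phi_w(t)-\ln\phi_w(0)=\ln\phi_w(t)$. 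The integrand is integrable at $0$ because $\phi_w$ is $C^1$ with $\phi_w(0)=1$. Exponentiating gives the identity on $U$, and differentiating $2h$ times at $0$ yields \eqref{expressionfortheabsolutemomentsofXw}.

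For $h=1$ the formula reduces to the second moment: the limit equals $\frac{\mathrm{d}(s,\delta)}{4}\mathbb{E}[\mathsf{X}_1(w)^2]$. Corollary \ref{almostsurely} applies when $\mathrm{Re}(2w)>1$, that is, when $2s+\mathrm{Re}(2\delta)>1$. Substituting $\mathrm{Re}(w)=s+\mathrm{Re}(\delta)$ and $\mathrm{Im}^2(w)=\mathrm{Im}^2(\delta)$ into \eqref{thesecondmomentofXw} gives the denominator $(2\mathrm{Re}(w)+1)(2\mathrm{Re}(w)-1)=4(s+\mathrm{Re}(\delta))^2-1$, which produces \eqref{convergenceformula2}.

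The main obstacle is the regularity in the first step. One must check that finiteness of $\mathbb{E}[|\mathsf{X}_1(w)|^{2h}]$ is really guaranteed by Proposition \ref{convergencetheorem}, which implicitly requires uniform integrability of $(\mathrm{Tr}X_N/N)^{2h}$, for complex $w$ under exactly the condition $2h<\mathrm{Re}(2w)+1$. If this is not explicit in the proof of that proposition, I would derive it from the identity linking $F_{N,\delta}(s,h)$ to $\mathbb{E}^{(w)}_N[|\mathrm{Tr}X_N|^{2h}]$. Combined with \eqref{convergenceformula}, that identity bounds the $2h$-th moments uniformly in $N$, and Fatou's lemma then gives the finite limit moment.
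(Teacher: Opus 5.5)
Your argument is correct and is essentially the paper's. The paper's proof combines Proposition \ref{convergencetheorem} with Proposition \ref{maintheorem2innote3}, whose proof is exactly your moment--derivative identity together with nonvanishing of the characteristic function near $0$, and uses Corollary \ref{almostsurely} for $h=1$; the finiteness you worry about is supplied directly by Lemma \ref{convergenceat0}, since $h\geq 1$ forces $\mathrm{Re}(w)>\tfrac12>0$. One small slip, harmless because only zero sets are used: the reflection $x\mapsto -x$, $w\mapsto\overline{w}$ gives $\phi_w(-t)=\phi_{\overline{w}}(t)$ (equivalently $\overline{\phi_w(t)}$), not $\overline{\phi_{\overline{w}}(t)}$, which would force $\phi_w$ to be even.
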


\subsection{A connection between joint moments in the Circular Jacobi ensemble and $\mathsf{X}_n(w)$}

For $w\in \mathbb{R}$, a sequence of random variables $(\mathsf{X}_{n}(w))_{n\in\mathbb{N}}$ has been used in \cite{AGKW} to connect the joint moments of CUE characteristic polynomials of any order of derivatives. We now show that, for general $w \in \mathbb{C}$, this sequence can similarly be employed to connect the joint moments of an arbitrary number of derivatives of characteristic polynomials from the Circular Jacobi Ensemble with arbitrary positive real exponents. In this study, the power sums of the random points of $\mathcal{C}^{(w)}$ in \eqref{defOfmathcalC} play an important role. We begin with the following definition.

\begin{defn}\label{DefinitionRandomVariables0406}
Let $w\in \mathbb{C}$ with $\textnormal{Re}(2w)>-1$.  Associated to the determinantal point process $\mathcal{C}^{(w)}$ in \eqref{defOfmathcalC}, define random variables $(\mathsf{X}_n(w))_{n=2}^{\infty}$, where the random point configuration $\mathsf{P}$ follows the distribution of $\mathcal{C}^{(w)}$,
\begin{align}\label{defofxn}
    \mathsf{X}_n(w)&:= \sum_{x\in \mathsf{P}} x^n, \ \ \textnormal{for } n\in\mathbb{Z}_{\geq 2}.
\end{align}
Moreover, define random variables $(\mathsf{R}_n(w))_{n=0}^{\infty}$ by $\mathsf{R}_0(w)\equiv 1$, $\mathsf{R}_1(w)=-\mathsf{X}_1(w)$ and inductively, for $n\in\mathbb{Z}_{\geq 2}$,
\begin{equation}\label{PowerSumsToElemSymFunctions0408}
    \mathsf{R}_n(w)=-\sum_{j=1}^{n}\frac{(n-1)!}{(n-j)!}\mathsf{R}_{n-j}(w)\mathsf{X}_j(w).
\end{equation}
\end{defn}
We remark that \eqref{defofxn} is well defined due to \cite[$(13)$]{AN}.
\begin{prop}\label{representation0407}
Let $\delta\in \mathbb{C}$ with $\textnormal{Re}(2\delta)>-1$ and $k\in\mathbb{N}$. Let $n_1>\ldots> n_k \in \mathbb{Z}_{\geq 0}$, and $\{h_j\}_{j=1}^k\subset\mathbb{R}_+$ and put $s=\sum_{j=1}^{k}h_{j}$. Then we have
\begin{equation}
       \lim_{N\to \infty} \frac{ \int_{\mathbb{U}(N)}\prod_{j=1}^{k}\big|Z_{A}^{(n_{j})}(0)\big|^{2h_{j}}\mathrm{d}\mu_{N}^{(\delta)}(A)}{N^{s^2+s(\delta+\overline{\delta})+2\sum_{j=1}^k n_j h_j} }  = \mathrm{d}(s,\delta)2^{-2\sum_{j=1}^k n_j h_j  } \mathbb{E}\left[\prod_{j=1}^k\left|\mathsf{R}_{n_j}(s+\bar{\delta})\right|^{2h_j}\right],
        \nonumber
\end{equation}
    \begin{align*}
\lim_{N\to \infty} \frac{\int_{\mathbb{U}(N)}\prod_{j=1}^{k}\big|V_{A}^{(n_{j})}(0)\big|^{2h_{j}}\mathrm{d}\mu_{N}^{(\delta)}(A)}{N^{s^2+s(\delta+\overline{\delta})+2\sum_{j=1}^k n_j h_j} }  =&\ \mathrm{d}(s,\delta)2^{-2\sum_{j=1}^k n_j h_j  }\\
       &\ \ \ \ \ \times\mathbb{E}\left[\prod_{j=1}^k \Bigg|\sum_{m_{j}=0}^{n_{j}}(-\textnormal{i})^{m_j} \binom{n_{j}}{m_{j}} \mathsf{R}_{n_j-m_j}(s+\bar{\delta})\Bigg|^{2h_j}\right]
       \nonumber,
\end{align*}
where $\mathrm{d}(s,\delta)$ is given in \eqref{defofds}, and $Z_A(\theta),V_A(\theta)$ in \eqref{polydef}.
\end{prop}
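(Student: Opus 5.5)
We follow the probabilistic strategy of \cite{AGKW,AKW,AGS}, now with complex $w=s+\overline{\delta}$. The plan has four steps: express the integrand through elementary symmetric functions of a Cayley-transformed spectrum, change variables to the pseudo-Jacobi ensemble, identify the limit of the rescaled power sums with $\mathsf{X}_n(w)$, and control uniform integrability.

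\textbf{Step 1: the integrand.} Write $Z_A(\theta)=\mathrm{e}^{\im N(\theta+\pi)/2}\prod_j \mathrm{e}^{-\im\theta_j/2}(1-\mathrm{e}^{\im(\theta_j-\theta)})$. Differentiate $\log Z_A$ in $\theta$ and use $\tfrac{1}{2}\cot\big(\tfrac{\theta_j-\theta}{2}\big)$ type terms. At $\theta=0$ this gives
\[
Z_A^{(n)}(0)=Z_A(0)\,2^{-n}\,\widetilde{R}_n(x_1,\ldots,x_N),\qquad x_j=\cot(\theta_j/2).
\]
Here $\widetilde{R}_n$ is built from the power sums $p_m=\sum_j x_j^m$ by exactly the recursion \eqref{PowerSumsToElemSymFunctions0408}, with $p_1$ entering with sign $-1$. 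The precise form comes from the identity $\frac{\d}{\d\theta}\big(\tfrac{1}{2}\cot\tfrac{\theta_j-\theta}{2}\big)=\tfrac{1}{4}\big(1+\cot^2\big)$ plus lower order terms. More precisely, higher derivatives of $\log Z_A$ are polynomials in the $p_m$ whose top-degree part is $c_m p_m$, and we track the lower-order corrections, as in \cite[Section 2]{AGKW}.

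For $V_A$, we note $V_A(\theta)=\mathrm{e}^{-\im N(\theta+\pi)/2}\mathrm{e}^{\im\sum\theta_j/2}Z_A(\theta)$. Leibniz's rule then produces the binomial sums with factors $(-\im N/2)^{m}$. After scaling by $N^{-n}$, only these $N$-dependent factors matter, which gives the $(-\im)^{m_j}\binom{n_j}{m_j}$ combination. So the second formula follows from the first argument applied to a linear combination.

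\textbf{Step 2: change of variables.} Using $x_j=\cot(\theta_j/2)$, the measure $\prod|Z_A(0)|^{2s}\,\mathrm{d}\nu_N^{(\delta)}$ maps, up to the normalisation ratio, to the pseudo-Jacobi measure $\mathfrak{m}_N^{(w)}$ with $w=s+\overline{\delta}$. We verify this by tracking $(1-\mathrm{e}^{\im\theta})=-2\im\,\mathrm{e}^{\im\theta/2}\sin(\theta/2)$ and $1+\im x=\mathrm{e}^{\im\theta/2}/\sin(\theta/2)\cdot(\ldots)$. Hence
\[
\frac{\int\prod_j|Z_A^{(n_j)}(0)|^{2h_j}\,\d\mu_N^{(\delta)}}{F_{N,\delta}(s,0)}
=2^{-2\sum n_jh_j}\,\E_N^{(w)}\Big[\prod_j|\widetilde{R}_{n_j}(x)|^{2h_j}\Big].
\]
The known asymptotic $F_{N,\delta}(s,0)\sim \mathrm{d}(s,\delta)N^{s^2+s(\delta+\overline{\delta})}$, as used in Proposition \ref{convergencetheorem}, then handles the prefactor.

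\textbf{Step 3: convergence in distribution.} Rescale $x_j\mapsto x_j/N$. The point process of the rescaled eigenvalues of $\mathfrak{m}_N^{(w)}$ converges to $\mathcal{C}^{(w)}$ by Borodin--Olshanski. The vector $(N^{-1}p_1,N^{-2}p_2,\ldots,N^{-n}p_n)$ should converge jointly in distribution to $(\mathsf{X}_1(w),\mathsf{X}_2(w),\ldots)$. For $p_1$ this is \cite[Theorem 5.1]{Borodin_2001}. For $m\ge2$ the sums are absolutely convergent and follow from convergence of the correlation kernel together with tail control near $0$ (via \cite[(13)]{AN}). The lower-order terms in $\widetilde{R}_n$ carry extra factors of $N^{-1}$ and vanish. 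The continuous mapping theorem then gives convergence of $N^{-n_j}\widetilde{R}_{n_j}\to\mathsf{R}_{n_j}(w)$.

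\textbf{Step 4: uniform integrability (main obstacle).} Upgrading distributional convergence to convergence of the $2h_j$-moments is the hardest step. By H\"older's inequality it suffices to bound $\E_N^{(w)}[|N^{-n}p_n|^{q}]$ uniformly in $N$ for some $q>2s$-type exponent strictly beyond the ones needed. Note that the admissibility condition $2h<2s+\mathrm{Re}(2\delta)+1$ is automatic here, since $s=\sum h_j$ and $\mathrm{Re}(2\delta)>-1$.

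The complex weight has modulus
\[
|(1+\im x)^{-w-N}(1-\im x)^{-\overline{w}-N}|=(1+x^2)^{-\mathrm{Re}(w)-N}\,\mathrm{e}^{2\,\mathrm{Im}(w)\arg(1+\im x)}.
\]
The exponential factor is bounded, so tail estimates reduce to those of the real-parameter ensemble with parameter $\mathrm{Re}(w)$, up to constants. The normalisations differ, but their ratio is explicit from \eqref{definitionoftheconstant0407}. We therefore plan to import the moment bounds of \cite{AKW,AGKW,AGS} for real parameters, which are obtained via the largest rescaled eigenvalue and the sum $\sum x_j^2/N^2$, and transfer them through this domination. The danger is the largest rescaled point, whose moment tail is governed by $\mathrm{Re}(w)$ exactly in the admissible range. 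This is the step requiring care.
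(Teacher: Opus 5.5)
Your Steps 1--2 essentially match the paper's reduction. The paper writes $2^{n}Z_A^{(n)}(0)/Z_A(0)=\Theta_n(x)=\sum_{\ell}a_{n,\ell}\,\mathrm{e}_\ell(x)$ with $x_j=\cot(\theta_j/2)$, and turns the ratio with $F_{N,\delta}(s,0)$ into an $\mathbb{E}_N^{(s+\overline{\delta})}$-expectation. Step 4, however, fails as designed. Under $\mathfrak{m}_N^{(w)}$ the one-point density decays like $|x|^{-2\textnormal{Re}(w)-2}$. So already at fixed $N$ one has $\mathbb{E}_N^{(w)}[|p_n|^q]=\infty$ whenever $nq\geq 2\textnormal{Re}(w)+1=2s+\textnormal{Re}(2\delta)+1$. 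For example, take $\delta=0$, $k=1$, $n_1=2$, $h_1=s=1$: your plan needs $\mathbb{E}_N^{(1)}[p_2^{2}]$, which is infinite. The quantity that actually appears is $\mathbb{E}_N^{(1)}[|p_1^2-p_2-N|^2]=\mathbb{E}_N^{(1)}[|2\mathrm{e}_2-N|^2]$, which is finite. Hence no termwise H\"older bound on rescaled power sums can give uniform integrability in the stated range. The moments exist only because of the cancellation inside the Newton combination. That combination turns your $\widetilde R_n$ into $(-1)^n n!\,\mathrm{e}_n(x)$ plus lower-order terms, a polynomial in which each eigenvalue occurs at most to the first power. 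Consequently $\prod_j|\mathrm{e}_{n_j}|^{2h_j}$ contains each $x_i$ at most to the power $2s<2\textnormal{Re}(w)+1$. Your domination of the complex weight by the real one is also not uniform in $N$. The factor $\mathrm{e}^{2\textnormal{Im}(w)\arctan x_j}$ is bounded per eigenvalue, but the product over $N$ eigenvalues is only controlled by $\mathrm{e}^{\pi N|\textnormal{Im}(w)|}$.

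The missing idea is the one the paper uses: work with $\mathrm{e}_\ell(x^{(N)})$ directly and exploit the $\mathbf{U}(\infty)$-invariance of $m^{(w)}$, $w=s+\overline{\delta}$. Since $\mathrm{e}_\ell$ of the eigenvalues of the corner $X_N$ is the sum of its $\ell\times\ell$ principal minors, $\binom{N}{\ell}^{-1}\mathrm{e}_\ell(x^{(N)})$ is a reverse martingale under any such invariant law. \cite[Proposition 2.11]{AGKW} turns this into almost sure and $L^{2s}$ convergence of $N^{-\ell}\mathrm{e}_\ell(x^{(N)})$, given only $\mathbb{E}[|\det X_\ell|^{2s}]<\infty$, which holds because $2s<2\textnormal{Re}(w)+1$. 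No largest-eigenvalue tail estimate is needed, and complex $w$ causes no difficulty. The terms $a_{n,\ell}\mathrm{e}_\ell$ with $\ell<n$ are negligible because $a_{n,\ell}=\mathcal{O}(N^{(n-\ell)/2})$. Working on this coupling also repairs your Step 3. Joint convergence of $(N^{-1}p_1,N^{-2}p_2,\ldots)$ cannot be read off from kernel convergence, for two reasons. First, $\mathsf{X}_1(w)=\gamma_1$ is not known to be a (continuous) function of the point configuration for complex $w$. Second, $N^{-2}p_2$ converges to $\sum(\alpha_i^+)^2+\sum(\alpha_i^-)^2+\gamma_2$, which equals $\mathsf{X}_2(w)$ only by Qiu's theorem $\gamma_2=0$. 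On the projective-limit space all these limits hold almost surely, jointly (Borodin--Olshanski, Qiu, \cite{AN}). Newton's identities then identify $\lim_{N}N^{-n}\Theta_n$ with $\mathsf{R}_n(w)$.
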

As an application of Proposition \ref{representation0407}, we compute some moments of the random variables $\mathsf{R}_{n}(w)$ using the reproducing kernel associated with the probability measure defined in (\ref{definitionofJacobimeasure}). As a consequence of the result, we conclude that the random variables $\mathsf{R}_{n}(w)$ are almost surely non-zero, a fact that seems hard to prove directly via the point process representations of $\mathsf{R}_{n}(w)$. 

\begin{cor}\label{non-zeroproperty}
Let $w\in \mathbb{C}$ with $\textnormal{Re}(2w)>1$. Let $n\in \mathbb{N}$. Then we have
\begin{align}\label{2ndmomentexplicit}
 \E\left[\left|\mathsf{R}_{n}(w)\right|^2\right] =  \Gamma(w+\overline{w}) \Gamma(w+\overline{w}-1)
&\sum_{i,j=0}^{n}\binom{n}{i}\binom{n}{j}(-2)^{i+j}\sum_{\ell_1=0}^i \sum_{\ell_2=0}^j (-1)^{\ell_1+\ell_2}\nonumber\\
\times \, &\frac{(w-1)^{(\ell_{1})}(\overline{w}-1)^{(\ell_{2})}}{\Gamma(w+\overline{w}-1+\ell_1) \Gamma(w+\overline{w}-1+\ell_2)}
\frac{1}{i+j+w+\overline{w}-1}, \nonumber \\
\mathbb{E}\left[\Bigg|\sum_{m=0}^{n}(-\textnormal{i})^{m} \binom{n}{m} \mathsf{R}_{n-m}(w)\Bigg|^2\right]=2^{2n} &\Gamma(w+\overline{w}) \Gamma(w+\overline{w}-1)\sum_{i,j=0}^{n}\binom{n}{i}\binom{n}{j}(-1)^{i+j}\nonumber\\
\times \, &\frac{(w-1)^{(i)}(\overline{w}-1)^{(j)}}{\Gamma(w+\overline{w}-1+i) \Gamma(w+\overline{w}-1+j)}
\frac{1}{2n+w+\overline{w}-1},
\nonumber
\end{align}
where $(a)^{(n)}:=\prod_{\ell=0}^{n-1}(a+\ell)$ is the rising factorial with $(a)^{(0)}=1$. 
Moreover, we have that
\begin{align}
\mathbb{P}\left(\left|\sum_{m=0}^{n}(-\textnormal{i})^{m} \binom{n}{m} \mathsf{R}_{n-m}(w)\right|>0\right)>0 \;\; & \textnormal{ and } \;\; \mathbb{P}\left(\left|\mathsf{R}_n(w)\right|>0\right)>0.\nonumber
\end{align}
\end{cor}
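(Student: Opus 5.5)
Apply Proposition \ref{representation0407} with $k=1$, $h_1=1$ and $s=h_1=1$, and read off the second moments from the finite-$N$ integrals. Set $w=1+\overline{\delta}$, so that $\delta=\overline{w}-1$. The condition $\textnormal{Re}(2\delta)>-1$ is then equivalent to $\textnormal{Re}(2w)>1$, which is exactly the hypothesis. Proposition \ref{representation0407} gives
\begin{equation*}
\E\big[|\mathsf{R}_n(w)|^2\big]=\frac{2^{2n}}{\mathrm{d}(1,\delta)}\lim_{N\to\infty}N^{-1-(\delta+\overline{\delta})-2n}\int_{\mathbb{U}(N)}|Z_A^{(n)}(0)|^2\,\mathrm{d}\mu_N^{(\delta)}(A),
\end{equation*}
and the analogous identity holds with $V_A^{(n)}$ and the binomial combination of the $\mathsf{R}$'s. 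Both claimed formulas therefore reduce to computing the large-$N$ asymptotics of a finite-$N$ integral of $|{\rm polynomial\ derivative}|^2$.

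The key observation is the identity $|V_A(0)|^2=\prod_j|1-\e^{\im\theta_j}|^2$. Absorbing it into the weight, the integral becomes an integral over the CJE with parameter $\delta+1$, up to the ratio of normalisations $\mathrm{D}_{N,\delta+1}/\mathrm{D}_{N,\delta}$:
\begin{equation*}
\int |V_A^{(n)}(0)|^2\,\mathrm{d}\mu_N^{(\delta)}=\frac{\mathrm{D}_{N,\delta+1}}{\mathrm{D}_{N,\delta}}\,\E_{\nu_N^{(\delta+1)}}\Bigg[\Bigg|\frac{V_A^{(n)}(0)}{V_A(0)}\Bigg|^2\Bigg].
\end{equation*}
The ratio $V^{(n)}/V$ is a polynomial in the power sums $\sum_j(1-\e^{\im\theta_j})^{-m}$, which is awkward to average directly. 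A better route is to write $V_A^{(n)}(0)=\sum_{m}\binom{N}{m}(\ldots)$ via the expansion $V_A(\theta)=\sum_k(-1)^k e_k(\e^{\im\theta_j})\e^{-\im k\theta}$. Then
\begin{equation*}
V_A^{(n)}(0)=\sum_k(-1)^k(-\im k)^n e_k,
\end{equation*}
and $\E[|V^{(n)}(0)|^2]$ becomes a quadratic form in the averages $\E[e_k\overline{e_l}]$ under $\mu_N^{(\delta)}$.

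Rather than computing these averages, I would use the reproducing kernel representation. For the weight $w_\delta(\theta)=(1-\e^{-\im\theta})^{\delta}(1-\e^{\im\theta})^{\overline{\delta}}$, the two-point average of characteristic polynomials satisfies the Christoffel-type identity
\begin{equation*}
\E_{\nu_N^{(\delta)}}\big[V_A(x)\overline{V_A(y)}\big]\propto K_{N+1}^{(\delta)}(x,y),
\end{equation*}
where $K_{N+1}^{(\delta)}$ is the Christoffel--Darboux kernel of the orthogonal polynomials on the unit circle for $w_\delta$. These polynomials are explicit ${}_2F_1$-Jacobi-type polynomials: their coefficients are ratios of rising factorials $(\delta)^{(k)}/(\delta+\overline{\delta}+1)^{(k)}$. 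Hence $\E[|V^{(n)}(0)|^2]=\partial_x^n\partial_{\overline{y}}^nK_{N+1}(x,y)|_{x=y=0}$ is an explicit double sum. The $Z_A$ version includes the prefactor $\exp(\frac{\im N}{2}(\theta+\pi)-\frac{\im}{2}\sum\theta_j)$. Its derivatives produce the factor $(\theta-\text{centred})$ shifts, and the Leibniz rule accounts for the extra binomial sums over $i,j$ with $(-2)^{i+j}$ appearing in the first formula.

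The main obstacle is the large-$N$ asymptotics. Each kernel term is a sum over $k\le N$ of the form
\begin{equation*}
k^{i}k^{j}\,\frac{\Gamma(k+w-1+\ell)\cdots}{\Gamma(k+w+\overline{w}-1+\ldots)}\sim k^{i+j+w+\overline{w}-2},
\end{equation*}
so the sum behaves like $N^{i+j+w+\overline{w}-1}/(i+j+w+\overline{w}-1)$. This Riemann-sum limit produces the factor $\frac{1}{i+j+w+\overline{w}-1}$; for the $V$ version only $i=j=n$ survives at top order, giving $\frac{1}{2n+w+\overline{w}-1}$. The Gamma prefactors come from matching $\mathrm{D}_{N,\delta+1}/\mathrm{D}_{N,\delta}$ and $\mathrm{d}(1,\delta)$ via Barnes $G$ asymptotics, using $G(1+z)=\Gamma(z)G(z)$. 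I expect the bookkeeping of these constants and the falling factorials, which yield the rising factorials $(w-1)^{(\ell)}$, to be the delicate part.

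Positivity then follows directly. The second moment of a random variable vanishes only if the variable is almost surely $0$. For the second formula, the sum factorises as $|\sum_i\binom{n}{i}(-1)^i(w-1)^{(i)}/\Gamma(w+\overline{w}-1+i)|^2$ times positive constants, and one checks that this is nonzero, for instance as a hypergeometric value ${}_2F_1(-n,w-1;w+\overline{w}-1;1)=(\overline{w})^{(n)}/(w+\overline{w}-1)^{(n)}$ by Chu--Vandermonde, which is nonzero. For $\mathsf{R}_n$, I would similarly write the double sum as $\int_0^1|P(x)|^2x^{w+\overline{w}-2}\,\d x$ with $P$ a nonzero polynomial, which is strictly positive.
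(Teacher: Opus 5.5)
Your proposal is correct and is essentially the paper's proof. Both use Proposition~\ref{representation0407} with $k=1$, $s=h_1=1$, $w=1+\overline{\delta}$, then the reproducing-kernel identity of \cite{AV} for the explicit ${}_2F_1$ orthogonal polynomials of the CJE weight, the Leibniz rule for the phase in $Z_A$, and Riemann sums producing $1/(i+j+w+\overline{w}-1)$. Positivity comes from the positive definite Cauchy matrix $[1/(i+j+w+\overline{w}-1)]_{i,j=0}^{n}$, and your $\int_0^1|P(x)|^2x^{w+\overline{w}-2}\,\mathrm{d}x$ is its integral form; your Chu--Vandermonde evaluation for the second display is a slightly more explicit variant.

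In the bookkeeping you defer, a few points need care:
\begin{itemize}
\item Differentiate in the angles at $\theta=\tau=0$, that is, at $z=1$, not at the origin.
\item For $V_A$ there are no lower Leibniz terms at all, not merely subleading ones, because $\overline{V_A(\theta)}=(-1)^N\overline{\det A}\,\det(e^{\mathrm{i}\theta}I-A)$ with a $\theta$-independent unimodular prefactor.
\item The derivatives of $p_k^{(\delta)}(e^{\mathrm{i}\theta})=e^{\mathrm{i}k\theta}\,{}_2F_1(\delta,-k;-k-\overline{\delta};e^{-\mathrm{i}\theta})$ produce inner factors $\binom{i}{\ell_1}\binom{j}{\ell_2}$. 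Chu--Vandermonde then collapses the $\ell_1$-sum to $(\overline{w})^{(i)}/\Gamma(w+\overline{w}-1+i)$.
\end{itemize}
These inner binomial factors are missing from the first displayed formula. Comparing its $i=j=n$ term with the second display shows this: the two should coincide, and they only do once the binomials are included.
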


\begin{rem}
We remark that when $w\in \mathbb{R}$, Corollary \ref{non-zeroproperty} coincides with \cite[Corollary 1.6]{AGKW}. In fact, by the definition of the Gaussian hypergeometric function,
\begin{align*}
&\sum_{\ell=0}^{n}
(-1)^{\ell}
\frac{(w-1)^{(\ell)}}{\Gamma(2w-1+\ell)}
\\
&=
\frac{1}{\Gamma(2w-1)}
\sum_{\ell=0}^{n}
(-1)^{\ell}
\frac{(w-1)^{(\ell)}}{(2w-1)^{(\ell)}}
=
\frac{1}{\Gamma(2w-1)}
\,{}_2F_1(-n,w-1;2w-1;1).
\end{align*}
Using then the explicit evaluation of the Gaussian hypergeometric function at $1$, i.e.
\beas
{}_2F_1(-n,a;c;1)=\frac{(c-a)^{(n)}}{(c)^{(n)}},
\eeas
we indeed recover the result in \cite[Corollary 1.6]{AGKW}.
\end{rem}

\subsection{Joint moments of $\mathsf{X}_n(w)$ in terms of Painlev\'e transcendents}
The following is one of our main results. It represents the joint moments of
$\{\mathsf{X}_{j}(w)\}_{j=1}^k$ in terms of a solution to the
$\sigma$-Painlev\'e {\rm III}$'$ equation, extending
\cite[Theorem 1.11]{AGKW} from real parameters to complex parameters.

\begin{theo}\label{painlevethm0328}
Let $k\in\mathbb{Z}_{\geq 2}$, $n_{2},\ldots,n_{k}\geq 0$ be non-negative integers and $\sum_{j=2}^{k}n_j >0$. Suppose $w \in \mathbb{C}$ has $\textnormal{Re}(2w)>{\sum_{\ell=2}^{k}\ell n_{\ell}-1}$. Then, for $t_{1}\geq 0$,

\begin{equation}\label{formula10328}
 \mathbb{E}\bigg[\mathrm{e}^{-\textnormal{i} t_{1}\mathsf{X}_{1}(w)}\prod_{j=2}^{k}\mathsf{X}_{j}(w)^{n_{j}}\bigg]=\frac{(-2\textnormal{i})^{\sum_{\ell=2}^{k}\ell n_{\ell}} }{t_{1}^{\sum_{\ell=2}^{k}\ell n_{\ell}-1}}
\sum_{m=0}^{\sum_{\ell=2}^{k}(\ell-1) n_{\ell}} t_1^{m-1} \mathcal{A}_{m}^{(w,\overline{w})}(t_{1})\frac{\mathrm{d}^{m}}{\mathrm{d}t_{1}^{m}}\mathbb{E}\Big[\mathrm{e}^{-\textnormal{i} t_{1}\mathsf{X}_{1}(w)}\Big],   
\end{equation}
where $t_1^{m-1}\mathcal{A}_{m}^{(w,\overline{w})}(t_{1})$ are polynomials in $t_{1}$ of degree at most $\sum_{\ell=2}^{k}(\ell-1)n_{\ell}-1$ and whose coefficients are polynomials themselves in $w$ and $\overline{w}$, with degree in $w$ or $\overline{w}$ at most $\sum_{\ell=2}^{k}(\ell-1)n_{\ell}$. 
\end{theo}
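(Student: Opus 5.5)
We first realise the left-hand side of \eqref{formula10328} as a large-$N$ limit of an expectation in the pseudo-Jacobi ensemble $\mathfrak{m}_N^{(w)}$. By the results of Borodin--Olshanski and their refinement in \cite{AN} for the point process $\mathcal{C}^{(w)}$, the rescaled power sums satisfy $\mathrm{Tr}(X_N^j)/N^j\to\mathsf{X}_j(w)$ jointly. Here $\mathsf{X}_1$ is read off from the trace, and the $j\ge 2$ power sums converge because the eigenvalues $x_i/N$ converge to the points of $\mathcal{C}^{(w)}$ and the tails are summable. Hence
\begin{equation*}
\mathbb{E}\Big[\mathrm{e}^{-\textnormal{i}t_1\mathsf{X}_1(w)}\prod_{j=2}^k\mathsf{X}_j(w)^{n_j}\Big]=\lim_{N\to\infty}\mathbb{E}_N^{(w)}\Big[\mathrm{e}^{-\frac{\textnormal{i}t_1}{N}\mathrm{Tr}X_N}\prod_{j=2}^k\Big(\frac{\mathrm{Tr}X_N^j}{N^j}\Big)^{n_j}\Big],
\end{equation*}
provided we have uniform integrability. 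We would obtain this from uniform bounds on $\mathbb{E}_N^{(w)}\big[|\mathrm{Tr}X_N^j/N^j|^{p}\big]$, which are finite exactly when $\mathrm{Re}(2w)$ exceeds roughly $\sum\ell n_\ell-1$; this is the stated hypothesis. The bounds would come from the determinantal structure with the Cauchy-type weight, by comparing with $|\cdot|$-moments for the real parameter $\mathrm{Re}(w)$. This is the only place where complex $w$ needs care: the density $|(1+\textnormal{i}x)^{-w-N}|$ is controlled by the real-parameter weight times $\mathrm{e}^{|\mathrm{Im}w|\pi}$-type factors.

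Second, at finite $N$ we introduce the generating function
\begin{equation*}
\Phi_N(t_1,\ldots,t_k)=\mathbb{E}_N^{(w)}\Big[\exp\Big(-\textnormal{i}\sum_j t_j\mathrm{Tr}X_N^j/N^j\Big)\Big].
\end{equation*}
The desired quantity is $\prod_j(\textnormal{i}\,\partial_{t_j})^{n_j}\Phi_N$ at $t_2=\cdots=t_k=0$. Following the strategy of \cite{AGKW} for real $w$, we derive loop (Schwinger--Dyson) equations by integrating by parts in the eigenvalue integral \eqref{defofmathfrakmN}, i.e.\ applying $\sum_i\partial_{x_i}x_i^{m}$ to the integrand. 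With $V(x)=(w+N)\log(1+\textnormal{i}x)+(\overline{w}+N)\log(1-\textnormal{i}x)$, one has $V'(x)(1+x^2)=\textnormal{i}(w-\overline{w})+(w+\overline{w}+2N)x$. This polynomial identity converts insertions of $\mathrm{Tr}X^{m+1}$ into combinations of $t_1$-derivatives (insertions of $\mathrm{Tr}X$), lower power sums, and products of traces. The products come from the Vandermonde term and contribute $\sum_{a+b=m-1}\mathrm{Tr}X^a\,\mathrm{Tr}X^b$. The boundary terms vanish under the moment condition on $\mathrm{Re}(2w)$.

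Iterating these equations expresses each mixed moment $\mathbb{E}[\mathrm{e}^{-\textnormal{i}t_1\mathrm{Tr}X/N}\prod(\mathrm{Tr}X^j/N^j)^{n_j}]$ as a finite combination $\sum_m c_{m,N}(t_1,w,\overline{w})\,\partial_{t_1}^m\Phi_N(t_1,0,\ldots)$. The coefficients are rational in $t_1$ with denominator a power of $t_1$, arising because $\partial_{t_1}$ acting on the exponential produces the factor $t_1$. Each application raises the total power-sum weight by one while consuming a power of $t_1$, so the order of derivatives is at most $\sum(\ell-1)n_\ell$ and the $t_1$-denominator is $t_1^{\sum\ell n_\ell-1}$. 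The coefficients depend polynomially on $w,\overline{w}$ through $V'$, with degree bounded by the number of reduction steps. After the $N$-scaling, the coefficients converge to the polynomials $t_1^{m-1}\mathcal{A}_m^{(w,\overline{w})}(t_1)$. Subleading terms in $N$, such as double-trace corrections, drop out in the limit.

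Finally, we pass to the limit. The functions $\partial_{t_1}^m\Phi_N(t_1,0,\ldots,0)$ converge to $\partial_{t_1}^m\mathbb{E}[\mathrm{e}^{-\textnormal{i}t_1\mathsf{X}_1(w)}]$ locally uniformly, by \eqref{the convergence for the expectation} together with the moment bounds, which justify differentiating under the expectation. This yields \eqref{formula10328} for $t_1>0$; the case $t_1=0$ follows by continuity, interpreting the right-hand side as a limit. The main obstacle is the combinatorial bookkeeping in the second step. We must show that the reduction closes on single $t_1$-derivatives and does not leave products of traces in the limit, and we must verify the degree bounds. For this we would argue by induction on $\sum(\ell-1)n_\ell$, using the real-$w$ argument of \cite{AGKW} verbatim with $2w$ replaced by $w+\overline{w}$ and an extra term $\textnormal{i}(w-\overline{w})$ tracked throughout.
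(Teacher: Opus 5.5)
Your loop-equation route to the finite-$N$ identity differs from the paper's, which uses the Hankel determinant of confluent hypergeometric functions, partition-shifted determinants and Carlson's lemma, and it can be made to work. However, your limiting step contains an argument that fails, and your bookkeeping assumes something false.

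First, uniform integrability. Comparing with the real parameter does not give bounds uniform in $N$. The density of $\mathfrak{m}_N^{(w)}$ is proportional to $\prod_{j<k}|x_k-x_j|^2\prod_{j=1}^N(1+x_j^2)^{-\textnormal{Re}(w)-N}\mathrm{e}^{2\,\textnormal{Im}(w)\arctan x_j}$. Relative to $\mathfrak{m}_N^{(\textnormal{Re}(w))}$ the Radon--Nikodym factor is therefore $\prod_{j=1}^N\mathrm{e}^{2\,\textnormal{Im}(w)\arctan x_j}$. Your $\mathrm{e}^{\pi|\textnormal{Im}(w)|}$ bound holds per eigenvalue, so the total is only controlled by $\mathrm{e}^{\pm N\pi|\textnormal{Im}(w)|}$, and the resulting moment bounds grow exponentially in $N$. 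Uniformity in $N$ is exactly what is needed to pass to the limit, so this step fails as proposed.

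The paper avoids any comparison. The $\mathfrak{m}_N^{(w)}$ are the laws of the corners of one $m^{(w)}$-distributed infinite matrix, and \cite[Proposition 2.11]{AGKW} gives uniform integrability of products of $N^{-\ell}\mathrm{e}_\ell(x^{(N)})$, hence of the normalised power sums. Lemma \ref{uniform integrability} (from \cite{AGS}) handles the $t_1$-derivatives. Relatedly, for $j=2$ the almost sure limit of $N^{-2}\sum_j(x_j^{(N)})^2$ is the Borodin--Olshanski parameter $\delta$. Identifying it with $\mathsf{X}_2(w)$ needs Qiu's result $\gamma_2=0$ \cite{Q}; tail summability does not give it.

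Second, products of traces are not subleading. Write $p_a=\sum_jx_j^a$ with $p_0=N$, and integrate by parts against $g(x)=x^m(1+x^2)$ inside $\mathbb{E}_N^{(w)}[\mathrm{e}^{-\mathrm{i}t_1p_1/N}\,\cdot\,]$. The $2N$ in $V'(x)(1+x^2)$ cancels against the $p_0$ terms of the Vandermonde contribution, and one gets exactly
\[
\frac{\mathrm{i}t_1}{N}\big(p_{m+2}+p_m\big)=-(w+\overline{w})p_{m+1}-\mathrm{i}(w-\overline{w})p_m+\sum_{\substack{a+b=m+1\\ a,b\geq1}}p_ap_b+\sum_{\substack{a+b=m-1\\ a,b\geq0}}p_ap_b .
\]
This cancellation is what makes your normalised coefficients converge, so you should record it. It also shows that the double traces are of the same order $N^{m+1}$ as the leading terms and survive in the limit. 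For $m=1$, tested against $\mathrm{e}^{-\mathrm{i}t_1\mathsf{X}_1(w)}$, the limit is $\mathrm{i}t_1\mathsf{X}_3=1-(w+\overline{w})\mathsf{X}_2+\mathsf{X}_1^2$. Dropping $\mathsf{X}_1^2$ would turn the coefficient $(w+\overline{w})^2+2$ of the second derivative in Corollary \ref{painlevecor0324} into $(w+\overline{w})^2$.

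The fix is to treat $p_ap_b$ with $a,b\geq2$ as lower-weight insertions and $p_1$ as $\mathrm{i}N\partial_{t_1}$. Induct on $\sum\ell n_\ell$, not on $\sum(\ell-1)n_\ell$: when $\partial_{x_i}$ hits an inserted $p_j$ it produces $p_{m+j+1}$, which leaves $\sum(\ell-1)n_\ell$ unchanged. Done this way, the limiting equations reproduce both formulas of Corollary \ref{painlevecor0324}. They give a more direct recursion than the paper's, with no need to restrict to $v\in\mathbb{Z}_{\geq0}$ and continue via Carlson's lemma.

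Note that \cite{AGKW} does not use loop equations, so there is nothing to copy verbatim. The degree bounds on $t_1^{m-1}\mathcal{A}_m^{(w,\overline{w})}$ must come from your own recursion. The limiting equations are the easiest place to get them, since there $t_1$ enters only through $\mathrm{i}t_1\mathsf{X}_{m+2}$.
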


When $w\in \mathbb{R}$, the coefficient $\mathcal{A}_{m}^{(w,\overline{w})}$ in Theorem \ref{painlevethm0328} coincides with $\mathcal{A}_{m}^{(w)}$ as given in \cite[Theorem 1.11]{AGKW}. Even though we state Theorem \ref{painlevethm0328} for $t_1\geq 0$, using
\begin{equation*}
\mathbb{E}\Big[\mathrm{e}^{-\im t_{1}\mathsf{X}_{1}(w)}\prod_{j=2}^{k}\mathsf{X}_{j}(w)^{n_{j}}\Big]
=(-1)^{\sum_{q=2}^{k}qn_{q}}\mathbb{E}\Big[\mathrm{e}^{-\im (-t_{1})\mathsf{X}_{1}(\overline{w})}\prod_{j=2}^{k}\mathsf{X}_{j}(\overline{w})^{n_{j}}\Big],
\end{equation*}
one immediately has an analogous result for $t_1<0$. We omit its obvious statement.\smallskip

From Theorem \ref{maintheoreminnote2}, we know that the function 
\begin{equation*}
	t_1\mapsto t_{1}\frac{\mathrm{d}}{\mathrm{d}t_{1}}\log\mathbb{E}\Big[\mathrm{e}^{-\im t_{1}\mathsf{X}_{1}(w)}\Big]
\end{equation*}
solves the $\sigma$-Painlev\'e III$'$ equation \eqref{limitequation}. So, Theorem \ref{painlevethm0328} actually gives a representation of the 
the left hand-side of \eqref{formula10328} in terms of a solution of the $\sigma$-Painlev\'{e}-III$'$ equation. Moreover, the polynomial coefficients appearing in the right hand-side of  (\ref{formula10328}) can be computed recursively. Hence,  the closed formula (\ref{formula10328}) provides an efficient and a recursive way to compute all the integer joint moments of $\mathsf{X}_{j}(w)$. A few examples of these formulae are listed in the following corollary. In particular, when $w\in \mathbb{R}$, these results agree with \cite[Corollary 1.12]{AGKW}.  
\begin{cor}\label{painlevecor0324}
Let $t_1\geq 0$. For all $w\in \mathbb{C}$ with $\textnormal{Re}(2w)>1$, we have
\begin{equation*}
    \E\Big[
\mathrm{e}^{-\textnormal{i} t_{1}\mathsf{X}_{1}(w)}\mathsf{X}_{2}(w)
\Big]
= - \frac{w+\overline{w}}{t_{1}} \frac{\mathrm{d}}{\mathrm{d}t_{1}} 
\E \Big[
\mathrm{e}^{-\textnormal{i} t_{1}\mathsf{X}_{1}(w)}
\Big]+\frac{\overline{w}-w}{t_{1}}\E \Big[
\mathrm{e}^{-\textnormal{i} t_{1}\mathsf{X}_{1}(w)}
\Big],
\end{equation*}
and for all $w\in \mathbb{C}$ with $\textnormal{Re}(2w)>3$,
\beas
\E\Big[
\mathrm{e}^{-\textnormal{i} t_{1}\mathsf{X}_{1}(w)}\mathsf{X}_{2}(w)^2
\Big]&=&
\frac{(w+\overline{w})^2+2}{t_{1}^2} \frac{\mathrm{d}^2}{\mathrm{d}t_{1}^2} 
\E \Big[
\mathrm{e}^{-\textnormal{i} t_{1}\mathsf{X}_{1}(w)} 
\Big]\\
&&-\frac{3(w+\overline{w})^2+2t_{1}\left((\overline{w})^2-w^2\right)}{t_{1}^3}\frac{\mathrm{d}}{\mathrm{d}t_{1}} 
\E\Big[
\mathrm{e}^{-\textnormal{i} t_{1}\mathsf{X}_{1}(w)}  
\Big]\\
&&+\frac{(\overline{w}-w)^2t-2t+3\overline{w}^2-3w^2}{t_{1}^3}\E\Big[
\mathrm{e}^{-\textnormal{i} t_{1}\mathsf{X}_{1}(w)}
\Big]. 
\eeas
\end{cor}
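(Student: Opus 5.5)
The plan is to read Corollary \ref{painlevecor0324} off Theorem \ref{painlevethm0328} by specialising to $k=2$ with $n_2=1$ and $n_2=2$. The real work is to identify the polynomial coefficients $\mathcal{A}_m^{(w,\overline{w})}$ explicitly in these two cases, since Theorem \ref{painlevethm0328} gives only their structure. The integrability thresholds match. For $n_2=1$ we have $\sum \ell n_\ell=2$, so the condition is $\textnormal{Re}(2w)>1$; for $n_2=2$ it is $4$, so the condition is $\textnormal{Re}(2w)>3$. In the first case the sum runs over $m\in\{0,1\}$ with prefactor $(-2\im)^2/t_1$. In the second it runs over $m\in\{0,1,2\}$ with prefactor $(-2\im)^4/t_1^3$, and the coefficients $t_1^{m-1}\mathcal{A}_m$ have degree at most $1$ in $t_1$. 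This already yields the shape of both displayed formulas.

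To pin down the coefficients, I would go back to the finite-$N$ mechanism behind Theorem \ref{painlevethm0328}. There $\mathsf{X}_1,\mathsf{X}_2$ arise as limits of $\textnormal{Tr}(X_N)/N$ and $\textnormal{Tr}(X_N^2)/N^2$ under $\mathfrak{m}_N^{(w)}$.

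First, I would write $\E_N^{(w)}[\mathrm{e}^{-\im t\,\textnormal{Tr} X_N/N}\,\textnormal{Tr}(X_N^2)^{n}]$ as a Hermitian matrix integral. I would apply the Lebesgue-measure-invariant shift/scaling identity, i.e.\ integration by parts with the vector field $X\mapsto X$ (or the translation $X\mapsto X+\epsilon X^2$), against the weight $\det(I+\im X)^{-w-N}\det(I-\im X)^{-\overline{w}-N}$. This produces a linear relation between the moments of $\textnormal{Tr}X^2$ and derivatives in $t$ of the basic characteristic function $\E_N^{(w)}[\mathrm{e}^{-\im t\,\textnormal{Tr}X/N}]$. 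For $n=1$, I would use the loop/Ward identity coming from $\partial_X$ of $\det(1+X^2)$ weights. Combining $(1+\im X)^{-1}$ and $(1-\im X)^{-1}$ gives terms $(w+\overline{w})\textnormal{Tr}X$ and $\im(\overline{w}-w)N$. After scaling by $N$ and letting $N\to\infty$, this gives
\[
t\,\E[\mathrm{e}^{-\im t\mathsf{X}_1}\mathsf{X}_2]=-(w+\overline{w})\,\partial_t\E[\mathrm{e}^{-\im t\mathsf{X}_1}]+(\overline{w}-w)\E[\mathrm{e}^{-\im t\mathsf{X}_1}].
\]
That is the first formula. As a consistency check, at $w$ real it reduces to \cite[Corollary 1.12]{AGKW}.

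For $n_2=2$, I would iterate the same Ward identity once more, now with the inserted observable $\textnormal{Tr}X^2$, whose derivative contributes the extra constant $2$ in $(w+\overline{w})^2+2$. I would then apply the first formula to reduce the resulting $\mathsf{X}_2$-weighted terms back to derivatives of the characteristic function. Differentiating the first formula in $t$ produces the $1/t^3$ and $\partial_t$ terms.

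The main obstacle is bookkeeping. The constants must come out exactly, and the passage $N\to\infty$ must be justified. For the limit I would rely on the convergence and uniform-integrability statements already used in proving Theorem \ref{painlevethm0328}, valid precisely under $\textnormal{Re}(2w)>\sum\ell n_\ell-1$.
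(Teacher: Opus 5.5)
You take a genuinely different route from the paper. The paper gets the exact finite-$N$ identities of Lemma \ref{formulaforf1f20324} from the Hankel determinant of $U$-functions and the recursion of Proposition \ref{generalrecursiveforshiftbypartition0325}, then sets $v=\overline{w}$, divides by $N^2$ resp.\ $N^4$, and lets $N\to\infty$. Your plan uses loop (Ward) equations for the pseudo-Jacobi density and then passes to the limit.

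\textbf{First formula.} The idea works, but not with the vector fields you name. The dilation field $X$, or the field $X^2$ alone, leaves non-polynomial terms such as $\mathrm{Tr}[X(I+X^2)^{-1}]$. The terms you quote come from the flow $X\mapsto X+\epsilon(I+X^2)$. Under this flow the weight contributes $\im N(\overline{w}-w)-(w+\overline{w}+2N)\mathrm{Tr}X$ and the Jacobian contributes $+2N\,\mathrm{Tr}X$. This gives the exact identity
\[
t\,\E_N^{(w)}\Big[\e^{-\im t\,\mathrm{Tr}X_N/N}\,\frac{N+\mathrm{Tr}X_N^2}{N^2}\Big]=(\overline{w}-w)\phi_N(t)-(w+\overline{w})\phi_N'(t),\qquad \phi_N(t):=\E_N^{(w)}\big[\e^{-\im t\,\mathrm{Tr}X_N/N}\big].
\]
It is valid for $\textnormal{Re}(2w)>1$, is consistent with \eqref{expressionforf1t10324}, and yields the first formula in the limit.

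\textbf{Second formula: there is a real gap.} Along the same flow,
\[
\frac{\d}{\d\epsilon}\mathrm{Tr}\big((X+\epsilon(I+X^2))^2\big)\Big|_{\epsilon=0}=2\mathrm{Tr}X+2\mathrm{Tr}X^3 .
\]
So after dividing by $N^3$, your iterated identity reads
\[
\im t\,\E\big[\e^{-\im t\mathsf{X}_1}\mathsf{X}_2^2\big]=2\,\E\big[\e^{-\im t\mathsf{X}_1}\mathsf{X}_3\big]+\im(\overline{w}-w)\psi(t)-\im(w+\overline{w})\psi'(t),\qquad \psi(t):=\E\big[\e^{-\im t\mathsf{X}_1}\mathsf{X}_2\big].
\]
The $\mathsf{X}_3$-term is of leading order, and the first formula says nothing about it. So ``reducing the $\mathsf{X}_2$-weighted terms'' does not close the computation.

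You need a second, independent loop equation. Take the flow $X\mapsto X+\epsilon X(I+X^2)$ with observable $\e^{-\im t\,\mathrm{Tr}X/N}$. Its Jacobian plus weight give $N^2+(\mathrm{Tr}X)^2+\im(\overline{w}-w)\mathrm{Tr}X-(w+\overline{w})\mathrm{Tr}X^2$, hence in the limit
\[
\im t\,\E\big[\e^{-\im t\mathsf{X}_1}\mathsf{X}_3\big]=\phi(t)-\phi''(t)-(w+\overline{w})\psi(t).
\]
Substituting this and the first formula gives exactly the stated second formula; I checked the algebra. The constant $2$ in $(w+\overline{w})^2+2$ comes from eliminating $\mathsf{X}_3$, through the $(\mathrm{Tr}X)^2$ term. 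This mirrors the paper, where the recursion with $m=2$ produces $\mathbf{\Xi}_{N,\emptyset,3}$ and a second application with $m=3$ removes it.

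With this repair you also need $\mathrm{Tr}(X_N^3)/N^3\to\mathsf{X}_3(w)$ almost surely. You also need uniform integrability of the normalised products of power sums up to degree $4$. Both are available under $\textnormal{Re}(2w)>3$ from the inputs used in the proof of Theorem \ref{painlevethm0328}.

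\textbf{What each route buys.} Once repaired, your route works for all real $t$ and all admissible complex $w$ directly. It needs no Hankel representation, no restriction $t_1\ge 0$, and no Carlson argument. The paper's route instead sits inside a recursion that handles all $(n_2,\ldots,n_k)$ at once, with degree bounds.
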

The above formulae may appear singular  when $t_1=0$, but this is not the case:  the moments of $\mathsf{X}_1(w)$ and $\mathsf{X}_2(w)$ satisfy certain relations that ensure the cancellation of singular $t_1$ powers in the denominator at $t_1=0$. In particular, evaluating at $t_1 = 0$ and using Corollary \ref{almostsurely}, allows one to obtain, for instance, the following formula:
\begin{align*}
 \mathbb{E}\big[\mathsf{X}_1(w)\big]
=&\,\,\frac{\textnormal{Im}(w)}{\textnormal{Re}(w)}\\
    \mathbb{E}\big[\mathsf{X}_2(w)\big]=&\,\,2\,\textnormal{Re}(w)\, \mathbb{E}\Big[\left(\mathsf{X}_1(w)\right)^2\Big]-2\,\textnormal{Im}(w)\,\mathbb{E}\big[\mathsf{X}_1(w)\big]\stackrel{\eqref{thesecondmomentofXw}}{=}\frac{2|w|^2}{(2\,\textnormal{Re}(w)+1)(2\,\textnormal{Re}(w)-1)\textnormal{Re}(w)}.
\end{align*}

\subsection{Outline of the paper}

In Section \ref{ahanklerepresentation0402}, we represent
$F_{N,\delta}(s,h)$ in terms of a Hankel determinant. In Section
\ref{RHmethod}, using the Riemann--Hilbert method, we show that this
Hankel determinant is a tau-function associated with the Painlev\'e V
equation. In Section \ref{representation}, we derive the local
asymptotic behaviour of this tau-function near the origin and obtain a
representation of $F_{N,\delta}(s,h)$ in terms of the corresponding
$\sigma$-Painlev\'e V transcendent. The proof of Proposition
\ref{mainthreorem} is also given in this section. In Section \ref{proofofproposition14}, we analyse the large-$N$
scaling limit of the $\sigma$-Painlev\'e V equation and establish the
associated $\sigma$-Painlev\'e III$'$ description. We also obtain a
probabilistic representation of the limiting quantity in terms of the
characteristic function of a random variable arising from the ergodic
decomposition of Hua--Pickrell measures. The proof of Proposition
\ref{convergencetheorem} is given in this section. Passing to the large-$N$ limit under the appropriate scaling of the $\sigma$-Painlev\'e V equation leads to the limiting $\sigma$-Painlev\'e III$'$ equation stated in
Theorem \ref{maintheoreminnote2}, whose proof is given in Section
\ref{proofoftheorem16}. Corollary \ref{almostsurely} and Proposition
\ref{representationforintegerh} are also proved there. In Section \ref{proofsofProposition111and112}, we relate the joint moments of higher-order derivatives of characteristic polynomials from the Circular Jacobi Ensemble to the joint statistics of a sequence of random variables defined via power sums of random points from a determinantal point process. Proposition
\ref{representation0407} and Corollary \ref{non-zeroproperty} are also
proved in this section. Finally, in Section \ref{sectiononjointmofasequenceofrandomvariables},
we establish further connections between these random matrix statistics
and integrable systems, proving Theorem \ref{painlevethm0328} and
Corollary \ref{painlevecor0324}.

\section{$F_{N,\delta}(s,h)$ in terms of a Hankel determinant}\label{ahanklerepresentation0402}
Recall that there is a natural map from $\mathbf{H}(N)$ to
$\mathbb{R}^N/\sigma(N)$, sending a matrix
$X_N\in\mathbf{H}(N)$ to the equivalence class of its eigenvalues. Under the
natural identification of $\mathbb{R}^N/\sigma(N)$ with the closed Weyl chamber
\[
\mathbb{W}^N
:=
\left\{
(x_1,\ldots,x_N)\in\mathbb{R}^N:
x_1\geq\ldots\geq x_N\right\},
\]
this map sends $X_N$ to its ordered eigenvalues
\[
\lambda_1^{(N)}(X_N)
\geq\ldots\geq
\lambda_N^{(N)}(X_N).
\]
The pushforward of $m_N^{(w)}$ under this eigenvalue map is precisely
$\mathfrak{m}_N^{(w)}$. Equivalently, if
$X\in\mathbf{H}$ is distributed according to $m^{(w)}$, then the ordered
eigenvalues of its top-left $N\times N$ corner $X_N$ are distributed
according to $\mathfrak{m}_N^{(w)}$. Therefore, if
\[
\bigl(x_1^{(N)},\ldots,x_N^{(N)}\bigr)\in\mathbb{W}^N
\]
is distributed according to $\mathfrak{m}_N^{(w)}$, then
\[
\bigl(x_1^{(N)},\ldots,x_N^{(N)}\bigr)
\overset{\mathsf{d}}{=}
\bigl(
\lambda_1^{(N)}(X_N),\ldots,
\lambda_N^{(N)}(X_N)
\bigr),
\]
where $X_N$ is distributed according to $m_N^{(w)}$.
Using the joint eigenvalue density \eqref{definitionofJacobimeasure} of the
circular Jacobi ensemble together with the change of variables
$
\cot\left(\theta_j/2\right)=x_j,
$
see, for example, \cites{keating2000random,W}, the quantity
$F_{N,\delta}(s,h)$ can be represented in terms of a moment of the eigenvalues distributed according to
$\mathfrak{m}_N^{(s+\overline{\delta})}$. More precisely, we have the
following result.
\begin{prop}\label{arepresentationofasanexpectation}
Let $s \in \mathbb{R}$ and $\delta \in \mathbb{C}$ with $2s + \textnormal{Re}(2\delta) > -1$, and let $h \in \mathbb{C}$ satisfy $-1< \textnormal{Re}(2h) < 2s + \textnormal{Re}(2\delta) + 1$. Then 
\beas
F_{N,\delta}(s,h)=F_{N,\delta}(s,0)2^{-2h}\mathbb{E}_{N}^{(s+\overline{\delta})}\Bigg[\Bigg|\sum_{j=1}^{N}x_{j}^{(N)}\Bigg|^{2h}\Bigg],
\eeas
where $\mathbb{E}_{N}^{(w)}[\cdot]$ denotes the expectation with respect to $\mathfrak{m}^{(w)}_N$ in \eqref{defofmathfrakmN}.
\end{prop}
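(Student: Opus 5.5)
We reduce both sides to integrals over the eigenangles and then apply the Cayley-type substitution $x_j=\cot(\theta_j/2)$.

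\emph{Step 1: express $Z_A'(0)/Z_A(0)$ in the angles.} By \eqref{polydef}, $Z_A(\theta)=\prod_{j}\big(2\sin\tfrac{\theta-\theta_j}{2}\big)$ up to a unimodular constant fixed by the chosen normalisation; in fact $Z_A$ is real, since $\mathrm{e}^{\im(\theta+\pi-\theta_j)/2}(1-\mathrm{e}^{\im(\theta_j-\theta)})=2\sin\frac{\theta-\theta_j}{2}$. Logarithmic differentiation then gives
\[
\frac{Z_A'(0)}{Z_A(0)}=\frac12\sum_{j=1}^N\cot\Big(\frac{0-\theta_j}{2}\Big)=-\frac12\sum_{j=1}^N\cot\frac{\theta_j}{2}.
\]
Hence, almost surely,
\[
|Z_A'(0)|^{2h}|Z_A(0)|^{2s-2h}=|Z_A(0)|^{2s}\,2^{-2h}\Big|\sum_j\cot\tfrac{\theta_j}{2}\Big|^{2h}.
\]
For complex $h$ we read $|\cdot|^{2h}$ as $\exp(2h\log|\cdot|)$, which is consistent with this identity.

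\emph{Step 2: change variables in the weighted density.} Write $F_{N,\delta}(s,h)$ as an integral against $\nu_N^{(\delta)}$ from \eqref{definitionofJacobimeasure}. The extra factor is $|Z_A(0)|^{2s}=\prod_j|1-\mathrm{e}^{\im\theta_j}|^{2s}$. We multiply it into the weight $(1-\mathrm{e}^{-\im\theta_j})^{\delta}(1-\mathrm{e}^{\im\theta_j})^{\overline\delta}$ and substitute $x_j=\cot(\theta_j/2)\in\mathbb{R}$. The relevant identities are:
\begin{itemize}
\item $1-\mathrm{e}^{\im\theta}=-2\im\sin\frac\theta2\,\mathrm{e}^{\im\theta/2}$ and $\mathrm{e}^{\im\theta}=\frac{x+\im}{x-\im}$, which give $1-\mathrm{e}^{\im\theta}=\frac{-2\im}{x-\im}=\frac{2}{1+\im x}$;
\item similarly $1-\mathrm{e}^{-\im\theta}=\frac{2}{1-\im x}$;
\item $|\mathrm{e}^{\im\theta_j}-\mathrm{e}^{\im\theta_k}|^2=\frac{4|x_j-x_k|^2}{(1+x_j^2)(1+x_k^2)}$;
\item $\mathrm{d}\theta=\frac{2\,\mathrm{d}x}{1+x^2}$.
\end{itemize}
These are the standard identities used in \cite{keating2000random,W}; with principal branches on $(0,2\pi)$ the arguments lie in $(-\pi/2,\pi/2)$, so no branch issue arises.

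Collecting powers of $(1\pm\im x_j)$, each variable carries
\[
(1+\im x_j)^{-\overline\delta-s-(N-1)-1}(1-\im x_j)^{-\delta-s-N}=(1+\im x_j)^{-(s+\overline\delta)-N}(1-\im x_j)^{-\overline{(s+\overline\delta)}-N}.
\]
Together with the Vandermonde factor $\prod|x_k-x_j|^2$, this is exactly the density $\mathfrak{m}_N^{(w)}$ in \eqref{defofmathfrakmN} with $w=s+\overline\delta$, up to a constant $c_N$ that does not depend on $h$.

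\emph{Step 3: identify the constant.} The change of variables gives
\[
F_{N,\delta}(s,h)=c_N\,2^{-2h}\,\mathbb{E}^{(w)}_N\Big[\Big|\sum_j x_j\Big|^{2h}\Big].
\]
Setting $h=0$ shows $c_N=F_{N,\delta}(s,0)$, so no explicit Selberg evaluation is needed. (Alternatively, $c_N=2^{\ldots}\mathrm{C}_{N,w}/\mathrm{D}_{N,\delta}$ could be computed directly.)

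\emph{Main obstacle: integrability.} All manipulations are justified by Tonelli and Fubini once the integrals converge absolutely, so the real work is checking convergence of $\mathbb{E}^{(w)}_N[|\sum_j x_j|^{\textnormal{Re}(2h)}]$.
\begin{itemize}
\item At infinity, the one-point marginal of $\mathfrak{m}_N^{(w)}$ decays like $|x|^{-2\textnormal{Re}(w)-2}$: the weight contributes $|x|^{-2\textnormal{Re}(w)-2N}$ and the Vandermonde contributes $|x|^{2N-2}$. So moments of order $\textnormal{Re}(2h)<2\textnormal{Re}(w)+1=2s+\textnormal{Re}(2\delta)+1$ are finite. Since $|\sum_j x_j|^{p}\le N^{p}\max_j|x_j|^{p}$ for $p\ge0$, the upper bound follows.
\item For $-1<\textnormal{Re}(2h)<0$, the only issue is the hyperplane $\sum_j x_j=0$. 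After the linear change of variables $u=\sum_j x_j$, the singularity $|u|^{\textnormal{Re}(2h)}$ is integrable in one dimension, and the remaining density is locally bounded. So the lower bound $\textnormal{Re}(2h)>-1$ suffices.
\end{itemize}
The same estimates show that the original integral \eqref{object} is finite in exactly this range, which justifies the equality.
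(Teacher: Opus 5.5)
Your proposal is correct and is essentially the paper's argument: the paper obtains this identity by combining the circular Jacobi eigenvalue density \eqref{definitionofJacobimeasure} with the substitution $x_j=\cot(\theta_j/2)$ (citing \cite{keating2000random,W}). Your computation of $Z_A'(0)/Z_A(0)=-\frac12\sum_j x_j$, of the transformed weight with $w=s+\overline{\delta}$, and your fixing of the constant by setting $h=0$ supply exactly the details the paper omits. (Two minor points. First, each factor in Step 1 is actually $-2\sin\frac{\theta-\theta_j}{2}$, which is harmless. Second, for $\textnormal{Re}(2h)<0$ local boundedness is not quite the right condition: what you need is that the density of $\sum_j x_j$ is bounded near $0$. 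This follows from $\prod_{j<k}|x_k-x_j|^2\le 2^{N(N-1)/2}\prod_j(1+x_j^2)^{N-1}$, which dominates the joint density by $C\prod_j(1+x_j^2)^{-\textnormal{Re}(w)-1}$ with $\textnormal{Re}(w)+1>\frac12$.)
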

Then, by an argument similar to that in \cite[Lemma 2.4]{BW}, but using the Fourier transform
\begin{align*}
&\int_{-\infty}^{\infty}
(1+\im x)^{-N-w}
(1-\im x)^{-N-\overline{w}}
\mathrm{e}^{-\im \xi x}\,\mathrm{d}x \\
&\qquad =
\frac{2^{2-2N-w-\overline{w}}\pi}
{\Gamma(N+\overline{w})}
\mathrm{e}^{-\xi}
U\!\left(
1-N-\overline{w},
2-2N-w-\overline{w},
2\xi
\right),
\qquad \xi\geq 0,
\end{align*}
instead of the special case $w=\overline{w}$ considered in \cite{BW},
we obtain the following Hankel determinant representation for the
characteristic function of $\operatorname{Tr}(X_N)$. Here,
$U(a,b,z)$ denotes the confluent hypergeometric function of the second kind.
\begin{prop}\label{Hankelrepresentation}
For any $w\in\mathbb{C}$ with $\textnormal{Re}(2w)>-1$ and any $\xi\geq 0$,
\begin{equation}\label{i13}
\mathbb{E}_{N}^{(w)}\bigg[\mathrm{e}^{-\im\frac{\xi}{2}\sum_{j=1}^{N}x_{j}^{(N)}}\bigg]=\frac{1}{\mathrm{C}_{N,w}}(-1)^{N(N-1)/2}\bigg(\frac{2\pi\mathrm{e}^{-\frac{\xi}{2}}}{\Gamma(N+\overline{w})}\bigg)^N2^{-N^2-N(w+\overline{w})}\,\mathcal{J}_N(\mu,\nu,\xi)
\end{equation}
in terms of the Hankel determinant
\begin{equation}\label{i14}	
	\mathcal{J}_N(\mu,\nu,\xi)=\det\Big[U(1-2\nu,2-2\nu-2\mu+j+k,\xi)\Big]_{j,k=0}^{N-1},
\end{equation}
with parameters $\mu=\frac{1}{2}(w+N)$, $\nu=\frac{1}{2}(\overline{w}+N)$ and with $\mathrm{C}_{N,w}$ as in \eqref{definitionoftheconstant0407}.
\end{prop}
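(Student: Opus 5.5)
Write the left-hand side of \eqref{i13} through the joint density \eqref{defofmathfrakmN}. The Vandermonde squared together with the product weight turns the expectation into an Andréief (Heine) identity:
\[
\mathbb{E}_N^{(w)}\Big[\mathrm{e}^{-\im\frac{\xi}{2}\sum_j x_j}\Big]=\frac{N!}{\mathrm{C}_{N,w}\,N!}\det\Big[\int_{\mathbb{R}} x^{j+k}\,\omega_w(x)\,\mathrm{e}^{-\im\frac{\xi}{2}x}\,\mathrm{d}x\Big]_{j,k=0}^{N-1},
\qquad \omega_w(x)=(1+\im x)^{-w-N}(1-\im x)^{-\overline{w}-N}.
\]
The monomial moments $x^{j+k}$ are not all integrable, since $|\omega_w(x)|\sim |x|^{-2N-2\mathrm{Re}(w)}$ and $j+k$ can reach $2N-2$. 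So, as in \cite[Lemma 2.4]{BW}, I will first replace the monomials by a convenient polynomial basis. Row and column operations do not change the determinant, so I use $x^j\mapsto (1+\im x)^j$ up to the leading factor $\im^j$. The total sign contribution is $\prod_j \im^{j}\cdot\prod_k \im^{k}=\im^{N(N-1)}=(-1)^{N(N-1)/2}$, which accounts for that factor in \eqref{i13}. The entries become
\[
\int_{\mathbb{R}}(1+\im x)^{-N-w+j+k}(1-\im x)^{-N-\overline{w}}\,\mathrm{e}^{-\im\frac{\xi}{2}x}\,\mathrm{d}x .
\]
Since $\mathrm{Re}(2N+w+\overline{w}-j-k)\geq 2+\mathrm{Re}(2w)>1$, these integrals converge absolutely.

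Next I evaluate each entry with the displayed Fourier transform. The transform is stated for the weight with exponents $(-N-w,-N-\overline{w})$; I apply it with $w$ replaced by $w-j-k$ in the first exponent, keeping $\overline{w}$ in the second. Its derivation (Kummer's integral representation of $U$) only uses the two exponents $a=N+w$ and $b=N+\overline{w}$ as independent complex parameters, so this replacement is legitimate. I have to verify this, redoing the computation if needed: write $(1\pm\im x)^{-a}=\Gamma(a)^{-1}\int_0^\infty t^{a-1}\mathrm{e}^{-t(1\pm\im x)}\,\mathrm{d}t$ for the $(1-\im x)$ factor, and close the contour for $\xi\geq0$. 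With frequency $\xi/2$ the result is
\[
\frac{2^{2-2N-w-\overline{w}+j+k}\,\pi}{\Gamma(N+\overline{w})}\,\mathrm{e}^{-\xi/2}\,U\big(1-N-\overline{w},\,2-2N-w-\overline{w}+j+k,\,\xi\big).
\]
Substituting $\mu=\tfrac12(w+N)$ and $\nu=\tfrac12(\overline w+N)$ gives $1-2\nu=1-N-\overline w$ and $2-2\nu-2\mu+j+k$ as the second parameter, which matches \eqref{i14}.

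Finally I pull the prefactors out of the determinant. The factor $2^{j+k}$ factors as $\prod_j 2^j\prod_k 2^k=2^{N(N-1)}$. The constant per row is $2\pi\,\mathrm{e}^{-\xi/2}\,2^{1-2N-w-\overline{w}}/\Gamma(N+\overline{w})$. Raising it to the $N$th power and combining with $2^{N(N-1)}$, the powers of $2$ total
\[
2^{N(N-1)+N-2N^2-N(w+\overline w)}=2^{-N^2-N(w+\overline{w})},
\]
which reproduces \eqref{i13}. The main obstacle is the Fourier evaluation with unequal, shifted complex exponents. This includes checking branch conventions for principal powers, and justifying that the index shift stays within the convergence range of the integral representation; the case $\xi=0$ then follows by continuity.
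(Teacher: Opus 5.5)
Your proposal is correct and follows the paper's own route, which defers to \cite[Lemma 2.4]{BW}. That route consists of Andr\'eief's identity, passage to the basis $(1+\im x)^j$ producing the sign $(-1)^{N(N-1)/2}$, the Fourier transform with the first exponent shifted to $N+w-j-k$, and extraction of $2^{N(N-1)}$ and the row constants; your bookkeeping of all of these checks out. Two small corrections: the monomial moments are in fact all absolutely convergent, since $|x|^{j+k}\le|1+\im x|^{j+k}$ makes your own count $2+\textnormal{Re}(2w)>1$ apply verbatim, so the basis change serves only to produce the shifted parameter in $\mathcal{J}_N$; and when $\textnormal{Re}(N+w-j-k)\le 0$, prove the shifted Fourier formula for $\textnormal{Re}(a)>1$ (where your Gamma-integral/Fubini derivation is valid) and continue analytically in the first exponent $a$ over the half-plane $\textnormal{Re}(a+N+\overline{w})>1$, on which both sides are analytic ($U$ being entire in its parameters).
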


\begin{rem}
In Proposition \ref{Hankelrepresentation}, we require $\zeta\geq 0$. However, by symmetry, for $\xi<0$,
\beas
\mathbb{E}_{N}^{(w)}\bigg[\mathrm{e}^{-\im\frac{\xi}{2}\sum_{j=1}^{N}x_{j}^{(N)}}\bigg]=\mathbb{E}_{N}^{(\overline{w})}\bigg[\mathrm{e}^{\im\frac{\xi}{2}\sum_{j=1}^{N}x_{j}^{(N)}}\bigg].
\eeas
\end{rem}
\section{The Riemann-Hilbert method for the Hankel determinant $\mathcal{J}_N$}\label{RHmethod}
We now begin to investigate the Hankel determinant \eqref{i14} by Riemann-Hilbert techniques. To that end, recall that $U(a,b,z)$ solves Kummer's equation \cite[$\S 13.2.1$]{NIST} being uniquely determined by the asymptotics
\begin{equation}\label{largezasymptotics}
	U(a,b,z)\sim z^{-a}\sum_{m=0}^{\infty}(-1)^m\frac{(a)_m(a-b+1)_m}{m!z^m},\ \ \ \ z\rightarrow\infty,
\end{equation}
in the sector $-\frac{3\pi}{2}<\textnormal{arg}\,z<\frac{3\pi}{2}$ with the principal value of $z^{-a}$. Moreover, $U(a,b,z)$ admits the following integral representation, cf. \cite[$\S 13.4.14$]{NIST},
\begin{equation}\label{e2}
	U(a,b,z)=\mathrm{e}^{-\im \pi a}\frac{\Gamma(1-a)}{2\pi\im}\int_{\infty}^{(0+)}\mathrm{e}^{-zt}t^{a-1}(1+t)^{b-a-1}\mathrm{d} t,\ \ \ \ \ \ \ -\frac{\pi}{2}<\textnormal{arg}\,z<\frac{\pi}{2}
\end{equation}
where $a\notin\mathbb{Z}_{\geq 1}$ is assumed, the branches of $t^{a-1}$ and $(1+t)^{b-a-1}$ are defined on the complex $t$-plane cut along $[0,\infty)$ and $(-\infty,-1]$ such that $0<\textnormal{arg}\,t<2\pi$ and $-\pi<\textnormal{arg}\,(1+t)<\pi$, and the contour of integration in \eqref{e2} is shown in Figure \ref{fig1}.
\begin{figure}[tbh]
\centering
	\begin{tikzpicture}[xscale=0.9,yscale=0.9]
	\draw [thick,decorate,decoration={zigzag,segment length=4,amplitude=1,post=lineto,post length=15}] (-1.05*\xr,0) -- (0,0);
        \draw [thick,->,decorate,decoration={zigzag,segment length=4,amplitude=1,pre=lineto,pre length=0,post=lineto,post length=3}] (0,0) -- (1.05*\xr,0) node [right] {$\footnotesize{\textnormal{Re}(t)}$};
  \draw [thick,->] (0,-\yr) -- (0,\yr) node [above] {$\footnotesize{\textnormal{Im}(t)}$};
  \draw[thick, xshift=5,red,decoration={markings,mark=between positions 0.125 and 0.875 step 0.25 with \arrow{<}},postaction={decorate}] (\xr,-\yr/4) -- (-\yr/4,-\yr/4) arc (-90:-270:\yr/4) (-\yr/4,\yr/4) -- (\xr,\yr/4) node[above left] {$$}; 
\end{tikzpicture}
\caption{The oriented integration contour used in \eqref{e2} in the complex $t$-plane in \textcolor{red}{red} colour with branch cuts on $(-\infty,-1]$ for $t\mapsto (1+t)^{\beta}$ and on $[0,\infty)$ for $t\mapsto t^{\beta}$.}
\label{fig1}
\end{figure}
Utilising \eqref{e2} we find, after changing variables,
\begin{equation*}
	U(1-2\nu,2-2\nu-2\mu+j+k,z)=\mathrm{e}^{-\im \pi(1-2\nu)}\frac{\Gamma(2\nu)}{2\pi\i}\int_{\Sigma}\zeta^{j+k}w(\zeta;z,\mu,\nu)\mathrm{d}\zeta,
\end{equation*}
provided $1-2\nu\notin\mathbb{N}$, in terms of the weight function
\begin{equation}\label{e3}
	w(\zeta;z,\mu,\nu):=\mathrm{e}^{-z(\zeta-1)}\zeta^{-2\mu}(\zeta-1)^{-2\nu},\ \ \ \zeta\in\mathbb{C}\setminus\big((-\infty,0]\cup[1,\infty)\big),
\end{equation}
with $\zeta^{-2\mu}$ and $(\zeta-1)^{-2\nu}$ defined on the complex $\zeta$-plane cut along $(-\infty,0]$ and $[1,\infty)$ such that $-\pi<\textnormal{arg}\,\zeta<\pi$ and $0<\textnormal{arg}\,(\zeta-1)<2\pi$, and contour of integration $\Sigma$ shown in Figure \ref{fig2}. In turn, \eqref{i14} can be rewritten as follows:
\begin{figure}[tbh]
\centering
	\begin{tikzpicture}[xscale=0.9,yscale=0.9]
	\draw [thick,decorate,decoration={zigzag,segment length=4,amplitude=1,post=lineto,post length=0}] (-1.05*\xr,0) -- (0,0);
        \draw [thick,->,decorate,decoration={zigzag,segment length=4,amplitude=1,pre=lineto,pre length=15,post=lineto,post length=3}] (0,0) -- (1.05*\xr,0) node [right] {$\footnotesize{\textnormal{Re}(\zeta)}$};
  \draw [thick,->] (0,-\yr) -- (0,\yr) node [above] {$\footnotesize{\textnormal{Im}(\zeta)}$};
  \draw[thick, xshift=5,red,decoration={markings,mark=between positions 0.125 and 0.875 step 0.25 with \arrow{<}},postaction={decorate}] (\xr,-\yr/4) -- (\yr/4,-\yr/4) arc (-90:-270:\yr/4) (\yr/4,\yr/4) -- (\xr,\yr/4) node[above left] {$\Sigma$};
\end{tikzpicture}
\caption{The oriented integration contour $\Sigma$ used in \eqref{e4} in the complex $\zeta$-plane in \textcolor{red}{red} colour with branch cuts on $(-\infty,0]$ for $\zeta\mapsto\zeta^{-\alpha}$ and on $[1,\infty)$ for $\zeta\mapsto(\zeta-1)^{-\alpha}$.}
\label{fig2}
\end{figure}
\begin{lem}\label{lem1} For any $N\in\mathbb{N},z\in\mathbb{C}:\textnormal{Re}(z)>0$ and $\mu,\nu\in\mathbb{C}: 1-2\nu\notin\mathbb{N}$, we have
\begin{equation*}
	\mathcal{J}_N(\mu,\nu,z)=\bigg[\mathrm{e}^{-\im \pi(1-2\nu)}\frac{\Gamma(2\nu)}{2\pi \im}\bigg]^NH_N[w]
\end{equation*}
with the Hankel determinant
\begin{equation}\label{e4}	H_n[w]:=\det\bigg[\int_{\Sigma}\zeta^{j+k}w(\zeta;z,\mu,\nu)\mathrm{d}\zeta\bigg]_{j,k=0}^{n-1},\ \ \ \ \ \ n\in\mathbb{N},
\end{equation}
generated by the weight \eqref{e3}.
\end{lem}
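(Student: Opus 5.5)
The plan is to verify the single-entry identity displayed just before the lemma, namely
\begin{equation*}
U(1-2\nu,2-2\nu-2\mu+j+k,z)=\mathrm{e}^{-\im\pi(1-2\nu)}\frac{\Gamma(2\nu)}{2\pi\im}\int_{\Sigma}\zeta^{j+k}w(\zeta;z,\mu,\nu)\,\mathrm{d}\zeta,
\end{equation*}
and then pull the common prefactor out of each of the $N$ rows of \eqref{i14}. The second step is immediate. Once every entry of the $N\times N$ matrix carries the same scalar factor, multilinearity of the determinant gives the factor to the power $N$ times $H_N[w]$ in \eqref{e4}.

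For the entrywise identity I will start from the integral representation \eqref{e2}. I take $a=1-2\nu$, which is admissible precisely because $1-2\nu\notin\mathbb{N}$, and $b=2-2\nu-2\mu+j+k$. Then $\Gamma(1-a)=\Gamma(2\nu)$, $a-1=-2\nu$ and $b-a-1=j+k-2\mu$. The integrand becomes $\mathrm{e}^{-zt}t^{-2\nu}(1+t)^{j+k-2\mu}$. Next I substitute $\zeta=1+t$, which gives $\mathrm{e}^{-z(\zeta-1)}(\zeta-1)^{-2\nu}\zeta^{j+k}\zeta^{-2\mu}=\zeta^{j+k}w(\zeta;z,\mu,\nu)$. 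This shift maps the contour of Figure \ref{fig1}, which wraps $[0,\infty)$ from $+\infty$ around $0$, onto $\Sigma$ of Figure \ref{fig2}, which wraps $[1,\infty)$ around $1$ with the same orientation. It also carries the cut $(-\infty,-1]$ for $(1+t)^\beta$ to $(-\infty,0]$ for $\zeta^{-2\mu}$.

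The step needing care is the matching of branches. The condition $0<\arg t<2\pi$ becomes $0<\arg(\zeta-1)<2\pi$, and $-\pi<\arg(1+t)<\pi$ becomes $-\pi<\arg\zeta<\pi$. These are exactly the branch conventions fixed after \eqref{e3}. Since $j+k$ is an integer, $(1+t)^{j+k-2\mu}=\zeta^{j+k}\zeta^{-2\mu}$ holds with no extra phase.

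Finally, \eqref{e2} requires $|\arg z|<\pi/2$, i.e.\ $\textnormal{Re}(z)>0$, which is exactly the hypothesis of the lemma. This condition also makes the integral converge along the two rays of $\Sigma$ via the factor $\mathrm{e}^{-z(\zeta-1)}$, uniformly in $j,k$. No genuine obstacle is expected. The only point to check carefully is that the orientation and branch bookkeeping under $t=\zeta-1$ produces no stray sign or phase factor.
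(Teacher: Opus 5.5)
Your proposal is correct and follows essentially the same route as the paper. The paper derives the entrywise identity from the integral representation \eqref{e2} with $a=1-2\nu$ and $b=2-2\nu-2\mu+j+k$, changes variables $\zeta=1+t$ (which reproduces the weight \eqref{e3} with its branch conventions), and then pulls the common constant out of each of the $N$ rows of \eqref{i14}; your bookkeeping of $\Gamma(1-a)=\Gamma(2\nu)$, $b-a-1=j+k-2\mu$, the branches and the contour orientation matches this.
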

Seeing that \eqref{e4} is a moment determinant, we now apply the framework of Riemann-Hilbert problems for orthogonal polynomials \cite{FIK2,FIK3} to the same. First we establish that, for given $(n,\mu,\nu)\in\mathbb{N}\times\mathbb{C}^2$, the zeros of
\begin{equation}\label{e5}
	z\mapsto D_n(z,\mu,\nu):=\det\bigg[\int_{\Sigma}\zeta^{j+k}w(\zeta;z,\mu,\nu)\mathrm{d}\zeta\bigg]_{j,k=0}^{n-1}
\end{equation}
are isolated in the half-plane $z\in\mathbb{C}:\textnormal{Re}(z)>0$.
\begin{lem}\label{lem2} The function \eqref{e5} is analytic in $\mathbb{H}_+:=\{z\in\mathbb{C}:\,\textnormal{Re}(z)>0\}$, for any $(n,\mu,\nu)\in\mathbb{N}\times\mathbb{C}^2$, and not identically zero on $\mathbb{H}_+$.
\end{lem}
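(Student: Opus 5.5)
Two things need proof: that $D_n(\cdot,\mu,\nu)$ is analytic on $\mathbb{H}_+$, and that it is not identically zero there. I plan to obtain analyticity directly from the integral representation, and to prove non-vanishing by computing the leading behaviour of $D_n(z)$ as $z\to+\infty$ along the real axis.

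\emph{Analyticity.} Each entry $m_{j+k}(z):=\int_\Sigma \zeta^{j+k}\mathrm{e}^{-z(\zeta-1)}\zeta^{-2\mu}(\zeta-1)^{-2\nu}\,\mathrm{d}\zeta$ is analytic in $z\in\mathbb{H}_+$. Indeed, on the two horizontal rays of $\Sigma$ we have $|\mathrm{e}^{-z(\zeta-1)}|=\mathrm{e}^{-\textnormal{Re}(z)(\textnormal{Re}\,\zeta-1)}$. This factor decays exponentially, uniformly for $z$ in compact subsets of $\mathbb{H}_+$, while the remaining factors grow at most polynomially. The small arc around $\zeta=1$ is compact. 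Differentiation under the integral sign (or Morera's theorem with Fubini) therefore gives analyticity of every entry, hence of the determinant $D_n$.

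\emph{Non-vanishing.} By Lemma \ref{lem1}, or directly from the contour representation, $m_p(z)$ is a constant multiple of $U(1-2\nu,2-2\nu-2\mu+p,z)$. This holds when $1-2\nu\notin\mathbb{N}$, and in the exceptional case I would work with the contour integral itself. I then expand each entry as $z\to+\infty$. On $\Sigma$, write $\zeta^{p-2\mu}=\sum_{r\ge0}\binom{p-2\mu}{r}(\zeta-1)^r$ near $\zeta=1$. Watson's lemma applied to the Hankel-type contour $\Sigma$ then gives
\[
m_p(z)\sim c\sum_{r\ge0}\binom{p-2\mu}{r}\frac{z^{2\nu-1-r}}{\Gamma(2\nu-r)}\quad (z\to+\infty),
\]
where $c$ is a nonzero constant independent of $p$. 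When $2\nu-r$ is a non-positive integer, the reciprocal Gamma factor vanishes and those terms are absent. Expanding the determinant, $D_n(z)=c^n z^{n(2\nu-1)}\det\big[a_{j+k}(1/z)\big]$, where $a_p(\epsilon)=\sum_r \binom{p-2\mu}{r}\epsilon^r/\Gamma(2\nu-r)$. Row operations in the spirit of Vandermonde, i.e.\ taking finite differences in the index $p$, reduce the determinant: after them, row $j$ starts at order $\epsilon^{j}$ and column $k$ at order $\epsilon^{k}$. The leading term is then $\epsilon^{n(n-1)}$ times the determinant $\det[1/\Gamma(2\nu-j-k)]_{j,k=0}^{n-1}$, up to nonzero combinatorial factors. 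This determinant is a known Hankel determinant of reciprocal Gamma values, equal to a product of Gamma-ratios, and I would need to check that it is nonzero.

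\emph{Main obstacle.} The exceptional parameters are the hard case: if $2\nu$ is an integer, $1/\Gamma(2\nu-r)$ vanishes for large $r$ and this leading coefficient can be zero. To handle this I would switch to a cleaner argument. Fix $z$ real and large, and view $D_n$ as an entire function of the parameters $(\mu,\nu)$. Since the contour is fixed and the integrals converge, $D_n$ is analytic in $(\mu,\nu)$ as well. If $D_n(\cdot,\mu_0,\nu_0)\equiv0$ on $\mathbb{H}_+$, then all Taylor coefficients in $z$ vanish at $(\mu_0,\nu_0)$. That alone is not a contradiction, so instead I would expand in $1/z$ to higher order until a nonvanishing coefficient appears. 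Alternatively, in the degenerate case, the arc around $1$ contributes a finite sum (a residue-type polynomial in $z$ times $\mathrm{e}^{z}$-free terms), and the determinant can be computed exactly. For integer $2\nu\le 0$ the weight has no singularity at $1$, the integral vanishes, and the case needs separate treatment or is excluded by the hypothesis; I expect this bookkeeping to be the main difficulty. Once some coefficient of the asymptotic expansion is shown to be nonzero, $D_n\not\equiv0$ follows, and its zeros in $\mathbb{H}_+$ are isolated by analyticity.
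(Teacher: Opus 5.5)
The step that fails is your handling of the exceptional parameters. There is no higher-order coefficient to find, because in those cases $D_n\equiv0$: the lemma as stated, for all $(\mu,\nu)\in\mathbb{C}^2$, is false. There are two degenerate families.

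First, if $2\nu\in\mathbb{Z}_{\le 0}$, the weight is analytic in $\mathbb{C}\setminus(-\infty,0]$ and decays as $\textnormal{Re}\,\zeta\to+\infty$. So $\Sigma$ can be closed off at infinity and every moment vanishes; for example, $\nu=0$ gives $D_1\equiv0$. You noticed this vanishing, but it is a counterexample, not a case excluded by the hypotheses.

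Second, if $2\nu\in\{1,\ldots,n-1\}$, the same closing argument leaves only the residue at the pole $\zeta=1$:
\[
m_p(z)=\int_\Sigma\zeta^{p}w(\zeta;z,\mu,\nu)\,\mathrm{d}\zeta=\pm2\pi\im\sum_{r=0}^{2\nu-1}\binom{p-2\mu}{r}\frac{(-z)^{2\nu-1-r}}{(2\nu-1-r)!}.
\]
For fixed $z$ this is a polynomial of degree $2\nu-1$ in the index $p$. Hence every column $(m_{j+k})_{j=0}^{n-1}$ lies in the span of the $2\nu$ vectors $(j^a)_{j=0}^{n-1}$, $0\le a\le 2\nu-1$. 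The Hankel matrix therefore has rank at most $2\nu<n$, and $D_n\equiv0$ on $\mathbb{H}_+$; already for $2\nu=1$, $n=2$, all moments coincide. So expanding in $1/z$ until a nonvanishing coefficient appears cannot succeed, and an exact computation in the degenerate case would return zero.

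The rest of your plan is sound. It is the same large-$z$ idea the paper invokes: the paper gets analyticity from the $n$-fold integral plus Morera, and then only gestures at the large-$|z|$ asymptotics of the moments, with the same blind spot. The piece you left open closes explicitly. We have $1/\Gamma(2\nu-p)=(-1)^p(1-2\nu)_p/\Gamma(2\nu)$ and $\det[(a)_{j+k}]_{j,k=0}^{n-1}=\prod_{j=0}^{n-1}j!\,(a)_j$. So your leading coefficient is $\Gamma(2\nu)^{-n}\prod_{j=0}^{n-1}j!\,(1-2\nu)_j$, read as $0$ when $2\nu\in\mathbb{Z}_{\le0}$. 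It is nonzero precisely when $2\nu\notin\mathbb{Z}_{\le n-1}$. This range includes $2\nu\in\mathbb{Z}_{\ge n}$, where the moments are polynomials in $z$ and your expansion is exact. Completed this way, your argument proves the correct statement: $D_n\not\equiv0$ on $\mathbb{H}_+$ if and only if $2\nu\notin\mathbb{Z}_{\le n-1}$. That is all the paper needs. There $\nu=\frac12(N+\overline{w})$ with $\textnormal{Re}(2w)>-1$, and the Riemann--Hilbert route is only run for $w\notin\mathbb{Z}_{\ge0}$, where $2\nu\notin\mathbb{Z}$.
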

\begin{proof} We fix $(n,\mu,\nu)\in\mathbb{N}\times\mathbb{C}^2$. By multilinearity of the determinant,
\begin{equation}\label{e6}
	D_n(z,\mu,\nu)=\frac{1}{n!}\int_{\Sigma^n}\prod_{1\leq j<k\leq n}(\zeta_k-\zeta_j)^2\prod_{j=1}^{n}w(\zeta_{j};z,\mu,\nu)\mathrm{d} \zeta_{j},
\end{equation}
where the integrand in \eqref{e6} is bounded by a $z$-independent function in $L^1(\Sigma^n,\prod_1^n\mathrm{d}\zeta_j)$, provided $z$ lies in a compact subset of $\mathbb{H}_+$. Also, by Fubini's and Cauchy's theorem,
\begin{equation*}
	\oint_{\gamma}D_n(z,\mu,\nu)\mathrm{d}z=0,
\end{equation*}
for all triangles, $\gamma\subset\mathbb{H}_+$ since the integrand of $D_n$ is entire in $z$. Hence, \eqref{e5} is analytic in $\mathbb{H}_+$ by Morera's theorem. Lastly, if $z\mapsto D_n(z;\mu,\nu)$ vanishes identically on $\mathbb{H}_+$, then at least two columns in the matrix underwriting \eqref{e5} are linearly dependent on all of $\mathbb{H}_+$. But this contradicts the large $|z|$-asymptotics of 
\begin{equation*}
	z\mapsto\int_{\Sigma}\zeta^{j+k}w(\zeta;z,\mu,\nu)\mathrm{d}\zeta.
\end{equation*}
The proof of the Lemma is finished.
\end{proof}
Next, we relate $D_n$ to a set of orthonormal polynomials. Let 
\begin{equation}\label{zeroset}
    E_n^{\mu\nu}:=\big\{z\in\mathbb{H}_+:\ D_n(z,\mu,\nu)D_{n+1}(z,\mu,\nu)=0\big\}
\end{equation}
be the discrete zero locus of $D_nD_{n+1}$, for fixed $(n,\mu,\nu)\in\mathbb{Z}_{\geq 0}\times\mathbb{C}^2$, using the convention $D_0:=1$. Use
\begin{equation}\label{e7}
	\mathbb{C}[\zeta]\ni q_j(\zeta)=\kappa_j\Big(\zeta^j+\beta_j\zeta^{j-1}+\mathcal{O}\big(\zeta^{j-2}\big)\Big),\ \ \ \ \kappa_j=\kappa_j(z,\mu,\nu)\neq 0;\ \ \ \ \beta_j=\beta_j(z,\mu,\nu)\in\mathbb{C}
\end{equation}
to denote the degree $j\in\mathbb{Z}_{\geq 0}$ orthonormal polynomial with respect to $w(\zeta;z,\mu,\nu)$ on $\Sigma$. That is we have
\begin{equation*}
\int_{\Sigma}q_j(\zeta)q_k(\zeta)w(\zeta;z,\mu,\nu)\d\zeta=\delta_{jk},\ \ \ 0\leq k\leq j,
\end{equation*}
and $q_j\in\mathbb{C}[\zeta]$ exists precisely for $z\in\mathbb{H}_+\setminus E_j^{\mu\nu}$. Moreover, by classical theory,
\begin{equation}\label{e8}
	D_n(z,\mu,\nu)=\prod_{\ell=0}^{n-1}\kappa_{\ell}^{-2}(z,\mu,\nu),\ \ z\notin\bigcup_{j=1}^{n-1}E_j^{\mu\nu};\ \ \ \ \ \ \ \ \ \ \ \frac{D_{n+1}(z,\mu,\nu)}{D_n(z,\mu,\nu)}=\kappa_n^{-2}(z,\mu,\nu),\ \  z\notin E_n^{\mu\nu}.
\end{equation}
Having introduced $q_j$ in \eqref{e7} we proceed with the following well-known Riemann-Hilbert problem (RHP).
\begin{problem}\label{RHP1} Let $(n,\mu,\nu)\in\mathbb{N}\times\mathbb{C}^2$ and $z\in\mathbb{H}_+\setminus (E_n^{\mu\nu}\cup E_{n-1}^{\mu\nu})$. Now determine $Y(\lambda)=Y(\lambda;z,\mu,\nu,n)\in\mathbb{C}^{2\times 2}$ so that
\begin{enumerate}
	\item[(1)] $Y(\lambda)$ is analytic for $\lambda\in\mathbb{C}\setminus\Sigma$ and extends continuously to $\mathbb{C}$ from either side of $\Sigma$.
	\item[(2)] The continuous limiting values $Y_{\pm}(\lambda),\lambda\in\Sigma$ from either side of $\Sigma$, compare Figure \ref{fig3}, satisfy
	\begin{equation*}
		Y_+(\lambda)=Y_-(\lambda)\begin{bmatrix}1&w(\lambda;z,\mu,\nu)\\ 0&1\end{bmatrix}.
	\end{equation*}
	\begin{figure}[tbh]\centering
	\begin{tikzpicture}[xscale=0.9,yscale=0.9]
	\draw [thick] (-1.05*\xr,0) -- (0,0);
        \draw [thick,->] (0,0) -- (1.05*\xr,0) node [right] {$\footnotesize{\textnormal{Re}(\lambda)}$};
  \draw [thick,->] (0,-\yr) -- (0,\yr) node [above] {$\footnotesize{\textnormal{Im}(\lambda)}$};
  \draw[thick, xshift=5,red,decoration={markings,mark=between positions 0.125 and 0.875 step 0.25 with \arrow{<}},postaction={decorate}] (\xr,-\yr/4) -- (\yr/4,-\yr/4) arc (-90:-270:\yr/4) (\yr/4,\yr/4) -- (\xr,\yr/4) node[above left] {$\Sigma$};
\end{tikzpicture}
\caption{The oriented jump contour $\Sigma$ used in the RHP for $Y(\lambda)$, in the complex $\lambda$-plane.}
\label{fig3}
\end{figure}
	\item[(3)] As $\lambda\rightarrow\infty,\lambda\notin\Sigma$,
	\begin{equation*}
		Y(\lambda)=\bigg\{I+\sum_{\ell=1}^2\frac{m_{\ell}}{\lambda^{\ell}}+\mathcal{O}\big(\lambda^{-3}\big)\bigg\}\lambda^{n\sigma_3},\ \ \ m_{\ell}=m_{\ell}(z,\mu,\nu,n)=\Big[m_{\ell}^{jk}(z,\mu,\nu,n)\Big]_{j,k=1}^2.
	\end{equation*}
\end{enumerate}
\end{problem}
By standard theory \cite{FIK2}, RHP \ref{RHP1} is uniquely solvable for $(n,\mu,\nu)\in\mathbb{N}\times\mathbb{C}^2$ and $z\in\mathbb{H}_+\setminus(E_n^{\mu\nu}\cup E_{n-1}^{\mu\nu})$, its solution being of the form
\begin{equation}\label{e9}
	Y(\lambda)=\begin{bmatrix}\kappa_n^{-1}q_n(\lambda) & \frac{\kappa_n^{-1}}{2\pi\im}\int_{\Sigma}q_n(\zeta)w(\zeta;z,\mu,\nu)\frac{\mathrm{d}\zeta}{\zeta-\lambda}\smallskip\\ -2\pi\im\kappa_{n-1}q_{n-1}(\lambda)&-\kappa_{n-1}\int_{\Sigma}q_{n-1}(\zeta)w(\zeta;z,\mu,\nu)\frac{\mathrm{d}\zeta}{\zeta-\lambda}\end{bmatrix},\ \ \ \lambda\in\mathbb{C}\setminus\Sigma
\end{equation}
in terms of \eqref{e7} and thus relating back to the determinant \eqref{e5} via the identity
\begin{equation}\label{e10}
	\frac{D_{n+1}(z,\mu,\nu)}{D_n(z,\mu,\nu)}\stackrel{\eqref{e8}}{=}\kappa_n^{-2}(z,\mu,\nu)=-2\pi\im \,m_1^{12}(z,\mu,\nu,n).
\end{equation}
In order to study the $z$-evolution of $D_n$, we proceed by reducing RHP \ref{RHP1} to a RHP with jumps that are $(\lambda,z,n)$-independent. This step is the key idea in the Riemann-Hilbert analysis of $D_n$ for finite $n$. Concretely, we define for $(n,\mu,\nu)\in\mathbb{N}\times\mathbb{C}^2$ and $-z\in\mathbb{H}_+\setminus(E_n^{\mu\nu}\cup E_{n-1}^{\mu\nu})$,
\begin{equation}\label{e11}
	\Psi(\lambda;z,\mu,\nu,n):=\lambda^{-\mu}(\lambda-1)^{-\nu}\mathrm{e}^{\frac{z}{2}\sigma_3}Y(\lambda;-z,\mu,\nu,n)\begin{bmatrix}\mathrm{e}^{\frac{z}{2}(\lambda-1)} & 0\\ 0 & \mathrm{e}^{-\frac{z}{2}(\lambda-1)}\lambda^{2\mu}(\lambda-1)^{2\nu}\end{bmatrix},\  \lambda\in\mathbb{C}\setminus\Sigma_{\Psi},
\end{equation}
with contour $\Sigma_{\Psi}:=\Sigma\cup(-\infty,0]\cup[1,\infty)$ shown in Figure \ref{fig4} and branch conventions as in \eqref{e3}: $\lambda^{\beta}$ and $(\lambda-1)^{\beta}$ are defined on the complex $\lambda$-plane cut along $(-\infty,0]$ and $[1,\infty)$ such that $-\pi<\textnormal{arg}\,\lambda<\pi$ and $0<\textnormal{arg}(\lambda-1)<2\pi$. Recalling RHP \ref{RHP1} we are led to the below RHP for $\Psi(\lambda)$.
\begin{figure}[tbh]
\centering
	\begin{tikzpicture}[xscale=0.9,yscale=0.9]
	\draw [thick,red,decoration={markings,mark= at position 0.5 with {\arrow{>}}},postaction={decorate}] (-1.05*\xr,0) -- (0,0);
        \draw [->] (0,0) -- (1.05*\xr,0) node [right] {$\footnotesize{\textnormal{Re}(\lambda)}$};
        \draw [thick,red,decoration={markings,mark= at position 0.5 with {\arrow{>}}},postaction={decorate}] (0.4,0) -- (1.05*\xr,0);
  \draw [->] (0,-\yr) -- (0,\yr) node [above] {$\footnotesize{\textnormal{Im}(\lambda)}$};
  \draw[thick, xshift=5,red,decoration={markings,mark=between positions 0.125 and 0.875 step 0.25 with \arrow{<}},postaction={decorate}] (\xr,-\yr/4) -- (\yr/4,-\yr/4) arc (-90:-270:\yr/4) (\yr/4,\yr/4) -- (\xr,\yr/4) node[above left] {$\Sigma_{\Psi}$};
  \draw [fill, color=blue!60!black] (0.4,0) circle [radius=0.04];
  \draw [fill, color=blue!60!black] (0,0) circle [radius=0.04];

\end{tikzpicture}
\caption{The oriented jump contour $\Sigma_{\Psi}$ used in the RHP for $\Psi(\lambda)$, in the complex $\lambda$-plane drawn in \textcolor{red}{red}. The endpoints $\lambda=0,1$ are colored in \textcolor{blue!60!black}{blue}.}
\label{fig4}
\end{figure}

\begin{problem}\label{RHP2} Let $(n,\mu,\nu)\in\mathbb{N}\times\mathbb{C}^2$ and $-z\in\mathbb{H}_+\setminus (E_n^{\mu\nu}\cup E_{n-1}^{\mu\nu})$. The function $\Psi(\lambda)=\Psi(\lambda;z,\mu,\nu,n)\in\mathbb{C}^{2\times 2}$ defined in \eqref{e11} has the following properties:
\begin{enumerate}
	\item[(1)] $\Psi(\lambda)$ is analytic for $\lambda\in\mathbb{C}\setminus\Sigma_{\Psi}$ and extends continuously to $\mathbb{C}\setminus\{0,1\}$.
	\item[(2)] The continuous limiting values $\Psi_{\pm}(\lambda),\lambda\in\Sigma_{\Psi}\setminus\{0,1\}$ from either side of $\Sigma_{\Psi}\setminus\{0,1\}$, compare Figure \ref{fig4}, satisfy
	\begin{equation}\label{e12}
		\Psi_+(\lambda)=\Psi_-(\lambda)\begin{bmatrix}1&1\\ 0&1\end{bmatrix},\ \lambda\in\Sigma;\ \ \ \ \Psi_+(\lambda)=\Psi_-(\lambda)\begin{cases}\displaystyle
    \mathrm{e}^{-2\pi\im\mu\sigma_3},& \lambda\in(-\infty,0)\\
		\displaystyle\mathrm{e}^{2\pi\im \nu\sigma_3},&\lambda\in(1,\infty)\end{cases}.
	\end{equation}
	\item[(3)] There exist functions $\widehat{\Psi}_0(\lambda)$, resp. $\widehat{\Psi}_1(\lambda)$, analytic at $\lambda=0$, resp. at $\lambda=1$, such that
	\begin{equation}\label{e13}
		\Psi(\lambda)=\begin{cases}\displaystyle\widehat{\Psi}_0(\lambda)\lambda^{-\mu\sigma_3},& \lambda\rightarrow 0,\ \lambda\notin(-\infty,0]\\
		\widehat{\Psi}_1(\lambda)(\lambda-1)^{-\nu\sigma_3},& \lambda\rightarrow 1,\ \lambda\notin[1,\infty).
		\end{cases}
	\end{equation}
	Here, $\lambda^{\beta}$ and $(\lambda-1)^{\beta}$ are defined on $\mathbb{C}\setminus(-\infty,0]$ and $\mathbb{C}\setminus[1,\infty)$ with branches as in \eqref{e11}.
	\item[(4)] As $\lambda\rightarrow\infty$ and $\lambda\notin\Sigma_{\Psi}$,
	\begin{equation}\label{e14}
		\Psi(\lambda)=\bigg\{I+\sum_{\ell=1}^2\frac{n_{\ell}}{\lambda^{\ell}}+\mathcal{O}\big(\lambda^{-3}\big)\bigg\}\lambda^{n\sigma_3}\lambda^{-\mu\sigma_3}(\lambda-1)^{-\nu\sigma_3}\mathrm{e}^{\frac{z}{2}\lambda\sigma_3};
	\end{equation}
	with
	\begin{equation*}
		n_{\ell}(z,\mu,\nu,n)=\mathrm{e}^{\frac{z}{2}\sigma_3}m_{\ell}(-z,\mu,\nu,n)\mathrm{e}^{-\frac{z}{2}\sigma_3}.
	\end{equation*}
\end{enumerate}
\end{problem}
Seeing that \eqref{e12} are $(\lambda,z,n)$-independent unimodular matrices, we now derive the below result.
\begin{prop} RHP \ref{RHP2} is uniquely solvable for $(n,\mu,\nu)\in\mathbb{N}\times\mathbb{C}^2$ and $-z\in\mathbb{H}_+\setminus(E_n^{\mu\nu}\cup E_{n-1}^{\mu\nu})$ and its solution has $\det\Psi(\lambda)\equiv 1$ in $\mathbb{C}$. Moreover, $\Psi(\lambda)=\Psi(\lambda;z,\mu,\nu,n)$ satisfies the Lax system
\begin{equation}\label{e15}
	\frac{\partial\Psi}{\partial\lambda}(\lambda)=\bigg\{zA_{\infty}+\frac{A_0}{\lambda}+\frac{A_1}{\lambda-1}\bigg\}\Psi(\lambda),\ \ \ \frac{\partial\Psi}{\partial z}(\lambda)=\big\{\lambda B_{\infty}+B_0\big\}\Psi(\lambda),
\end{equation}
with $\lambda$-independent coefficients
\begin{equation*}
	A_{\infty}=B_{\infty}=\frac{1}{2}\sigma_3,\ \ A_0=-\mu\widehat{\Psi}_0(0)\sigma_3\widehat{\Psi}_0(0)^{-1},\ \ A_1=-\nu\widehat{\Psi}_1(1)\sigma_3\widehat{\Psi}_1(1)^{-1},\ \ B_0=\frac{1}{2}[n_1,\sigma_3].
\end{equation*}
The same coefficients satisfy the constraint
\begin{equation*}
	(n-\mu-\nu)\sigma_3+zB_0=A_0+A_1;\ \ \ \ \ \ [A,B]:=AB-BA,\ \ \ \ A,B\in\mathbb{C}^{2\times 2}.
\end{equation*}
\end{prop}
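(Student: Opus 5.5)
Existence of $\Psi$ comes directly from \eqref{e11}. For $-z\in\mathbb{H}_+\setminus(E_n^{\mu\nu}\cup E_{n-1}^{\mu\nu})$, RHP \ref{RHP1} (with $z$ replaced by $-z$) has the solution \eqref{e9}. Substituting into \eqref{e11}, we check properties (1)--(4) of RHP \ref{RHP2}:
\begin{itemize}
\item[(i)] On $\Sigma$, the scalar factors $\lambda^{-\mu}(\lambda-1)^{-\nu}$ and the right diagonal factor turn the jump $\left[\begin{smallmatrix}1&w\\0&1\end{smallmatrix}\right]$ into $\left[\begin{smallmatrix}1&1\\0&1\end{smallmatrix}\right]$. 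This uses $w(\lambda;-z,\mu,\nu)=\e^{z(\lambda-1)}\lambda^{-2\mu}(\lambda-1)^{-2\nu}$.
\item[(ii)] On $(-\infty,0)$ and $(1,\infty)$, the only jumps come from the branch cuts of $\lambda^{\mp\mu}$ and $(\lambda-1)^{\mp\nu}$. These combine into $\e^{-2\pi\im\mu\sigma_3}$ and $\e^{2\pi\im\nu\sigma_3}$ respectively, with the conventions $-\pi<\arg\lambda<\pi$ and $0<\arg(\lambda-1)<2\pi$.
\item[(iii)] Near $0$ and $1$, the second column of $Y$ is a Cauchy transform of $\zeta^{j}w$, and $w$ carries the factors $\lambda^{-2\mu}$ and $(\lambda-1)^{-2\nu}$. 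Multiplying by $\lambda^{2\mu}(\lambda-1)^{2\nu}$ gives the local form \eqref{e13}. This is the step needing care: one must verify that the resulting $\widehat\Psi_0$, $\widehat\Psi_1$ are genuinely analytic, not merely bounded, using the fact that the contour $\Sigma$ does not pass through $0$ or $1$.
\item[(iv)] The asymptotics \eqref{e14} follow from RHP \ref{RHP1}(3) after conjugation by $\e^{\frac z2\sigma_3}$, which yields the stated formula for $n_\ell$.
\end{itemize}

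Next, $\det\Psi\equiv1$ and uniqueness. Since every jump in \eqref{e12} has determinant $1$, $\det\Psi$ has no jump on $\Sigma_\Psi\setminus\{0,1\}$. By \eqref{e13}, $\det\Psi=\det\widehat\Psi_0$ near $0$ (and similarly near $1$), so the singularities at $0$ and $1$ are removable. From \eqref{e14}, $\det\Psi\to1$ at infinity. Liouville's theorem then gives $\det\Psi\equiv1$; alternatively, one can compute directly $\det Y\equiv1$ from \eqref{e11}. For uniqueness, if $\Phi$ is a second solution, then $\Phi\Psi^{-1}$ has no jumps, is analytic at $0$ and $1$ by \eqref{e13} (the $\lambda^{\mp\mu\sigma_3}$ factors cancel), and tends to $I$ at infinity. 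Hence $\Phi\Psi^{-1}\equiv I$.

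For the Lax system, the jumps of $\Psi$ are independent of $\lambda$ and $z$, so $\Psi_\lambda\Psi^{-1}$ and $\Psi_z\Psi^{-1}$ have no jumps. Their only possible singularities are at $0$, $1$ and $\infty$. Near $0$,
\[
\Psi_\lambda\Psi^{-1}=\widehat\Psi_0'\widehat\Psi_0^{-1}-\frac{\mu}{\lambda}\widehat\Psi_0\sigma_3\widehat\Psi_0^{-1},
\]
which is a simple pole with residue $A_0$; the residue $A_1$ at $1$ arises the same way. At infinity, differentiating \eqref{e14} gives
\[
\Psi_\lambda\Psi^{-1}=\tfrac z2\sigma_3+\mathcal O(\lambda^{-1}),
\]
so Liouville's theorem yields the first equation in \eqref{e15}. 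For the $z$-derivative, $\Psi_z\Psi^{-1}$ is analytic at $0$ and $1$ because the exponents $\mu,\nu$ do not depend on $z$. At infinity,
\[
\Psi_z\Psi^{-1}=\big(I+n_1\lambda^{-1}+\dots\big)\tfrac{\lambda}{2}\sigma_3\big(I+n_1\lambda^{-1}+\dots\big)^{-1}+\mathcal O(\lambda^{-1})=\tfrac\lambda2\sigma_3+\tfrac12[n_1,\sigma_3]+\mathcal O(\lambda^{-1}),
\]
which gives $B_\infty$ and $B_0$.

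For the constraint, expand $\Psi_\lambda\Psi^{-1}$ at infinity to order $\lambda^{-1}$. From \eqref{e14},
\[
\Psi_\lambda\Psi^{-1}=\tfrac z2\big(I+n_1\lambda^{-1}\big)\sigma_3\big(I-n_1\lambda^{-1}\big)+\frac{(n-\mu-\nu)\sigma_3}{\lambda}+\mathcal O(\lambda^{-2}),
\]
so the $\lambda^{-1}$ coefficient is $\frac z2[n_1,\sigma_3]+(n-\mu-\nu)\sigma_3$. On the other hand, the partial-fraction form has $\lambda^{-1}$ coefficient $A_0+A_1$. Equating the two gives $(n-\mu-\nu)\sigma_3+zB_0=A_0+A_1$.

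The main obstacle is step (iii): confirming the analytic local structure \eqref{e13} at $0$ and $1$, particularly in the non-generic cases where $2\mu$ or $2\nu$ is an integer and logarithmic terms might appear. I would handle this by deforming $\Sigma$ locally, so that the Cauchy transform splits into an analytic piece plus $\lambda^{-2\mu}$ times an analytic function.
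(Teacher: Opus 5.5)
Your proposal is correct and follows the paper's route: Liouville applied to $\det\Psi$, to $\Psi_\lambda\Psi^{-1}$ and to $\Psi_z\Psi^{-1}$, with the constraint read off from the $\lambda^{-1}$ coefficient at infinity. It also spells out the existence check via \eqref{e11} and the uniqueness argument, which the paper leaves implicit.

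The only correction is that your ``main obstacle'' in step (iii) is not an obstacle. Since $0,1\notin\Sigma$, $Y(\lambda;-z,\mu,\nu,n)$ is analytic near both points, and \eqref{e11} can be rewritten as $\Psi(\lambda)=\mathrm{e}^{\frac z2\sigma_3}Y(\lambda;-z,\mu,\nu,n)\,\lambda^{-\mu\sigma_3}(\lambda-1)^{-\nu\sigma_3}\mathrm{e}^{\frac z2(\lambda-1)\sigma_3}$. This gives \eqref{e13} immediately, with explicit analytic $\widehat\Psi_0,\widehat\Psi_1$ for all $\mu,\nu$: no logarithms appear and no contour deformation is needed. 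Moreover, the ``$\lambda^{-2\mu}\times$ analytic'' piece in the Cauchy transform that your fix anticipates would actually be incompatible with \eqref{e13}.
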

\begin{proof} By \eqref{e12}, $\det\Psi(\lambda)$ extends continuously across $\Sigma\cup(-\infty,0)\cup(1,\infty)$, so $\lambda\mapsto\det\Psi(\lambda)$ is analytic in $\mathbb{C}\setminus\{0,1\}$. However, \eqref{e13} shows that any singularities of $\det\Psi(\lambda)$ at $\lambda=0,1$ are removable, so $\lambda\mapsto\det\Psi(\lambda)$ in fact an entire function. But the same is normalized to unity at infinity by \eqref{e14}, and so $\det\Psi(\lambda)\equiv 1$ in $\mathbb{C}$ as consequence of Liouville's theorem. To arrive at \eqref{e15} we first note that $\lambda\mapsto\frac{\partial\Psi}{\partial \lambda}(\lambda)\Psi(\lambda)^{-1}$ is analytic in $\mathbb{C}\setminus\{0,1\}$ by \eqref{e12}. And, by \eqref{e14} and \eqref{e13},
\begin{equation*}
	\frac{\partial\Psi}{\partial \lambda}(\lambda)\Psi(\lambda)^{-1}=zA_{\infty}+\frac{1}{\lambda}\bigg\{(n-\mu-\nu)\sigma_3+\frac{z}{2}[n_1,\sigma_3]\bigg\}+\mathcal{O}\big(\lambda^{-2}\big),\ \ \ \lambda\rightarrow\infty;
\end{equation*}
\begin{equation*}
	\frac{\partial\Psi}{\partial\lambda}(\lambda)\Psi(\lambda)^{-1}=\frac{A_0}{\lambda}+\mathcal{O}(1),\ \ \ \ \ \lambda\rightarrow 0;\ \ \ \ \ \ \ \frac{\partial\Psi}{\partial\lambda}(\lambda)\Psi(\lambda)^{-1}=\frac{A_1}{\lambda-1}+\mathcal{O}(1),\ \ \ \ \ \lambda\rightarrow 1.
\end{equation*}
Thus, again by Liouville's theorem,
\begin{equation*}
	\frac{\partial\Psi}{\partial\lambda}(\lambda)\Psi(\lambda)^{-1}=zA_{\infty}+\frac{A_0}{\lambda}+\frac{A_1}{\lambda-1},\ \ \ \ \lambda\in\mathbb{C}\setminus\{0,1\},
\end{equation*}
and we have the identity
\begin{equation*}
	(n-\mu-\nu)\sigma_3+\frac{z}{2}[n_1,\sigma_3]=A_0+A_1.
\end{equation*}
A similar argument for $\lambda\mapsto\frac{\partial\Psi}{\partial z}(\lambda)\Psi(\lambda)^{-1}$ yields the second system in \eqref{e15} and completes this proof.
\end{proof}
At this point of our calculation we compare \eqref{e15} to system \cite[$(5.4.4),(5.4.5)$]{FIKN}. Firstly, the $\lambda$-equation in \eqref{e15} constitutes a $2\times 2$ ODE system with rational coefficients and three singular points: two Fuchsian singular points at $\lambda=0$ and $\lambda=1$, and one irregular singular point at $\lambda=\infty$ of Poincar\'e rank one. Secondly, the $z$-equation in \eqref{e15} states that the $z$-dependence of the coefficients in the $\lambda$-equation is monodromy preserving, i.e. we are necessarily dealing with a Painlev\'e type isomonodromic deformation of the $\lambda$-system, and are thus left to match our data \eqref{e15} with the data \cite[$(5.4.4),(5.4.5)$]{FIKN} in order to obtain the following connection of RHP \ref{RHP2} to Painlev\'e special function theory.
\begin{prop}\label{JMUpara} Introduce the parameters
\begin{equation*}
	\theta_0:=-2\mu,\ \ \ \ \ \ \theta_1:=-2\nu,\ \ \ \ \ \ \theta_{\infty}:=2(\mu+\nu-n)
\end{equation*}
and define, in terms of the data \eqref{e13},\eqref{e14}, as functions of $(z,\mu,\nu,n)$,
\begin{equation*}
	v:=-2\mu\widehat{\Psi}_0^{12}(0)\widehat{\Psi}_0^{21}(0),\ \ \ \ \ \ y:=\frac{\widehat{\Psi}_0^{12}(0)}{\widehat{\Psi}_0^{22}(0)},\ \ \ \ \ \ u:=\frac{y(v+\theta_0)-zn_1^{12}}{y(v+\frac{1}{2}(\theta_0+\theta_{\infty}-\theta_1))}.
\end{equation*}
Then the matrix-valued coefficients $A_0,A_1$ and $B_0$ in \eqref{e15} are of the form
\begin{equation*}
	A_0=\begin{bmatrix}v+\frac{1}{2}\theta_0 & -y(v+\theta_0)\smallskip\\ \frac{v}{y} &-v-\frac{1}{2}\theta_0\end{bmatrix},\ \ \ A_1=\begin{bmatrix}-v-\frac{1}{2}(\theta_0+\theta_{\infty}) & uy(v+\frac{1}{2}(\theta_0+\theta_{\infty}-\theta_1))\smallskip\\ -\frac{1}{uy}(v+\frac{1}{2}(\theta_0+\theta_{\infty}+\theta_1)) & v+\frac{1}{2}(\theta_0+\theta_{\infty})\end{bmatrix},
\end{equation*}
\begin{equation*}
	B_0=\frac{1}{z}\begin{bmatrix}0 & -y(v+\theta_0-u(v+\frac{1}{2}(\theta_0+\theta_{\infty}-\theta_1)))\smallskip\\ \frac{1}{y}(v-\frac{1}{u}(v+\frac{1}{2}(\theta_0+\theta_{\infty}+\theta_1)))& 0\end{bmatrix},
\end{equation*}
and $v,y,u$ satisfy the nonlinear ODE system
\begin{align}
	z\frac{\d u}{\d z}=&\,\,zu-2v(u-1)^2-\frac{1}{2}(u-1)\big((\theta_0-\theta_1+\theta_{\infty})u-(3\theta_0+\theta_1+\theta_{\infty})\big),\label{e16}\\
	z\frac{\d v}{\d z}=&\,\,uv\bigg(v+\frac{1}{2}(\theta_0-\theta_1+\theta_{\infty})\bigg)-\frac{1}{u}(v+\theta_0)\bigg(v+\frac{1}{2}(\theta_0+\theta_1+\theta_{\infty})\bigg),\label{e17}\\
	z\frac{\d}{\d z}\ln y=&\,-2v-\theta_0+u\bigg(v+\frac{1}{2}(\theta_0-\theta_1+\theta_{\infty})\bigg)+\frac{1}{u}\bigg(v+\frac{1}{2}(\theta_0+\theta_1+\theta_{\infty})\bigg).\label{e18}
\end{align}
\end{prop}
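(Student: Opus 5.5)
The argument parametrises the three residue matrices using only the constraints the Lax system already imposes, and then reads off \eqref{e16}--\eqref{e18} from the zero-curvature condition.

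\textbf{Constraints on $A_0,A_1,B_0$.} From $\det\Psi\equiv1$ and \eqref{e13}, the matrix $A_0=-\mu\widehat\Psi_0(0)\sigma_3\widehat\Psi_0(0)^{-1}$ is traceless with eigenvalues $\pm\mu=\mp\frac12\theta_0$. Likewise $A_1$ is traceless with eigenvalues $\mp\frac12\theta_1$. The constraint $(n-\mu-\nu)\sigma_3+zB_0=A_0+A_1$ carries two pieces of information, since $B_0=\frac12[n_1,\sigma_3]$ is off-diagonal:
\begin{itemize}
\item[(i)] the diagonals satisfy $(A_0)_{11}+(A_1)_{11}=n-\mu-\nu=-\frac12\theta_\infty$;
\item[(ii)] the off-diagonal entries give $zB_0=\mathrm{offdiag}(A_0+A_1)$.
\end{itemize}

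\textbf{Form of $A_0$.} Expand $\widehat\Psi_0(0)\sigma_3\widehat\Psi_0(0)^{-1}$ using $\det\widehat\Psi_0(0)=1$. The $(2,1)$ entry of $A_0$ is then $-2\mu\widehat\Psi_0^{21}\widehat\Psi_0^{22}$, and the $(1,1)$ entry is $-\mu(1+2\widehat\Psi_0^{12}\widehat\Psi_0^{21})=v+\frac12\theta_0$ by the definition of $v$. With $y=\widehat\Psi_0^{12}/\widehat\Psi_0^{22}$, the $(2,1)$ entry equals $v/y$. The $(1,2)$ entry is then fixed by the determinant condition $\det A_0=-\frac14\theta_0^2$, which forces $-y(v+\theta_0)$.

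\textbf{Form of $A_1$ and $B_0$.} Relation (i) gives $(A_1)_{11}=-v-\frac12(\theta_0+\theta_\infty)$. Its off-diagonal product is fixed by $\det A_1=-\frac14\theta_1^2$. Define $u$ through $(A_1)_{12}=uy\big(v+\frac12(\theta_0+\theta_\infty-\theta_1)\big)$; then the $(2,1)$ entry follows from the factorisation of the determinant. Comparing $(1,2)$ entries in (ii) shows that $zn_1^{12}=(A_0)_{12}+(A_1)_{12}$, which is exactly the stated definition of $u$. The $(2,1)$ entry of (ii) gives $B_0$. These are routine algebraic checks.

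\textbf{Zero-curvature condition.} Since $\Psi$ solves both equations in \eqref{e15} with $\det\Psi\neq0$, compatibility gives
\[
\partial_z\Big(zA_\infty+\tfrac{A_0}{\lambda}+\tfrac{A_1}{\lambda-1}\Big)-\partial_\lambda(\lambda B_\infty+B_0)=\Big[\lambda B_\infty+B_0,\;zA_\infty+\tfrac{A_0}{\lambda}+\tfrac{A_1}{\lambda-1}\Big].
\]
Matching the residues at $\lambda=0$ and at $\lambda=1$, the polynomial part cancels by the constraint, and we obtain
\[
\partial_zA_0=[B_0,A_0]+\tfrac12[\sigma_3,A_0]\cdot 0,\qquad \partial_zA_1=\Big[\tfrac12\sigma_3+B_0,A_1\Big],
\]
where the $[\lambda B_\infty,A_0/\lambda]$ term is absorbed because it is $\lambda$-independent and combines with $[B_0,\ldots]$.

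More precisely, $[\lambda\frac{\sigma_3}{2},\frac{A_0}{\lambda}]=\frac12[\sigma_3,A_0]$ is a constant term, whereas $[\lambda\frac{\sigma_3}{2},\frac{A_1}{\lambda-1}]=\frac12[\sigma_3,A_1]+\frac{\frac12[\sigma_3,A_1]}{\lambda-1}$. Balancing the constant terms against the $\partial_\lambda$ term again reproduces the constraint. Substituting the parametrisations into
\[
\partial_zA_0=[B_0,A_0],\qquad \partial_zA_1=\Big[B_0+\tfrac12\sigma_3,A_1\Big]
\]
and taking entries then produces the three equations:
\begin{itemize}
\item the $(1,1)$ entry of the $A_0$ equation gives $z v'$, i.e.\ \eqref{e17};
\item the $(2,1)$ entry of the $A_0$ equation gives $(v/y)'$ and hence $(\ln y)'$, i.e.\ \eqref{e18};
\item the $(1,2)$ entry of the $A_1$ equation gives $(uy)'$, and combining it with \eqref{e18} yields \eqref{e16}.
\end{itemize}
This is the usual Jimbo--Miwa derivation, and it matches \cite[(5.4.4),(5.4.5)]{FIKN} after identifying the parameters.

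\textbf{Main obstacle.} The only genuinely delicate step is the bookkeeping in this last stage: sign conventions ($\theta_\infty$ against $n-\mu-\nu$, and the $e^{\frac z2\lambda\sigma_3}$ normalisation fixing $A_\infty=\frac12\sigma_3$) and the elimination of $y$ to reach \eqref{e16}. My first move is to check the $(1,2)$ entry of the $A_1$ equation carefully against the FIKN normal form, specialising to the point $v=0$ as a consistency test.
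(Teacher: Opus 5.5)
Your proposal is correct and follows essentially the paper's route. The parametrisation of $A_0$, $A_1$, $B_0$ is the same, obtained from tracelessness, the determinants $-\mu^2,-\nu^2$ and the constraint $(n-\mu-\nu)\sigma_3+zB_0=A_0+A_1$. The one difference is the last step: the paper matches with \cite[(5.4.4),(5.4.5)]{FIKN} and cites \cite[Theorem $5.5$]{FIKN} for \eqref{e16}--\eqref{e18}, whereas you carry out the residue computation $\partial_zA_0=[B_0,A_0]$, $\partial_zA_1=[B_0+\frac12\sigma_3,A_1]$ yourself, and your entry assignments do produce \eqref{e17}, \eqref{e18} and then \eqref{e16}.

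There is one small sign slip. Since $B_0=\frac12[n_1,\sigma_3]$ has $(B_0)_{12}=-n_1^{12}$, the $(1,2)$ comparison reads $-zn_1^{12}=(A_0)_{12}+(A_1)_{12}$, and it is this sign that reproduces the stated definition of $u$.
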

\begin{proof} Observe that $A_0,A_1$ in \eqref{e15} are traceless with $\det A_0=-\mu^2,\det A_1=-\nu^2$, and comparing our \eqref{e13} to \cite[Proposition $5.9$]{FIKN} we readily identify $\theta_0=-2\mu,\theta_1=-2\nu$. Likewise, by the same reference, the diagonal part of $A_0+A_1$ equals $-\frac{1}{2}\theta_{\infty}\sigma_3$ which by our constraint $(n-\mu-\nu)\sigma_3+zB_0=A_0+A_1$ yields $\theta_{\infty}=2(\mu+\nu-n)$. Then, using \eqref{e15}, we readily obtain the parametrization of $A_0$ in terms of $v,\theta_0$ and $y$ and the constraint $(n-
\mu-\nu)\sigma_3+zB_0=A_0+A_1$ yields $A_1^{11},A_1^{22}$ as written in terms of $v,\theta_0$ and $\theta_{\infty}$. In turn, $A_1^{12},A_1^{21}$ can be parametrized in terms of $u,y,v,\theta_0,\theta_{\infty}$ and $\theta_1$ as indicated, where $u$ is determined via
\begin{equation*}
	A_1=(n-\mu-\nu)\sigma_3+\frac{z}{2}[n_1,\sigma_3]-A_0.
\end{equation*}
The remaining formula for $B_0$ follows from $B_0=\frac{1}{z}(A_0+A_1-(n-\mu-\nu)\sigma_3)$ and with that our \eqref{e15} matches exactly \cite[$(5.4.4),(5.4.5)$]{FIKN}. Seeing that \eqref{e16},\eqref{e17},\eqref{e18} are equivalent to the Frobenius integrability constraint of \eqref{e15}, compare \cite[Theorem $5.5$]{FIKN}, we have thus completed our proof.
\end{proof}
\begin{remark} Equations \eqref{e16},\eqref{e17} yield Painlev\'e-V for $u(z)$, compare \cite[(C.41)]{JM},
\begin{equation*}
	\frac{\d^2 u}{\d z^2}=\bigg(\frac{1}{2u}+\frac{1}{u-1}\bigg)\bigg(\frac{\d u}{\d z}\bigg)^2-\frac{1}{z}\frac{\d u}{\d z}+\frac{(u-1)^2}{z^2}\bigg(\alpha u+\frac{\beta}{u}\bigg)+\frac{\gamma u}{z}+\frac{\delta u(u+1)}{u-1},
\end{equation*}
with
\begin{equation*}
	\alpha=\frac{1}{8}(\theta_0-\theta_1+\theta_{\infty})^2,\ \ \ \ \ \beta=-\frac{1}{8}(\theta_0-\theta_1-\theta_{\infty})^2,\ \ \ \ \gamma=1-\theta_0-\theta_1,\ \ \ \ \delta=-\frac{1}{2}.
\end{equation*}
\end{remark}
We are now left to express $D_n(z,\mu,\nu)$ in \eqref{e5}, consequently $J_N(\mu,\nu,z)$ in Lemma \ref{lem1}, in terms of Painlev\'e-V data: first, inserting \eqref{e14} into the $\lambda$-system in \eqref{e15} we have near $\lambda=\infty$,
\begin{align*}
	\frac{\partial\Psi}{\partial\lambda}(\lambda)\Psi(\lambda)^{-1}=zA_{\infty}+&\,\frac{1}{\lambda}\bigg\{(n-\mu-\nu)\sigma_3+\frac{z}{2}[n_1,\sigma_3]\bigg\}\\
	+&\,\frac{1}{\lambda^2}\bigg\{-n_1+(n-\mu-\nu)[n_1,\sigma_3]-\nu\sigma_3+\frac{z}{2}\Big([n_2,\sigma_3]-[n_1,\sigma_3]n_1\Big)\bigg\}+\mathcal{O}\big(\lambda^{-3}\big)
\end{align*}
and thus by comparison in \eqref{e15} the additional constraint
\begin{equation*}
	A_1+n_1=(n-\mu-\nu)[n_1,\sigma_3]+\frac{z}{2}\Big([n_2,\sigma_3]-[n_1,\sigma_3]n_1\Big)-\nu\sigma_3.
\end{equation*}
The same says, with $A_1^{11}$ taken from Proposition \ref{JMUpara},
\begin{align}
	n_1^{11}=&\,\,v+\frac{1}{2}(\theta_0+\theta_{\infty})-\nu+zn_1^{12}n_1^{21}=v+\frac{1}{2}(\theta_0+\theta_{\infty})-\nu\nonumber\\
	&\hspace{1cm}+\frac{1}{z}\bigg[v-\frac{1}{u}\bigg(v+\frac{1}{2}(\theta_0+\theta_{\infty}+\theta_1)\bigg)\bigg]\bigg[v+\theta_0-u\bigg(v+\frac{1}{2}(\theta_0+\theta_{\infty}-\theta_1)\bigg)\bigg],\label{e19}
\end{align}
and consequently, with
\begin{equation}\label{e20}
	H(z,\mu,\nu,n):=-n_1^{11}(z,\mu,\nu,n)-\nu,
\end{equation}
we arrive at the following intermediate result.
\begin{prop}\label{prop3} Let $(n,\mu,\nu)\in\mathbb{N}\times\mathbb{C}^2$ and $-z\in\mathbb{H}_+\setminus(E_n^{\mu\nu}\cup E_{n-1}^{\mu\nu})$. Introduce
\begin{equation}\label{e21}
	\sigma(z,\mu,\nu,n):=zH(z,\mu,\nu,n)+\frac{1}{2}(\theta_0+\theta_{\infty})z+\frac{1}{4}\big((\theta_0+\theta_{\infty})^2-\theta_1^2\big),
\end{equation}
with $(\theta_0,\theta_1,\theta_{\infty})$ as in Proposition \ref{JMUpara}. Then $\sigma=\sigma(z,\mu,\nu,n)$ solves the sigma-Painlev\'e equation
\begin{equation}\label{e22}
	\bigg(z\frac{\d^2\sigma}{\d z^2}\bigg)^2=\bigg(\sigma-z\frac{\d\sigma}{\d z}+2\Big(\frac{\d\sigma}{\d z}\Big)^2+\frac{\d\sigma}{\d z}\sum_{\ell=0}^3\nu_{\ell}\bigg)^2-4\prod_{\ell=0}^4\bigg(\nu_{\ell}+\frac{\d\sigma}{\d z}\bigg),
\end{equation}
with parameters $\{\nu_{\ell}\}_{\ell=0}^3$ equal to
\begin{equation*}
	\nu_0:=0,\ \ \ \nu_1:=-\frac{1}{2}(\theta_0-\theta_1+\theta_{\infty}),\ \ \ \nu_2:=-\theta_0,\ \ \ \nu_3:=-\frac{1}{2}(\theta_0+\theta_1+\theta_{\infty}).
\end{equation*}
\end{prop}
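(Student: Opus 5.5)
The plan is to reduce Proposition \ref{prop3} to the Jimbo--Miwa--Okamoto $\sigma$-form of Painlev\'e V, using the Hamiltonian structure of the system \eqref{e16}--\eqref{e17}. First, I show that $H$ in \eqref{e20} is, up to an affine shift, the Jimbo--Miwa Hamiltonian of \eqref{e15}. By \eqref{e19},
\[
-H=v+\tfrac12(\theta_0+\theta_\infty)+\tfrac1z\Big[v-\tfrac1u\big(v+\tfrac12(\theta_0+\theta_\infty+\theta_1)\big)\Big]\Big[v+\theta_0-u\big(v+\tfrac12(\theta_0+\theta_\infty-\theta_1)\big)\Big].
\]
After expanding, $zH$ is a rational function of $(u,v)$ with explicit $z$-dependence only through the term $z(\cdots)$. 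I then check directly from \eqref{e16},\eqref{e17} that $z\frac{\d}{\d z}(zH)$ equals $\partial_z$ of this expression at fixed $(u,v)$. Equivalently, $(u,v)$ evolve by a Hamiltonian flow (after the canonical change to Okamoto's coordinates $q=u/(u-1)$ or similar), so only the explicit $z$-derivative survives. This gives a simple formula for $\frac{\d\sigma}{\d z}$ in terms of $u,v$.

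Second, with $\sigma$ as in \eqref{e21}, I compute $\sigma'$ and $z\sigma''$ as explicit rational expressions in $(u,v,z)$, using \eqref{e16},\eqref{e17} to eliminate derivatives of $u$ and $v$. The structure mirrors Jimbo's derivation \cite[(C.45)]{JM}. Concretely, $\sigma'$ should factor so that the four quantities $\nu_\ell+\sigma'$, $\ell=0,\ldots,3$, pair into two products. These products should coincide with $\det$-type expressions, namely the entries $A_0^{12}A_1^{21}$ and $A_0^{21}A_1^{12}$ up to factors of $u$ and $y$. Likewise, $z\sigma''$ and the combination $\sigma-z\sigma'+2(\sigma')^2+\sigma'\sum\nu_\ell$ are the sum and difference of two such products. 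Then \eqref{e22} follows from the identity $(a+b)^2-(a-b)^2=4ab$. The shift $\frac12(\theta_0+\theta_\infty)z+\frac14((\theta_0+\theta_\infty)^2-\theta_1^2)$ in \eqref{e21} is precisely what makes these factorisations match the choice $\nu_0=0$, $\nu_2=-\theta_0$, $\nu_{1,3}=-\frac12(\theta_0\mp\theta_1+\theta_\infty)$.

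An alternative, which I would use as a cross-check, is to match our Lax pair \eqref{e15} with \cite[(5.4.4)]{FIKN}, as was done in Proposition \ref{JMUpara}. I would then quote the standard result there, or Jimbo--Miwa's, that the $\tau$-function Hamiltonian satisfies the $\sigma$-PV equation in exactly the form \eqref{e22}. After that, only the affine normalisation of $H$ needs verifying. Finally, $z$ ranges over $-z\in\mathbb{H}_+\setminus(E_n^{\mu\nu}\cup E_{n-1}^{\mu\nu})$, where all quantities are meromorphic, so the identity holds there by analyticity.

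The main obstacle is the algebraic verification in the second step: identifying the correct factorisation of $\sigma'$ and the sign conventions for the $\nu_\ell$. The same difficulty appears in the constant term, because any slip in the affine shift destroys the polynomial identity. I would first test the identity symbolically with a computer algebra check on the rational functions of $(u,v,z)$, and then write out the factorised forms by hand.
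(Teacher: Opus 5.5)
Your proposal is correct, but your primary route differs from the paper's. The paper's proof is essentially your ``cross-check''. It notes that $H=-n_1^{11}-\nu$ from \eqref{e20}, written via \eqref{e19}, is Jimbo's Hamiltonian $H_V$ in \cite[(C.43)]{JM}, and then copies \cite[(C.44),(C.45)]{JM}.

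Your direct verification works, and the factorisation you anticipate is exactly right. Put $c_\pm:=\frac12(\theta_0+\theta_\infty\pm\theta_1)$ and write $zH=-z\bigl(v+\frac12(\theta_0+\theta_\infty)\bigr)-F(u,v)$, where $F$ is the product of the two brackets in \eqref{e19}.
\begin{itemize}
\item Using \eqref{e16},\eqref{e17} one checks that $\frac{\d}{\d z}(zH)=-v-\frac12(\theta_0+\theta_\infty)$, so $\sigma'=-v$.
\item Define
\[
a:=A_0^{21}A_1^{12}=uv(v+c_-),\qquad b:=A_0^{12}A_1^{21}=\tfrac{1}{u}(v+\theta_0)(v+c_+).
\]
Then \eqref{e17} reads $z\sigma''=b-a$.
\item Expanding $F=v(v+\theta_0)+(v+c_+)(v+c_-)-a-b$ gives $\sigma-z\sigma'+2(\sigma')^2+\sigma'\sum_\ell\nu_\ell=a+b$. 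The constant $\frac14((\theta_0+\theta_\infty)^2-\theta_1^2)=c_+c_-$ in \eqref{e21} is exactly what cancels the constant in $F$.
\item Finally, $\prod_{\ell=0}^{3}(\nu_\ell+\sigma')=ab$.
\end{itemize}
Hence \eqref{e22} is just $(b-a)^2=(a+b)^2-4ab$, as you predicted.

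Two small corrections to your write-up:
\begin{itemize}
\item It is $\frac{\d}{\d z}(zH)$, not $z\frac{\d}{\d z}(zH)$, that equals the explicit $z$-derivative of $zH$ at fixed $(u,v)$.
\item You need neither Okamoto coordinates nor \eqref{e18}, since $y$ never enters $H$. The check splits into two parts. One is $u\,\partial_uF=b-a$, which handles the term $zu$ in \eqref{e16}. The other is the $z$-free identity $\partial_vF=G/u$, where $G:=z\frac{\d u}{\d z}-zu$ is the remainder of \eqref{e16}.
\end{itemize}

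As for what each route buys: the citation is one line, but it rests on matching every convention with \cite{JM}. That includes the identification $H=H_V$, the signs of the $\nu_\ell$, and the additive normalisation in \eqref{e21}. Such matchings are exactly where slips hide. Indeed, the printed statement has $\prod_{\ell=0}^4$ where only $\nu_0,\dots,\nu_3$ exist; you correctly read it as $\ell=0,\dots,3$.

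Your computation is self-contained and checks all normalisations directly. It also shows why the equation holds: $\sigma'$, $z\sigma''$ and the bracket in \eqref{e22} are $-v$, $b-a$ and $a+b$, where $a,b$ are the two off-diagonal products of the Lax coefficients $A_0,A_1$.
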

\begin{proof} With \eqref{e19} and \eqref{e16},\eqref{e17},\eqref{e18} in place we merely need to match our $n_1^{11}$ with \cite[(C.43)]{JM}. This is achieved by \eqref{e20}, noting that our $H$ is $H_V$ in \cite[(C.43)]{JM}. The rest is copy-paste from \cite[(C.44),(C.45)]{JM}.
\end{proof}
We now return to $D_n$ in \eqref{e5}, and thus ultimately $\mathcal{J}_N(\mu,\nu,z)$, compare Lemma \ref{lem1}. Both quantities link to \eqref{e21} and hence to the sigma-Painlev\'e equation \eqref{e22}.
\begin{prop}\label{prop4} Let $(n,\mu,\nu)\in\mathbb{N}\times\mathbb{C}^2$ and $z\in\mathbb{H}_+\setminus\bigcup_{j=1}^nE_j^{\mu\nu}$. Then
\begin{equation*}
	\frac{\partial}{\partial z}\ln D_n(z,\mu,\nu)=n-\nu-H(-z,\mu,\nu,n),
\end{equation*}
with $H$ defined in \eqref{e20} and expressed in terms of sigma-Painlev\'e-V in \eqref{e21},\eqref{e22}. The branch for the logarithm is arbitrary, but fixed throughout.
\end{prop}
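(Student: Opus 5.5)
We prove the identity in Proposition \ref{prop4} by computing the $z$-derivative of $D_n$ through the orthogonal polynomials and then reading the result off from the large-$\lambda$ expansion of $Y$, hence of $\Psi$.

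\textbf{Step 1: a telescoping sum.} Since $\partial_z w(\zeta;z,\mu,\nu)=-(\zeta-1)w(\zeta;z,\mu,\nu)$, differentiating \eqref{e8} gives
\begin{equation*}
\partial_z\ln D_n=-2\sum_{\ell=0}^{n-1}\partial_z\ln\kappa_\ell .
\end{equation*}
For each $\ell$ we differentiate the normalisation $\int_\Sigma q_\ell^2w\,\d\zeta=1$. By orthogonality, $\partial_z q_\ell$ contributes only through its leading coefficient $\partial_z\kappa_\ell$. This yields
\begin{equation*}
2\,\partial_z\ln\kappa_\ell=\int_\Sigma(\zeta-1)q_\ell^2w\,\d\zeta .
\end{equation*}
Writing $\zeta q_\ell=\kappa_\ell\zeta^{\ell+1}+\kappa_\ell\beta_\ell\zeta^\ell+\dots$ and using orthonormality, the integral equals $(\beta_\ell-\beta_{\ell+1})-1$. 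Summing over $\ell$ telescopes, with $\beta_0=0$, to
\begin{equation*}
\partial_z\ln D_n=n+\beta_n .
\end{equation*}
Equivalently, one can use the Christoffel--Darboux route.

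\textbf{Step 2: $\beta_n$ in terms of $m_1$, then $n_1$.} From \eqref{e9}, $Y^{11}=\kappa_n^{-1}q_n=\lambda^n+\beta_n\lambda^{n-1}+\dots$. Comparing with RHP \ref{RHP1}(3) gives $\beta_n=m_1^{11}(z,\mu,\nu,n)$. Under \eqref{e11}, conjugation by $\e^{\frac z2\sigma_3}$ leaves diagonal entries unchanged, so $m_1^{11}(z)=n_1^{11}(-z)$. There are, however, corrections from expanding the scalar prefactors $\lambda^{-\mu}(\lambda-1)^{-\nu}$ and $\lambda^{2\mu}(\lambda-1)^{2\nu}$ against the normalisation $\lambda^{-\mu\sigma_3}(\lambda-1)^{-\nu\sigma_3}$ in \eqref{e14}. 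We match these factors carefully at infinity. The leftover factor is $(1-\lambda^{-1})^{-\nu}$ versus $(\lambda-1)^{-\nu}=\lambda^{-\nu}(1-\lambda^{-1})^{-\nu}$, up to the branch phase. This produces a shift $n_1^{11}=m_1^{11}+c\,\nu$ with an explicit constant $c\in\{0,\pm1\}$, which we determine by expanding to order $\lambda^{-1}$. Substituting into \eqref{e20}, $H(-z)=-n_1^{11}(-z)-\nu$, should give the claimed formula
\begin{equation*}
\partial_z\ln D_n=n-\nu-H(-z,\mu,\nu,n).
\end{equation*}

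\textbf{Step 3: domain of validity.} The identity holds initially off $\bigcup_j E_j^{\mu\nu}$, where all $\kappa_\ell$ exist. The derivative is taken along any branch of $\ln$, since $D_n$ is analytic and nonvanishing there by Lemma \ref{lem2}.

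\textbf{Main obstacle.} The delicate point is the bookkeeping of Step 2. This means the branch conventions for $(\lambda-1)^{\pm\nu}$ on $[1,\infty)$ (with $0<\arg(\lambda-1)<2\pi$) and the exact sign of the $\nu$-shift, which must come out so that the constant is precisely $n-\nu$. I would check this against a trivial case, e.g.\ $n=1$, where
\begin{equation*}
D_1=\int_\Sigma w\,\d\zeta
\end{equation*}
is an explicit $U$-function. Its log-derivative can then be compared directly with the Kummer asymptotics \eqref{largezasymptotics}.
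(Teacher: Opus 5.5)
Your Step 1 is correct, and it reaches the key identity by a more elementary route than the paper.

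\textbf{Comparison of routes.} You differentiate $\int_\Sigma q_\ell^2w\,\d\zeta=1$ and use orthogonality to get $\partial_z\ln\kappa_\ell^{-2}=1+\beta_{\ell+1}-\beta_\ell$ for every $\ell\ge 0$. Telescoping then gives $\partial_z\ln D_n=n+\beta_n=n+m_1^{11}(z,\mu,\nu,n)$.

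The paper instead stays inside the Riemann--Hilbert framework. It writes $\kappa_\ell^{-2}=-2\pi\im\,\e^{z}n_1^{12}(-z,\mu,\nu,\ell)$ and introduces the discrete Schlesinger system $\Psi_{\ell+1}=(\lambda U_\infty+U_0^{(\ell)})\Psi_\ell$. From its compatibility with the $z$-equation in \eqref{e15} it extracts the Toda-type relation $\partial_z n_1^{12}(z,\ell)=-n_1^{12}(z,\ell)\big(n_1^{11}(z,\ell+1)-n_1^{11}(z,\ell)\big)$. The $\ell=0$ term $\partial_z\ln D_1=1+m_1^{11}(z,\mu,\nu,1)$ is computed separately from the explicit $n=1$ solution \eqref{e26}.

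Since $\beta_\ell=m_1^{11}(z,\mu,\nu,\ell)=n_1^{11}(-z,\mu,\nu,\ell)$, your per-$\ell$ identity is exactly that Toda relation, and your $\ell=0$ term is the paper's $D_1$ computation. Your argument is therefore shorter and treats all $\ell$ uniformly without the Schlesinger transformation. The paper's keeps every step phrased in the RHP data $n_1(\cdot,\ell)$ that feed into \eqref{e20}--\eqref{e22}.

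\textbf{Step 2 needs repair.} The ``main obstacle'' you describe does not exist: there is no $\nu$-shift, so $c=0$. The normalisation in \eqref{e14} is $\lambda^{-\mu\sigma_3}(\lambda-1)^{-\nu\sigma_3}$ itself, with the same branches as in \eqref{e11}, so no factor $(1-\lambda^{-1})^{-\nu}$ is ever expanded. The scalar prefactor and the diagonal right factor in \eqref{e11} combine exactly:
\[
\lambda^{-\mu}(\lambda-1)^{-\nu}\,\mathrm{diag}\Big(\e^{\frac z2(\lambda-1)},\ \e^{-\frac z2(\lambda-1)}\lambda^{2\mu}(\lambda-1)^{2\nu}\Big)=\lambda^{-\mu\sigma_3}(\lambda-1)^{-\nu\sigma_3}\e^{\frac z2\lambda\sigma_3}\e^{-\frac z2\sigma_3}.
\]
Hence $n_1=\e^{\frac z2\sigma_3}m_1(-z)\e^{-\frac z2\sigma_3}$ exactly, as recorded in RHP \ref{RHP2}(4). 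It follows that $n_1^{11}(-z,\mu,\nu,n)=\beta_n$, which you had already stated before introducing $c$. The $-\nu$ in the claim then comes solely from the definition \eqref{e20}: $n+n_1^{11}(-z)=n-\nu-H(-z)$.

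As written, leaving $c\in\{0,\pm1\}$ open and concluding that the formula ``should'' follow does not prove the statement, since $c\neq 0$ would change the constant $n-\nu$ by $-c\nu$. Your proposed $n=1$ check against \eqref{largezasymptotics} would not settle $c$ either. It only tests $\partial_z\ln D_1=1+\beta_1$, which Step 1 already proves, whereas $c$ is fixed by the definition \eqref{e11} alone.
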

\begin{proof} By \eqref{e8} and \eqref{e10}, where we require $z\notin\bigcup_{j=1}^nE_j^{\mu\nu}$ so $\kappa_{\ell}(x,\alpha)\neq 0$ for $\ell\in\{0,1,\ldots,n-1\}$,
\begin{equation}\label{e23}
	\ln D_n(z,\mu,\nu)=\ln D_1(z,\mu,\nu)+\sum_{\ell=1}^{n-1}\ln\kappa_{\ell}^{-2}(z,\mu,\nu),\ \ \ \ \ \kappa_{\ell}^{-2}(z,\mu,\nu)=-2\pi\im\,\mathrm{e}^zn_1^{12}(-z,\mu,\nu,\ell).
\end{equation}
On the other hand, $\Psi_n(\lambda)\equiv\Psi(\lambda;z,\mu,\nu,n)$ is such that, by another application of Liouville's theorem, utilizing \eqref{e12},\eqref{e13} and \eqref{e14},
\begin{equation}\label{e24}
	\Psi_{\ell+1}(\lambda)=\Big\{\lambda U_{\infty}+U_0^{(\ell)}\Big\}\Psi_{\ell}(\lambda),\ \ \ \ \ell\in\{1,\ldots,n-1\}\subset\mathbb{N},
\end{equation}
with coefficients
\begin{equation*}
	U_{\infty}=\begin{bmatrix}1&0\\ 0&0\end{bmatrix},\ \ \ U_0^{(\ell)}=\begin{bmatrix}n_1^{11}(z,\mu,\nu,\ell+1)-n_1^{11}(z,\mu,\nu,\ell) & -n_1^{12}(z,\mu,\nu,\ell)\smallskip\\ n_1^{21}(z,\mu,\nu,\ell+1) & 0\end{bmatrix}.
\end{equation*}
Hence, the $z$-system in \eqref{e15}, written with $B_0=B_0^{(\ell)}$ to indicate explicitly its $\ell$-dependence, and the difference system \eqref{e24} yield the equations
\begin{equation*}
	U_{\infty}B_0^{(\ell)}=\Big[B_{\infty},U_0^{(\ell)}\Big]+B_0^{(\ell+1)}U_{\infty},\ \ \ \ \ \ \frac{\partial U_0^{(\ell)}}{\partial z}=B_0^{(\ell+1)}U_0^{(\ell)}-U_0^{(\ell)}B_0^{(\ell)},
\end{equation*}
which say that, for any $\ell\in\{1,\ldots,n-1\}$ and $z\in\mathbb{H}_+\setminus\bigcup_{j=1}^nE_j^{\mu\nu}$,
\begin{equation*}
	\frac{\partial}{\partial z}\Big(n_1^{11}(z,\mu,\nu,\ell+1)-n_1^{11}(z,\mu,\nu,\ell)\Big)=n_1^{12}(z,\mu,\nu,\ell)n_1^{21}(z,\mu,\nu,\ell)-n_1^{12}(z,\mu,\nu,\ell+1)n_1^{21}(z,\mu,\nu,\ell+1),
\end{equation*}
followed by
\begin{align*}
	\begin{cases}\displaystyle\frac{\partial n_1^{12}}{\partial z}(z,\mu,\nu,\ell)=-n_1^{12}(z,\mu,\nu,\ell)\big(n_1^{11}(z,\mu,\nu,\ell+1)-n_1^{11}(z,\mu,\nu,\ell)\big)&\bigskip\\
	\displaystyle\frac{\partial n_1^{21}}{\partial z}(z,\mu,\nu,\ell+1)=n_1^{21}(z,\mu,\nu,\ell+1)\big(n_1^{11}(z,\mu,\nu,\ell+1)-n_1^{11}(z,\mu,\nu,\ell)\big)
	\end{cases}.
\end{align*}
In turn, by \eqref{e23},
\begin{align}
	\frac{\partial}{\partial z}\ln D_n(z,\mu,\nu)=&\,\,\frac{\partial}{\partial z}\ln D_1(z,\mu,\nu)+(n-1)+\sum_{\ell=1}^{n-1}\frac{\partial}{\partial z}\ln\big(n_1^{12}(-z,\mu,\nu,\ell)\big)\nonumber\\
	=&\,\,\frac{\partial}{\partial z}\ln D_1(z,\mu,\nu)+(n-1)+n_1^{11}(-z,\mu,\nu,n)-n_1^{11}(-z,\mu,\nu,1),\label{e25}
\end{align}
i.e. we are left to investigate $\frac{\partial}{\partial z}\ln D_1(z,\mu,\nu)$ and $n_1^{11}(-z,\mu,\nu,1)$. To that end, we begin with the identity
\begin{equation}\label{e26}
	Y(\lambda;z,\mu,\nu,1)\stackrel{\eqref{e9}}{=}\begin{bmatrix}\lambda+\beta_1 & \frac{1}{2\pi\im}\int_{\Sigma}(\zeta+\beta_1)w(\zeta)\frac{\d\zeta}{\zeta-\lambda}\smallskip\\ -2\pi\im\kappa_0^2 & -\kappa_0^2\int_{\Sigma}w(\zeta)\frac{\d\zeta}{\zeta-\lambda}\end{bmatrix},\ \ \ \lambda\in\mathbb{C}\setminus\Sigma,
\end{equation}
where $q_1(\zeta)=\kappa_1(\zeta+\beta_1)$, resp. $q_0(\zeta)=\kappa_0$, are the degree one, resp. zero, orthonormal polynomials with respect to the weight $w$ on $\Sigma$, compare \eqref{e7}. Note that \eqref{e26} solves RHP \ref{RHP1} for $n=1$, provided $z\in\mathbb{H}_+\setminus E_1^{\mu\nu}$. However
\begin{equation*}
	D_1(z,\mu,\nu)=\kappa_0^{-2}=\int_{\Sigma}w(\zeta)\d\zeta\ \ \Rightarrow\ \ \frac{\partial}{\partial z}\ln D_1(z,\mu,\nu)\stackrel{\eqref{e3}}{=}-\kappa_0^2\int_{\Sigma}w(\zeta)(\zeta-1)\d\zeta=1-\kappa_0^2\int_{\Sigma}w(\zeta)\zeta\,\d\zeta,
\end{equation*}
and since by \eqref{e26}, exploiting en route $\tr(m_1(z,\mu,\nu,n))=0$ since $\det Y(\lambda)\equiv 1$,
\begin{equation*}
	-m_1^{11}(z,\mu,\nu,1)=m_1^{22}(z,\mu,\nu,1)=\kappa_0^2\int_{\Sigma}w(\zeta)\zeta\,\d\zeta,
\end{equation*}
we find from condition $(4)$ in RHP \ref{RHP2} that
\begin{equation*}
	\frac{\partial}{\partial z}\ln D_1(z,\mu,\nu)=1+m_1^{11}(z,\mu,\nu,1)=1+n_1^{11}(-z,\mu,\nu,1).
\end{equation*}
Inserting the latter into \eqref{e25} completes the proof via \eqref{e20}.
\end{proof}
We are ready to state our main result for $D_n(z,\mu,\nu)$.
\begin{theo}\label{theo1} Fix $(n,\mu,\nu)\in\mathbb{N}\times\mathbb{C}^2$ and suppose $z\in\mathbb{H}_+\setminus\bigcup_{j=1}^nE_j^{\mu\nu}$. Then
\begin{equation*}
	\tau_n(z,\mu,\nu):=z\frac{\d}{\d z}\ln D_n(z,\mu,\nu)\stackrel{\eqref{e5}}{=}z\frac{\d}{\d z}\ln\det\bigg[\int_{\Sigma}\zeta^{i+j}w(\zeta;z,\mu,\nu)\d\zeta\bigg]_{i,j=0}^{n-1},
\end{equation*}
solves the $\sigma$-Painlev\'e {\rm{V}} equation
\begin{align}
	\bigg(z\frac{\d^2\tau_n}{\d z^2}\bigg)^2=\bigg(\tau_n+n(n-2\nu)-&z\frac{\d\tau_n}{\d z}+2\Big(\frac{\d\tau_n}{\d z}\Big)^2-2(\mu-\nu+n)\frac{\d\tau_n}{\d z}\bigg)^2\nonumber\\
	&+4\frac{\d\tau_n}{\d z}\bigg(n-2\nu-\frac{\d\tau_n}{\d z}\bigg)\bigg(2\mu-\frac{\d\tau_n}{\d z}\bigg)\bigg(n-\frac{\d\tau_n}{\d z}\bigg).\label{e27}
\end{align}

\end{theo}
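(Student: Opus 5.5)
The plan is to combine Proposition \ref{prop4} with Proposition \ref{prop3} and then perform an affine change of the unknown. By Proposition \ref{prop4}, for $z\in\mathbb{H}_+\setminus\bigcup_{j=1}^nE_j^{\mu\nu}$ we have
\begin{equation*}
\tau_n(z)=z\big(n-\nu-H(-z,\mu,\nu,n)\big).
\end{equation*}
By \eqref{e21}, with $\zeta:=-z$,
\begin{equation*}
\sigma(\zeta)=\zeta H(\zeta)+\tfrac12(\theta_0+\theta_\infty)\zeta+c,\qquad c:=\tfrac14\big((\theta_0+\theta_\infty)^2-\theta_1^2\big).
\end{equation*}
Since $\zeta H(\zeta)=-zH(-z)$, this gives
\begin{equation*}
\tau_n(z)=(n-\nu)z+\sigma(-z)+\tfrac12(\theta_0+\theta_\infty)z-c.
\end{equation*}
Inserting $\theta_0=-2\mu$, $\theta_1=-2\nu$, $\theta_\infty=2(\mu+\nu-n)$, we get $\tfrac12(\theta_0+\theta_\infty)=\nu-n$. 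The linear terms therefore cancel, and
\begin{equation*}
\tau_n(z)=\sigma(-z)-c,\qquad c=(\nu-n)^2-\nu^2=n(n-2\nu).
\end{equation*}
This is consistent with the constant $n(n-2\nu)$ appearing inside the square in \eqref{e27}. The whole proof thus reduces to substituting $\sigma(\zeta)=\tau_n(-\zeta)+n(n-2\nu)$ into \eqref{e22}.

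Next, I will compute the parameters $\nu_\ell$ of Proposition \ref{prop3}:
\begin{equation*}
\nu_0=0,\qquad \nu_1=-\tfrac12(\theta_0-\theta_1+\theta_\infty)=n-2\nu,\qquad \nu_2=2\mu,\qquad \nu_3=-\tfrac12(\theta_0+\theta_1+\theta_\infty)=n.
\end{equation*}
Hence $\sum_\ell\nu_\ell=2n+2\mu-2\nu$. I read the product in \eqref{e22} as running over $\ell=0,\dots,3$, as the indexing of the $\nu_\ell$ suggests.

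Under $\zeta=-z$ the derivatives transform as
\begin{equation*}
\frac{\d\sigma}{\d\zeta}=-\tau_n',\qquad \frac{\d^2\sigma}{\d\zeta^2}=\tau_n'',\qquad \zeta\frac{\d\sigma}{\d\zeta}=z\tau_n'.
\end{equation*}
The left side of \eqref{e22} becomes $(z\tau_n'')^2$. Inside the square we obtain
\begin{equation*}
\tau_n+n(n-2\nu)-z\tau_n'+2(\tau_n')^2-2(\mu-\nu+n)\tau_n'.
\end{equation*}
This requires the sign convention of \eqref{e22} to be $\sigma-\zeta\sigma'$. If a sign of $z\tau_n'$ comes out differently, I will recheck it against \cite[(C.44)]{JM}. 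The product term is
\begin{equation*}
-4\prod_{\ell=0}^{3}(\nu_\ell-\tau_n')=-4(-\tau_n')(n-2\nu-\tau_n')(2\mu-\tau_n')(n-\tau_n'),
\end{equation*}
which is exactly $+4\tau_n'(n-2\nu-\tau_n')(2\mu-\tau_n')(n-\tau_n')$, as in \eqref{e27}.

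Finally, I must check the domains. Proposition \ref{prop3} requires $-\zeta=z\in\mathbb{H}_+\setminus(E_n^{\mu\nu}\cup E_{n-1}^{\mu\nu})$, and this is covered by the hypothesis $z\notin\bigcup_{j=1}^nE_j^{\mu\nu}$. The identity then holds on this open set, which is connected up to a discrete set removed by Lemma \ref{lem2}. The main obstacle is bookkeeping rather than ideas: I must keep track of the sign flip $z\mapsto -z$ built into \eqref{e11} and of the exact conventions of \cite{JM} for \eqref{e22}, in particular the number of factors and the sign of $\zeta\sigma'$. I would verify these conventions with a quick consistency check. One option is the degenerate case $n=1$, where $D_1=\int_\Sigma w$ is a confluent hypergeometric function. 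Another is the small-$z$ behaviour of $\tau_n$.
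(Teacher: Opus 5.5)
Your proposal is correct and is the paper's argument: Proposition~\ref{prop4} combined with \eqref{e21} and $\tfrac12(\theta_0+\theta_\infty)=\nu-n$ gives $\tau_n(z)=\sigma(-z,\mu,\nu,n)-n(n-2\nu)$, and this is then substituted into \eqref{e22}. Your values $\nu_1=n-2\nu$, $\nu_2=2\mu$, $\nu_3=n$ and the sign bookkeeping under $z\mapsto -z$ check out, and reading the product in \eqref{e22} as running over $\ell=0,\dots,3$ is correct, since the upper index $4$ there is a typo.
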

\begin{proof} By \eqref{e21} and Proposition \ref{prop4}, using $\theta_0+\theta_{\infty}=2(\nu-n),\theta_1=-2\nu$,
\begin{equation*}
	z\frac{\d}{\d z}\ln D_n(z,\mu,\nu)=\sigma(-z,\mu,\nu,n)-n(n-2\nu),
\end{equation*}
and $\sigma=\sigma(z,\mu,\nu,n)$ solves \eqref{e22}. In turn, $\tau_n(z,\mu,\nu)=\sigma(-z,\mu,\nu,n)-n(n-2\nu)$ solves \eqref{e27}.
\end{proof}

The above Theorem \ref{theo1} concludes the current section on the Riemann-Hilbert analysis of $\mathcal{J}_N$ for by Lemma \ref{lem1},
\begin{equation}\label{0403eve}
	z\frac{\mathrm{d}}{\mathrm{d} z}\ln\mathcal{J}_N(\mu,\nu,z)=z\frac{\mathrm{d}}{\mathrm{d}z}\ln D_N(z,\mu,\nu)=\tau_N(z,\mu,\nu),\ \ \ z\in\mathbb{H}_+\setminus\bigcup_{j=1}^NE_j^{\mu\nu}
\end{equation}
with $N\in\mathbb{N}$ and $\mu,\nu\in\mathbb{C}:1-2\nu\notin\mathbb{N}$.

\section{$F_{N,\delta}(s,h)$ in terms of a solution to a $\sigma$-Painlev\'e V equation}\label{representation}
We recall \eqref{i13}, the special parameter choice $\mu=\frac{1}{2}(N+w)$, $\nu=\frac{1}{2}(N+\overline{w})$ as well as $z=\xi\in\mathbb{R}_{\geq 0}$. Let $w\in \mathbb{C}$ with $\textnormal{Re}(2w)>-1$. Introduce the exceptional set
\bea\label{zeroparameterw}
E_{N,w}:=\bigg\{z\in\mathbb{H}_+:\ \mathcal{J}_{N}\bigg(z,\frac{N+w}{2},\frac{N+\overline{w}}{2}\bigg)\mathcal{J}_{N+1}\bigg(z,\frac{N+w}{2},\frac{N+\overline{w}}{2}\bigg)=0\bigg\},
\eea
where $\mathcal{J}_{N}(z,\mu,\nu)$ is defined in \eqref{i14}. Recall that $E_{N}^{\mu\nu}$ in \eqref{zeroset} coincides with $E_{N,w}$ when $w \notin \mathbb{Z}_{\ge 0}$. 
When $w\in \mathbb{Z}_{\ge 0}$, or more generally, when $w\in \mathbb{R}$ with $2w>-1$, it is known that $E_{N,w}$ is a discrete subset of $\mathbb{H}_{+}$ and that $\mathbb{R}_{+}\cap E_{N,w}=\emptyset$; see, e.g., \cite[Propositions 2.5 and 2.7]{BW}. What results from our workings in Section \ref{RHmethod} is the following $\tau$-function identity:

\begin{prop}\label{detailedanalysisofthesolution}
Let $N\in\mathbb{N}$ and let $w\in\mathbb{C}$ with $\textnormal{Re}(2w)>-1$. Choose $\mu=\frac{1}{2}(N+w)$, $\nu=\frac{1}{2}(N+\overline{w})$. Let $\mathcal{J}_{N}(\mu,\nu, \xi)$ be given as in \eqref{i13}. Then for $\xi>0$,
\bea\label{e28}
\mathcal{J}_N(\mu,\nu, \xi) = \mathcal{J}_{N}(\mu,\nu, 0)\exp\left( 
\int_0^{\xi} u_N(z;w) \frac{\mathrm{d}z}{z}
\right),
\eea
where the path of integration lies in $\mathbb{H}_+ \setminus E_{N,w}$ with $E_{N,w}$ in \eqref{zeroparameterw}, and we assume that $\mathcal{J}_N(\mu,\nu, \xi)\neq 0$ for the same parameters $(w,N,\xi)$. Here 
\bea\label{theevaluationofJNat0}
\mathcal{J}_{N}(\mu,\nu, 0)=(-1)^{\frac{N}{2}(N-1) }\Gamma(\overline{w}+N)^N\prod_{j=1}^N \frac{ 
\Gamma(j) \Gamma(w+\overline{w}+j)
}{\Gamma(w+j) \Gamma(\overline{w}+j)}
\eea
and $u_N(z;w)$ solves the  $\sigma$-Painlev\'e {\rm V} equation 
\bea\label{equationbeforescaling}
\left(z \frac{\d^2u_N}{\d z^2} \right)^2 &=& \left( u_N - N\overline{w} - z \frac{\d u_N}{\d z} + 2 \left( \frac{\d u_N}{\d z} \right)^2 
- (w - \overline{w} + 2N) \frac{\d u_N}{\d z} \right)^2 \nonumber \\
&& +\, 4 \frac{\d u_N}{\d z} \left( -\overline{w} - \frac{\d u_N}{\d z} \right) \left( N + w - \frac{\d u_N}{\d z} \right) \left( N - \frac{\d u_N}{\d z} \right)
\eea
with the following boundary constraints as $z\rightarrow 0$ for $z\in \mathbb{H}_+ \setminus E_{N,w}$. If $w=0$, then $u_N\equiv0$. If $w\in \mathbb{Z}_{\geq 1}$,
\bea\label{Cauchydata1new}
u_N(z;w) \sim z \left(n_0(N,w)+ \sum_{j=1}^{\infty} n_{j}(N,w) z^{j} \right).
\eea
If $w+\overline{w} \in \mathbb{Z}_{\geq 0}$ and $w\notin \mathbb{Z}_{\geq 0}$,
\bea\label{Cauchydata2new}
u_N(z;w) \sim z \left(h_0 (N,w)+ \sum_{j=1}^\infty h_{j}(N,w) z^{j}
+\sum_{k=1}^\infty \sum_{j=k-1}^\infty m_{jk}(N,w) 
(\ln z)^k z^{j+k(w+\overline{w})} \right).
\eea
Finally, if $w+\overline{w} \notin \mathbb{Z}_{\geq 0}$, 
\bea\label{Cauchydata3new}
u_N(z;w) \sim z \left(f_0(N,w)+ \sum_{j=1}^\infty f_{j}(N,w) z^{j}
+\sum_{k=1}^\infty \sum_{j=
k-1}^\infty g_{jk}(N,w) z^{j+k(w+\overline{w})} \right).
\eea
Moreover, if $w+\overline{w} \neq 0$, then 
\bea\label{initialconditionnew}
n_0(N,w)=f_0(N,w)=h_0(N,w)=\frac{N\overline{w}}{w+\overline{w} }.
\eea
\end{prop}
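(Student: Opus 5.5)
The plan is to set $u_N(z;w):=\tau_N(z,\mu,\nu)$ with $\mu=\frac12(N+w)$, $\nu=\frac12(N+\overline w)$, where $\tau_N$ is the $\tau$-function of Theorem \ref{theo1}, and then to establish three things in turn: the integral formula \eqref{e28}, the equation \eqref{equationbeforescaling}, and the small-$z$ behaviour of $u_N$.

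\textbf{The integral formula and the equation.} By \eqref{0403eve}, $\frac{\d}{\d z}\ln\mathcal J_N=u_N/z$ on $\mathbb H_+\setminus\bigcup_jE_j^{\mu\nu}$. Integrating along a path from a small $\epsilon>0$ to $\xi$ that avoids $E_{N,w}$ gives
\[
\mathcal J_N(\xi)=\mathcal J_N(\epsilon)\exp\Big(\int_\epsilon^\xi u_N\,\tfrac{\d z}{z}\Big).
\]
Letting $\epsilon\to0$ is legitimate once we show that $u_N(z)=O(z)$ near $0$ (proved below) and that $\mathcal J_N$ is continuous at $0$. Substituting $\mu,\nu$ into \eqref{e27} gives $n(n-2\nu)=-N\overline w$, $2(\mu-\nu+n)=w-\overline w+2N$, $n-2\nu=-\overline w$ and $2\mu=N+w$, which is exactly \eqref{equationbeforescaling}. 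One point needs care: Lemma \ref{lem1} requires $1-2\nu\notin\mathbb N$. In the excluded cases I would argue directly from the $U$-Hankel determinant \eqref{i14}, or by continuity in $w$, since both sides are analytic in the parameters.

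\textbf{Value at $z=0$.} Using $U(a,b,0)=\Gamma(1-b)/\Gamma(a-b+1)$ (valid when $\textnormal{Re}\,b<1$) we get
\[
U\big(1-2\nu,\,2-2\nu-2\mu+j+k,\,0\big)=\frac{\Gamma(2\mu+2\nu-1-j-k)}{\Gamma(2\mu-j-k)}.
\]
Here $2\mu=N+w$ and $2\nu=N+\overline w$, and the requirement $\textnormal{Re}(b)<1$ holds for $j+k\le 2N-2$ because $\textnormal{Re}(2w)>-1$. This Hankel determinant of gamma ratios is evaluated by a standard beta-integral / Selberg-type identity: write the entry as a moment of a beta-like density and apply Andréief, or use the known Cauchy–Vandermonde evaluation. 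The result is \eqref{theevaluationofJNat0}, and the prefactor $\Gamma(\overline w+N)^N$ comes from normalising out $\Gamma(2\nu)$. As a cross-check, the evaluation should be consistent with \eqref{i13} at $\xi=0$ together with $\mathrm C_{N,w}$.

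\textbf{Small-$z$ expansion.} This is the main obstacle. Expand each entry via the connection formula
\[
U(a,b,z)=\frac{\Gamma(1-b)}{\Gamma(a-b+1)}M(a,b,z)+\frac{\Gamma(b-1)}{\Gamma(a)}z^{1-b}M(a-b+1,2-b,z).
\]
With $1-b=2\mu+2\nu-1-j-k=2N-1+w+\overline w-j-k$, the singular parts carry the powers $z^{2N-1+w+\overline w-j-k}$. In the determinant these combine into a double series in $z$ and $z^{w+\overline w}$. The claim is that the lowest non-analytic contribution is of order $z^{1+w+\overline w}$ relative to the leading term: one must show that the smaller powers cancel by column operations, which reduces to the same Hankel structure used in \cite{BW} for real $w$. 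When $w+\overline w\in\mathbb Z_{\ge0}$, the term $\Gamma(b-1)$ has poles, and the limiting form of $U$ produces the $\ln z$ terms of \eqref{Cauchydata2new}. When $w\in\mathbb Z_{\ge1}$, the factor $1/\Gamma(a)=1/\Gamma(1-N-\overline w)$ vanishes, so $U$ is a polynomial and the expansion is analytic, giving \eqref{Cauchydata1new}. When $w=0$, every entry is a pure power, the determinant is $\mathcal J_N(0)$, and so $u_N\equiv0$.

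\textbf{Identifying the leading coefficient.} The value of $u_N/z$ at $z=0$ equals $\partial_z\ln\mathcal J_N|_0$. Alternatively, substitute $u_N=f_0z+o(z)$ into \eqref{equationbeforescaling} and compare the lowest-order terms: this yields a quadratic in $f_0$ of the form $\big((w+\overline w)f_0-N\overline w\big)^2=0$, hence $f_0=N\overline w/(w+\overline w)$. This derivation needs $w+\overline w\neq0$ and $\textnormal{Re}(w+\overline w)>0$, so that the $z^{1+w+\overline w}$ terms are genuinely subleading.
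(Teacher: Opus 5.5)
Apart from one sub-case, your plan follows the paper's proof. The shared steps are:
\begin{itemize}
\item $u_N=z\partial_z\ln\mathcal J_N$, with \eqref{equationbeforescaling} obtained from \eqref{e27} via Lemma \ref{lem1} and \eqref{0403eve}; your parameter substitution is right.
\item $\mathcal J_N(\mu,\nu,0)$ read off from \eqref{i13} at $\xi=0$ together with \eqref{definitionoftheconstant0407}. Your ``cross-check'' is the paper's entire argument, so no Selberg-type evaluation is needed.
\item The three cases via the $U$--$M$ connection formula.
\item $f_0$ from the equation. Your reduction of the $z\to0$ limit to $\big((w+\overline w)f_0-N\overline w\big)^2=0$ is correct.
\end{itemize}

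\textbf{The step that fails is $w=0$.} The entries of \eqref{i14} are not pure powers. For $w=0$ one has $1-2\nu=1-N$, so every entry is a polynomial of degree $N-1$. Already for $N=2$ you get $U(-1,-2+m,z)=z+2-m$, so the Hankel matrix is $\begin{bmatrix}z+2&z+1\\ z+1&z\end{bmatrix}$. Its determinant is the constant $-1$ only through cancellation.

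Your differential-equation argument cannot stand in for this. At $w+\overline w=0$ the limiting relation $\big((w+\overline w)f_0-N\overline w\big)^2=0$ is empty, and \eqref{equationbeforescaling} does not force $u_N\equiv0$. So you need an additional input, and there are two available:
\begin{itemize}
\item The paper uses that for $w=0$ the trace is Cauchy distributed, $\mathbb E_N^{(0)}\big[\mathrm e^{-\im\frac{\xi}{2}\sum_j x_j^{(N)}}\big]=\mathrm e^{-N|\xi|/2}$ (from the proof of \cite[Theorem 1.4]{ABGS}). Combined with \eqref{i13}, this makes $\mathcal J_N$ constant.
\item Alternatively, the Andr\'eief representation of Proposition \ref{integralrepresentation} has a $z$-independent integrand when $w=0$, which gives $\mathsf G_N(z)=\mathrm e^{-z/2}$ directly.
\end{itemize}

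\textbf{Smaller points.}
\begin{itemize}
\item What you call the main obstacle is not one. The singular exponents are $2N-1+w+\overline w-j-k$ with $0\le j+k\le 2N-2$. All have real part at least $1+2\textnormal{Re}(w)>0$, and the minimum $1+w+\overline w$ occurs only at the corner entry. The analytic part has nonzero determinant $\mathcal J_N(\mu,\nu,0)$ at $z=0$. Multilinearity (Jacobi's formula) then gives \eqref{Cauchydata1new}--\eqref{Cauchydata3new} directly, with no cancellation to prove.
\item $u_N=O(z)$ is false when $-1<w+\overline w<0$; the leading term is then of order $z^{1+w+\overline w}$. For $\epsilon\to0$ you only need $\mathcal J_N$ to be continuous and nonzero at $0$, which you have.
\item In that same range, letting $z\to0$ no longer isolates $f_0$, so your derivation of \eqref{initialconditionnew} stops at $w+\overline w>0$. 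Instead, compare the coefficient of $z^0$ in the generalized series. For irrational $w+\overline w$, no product of non-analytic terms has exponent $0$, so this coefficient is the same quadratic. The remaining values follow by continuity in the parameters.
\end{itemize}
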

\begin{proof}
By Proposition \ref{Hankelrepresentation}, we have the explicit formula for $\mathcal{J}_N(\mu,\nu,0)$ in \eqref{theevaluationofJNat0}. For $z\in \mathbb{H}_+ \setminus E_{N,w}$, let
\bea\label{continuationofuN}
u_{N}(z;w)=z\frac{\mathrm{d}}{\mathrm{d}z}\ln \mathcal{J}_{N}(\mu,\nu,z).
\eea
Note that $\textnormal{Re}(\nu)>0$. So if $2\nu \in \mathbb{Z}_{\geq 1}$, then $1-2\nu$ is a non-positive integer. So $U(1-2\nu, 2-2\nu-2\mu+k,z)$ is a multiple of a generalized Laguerre polynomial. Then
\bea\label{data1}
u_{N}(z;w)=z\frac{\mathrm{d}}{\mathrm{d}z}\ln \det \big[f_{j+k}(z)\big]_{j,k=0}^{N-1},
\eea
with $f_{\ell}(z)$ a polynomial in $z$ of degree at most $2\nu-1$ and $f_{\ell}(0)\neq 0$.
If $2\nu \notin \mathbb{Z}_{\geq 1}$, note that  $\textnormal{Re}(\mu)>0$,  $\textnormal{Re}(\nu)>0$, so if
$2(\mu+\nu)\in \mathbb{Z}_{\geq 1} $, we note that $\textnormal{Re}(2\mu+2\nu)>2N-1$,  so $2\mu+2\nu\geq 2N$ and $2\mu \notin \mathbb{Z}_{\geq 1}$. By Kummer's transformation \cite[$13.2.40$]{NIST},
\bea\label{kummertransformation}
U(1-2\nu, 2-2\nu-2\mu+k,z)
=z^{2\nu+2\mu-k-1} U(2\mu-k, 2\nu+2\mu-k, z).
\eea
Note that 
$-1+2\mu-k\notin \mathbb{Z}$, by \cite[$13.2.9$]{NIST},
we have
\bea\label{data2}
u_{N}(z;w)=z\frac{\mathrm{d}}{\mathrm{d}z}\ln \det \big[f_{j+k}(z)+z^{2(\mu+\nu)-1-j-k}g_{j+k}(z)\ln z\big]_{j,k=0}^{N-1}
\eea
with $z\mapsto f_k(z)$ and $z\mapsto g_k(z)$ analytic at $z=0$ and $f_{k}(0)g_{k}(0)\neq 0$. 
If $2\nu \notin \mathbb{Z}_{\geq 1}$ and 
$2(\mu+\nu) \notin \mathbb{Z}_{\geq 1}$,
by \cite[$13.2.42$]{NIST},
\begin{align*}
U(1-2\nu, 2-2\nu-2\mu+k, z)
&=\frac{\Gamma(2\nu+2\mu-k-1)}{\Gamma(2\mu-k)} M(1-2\nu, 2-2\nu-2\mu+k,z) \\
& + \frac{\Gamma(1-2\nu-2\mu+k)}{\Gamma(1-2\nu)} z^{-1+2\nu+2\mu-k} M(2\mu-k, 2\nu+2\mu-k, z),
\end{align*}
where $M(a,b;z)$ is the confluent hypergeometric function of the first kind and principal branches are chosen.
Note that $\textnormal{Re}(2\mu+2\nu)>2N-1$, so for $0\leq \ell \leq 2N-2$, so $\textnormal{Re}(-1+2\nu+2\mu-\ell)>0$. Since $z\mapsto M(a,b,z)$ is entire we thus have 
\bea\label{data3}
u_{N}(z;w)=z\frac{\mathrm{d}}{\mathrm{d}z}\ln \det \big[f_{j+k}(z)+z^{2(\mu+\nu)-1-j-k}g_{j+k}(z)\big]_{j,k=0}^{N-1}
\eea
with $z\mapsto f_k(z)$ and $z\mapsto g_k(z)$ entire, $f_k(0)g_k(0)\neq 0$. Observe that in (\ref{data2}) and (\ref{data3}), $2(\mu+\nu)-1-j-k=w+\bar{w}+2N-1-(j+k)$. Combining this with  \eqref{data1}, \eqref{data2}, \eqref{data3} and Jacobi's formula, we readily obtain the structure of the asymptotic expansion of $u_N(z;w)$ at $z=0$ stated in \eqref{Cauchydata1new},\eqref{Cauchydata2new},\eqref{Cauchydata3new}. As a consequence, it follows that the function
 $z\mapsto u_N(z;s)\frac{1}{z}$ is integrable near $z=0$ in $\mathbb{H}_+$. Therefore, \eqref{e28} and \eqref{theevaluationofJNat0} hold with the indicated value of $\mathrm{C}_{N,w}$ given in (\ref{definitionoftheconstant0407}). Regarding (\ref{equationbeforescaling}), when $w\notin \mathbb{Z}_{\geq 0}$, note that it holds for 
\( z \in \mathbb{H}_+ \setminus \bigcup_{j=1}^N E_{j}^{\mu\nu} \) by Theorem \ref{theo1} and (\ref{0403eve}); 
and when $w\in \mathbb{Z}_{\geq 0}$, it holds for $z\in \mathbb{R}_{>0}$, see, e.g., \cite[eqs.~(3-26) and (3-83)]{Basor_2019}.
Then, by the definition of \( u_N(z;w) \) in (\ref{continuationofuN}) and by analytic continuation, 
it holds for \( z \in \mathbb{H}_+ \setminus E_{N,w} \). Substituting the asymptotic expansions \eqref{Cauchydata1new}--\eqref{Cauchydata3new} into (\ref{equationbeforescaling}) and letting $z \to 0$, we obtain the values of $n_{0},h_{0},f_{0}$ for $\textnormal{Re}(w)>-1/2$ and $w+\overline{w}\neq 0$, as given in
\eqref{initialconditionnew}. When $w=0$, 
\begin{equation*}
	\mathbb{E}_{N}^{(w)}\bigg[\mathrm{e}^{-\im\frac{\xi}{2}\sum_{j=1}^{N}x_{j}^{(N)}}\bigg]=\mathrm{e}^{-\frac{N}{2}|\xi|}
\end{equation*}
for all $N\in \mathbb{N}$ by the proof of \cite[Theorem 1.4]{ABGS}, and so combining with Proposition \ref{Hankelrepresentation}, $\mathcal{J}_{N}(\mu,\nu,z)$ is a constant  for $z\in \mathbb{R}$. Thus $u_N(z;w)\equiv0$.
\end{proof}
\begin{remark}
We emphasize that the three expansions \eqref{Cauchydata1new}--\eqref{Cauchydata3new} of $u_{N}(z;w)$ near $z=0$ are not only asymptotic expansions, but actually uniformly and absolutely convergent Puiseux, log-power or power series representations of $u_{N}(z;w)$ in a vicinity of $z=0$ in $\mathbb{H}_+$. This observation follows directly from \eqref{i14}, using analyticity of \( z \mapsto U(a,b,z) \) for \( \textnormal{Re}(z) > 0 \).
\end{remark}

As a direct consequence of Proposition~\ref{detailedanalysisofthesolution}, \eqref{i13} and \eqref{e28}, we obtain the following asymptotic expansion of the exponential in \eqref{e28} at the origin.

\begin{lem}\label{detailedanalysisofthesolution2}
Let $N\in\mathbb{N}$ and $w\in\mathbb{C}$ with $\textnormal{Re}(2w)>-1$. Choose $\mu=\frac{1}{2}(N+w)$, $\nu=\frac{1}{2}(N+\overline{w})$. Let $u_N(z;w)$ be as in \eqref{e28} and put
\bea\label{defofsigma}
\sigma_{N}(z;w)=u_N(z;w)-\frac{N}{2}z.
\eea
Then for $\xi>0$,
\bea\label{e621}
\mathbb{E}_{N}^{(w)}\bigg[\mathrm{e}^{-\im \frac{\xi}{2}\sum_{j=1}^{N}x_{j}^{(N)}}\bigg]=\exp\left[ 
\int_0^{\xi}\sigma_{N}(z;w)\frac{\mathrm{d}z}{z}
\right]
\eea
where the path of integration lies in $\mathbb{H}_+ \setminus E_{N,w}$ with $E_{N,w}$ given as in \eqref{zeroparameterw} and we assume that the left-hand side in \eqref{e621} does not vanish 
for the same parameter $(w,N,\xi)$. Moreover, $\xi\mapsto\textnormal{RHS in}\,\eqref{e621}$ has the following behaviour as $\xi\rightarrow 0$ for $\xi\in\mathbb{H}_+\setminus E_{N,w}$.
%
If $w\in \mathbb{Z}_{\geq 1}$,
\bea\label{Cauchydata1new3}
\exp\left[
\int_0^{\xi} \sigma_{N}(z;w) \frac{\mathrm{d}z}{z}
\right] \sim 1+\sum_{j=1}^{2w} \mathsf{n}_{j-1}(N,w)\xi^{j}+\mathcal{O}(|\xi|^{2w+1})
\eea
If $w+\overline{w} \in \mathbb{Z}_{\geq 0}$ and $w\notin \mathbb{Z}_{\geq 0}$,
\begin{align}\label{Cauchydata2new3}
\exp\left[
\int_0^{\xi} \sigma_{N}(z;w) \frac{\mathrm{d}z}{z}
\right]\sim   1+\sum_{j=1}^{w+\overline{w}}&\mathsf{h}_{j-1}(N,w)\xi^{j}
+\mathcal{O}(|\xi|^{2\textnormal{Re}(w)+1}|\ln\xi|).
\end{align}
Finally, if $w+\overline{w} \notin \mathbb{Z}_{\geq 0}$ and $\textnormal{Re}(w)\in(m,m+1],m\in\mathbb{Z}_{\geq 0}$ or $\textnormal{Re}(2w)\in(-1,0)$, 
\begin{align}\label{Cauchydata3new3}
\exp\left[
\int_0^{\xi} \sigma_{N}(z;w) \frac{\mathrm{d}z}{z}
\right] \sim 1+ \sum_{j=1}^{2m+1}\mathsf{f}_{j-1}(N,w)&\xi^{j}+\mathcal{O}(|\xi|^{2m+2})+\mathcal{O}(|\xi|^{2\textnormal{Re}(w)+1}),
\end{align}
where we set $m=0$ when $\textnormal{Re}(2w)\in(-1,0)$. Moreover, if $w+\overline{w} \neq 0$, then 
\bea\label{initialconditionnew2}
\mathsf{n}_0(N,w)=\mathsf{f}_0(N,w)=\mathsf{h}_0(N,w)=N\frac{\overline{w}-w}{2(w+\overline{w})}.
\eea
\end{lem}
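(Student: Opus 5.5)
The plan is to combine \eqref{i13} with \eqref{e28} and then read off the small-$\xi$ behaviour from the local expansions \eqref{Cauchydata1new}--\eqref{Cauchydata3new}.

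First I establish \eqref{e621}. Inserting \eqref{e28} into \eqref{i13} gives
\[
\mathbb{E}_N^{(w)}\big[\mathrm{e}^{-\im\frac{\xi}{2}\sum_j x_j^{(N)}}\big]=K_N\,\mathrm{e}^{-N\xi/2}\exp\Big[\int_0^\xi u_N(z;w)\frac{\d z}{z}\Big],
\]
where $K_N$ collects the prefactor in \eqref{i13} together with $\mathcal{J}_N(\mu,\nu,0)$ from \eqref{theevaluationofJNat0}. Since $-N\xi/2=-\int_0^\xi \frac{N}{2}z\,\frac{\d z}{z}$, the definition \eqref{defofsigma} converts the right side into $K_N\exp[\int_0^\xi\sigma_N\,\d z/z]$. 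To show $K_N=1$, I let $\xi\downarrow0$. The left side tends to $1$ by dominated convergence, because $\mathfrak{m}_N^{(w)}$ is a probability measure. The integral tends to $0$ because $\sigma_N(z;w)/z$ is integrable at $0$, as shown in the proof of Proposition~\ref{detailedanalysisofthesolution}.

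There is a subtlety in this limit. The path lies in $\mathbb{H}_+\setminus E_{N,w}$, while the left side is a function of real $\xi$. I handle this by noting that both sides are analytic in $\xi$ on $\mathbb{H}_+$ minus the exceptional set. For the left side this follows from Proposition~\ref{Hankelrepresentation} and Lemma~\ref{lem2}. For the right side, the exponential of the integral equals $\mathcal{J}_N(z)/\mathcal{J}_N(0)\cdot \mathrm{e}^{-Nz/2}$, which makes it path independent. Alternatively, one can check $K_N=1$ directly from \eqref{definitionoftheconstant0407} and \eqref{theevaluationofJNat0}: the Gamma products and powers of $2$ cancel.

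Next I derive the expansions. Write $\sigma_N(z;w)/z=u_N/z-N/2$. By \eqref{Cauchydata3new}, $u_N/z=f_0+\sum_{j\ge1}f_jz^j+\sum_{k\ge1}\sum_{j\ge k-1}g_{jk}z^{j+k(w+\overline w)}$. Integrating term by term from $0$ to $\xi$ is justified by the convergence remark after Proposition~\ref{detailedanalysisofthesolution}. This gives a power series in $\xi$, starting with $(f_0-N/2)\xi$, plus non-integer powers $\xi^{j+1+k(w+\overline w)}$. The smallest such non-integer exponent has real part $2\textnormal{Re}(w)+1$ (take $k=1$, $j=0$). Exponentiating produces integer powers of $\xi$ and error terms of size $\mathcal{O}(|\xi|^{2\textnormal{Re}(w)+1})$. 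I then truncate the integer powers at order $2m+1$. Since $2\textnormal{Re}(w)+1\le 2m+3$, everything beyond that order is absorbed into $\mathcal{O}(|\xi|^{2m+2})+\mathcal{O}(|\xi|^{2\textnormal{Re}(w)+1})$; this yields \eqref{Cauchydata3new3}.

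The other cases follow the same pattern. In case \eqref{Cauchydata2new} the logarithmic terms start at $z^{w+\overline w}\ln z$, and integrating gives $\mathcal{O}(|\xi|^{2\textnormal{Re}(w)+1}|\ln\xi|)$. Here $w+\overline w\in\mathbb{Z}$, so the integer powers can be kept up to order $w+\overline w$. For $w\in\mathbb{Z}_{\ge1}$ the expansion \eqref{Cauchydata1new} is an honest power series, and truncating at order $2w$ gives \eqref{Cauchydata1new3}. Finally, the linear coefficient is
\[
\mathsf{n}_0=n_0-\tfrac N2=\frac{N\overline w}{w+\overline w}-\frac N2=N\frac{\overline w-w}{2(w+\overline w)},
\]
using \eqref{initialconditionnew}; the same computation gives $\mathsf f_0$ and $\mathsf h_0$.

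I expect the main obstacle to be the bookkeeping of exponents: checking that cross terms from exponentiation never produce a non-integer power with real part below $2\textnormal{Re}(w)+1$. This holds because all such powers have real part at least $2\textnormal{Re}(w)+1$, and when $\textnormal{Re}(2w)\in(-1,0)$ that bound is already below $2$.
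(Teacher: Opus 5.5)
Your proposal is correct and follows the same route as the paper, which presents this lemma as a direct consequence of Proposition~\ref{detailedanalysisofthesolution}, \eqref{i13} and \eqref{e28}. The prefactor in \eqref{i13} times $\mathcal{J}_N(\mu,\nu,0)$ equals $1$ (indeed \eqref{theevaluationofJNat0} was read off from \eqref{i13} at $\xi=0$), the factor $\mathrm{e}^{-N\xi/2}$ is absorbed by the shift \eqref{defofsigma}, and the small-$\xi$ behaviour comes from integrating and exponentiating \eqref{Cauchydata1new}--\eqref{Cauchydata3new}, giving $\mathsf{n}_0=n_0-\frac{N}{2}$ (and likewise for $\mathsf{f}_0,\mathsf{h}_0$); your exponent bookkeeping simply fills in details the paper leaves implicit.
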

Using Lemma \ref{detailedanalysisofthesolution2}, we establish the following gap structure in the asymptotic expansion of 
\begin{equation*}
	\xi\mapsto\exp\left[
\int_0^{\xi}\sigma_{N}(z;w)\frac{\mathrm{d}z}{z}
\right]+\exp\left[ 
\int_0^{\xi}\sigma_{N}(z;\overline{w})\frac{\mathrm{d}z}{z}
\right]
\end{equation*}
at $\xi=0$. 
\begin{prop}\label{gap structure}
Let $g_{1}(z;w)$ and $g_{2}(z;w)$ be defined as in Proposition \ref{mainthreorem}. Under the same assumption as in Lemma~\ref{detailedanalysisofthesolution2}, then $g_{1}(z;w)+g_{2}(z;w)$ satisfies the following boundary constraints as $z\rightarrow 0$ for $z\in \mathbb{H}_+ \setminus  (E_{N,w}\cup E_{N,\overline{w}})$.
If $w\in \mathbb{Z}_{\geq 1}$,
\bea\label{Cauchydata1new2}
g_{1}(z;w)+g_{2}(z;w) \sim 2+2\sum_{j=1}^{w} \mathsf{n}_{2j-1}(N,w) z^{2j}+\mathcal{O}(|z|^{2w+1})
\eea
If $w+\overline{w} \in \mathbb{Z}_{\geq 0}$ and $w\notin \mathbb{Z}_{\geq 0}$,
\begin{align}\label{Cauchydata2new2}
g_{1}(z;w)+g_{2}(z;w) \sim   2+2\sum_{\substack{j=1\\j~\mathrm{even}~}}^{w+\overline{w}}&\mathsf{h}_{j-1}(N,w) z^{j}
+\mathcal{O}(|z|^{2\textnormal{Re}(w)+1}|\ln z|).
\end{align}
Finally, if $w+\overline{w} \notin \mathbb{Z}_{\geq 0}$ and $\textnormal{Re}(w)\in(m,m+1],m\in\mathbb{Z}_{\geq 0}$ or $\textnormal{Re}(2w)\in(-1,0)$, 
\begin{align}\label{Cauchydata3new2}
g_{1}(z;w)+g_{2}(z;w) \sim 2+ 2 \sum_{j=1}^{m}\mathsf{f}_{2j-1}(N,w)&z^{2j}+\mathcal{O}(|z|^{2m+2})+\mathcal{O}(|z|^{2\textnormal{Re}(w)+1}),
\end{align}
where we set $m=0$ when $\textnormal{Re}(2w)\in(-1,0)$.
\end{prop}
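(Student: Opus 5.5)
We use the probabilistic meaning of $g_1,g_2$ from Lemma \ref{detailedanalysisofthesolution2} and a symmetry that cancels odd powers. By \eqref{e621}, for real $\xi>0$ outside the exceptional sets,
\[
g_1(\xi;w)=\mathbb{E}_N^{(w)}\Big[\mathrm{e}^{-\im\frac{\xi}{2}T}\Big],\qquad g_2(\xi;w)=\mathbb{E}_N^{(\overline{w})}\Big[\mathrm{e}^{-\im\frac{\xi}{2}T}\Big],\qquad T:=\sum_{j=1}^N x_j^{(N)}.
\]
Under $x\mapsto -x$ the density \eqref{defofmathfrakmN} with parameter $\overline{w}$ becomes the density with parameter $w$, since $(1+\im x)^{-\overline w-N}(1-\im x)^{-w-N}$ is sent to $(1-\im x)^{-\overline w-N}(1+\im x)^{-w-N}$. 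This is the remark after Proposition \ref{Hankelrepresentation}. Hence $g_2(\xi;w)=\mathbb{E}_N^{(w)}[\mathrm{e}^{\im\frac{\xi}{2}T}]$, and therefore
\[
g_1(\xi;w)+g_2(\xi;w)=2\,\mathbb{E}_N^{(w)}\Big[\cos\big(\tfrac{\xi}{2}T\big)\Big].
\]
This is an even function of $\xi$, and it is real when $\xi$ is real.

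\textbf{Step 1 (even powers).} We take the expansions \eqref{Cauchydata1new3}--\eqref{Cauchydata3new3} for $g_1(\cdot;w)$ and the same expansions for $g_2(\cdot;w)=g_1(\cdot;\overline w)$, whose coefficients are $\mathsf{n}_j(N,\overline w)$, $\mathsf{h}_j(N,\overline w)$, $\mathsf{f}_j(N,\overline w)$. The case split is identical for $w$ and $\overline w$, because $w\in\mathbb{Z}_{\ge1}$, $w+\overline w$ and $\textnormal{Re}(w)$ are all invariant under conjugation. The identity $g_2(\xi;w)=g_1(-\xi;w)$ holds on the real line, and we want to compare coefficients. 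For that we observe that both sides of the expansion are asymptotic expansions in the real variable $\xi\to0^+$. The left-hand side, being $2\mathbb{E}[\cos(\xi T/2)]$, can also be expanded directly: since $\mathbb{E}_N^{(w)}|T|^{p}<\infty$ for $p<2\textnormal{Re}(w)+1$ (the density decays like $|x|^{-2N-2\textnormal{Re}(w)}$), a Taylor expansion with remainder gives only even powers $\xi^{2j}$ up to order below $2\textnormal{Re}(w)+1$.

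\textbf{Step 2 (matching coefficients).} Asymptotic expansions on $\xi\to0^+$ are unique. Comparing the expansion of $g_1+g_2$ built from Step 1 with the direct even expansion, all odd coefficients vanish, i.e. $\mathsf{n}_{2j-2}(N,w)+\mathsf{n}_{2j-2}(N,\overline w)=0$, and the even coefficients add. The same comparison applied to $g_1(-\xi)$ shows $\mathsf{n}_{k}(N,\overline w)=(-1)^{k+1}\mathsf{n}_k(N,w)$. Hence the even-power coefficient of $g_1+g_2$ is $2\mathsf{n}_{2j-1}(N,w)$, and likewise for $\mathsf{h}$ and $\mathsf{f}$. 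This yields \eqref{Cauchydata1new2}--\eqref{Cauchydata3new2}, with the error terms inherited from \eqref{Cauchydata1new3}--\eqref{Cauchydata3new3}. As a sanity check, \eqref{initialconditionnew2} is odd under $w\leftrightarrow\overline w$.

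\textbf{Step 3 (complex $z$ and the main obstacle).} The statement is for $z\to0$ in $\mathbb{H}_+\setminus(E_{N,w}\cup E_{N,\overline w})$, not only along the real axis. Both $g_1$ and $g_2$ have convergent Puiseux or log-power expansions near $0$ in $\mathbb{H}_+$; this is the Remark after Proposition \ref{detailedanalysisofthesolution}, integrated and exponentiated. The coefficients of these expansions are therefore determined by their restriction to $\mathbb{R}_+$, so the coefficient identities from Step 2 carry over to the sector. The delicate point is the uniqueness of the mixed expansions containing terms $z^{j+k(w+\overline w)}(\ln z)^k$, where noninteger exponents could in principle coincide with integers. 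We handle this by noting that only exponents strictly below $2\textnormal{Re}(w)+1$ are asserted. In the case $w+\overline w\in\mathbb{Z}$ the logarithmic terms start exactly at order $2\textnormal{Re}(w)+1$, and they are absorbed into the stated error.
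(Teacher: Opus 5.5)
Your proposal is correct and follows essentially the paper's route: the paper likewise identifies the coefficients in \eqref{Cauchydata1new3}--\eqref{Cauchydata3new3} with one-sided derivatives at $t=0^+$ of the characteristic function. It then uses the reflection symmetry \eqref{symmetry}, together with the existence of these derivatives at $t=0$ for orders $<\textnormal{Re}(2w)+1$, to obtain exactly your relation $\mathsf{n}_k(N,\overline{w})=(-1)^{k+1}\mathsf{n}_k(N,w)$ (likewise for $\mathsf{h},\mathsf{f}$), and this relation alone gives both the cancellation of odd powers and the doubling of even ones, so your cosine step is subsumed by it. Two small points: the relation has to come from the two-sided moment expansion of $\mathbb{E}_N^{(w)}[\mathrm{e}^{\mp\im\xi T/2}]$, since Lemma \ref{detailedanalysisofthesolution2} says nothing at negative arguments; and Step 3 needs no uniqueness argument in the sector, because the coefficient identities are identities between numbers, so you may simply add the two sector expansions from Lemma \ref{detailedanalysisofthesolution2}.
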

\begin{proof}
By (\ref{theevaluationofJNat0}), Lemmas \ref{lem1} and \ref{lem2}, $\mathcal{J}_{N}(\mu,\nu,0)\neq 0$, so $\mathcal{J}_{N}(\mu,\nu,z)$ is non-zero in a neighborhood of $\mathbb{C}\ni z=0$.  By \eqref{i13}, the function 
\beas
	z\mapsto\frac{1}{\mathrm{C}_{N,w}}(-1)^{N(N-1)/2}\bigg(\frac{2\pi\mathrm{e}^{-z/2}}{\Gamma(N+\overline{w})}\bigg)^N2^{-N^2-N(w+\overline{w})}\,\mathcal{J}_N(\mu,\nu,z)
\eeas
constitutes an analytic continuation of $t\mapsto\mathbb{E}_{N}^{(w)}[\mathrm{e}^{-\im \frac{t}{2}\sum_{j=1}^{N}x_{j}^{(N)}}]$ from $\mathbb{R}_+$ to $\mathbb{H}_{+}$. 
Then,
by \eqref{e621} and \eqref{Cauchydata1new3}-\eqref{Cauchydata3new3}, we conclude that 
\begin{enumerate}
	\item[(a)] if $w+\overline{w} \notin \mathbb{Z}_{\geq 0}$, and there is an $m\in \mathbb{N}$ such that $\textnormal{Re}(w) \in (m,m+1]$, then for $j=1,\ldots,2m$, 
\bea\label{coefficient1}
\mathsf{f}_{j}(N,w)=\frac{1}{(j+1)!}\lim_{t\rightarrow 0^{+}}\frac{\mathrm{d}^{j+1}}{\mathrm{d}t^{j+1}} \mathbb{E}_{N}^{(w)}\bigg[\mathrm{e}^{-\im \frac{t}{2}\sum_{j=1}^{N}x_{j}^{(N)}}\bigg]
\eea
with $\mathsf{f}_{j}(N,w)$ given as in \eqref{Cauchydata3new3}.

	\item[(b)] If $w+\overline{w} \in \mathbb{Z}_{\geq 2}$ and $w\notin \mathbb{Z}_{\geq 0}$,
then for $j=1,\ldots,w+\overline{w}-1$, 
\bea\label{coefficient2}
\mathsf{h}_{j}(N,w)=\frac{1}{(j+1)!}\lim_{t\rightarrow 0^{+}}\frac{\mathrm{d}^{j+1}}{\mathrm{d}t^{j+1}} \mathbb{E}_{N}^{(w)}\bigg[\mathrm{e}^{-\im \frac{t}{2}\sum_{j=1}^{N}x_{j}^{(N)}}\bigg]
\eea
with $\mathsf{h}_{j}(N,w)$ given as in \eqref{Cauchydata2new3}.

\item[(c)] If $w\in \mathbb{Z}_{\geq 1}$,
then for $j=1,\ldots, 2w-1$, 
\bea\label{coefficient3}
\mathsf{n}_{j}(N,w)=\frac{1}{(j+1)!}\lim_{t\rightarrow 0^{+}}\frac{\mathrm{d}^{j+1}}{\mathrm{d}t^{j+1}} \mathbb{E}_{N}^{(w)}\bigg[\mathrm{e}^{-\im \frac{t}{2}\sum_{j=1}^{N}x_{j}^{(N)}}\bigg]
\eea
with $\mathsf{n}_{j}(N,w)$ given as in \eqref{Cauchydata1new3}.
\end{enumerate}
Observe that for $t<0$, 
\bea\label{symmetry}
\mathbb{E}_{N}^{(w)}\bigg[\mathrm{e}^{-\im \frac{t}{2}\sum_{j=1}^{N}x_{j}^{(N)}}\bigg]=\mathbb{E}_{N}^{(\overline{w})}\bigg[\mathrm{e}^{\im \frac{t}{2}\sum_{j=1}^{N}x_{j}^{(N)}}\bigg],
\eea
and $\frac{\mathrm{d}^{j}}{\mathrm{d}t^{j}}\mathbb{E}_{N}^{(w)}[\mathrm{e}^{-\im \frac{t}{2}\sum_{j=1}^{N}x_{j}^{(N)}}]$ exists at $t=0$ when $j<\textnormal{Re}(2w)+1$. Thus,
\beas
\lim_{t\rightarrow 0^{+}}\frac{\mathrm{d}^{j}}{\mathrm{d}t^{j}} \mathbb{E}_{N}^{(w)}\bigg[\mathrm{e}^{-\im \frac{t}{2}\sum_{j=1}^{N}x_{j}^{(N)}}\bigg]=(-1)^{j}\lim_{t\rightarrow 0^{+}}\frac{\mathrm{d}^{j}}{\mathrm{d}t^{j}} \mathbb{E}_{N}^{(\overline{w})}\bigg[\mathrm{e}^{-\im \frac{t}{2}\sum_{j=1}^{N}x_{j}^{(N)}}\bigg]
\eeas
provided $0\leq j<\textnormal{Re}(2w)+1$. Combining the above with \eqref{coefficient1}--\eqref{coefficient3} and \eqref{initialconditionnew2}, the proof of this proposition is complete.
\end{proof}
 
The gap structure obtained in Proposition \ref{gap structure} allows us to represent $F_{N,\delta}(s,h)$ in terms of a solution of the $\sigma$-Painlev\'e V equation for general parameters $h$. We are now ready to prove Proposition \ref{mainthreorem}. First, by an argument similar to that in \cite[Proposition 2.3]{BW}, we obtain the following result.

\begin{lem}\label{kernalexpresion} Let $s \in \mathbb{R}$ and $\delta \in \mathbb{C}$ with $2s + \textnormal{Re}(2\delta) > -1$. Assume $h \in \mathbb{C}$ satisfies $0< \textnormal{Re}(2h) < 2s + \textnormal{Re}(2\delta) +1$. Then
\begin{align*}
F_{N,\delta}(s,h)&=F_{N,\delta}(s,0)2^{-2h}\lim_{\epsilon\downarrow 0}
\int_{0}^{\infty}K_{2h}^{\epsilon}(\xi)\Big(\mathbb{E}_{N}^{(s+\overline{\delta})}\left[\mathrm{e}^{\im \xi\sum_{j=1}^{N}x_{j}^{(N)}}\right]+\mathbb{E}_{N}^{(s+\delta)}\left[\mathrm{e}^{\im \xi\sum_{j=1}^{N}x_{j}^{(N)}}\right]\Big)\mathrm{d}\xi,
\end{align*}
where
\begin{equation*}\label{i5}
	K_h^{\epsilon}(\xi):=\frac{1}{2\pi}\int_{-\infty}^{\infty}|x|^{h}\mathrm{e}^{-\epsilon|x|-\im\xi x}\mathrm{d} x=\frac{\Gamma(1+h)}{2\pi}\bigg[\frac{1}{(\epsilon+\im\xi)^{h+1}}+\frac{1}{(\epsilon-\im\xi)^{h+1}}\bigg],\ \ \xi\in\mathbb{R}.
\end{equation*}
\end{lem}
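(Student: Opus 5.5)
The plan is to start from Proposition \ref{arepresentationofasanexpectation}, which gives
\[
F_{N,\delta}(s,h)=F_{N,\delta}(s,0)\,2^{-2h}\,\mathbb{E}_N^{(w)}\big[|T|^{2h}\big],\qquad w=s+\overline{\delta},\quad T=\sum_{j=1}^N x_j^{(N)} .
\]
The task is therefore to express $\mathbb{E}_N^{(w)}[|T|^{2h}]$ through the characteristic function of $T$.

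First I regularise. Since $|T|^{2h}\mathrm{e}^{-\epsilon|T|}\to |T|^{2h}$ pointwise, and $\mathbb{E}_N^{(w)}[|T|^{\textnormal{Re}(2h)}]<\infty$, dominated convergence gives
\[
\mathbb{E}_N^{(w)}[|T|^{2h}]=\lim_{\epsilon\downarrow0}\mathbb{E}_N^{(w)}\big[|T|^{2h}\mathrm{e}^{-\epsilon|T|}\big].
\]
Finiteness comes from the density \eqref{defofmathfrakmN}. The marginal tail of a single eigenvalue behaves like $|x|^{-2\textnormal{Re}(w)-2}$, because the weight $|x|^{-2N-2\textnormal{Re}(w)}$ is compensated by the Vandermonde $|x|^{2(N-1)}$. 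Since $|T|\le N\max_j|x_j|$, the moment is finite exactly when $\textnormal{Re}(2h)<2\textnormal{Re}(w)+1=2s+\textnormal{Re}(2\delta)+1$.

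Next I Fourier-invert the regularised weight. For $\epsilon>0$,
\[
|x|^{2h}\mathrm{e}^{-\epsilon|x|}=\int_{\mathbb{R}}K_{2h}^{\epsilon}(\xi)\,\mathrm{e}^{\im\xi x}\,\mathrm{d}\xi .
\]
This holds by Fourier inversion because the left side is in $L^1$ and continuous, and $K_{2h}^{\epsilon}(\xi)=\mathcal{O}(|\xi|^{-\textnormal{Re}(2h)-1})$ is integrable since $\textnormal{Re}(2h)>0$. The closed form of $K^{\epsilon}_{2h}$ follows from the elementary integral $\int_0^\infty x^{2h}\mathrm{e}^{-(\epsilon\pm\im\xi)x}\,\mathrm{d}x=\Gamma(1+2h)(\epsilon\pm\im\xi)^{-2h-1}$. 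Inserting $x=T$ and using Fubini, which is justified because $|K^\epsilon_{2h}|$ is integrable and the exponential is bounded, gives
\[
\mathbb{E}_N^{(w)}\big[|T|^{2h}\mathrm{e}^{-\epsilon|T|}\big]=\int_{\mathbb{R}}K_{2h}^{\epsilon}(\xi)\,\mathbb{E}_N^{(w)}\big[\mathrm{e}^{\im\xi T}\big]\,\mathrm{d}\xi .
\]

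Finally I fold this integral onto $(0,\infty)$. $K_{2h}^\epsilon$ is even in $\xi$. For $\xi<0$, write $\mathbb{E}_N^{(w)}[\mathrm{e}^{\im\xi T}]=\mathbb{E}_N^{(w)}[\mathrm{e}^{-\im|\xi|T}]$. The substitution $x_j\mapsto -x_j$ in \eqref{defofmathfrakmN} sends $(1+\im x)^{-w-N}(1-\im x)^{-\overline{w}-N}$ to the same expression with $w$ and $\overline{w}$ swapped. So the law of $-T$ under $\mathfrak{m}_N^{(w)}$ equals the law of $T$ under $\mathfrak{m}_N^{(\overline{w})}$; this is also the content of the Remark after Proposition \ref{Hankelrepresentation}. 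The $\xi<0$ half therefore becomes $\int_0^\infty K_{2h}^\epsilon(\xi)\,\mathbb{E}_N^{(\overline{w})}[\mathrm{e}^{\im\xi T}]\,\mathrm{d}\xi$, and since $\overline{w}=s+\delta$ for real $s$, this yields the stated formula.

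The only delicate point is the moment finiteness and tail bound that justify dominated convergence. I would handle it by bounding the one-point density of $\mathfrak{m}_N^{(w)}$, or by quoting the corresponding step of \cite[Proposition 2.3]{BW}, which adapts verbatim since only $|(1\pm\im x)^{-w-N}|$ enters.
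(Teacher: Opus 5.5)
Your proposal is correct and is essentially the argument the paper intends: the paper gives no separate proof and appeals to the argument of \cite[Proposition 2.3]{BW}. That argument starts from Proposition \ref{arepresentationofasanexpectation}, regularises $|x|^{2h}$ by $\mathrm{e}^{-\epsilon|x|}$, Fourier-inverts against the law of $\sum_j x_j^{(N)}$, and lets $\epsilon\downarrow 0$ by dominated convergence. The one new ingredient relative to the real-$w$ case in \cite{BW} is that the reflection $x_j\mapsto -x_j$ swaps $w$ and $\overline{w}$, so folding onto $(0,\infty)$ produces the two expectations with parameters $s+\overline{\delta}$ and $s+\delta$; you identified and handled this correctly.
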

\begin{proof}[Proof of Proposition \ref{mainthreorem}]
Let $\sigma_{N}(z;w)$ be defined as in \eqref{defofsigma}. By Proposition \ref{prop4},\eqref{0403eve},\eqref{e28}, and \eqref{defofsigma}, we conclude that $\sigma_{N}(z;w)$ has the form stated in \eqref{0403formula} and solves the $\sigma$-Painlev\'e V equation \eqref{transformpainleve}. On the other hand, for any fixed positive integer \(N\), by Proposition \ref{Hankelrepresentation},  equations (\ref{largezasymptotics}) and (\ref{e621}), the functions \(g_{1}(z;w)\), \(g_{2}(z;w)\), together with all their derivatives of arbitrary order, decay exponentially as \(z\to\infty\). Hence, using this fact, Lemma \ref{kernalexpresion}, and combining Proposition \ref{gap structure} with an argument similar to that used in \cite[Propositions 7.3 and 7.5]{BW}, the  difference being that \(\exp\left(\int_{0}^{z} v(t;s)/t\,\mathrm{d}t\right)\) therein is replaced by \(g_{1}(z;w)+g_{2}(z;w)\) in Proposition \ref{gap structure}, we obtain the representation stated in the current proposition.
\end{proof}

\section{Proof of Proposition \ref{convergencetheorem}}\label{proofofproposition14}

In this section, we shall study the range of $s, h$ for which the limit
\bea\label{convergenceproblem}
\frac{F_{N,\delta}(s,h)}{N^{s^2 + s(\delta + \overline{\delta}) + 2h}}
\eea
exists as $N \to \infty$. We then express this limit, that is, the leading coefficient in the large-$N$ asymptotic of \eqref{convergenceproblem}, in terms of the expectation of the moments of a random variable. The idea of using a probabilistic approach to study the convergence of the quantity \eqref{convergenceproblem} first appeared in  Assiotis, Keating and Warren's work \cite{AKW}, where the authors proved that for $\delta=0$, the limit of \eqref{convergenceproblem}, as $N\rightarrow\infty$, exists when $2s>-1$ and $0 \leq 2h < 2s+1$. In \cite{AGS}, Assiotis, Gunes and Soor studied an analogue of \eqref{convergenceproblem} for the circular Jacobi $\beta$-ensemble (note that \eqref{convergenceproblem} corresponds to $\beta=2$). In particular, the next statement follows directly from \cite[Proposition 3.11]{AGS} for $\beta=2$.
\begin{lem}\label{convergenceat0}
Let $w\in \mathbb{C}$ with $\textnormal{Re}(w)>0$. Let $h\in \mathbb{R}$ with $0<2h<\textnormal{Re}(2w)+1$. Then
\beas
\lim_{N\rightarrow \infty}\E_{N}^{(w)}\left[\left|\frac{\sum_{j=1}^{N}x_{j}^{(N)}}{N}\right|^{2h}\right]=\mathbb{E}\left[\left|\mathsf{X}_{1}(w)\right|^{2h}\right]<\infty.
\eeas
\end{lem}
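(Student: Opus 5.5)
The plan is to derive this lemma directly from \cite[Proposition 3.11]{AGS} for $\beta=2$, after matching notation. The proof is then a uniform integrability argument built on the almost sure convergence of $\textnormal{Tr}(X_N)/N$ recalled in Subsection~\ref{background}. I lay out that argument so the restrictions $\textnormal{Re}(w)>0$ and $0<2h<\textnormal{Re}(2w)+1$ are visible.

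\emph{Realisation on one probability space.} Let $X\in\mathbf{H}$ have law $m^{(w)}$ and let $X_N$ be its top-left $N\times N$ corner. As explained at the start of Section~\ref{ahanklerepresentation0402}, the eigenvalues of $X_N$ are distributed according to $\mathfrak{m}_N^{(w)}$. Hence
\[
\E_N^{(w)}\big[|\textstyle\sum_j x_j^{(N)}/N|^{2h}\big]=\E\big[|\textnormal{Tr}(X_N)/N|^{2h}\big].
\]
By \cite[Theorem 5.1]{Borodin_2001}, $\textnormal{Tr}(X_N)/N\to\mathsf{X}_1(w)$ almost surely. So $|\textnormal{Tr}(X_N)/N|^{2h}\to|\mathsf{X}_1(w)|^{2h}$ almost surely, and it remains only to prove uniform integrability of this family in $N$.

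\emph{Uniform integrability.} I would fix $h'$ with $2h<2h'<\textnormal{Re}(2w)+1$ and show
\[
\sup_N\E\big[|\textnormal{Tr}(X_N)/N|^{2h'}\big]<\infty .
\]
For this I would use the consistency of $\{m_N^{(w)}\}$ (the corners form a projective system), under which
\[
M_N:=\textnormal{Tr}(X_N)/N-c_N
\]
is a reverse martingale in $N$, with a suitable deterministic centering $c_N$. The centering is needed because, for complex $w$, $\E[\textnormal{Tr}(X_N)]/N$ is not zero; a computation from the one-point density gives a value close to $\textnormal{Im}(w)/\textnormal{Re}(w)$, and this is where $\textnormal{Re}(w)>0$ enters, since it makes the first moment finite. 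The $L^{2h'}$ bound then reduces via Doob/Burkholder to a bound on the one-dimensional marginal. By \eqref{defofmathfrakmN}, the one-point density decays like $|x|^{-2\textnormal{Re}(w)-2}$, so the $2h'$-moment of each $x_j$ is finite exactly when $2h'<\textnormal{Re}(2w)+1$.

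An alternative route to the same bound uses the unitary side and the identity of Proposition~\ref{arepresentationofasanexpectation} at exponent $h'$. Combining it with the a priori bound $F_{N,\delta}(s,h')\lesssim N^{s^2+s(\delta+\overline{\delta})+2h'}$ and the asymptotics of $F_{N,\delta}(s,0)$ given by $\mathrm{d}(s,\delta)$ would give the uniform moment bound.

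\emph{Conclusion and main difficulty.} Given the uniform $2h'$-moment bound, Vitali's theorem yields convergence of the $2h$-moments together with finiteness of $\E[|\mathsf{X}_1(w)|^{2h}]$, the latter by Fatou. The hard step is this uniform moment bound for complex $w$. For complex $w$ the density is not symmetric, and the tail estimate must be carried out uniformly in $N$, which \cite{AGS} does in Proposition 3.11. In the paper itself I would simply cite that proposition with $\beta=2$ and check that their parameter range specialises to $\textnormal{Re}(w)>0$, $0<2h<\textnormal{Re}(2w)+1$.
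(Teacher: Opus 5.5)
Your proposal follows the paper's argument: the lemma is quoted directly from \cite[Proposition 3.11]{AGS} with $\beta=2$, which combines the almost sure convergence $\textnormal{Tr}(X_N)/N\to\mathsf{X}_1(w)$ with the uniform moment bound of Lemma~\ref{uniform integrability} (a special case of \cite[Proposition 3.8]{AGS}), the same convergence-plus-uniform-integrability argument you outline. Your self-contained sketch of that bound needs three corrections. \emph{Centering:} by exchangeability of the diagonal entries, $\textnormal{Tr}(X_N)/N=\E[X_{11}\,|\,\mathcal{E}_N]$, where $\mathcal{E}_N$ is the $\sigma$-algebra of events invariant under permutations of $X_{11},\ldots,X_{NN}$. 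So it is already a reverse martingale with constant mean $\textnormal{Im}(w)/\textnormal{Re}(w)$, and no $N$-dependent centering is needed. \emph{Marginal:} conditional Jensen then bounds the $2h'$-moment, for $1\le 2h'<\textnormal{Re}(2w)+1$, by that of $X_{11}\sim m_1^{(w)}$, not by moments of individual eigenvalues; a triangle inequality over eigenvalues would not be uniform in $N$. \emph{Circularity:} your ``unitary side'' alternative is circular, since by Proposition~\ref{arepresentationofasanexpectation} the a priori bound on $F_{N,\delta}(s,h')$ is equivalent to the moment bound being proved.
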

This follows from the uniform integrability of the sequence of random variables 
\beas
	\left(\left|\frac{1}{N}\sum_{j=1}^{N}x_{j}^{(N)}\right|^r\right)_{N=1}^{\infty},
\eeas
which, in turn, follows from a special case of \cite[Proposition 3.8]{AGS} with $\beta=2$.

\begin{lem}\label{uniform integrability}
Let $w\in \mathbb{C}$ with $\textnormal{Re}(w)>0$ and $0\leq r<\textnormal{Re}(2w)+1$.
Then
\beas
\sup_{N\geq 1}\mathbb{E}_{N}^{(w)}\left[\left|\frac{\sum_{j=1}^{N}x_{j}^{(N)}}{N}\right|^r\right]<\infty.
\eeas
\end{lem}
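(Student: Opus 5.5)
The plan is to obtain the uniform bound directly from the finite-$N$ structure, rather than reproducing the full argument of \cite[Proposition 3.8]{AGS}. The key input is Proposition \ref{arepresentationofasanexpectation}, read backwards. Put $w=s+\overline{\delta}$ with $s=\textnormal{Re}(w)>0$ and $\delta$ chosen so that $s+\overline{\delta}=w$, i.e. $\delta=\overline{w}-s$ with $\textnormal{Re}(2\delta)=0>-1$. Then for $h=r/2$ we have
\[
\mathbb{E}_{N}^{(w)}\Big[\Big|\tfrac{1}{N}\textstyle\sum_j x_j^{(N)}\Big|^{r}\Big]=\dfrac{2^{r}F_{N,\delta}(s,r/2)}{N^{r}F_{N,\delta}(s,0)}.
\]
So the claim is equivalent to $F_{N,\delta}(s,h)\lesssim N^{2h}F_{N,\delta}(s,0)$ uniformly in $N$, for $0\le 2h<2s+1$.

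For the numerator I would argue as in \cite{AKW,AGS}. The measure $\mathfrak{m}_N^{(w)}$ is the eigenvalue law of the $N\times N$ corner of a Hua--Pickrell matrix, and by consistency the traces $T_N=\mathrm{Tr}(X_N)$ for different $N$ live on one probability space. I would use the standard martingale-type structure: $T_N/N$ is a reverse martingale, or a martingale after subtracting a deterministic drift depending on $\textnormal{Im}(w)$, with respect to the filtration generated by the corners. This is the structure underlying the almost sure convergence in \cite[Theorem 5.1]{Borodin_2001}. Given this, for $1\le r<2\textnormal{Re}(w)+1$ Doob's maximal inequality reduces $\sup_N \mathbb{E}|T_N/N|^r$ to a bound on $\mathbb{E}|\mathsf{X}_1(w)|^r$ or its finite analogue. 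For $r<1$ the result follows from the case $r=1$ by Jensen's inequality.

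The main obstacle is showing that the limiting $r$-th moment is finite for the complex parameter, right up to $r<2\textnormal{Re}(w)+1$. I would attack this through tails. The one-dimensional marginal of $\mathfrak{m}_N^{(w)}$ has density decaying like $|x|^{-2\textnormal{Re}(w)-2}$, since $|(1+\im x)^{-w-N}(1-\im x)^{-\overline{w}-N}|=(1+x^2)^{-\textnormal{Re}(w)-N}e^{\pm 2\textnormal{Im}(w)\arg(1+\im x)}$ and the Vandermonde supplies $x^{2N-2}$. Via the one-point correlation function this gives $\mathbb{P}(|T_N|>Nu)\lesssim u^{-2\textnormal{Re}(w)-1}$ uniformly in $N$, and so $\sup_N\mathbb{E}|T_N/N|^r<\infty$ for every $r<2\textnormal{Re}(w)+1$.

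To get this tail bound I would split $T_N$ into its largest-modulus eigenvalue and the remainder. The largest eigenvalue is controlled by $\int_{|x|>Nu}\rho_1^{(N)}(x)\,\mathrm{d}x$, and the one-point density $\rho_1^{(N)}$ is given explicitly via the Christoffel--Darboux kernel of the pseudo-Jacobi (Romanovski) polynomials; this kernel has the bound $\rho_1^{(N)}(x)\lesssim N^{-2\textnormal{Re}(w)}|x|^{-2\textnormal{Re}(w)-2}$ for $|x|\gtrsim N$. The remainder is controlled in $L^2$ by the variance computation that is also used later for Corollary \ref{almostsurely}. The factor $|e^{\textnormal{Im}(w)\arg}|$ is bounded above and below, so complex $w$ only changes constants relative to the real case in \cite{AKW}.
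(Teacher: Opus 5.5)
There is a genuine gap: the step that has to carry the uniform bound is never carried out, and the quantity you single out as the main obstacle cannot supply it. The usable structure is a reverse martingale for a \emph{decreasing} family of $\sigma$-algebras. Since $m^{(w)}$ is $\mathbf{U}(\infty)$-invariant for every admissible $w$, let $\mathcal{G}_N$ be the $\sigma$-algebra of Borel sets of $\mathbf{H}$ invariant under conjugation by $\mathbb{U}(N)\subset\mathbf{U}(\infty)$. Then $\frac1N\mathrm{Tr}(X_N)=\mathbb{E}[X_{11}\mid\mathcal{G}_N]$, because $\int_{\mathbb{U}(N)}(uXu^{*})_{11}\,\mathrm{d}u=\frac1N\mathrm{Tr}(X_N)$. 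No drift has to be removed, since $\mathbb{E}[\frac1N\mathrm{Tr}(X_N)]=\textnormal{Im}(w)/\textnormal{Re}(w)$ for every $N$. For the increasing corner filtration there is no martingale property at all; otherwise, combined with the reverse property, equality in conditional Jensen would force $\frac1N\mathrm{Tr}(X_N)=X_{11}$ almost surely. For this reverse martingale, $N\mapsto\mathbb{E}|\frac1N\mathrm{Tr}(X_N)|^r$ is non-increasing when $r\geq1$, and $\mathbb{E}|\mathsf{X}_1(w)|^r$ is its limit, i.e.\ its \emph{smallest} value. So finiteness of the limiting moment says nothing about $\sup_N$. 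What controls the supremum is the top term. By conditional Jensen, $\mathbb{E}|\frac1N\mathrm{Tr}(X_N)|^r\leq\mathbb{E}|X_{11}|^r$, and $X_{11}$ has law $\mathfrak{m}_1^{(w)}$ with density proportional to $(1+x^2)^{-\textnormal{Re}(w)-1}\mathrm{e}^{2\textnormal{Im}(w)\arctan x}$. This is finite exactly when $r<2\textnormal{Re}(w)+1$. Together with your Jensen step for $r<1$, which is where $\textnormal{Re}(w)>0$ enters, that is the whole proof; Doob's inequality is not needed. (The paper itself gives no argument and simply invokes \cite[Proposition 3.8]{AGS} at $\beta=2$.)

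The tail argument you offer in its place does not close the gap.
\begin{itemize}
\item Your kernel bound has the wrong power of $N$. The scaling $x_j^{(N)}/N\to\mathcal{C}^{(w)}$ forces $\rho_1^{(N)}(x)\asymp N^{2\textnormal{Re}(w)+1}|x|^{-2\textnormal{Re}(w)-2}$ for $|x|\gtrsim N$. With your factor $N^{-2\textnormal{Re}(w)}$ the tail of the largest eigenvalue would vanish as $N\to\infty$. You also supply no uniform-in-$N$ Christoffel--Darboux estimate for complex $w$.
\item The $L^2$ control of the remainder appeals to Corollary \ref{almostsurely}. That result concerns the limit rather than finite $N$, and it requires $\textnormal{Re}(2w)>1$.
\item That appeal is also circular: the corollary's proof runs through Proposition \ref{maintheorem2innote3} and the boundary conditions of Theorem \ref{maintheoreminnote2}, both of which rely on the present lemma.
\item ``Complex $w$ only changes constants'' is true for a single variable, which is exactly why the $X_{11}$ bound above is harmless. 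It is false for the $N$-point density, where $\prod_{j=1}^N\mathrm{e}^{2\textnormal{Im}(w)\arctan x_j}$ is controlled only up to factors $\mathrm{e}^{\pm\pi N|\textnormal{Im}(w)|}$.
\item The opening reformulation through Proposition \ref{arepresentationofasanexpectation} is tautological and plays no role in the argument.
\end{itemize}
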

By Lemma \ref{convergenceat0} and Proposition \ref{arepresentationofasanexpectation}, one can easily show that \eqref{convergenceproblem} converges, as $N\rightarrow\infty$, for $s,h \in \mathbb{R}$, $\delta \in \mathbb{C}$ with
\beas\label{thepreviousrange}
\mathrm{Re}(\delta)>-\tfrac12,\quad s+\mathrm{Re}(\delta)>0,\quad \text{and} \quad 0 \leq h < s+\mathrm{Re}(\delta)+\tfrac12.
\eeas
We now extend the convergence range to $2s+\mathrm{Re}(2\delta)>-1$ and $h \in \mathbb{C}$ with $0< \textnormal{Re}(2h) < 2s+\mathrm{Re}(2\delta)+1$, as stated in Proposition \ref{convergencetheorem}. This comes from a generalisation of Lemma \ref{convergenceat0} to the full range of $h$ and $w$, as described in the following result.
\begin{prop}\label{fullrange}
Let $w\in \mathbb{C}$ with $\textnormal{Re}(2w)>-1$. Let $h\in \mathbb{C}$ with $0<\textnormal{Re}(2h)<\textnormal{Re}(2w)+1$. 
\bea\label{solvedquestion}
\lim_{N\rightarrow \infty}\E_{N}^{(w)}\left[\left|\frac{\sum_{j=1}^{N}x_{j}^{(N)}}{N}\right|^{2h}\right]
=\E\left[\left|\mathsf{X}_{1}(w)\right|^{2h}\right]
\eea
\end{prop}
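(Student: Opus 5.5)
The plan is to combine the almost sure convergence $\mathrm{Tr}(X_N)/N\to\mathsf{X}_1(w)$ with uniform integrability, extended beyond the regime $\textnormal{Re}(w)>0$ of Lemma \ref{uniform integrability}. The target estimate is
\[
\sup_{N\ge1}\E_N^{(w)}\Big[\Big|\tfrac1N\textstyle\sum_j x_j^{(N)}\Big|^{r}\Big]<\infty\qquad\text{for all } 0\le r<\textnormal{Re}(2w)+1,
\]
now for every $w$ with $\textnormal{Re}(2w)>-1$. Given this, fix $h$ with $0<\textnormal{Re}(2h)<\textnormal{Re}(2w)+1$. Then $\big|\,|Y_N|^{2h}\big|=|Y_N|^{2\textnormal{Re}(h)}$, where $Y_N=\frac1N\mathrm{Tr}(X_N)$. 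Choose $r$ strictly between $2\textnormal{Re}(h)$ and $\textnormal{Re}(2w)+1$. The family $\{|Y_N|^{2h}\}_N$ is then uniformly integrable, since it is bounded in $L^{r/(2\textnormal{Re}(h))}$.

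Next, $y\mapsto|y|^{2h}$ is continuous on $\mathbb{R}$ because $\textnormal{Re}(2h)>0$. By the Borodin--Olshanski almost sure convergence (with all $X_N$ realised as corners of one $X\sim m^{(w)}$), $|Y_N|^{2h}\to|\mathsf{X}_1(w)|^{2h}$ almost surely. Vitali's theorem, applied to real and imaginary parts separately, then yields \eqref{solvedquestion}, and Fatou's lemma gives finiteness of the right-hand side. Only real exponents enter the bound, so complex $h$ adds no difficulty.

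The main obstacle is the uniform moment bound when $\textnormal{Re}(w)\in(-\tfrac12,0]$. I would first check whether the argument of \cite[Proposition 3.8]{AGS} uses $\textnormal{Re}(w)>0$ in an essential way. That argument controls the one-point density of $\mathfrak{m}_N^{(w)}$ through its tail $|x|^{-2\textnormal{Re}(w)-2}$, uniformly after scaling by $N$.

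If the restriction is essential, I would instead derive the bound from the characteristic function, using the identity
\[
\E|Y|^r = c_r\int_0^\infty \big(2-\phi(t)-\phi(-t)\big)\,t^{-r-1}\,\d t,\qquad 0<r<2 .
\]
Here $\phi_N(t)=\E_N^{(w)}[\mathrm{e}^{\im tY_N}]$, and $\phi_N(-t)$ is the corresponding function with $\overline{w}$ in place of $w$, by \eqref{symmetry}. When $\textnormal{Re}(2w)\le 0$ we have $r<1$, so it suffices to show $|1-\phi_N(t)|\le C t^{\rho}$ uniformly in $N$, for $t\le1$ and some $\rho>r$.

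To get that estimate I would use the Hankel/$\sigma$-PV representation \eqref{e621}, scaled $\xi=t/N$, together with the small-$\xi$ structure \eqref{Cauchydata3new3}. The remaining difficulty is uniformity of the error term $\mathcal{O}(|\xi|^{2\textnormal{Re}(w)+1})$ in $N$ after scaling. I would try to control it by a direct bound on the one-point function.

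An alternative is a shift argument: relate $\mathfrak{m}_N^{(w)}$ to $\mathfrak{m}_{N}^{(w+1)}$ via the weight factor $(1+x_j^2)^{-1}$. This may allow transferring the bound from the range covered by Lemma \ref{uniform integrability}, with the extra factor controlled by a Hölder inequality. I would attempt this first, as it avoids Riemann--Hilbert asymptotics.
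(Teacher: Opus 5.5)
\textbf{There is a genuine gap: the regime $-\frac12<\textnormal{Re}(w)\le 0$ is left unproved.} Your reduction itself is correct, and for $\textnormal{Re}(w)>0$ it is the paper's first step. There you use almost sure convergence of $Y_N=\frac1N\sum_j x_j^{(N)}$ together with the uniform bound of Lemma \ref{uniform integrability} at some exponent $r\in(2\textnormal{Re}(h),\textnormal{Re}(2w)+1)$. Complex $h$ costs nothing, because only $|Y_N|^{2\textnormal{Re}(h)}$ enters. The hard regime, however, is the actual content of the proposition. You name the required $N$-uniform moment bound but do not prove it, and re-checking \cite[Proposition 3.8]{AGS} is a plan, not an argument. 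None of your three routes closes it as stated.

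\emph{The shift argument.} The route you would try first fails quantitatively. The density of $\mathfrak{m}_N^{(w)}$ with respect to $\mathfrak{m}_N^{(w+1)}$ is $W_N=\prod_j(1+x_j^2)/\E_N^{(w+1)}[\prod_j(1+x_j^2)]$. One computes $\E_N^{(w+1)}[W_N^p]=\mathrm{C}_{N,w+1-p}\,\mathrm{C}_{N,w+1}^{\,p-1}/\mathrm{C}_{N,w}^{\,p}$. By Stirling in \eqref{definitionoftheconstant0407}, this grows like $N^{p(p-1)}$ for $1<p<\textnormal{Re}(w)+\frac32$. (Under the Cayley transform $W_N$ is a normalised $|\det(I-U)|^{-2}$, so this is just the growth of negative moments as in Lemma \ref{dsillustration}.) Every H\"older transfer from $w+1$ to $w$ therefore gives a bound of order $N^{p-1}$, never a uniform one.

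\emph{The characteristic-function route.} This is circular unless you bound $|1-\phi_N(t)|\le Ct^{\rho}$ uniformly in $N$ from something other than moments. That means controlling the error in \eqref{Cauchydata3new3} uniformly after the scaling $\xi=t/N$. This is exactly the large-$N$ analysis of $\sigma_N$ that the paper avoids, since for non-real $w$ its poles may accumulate in $\mathbb{H}_+$.

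\emph{One-point information.} This cannot do it either. For $r<1$ the natural bound is $\E_N^{(w)}|Y_N|^r\le N^{-r}\E_N^{(w)}\sum_j|x_j^{(N)}|^r$, and the right-hand side is of order $N^{1-r}$, because $\asymp N$ bulk points have $|x_j^{(N)}|\asymp 1$. Tightness of $Y_N$ comes from cancellations that the one-point density does not see.

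\textbf{The missing idea.} The integrability should be obtained on the infinite-dimensional side, where the cancellation is built in: conditionally on the ergodic component, the diagonal entries of $X\sim m^{(w)}$ are i.i.d.
\begin{itemize}
\item The paper invokes \cite[Lemma 2.12]{AKW}. For real $0<h<\textnormal{Re}(w)+\frac12$, it suffices that $\E[(\sum_i(\alpha_i^+)^2+\sum_i(\alpha_i^-)^2+\gamma_2)^h]<\infty$ under $\mathcal{M}^{(w)}$.
\item Since $\gamma_2=0$ almost surely, this reduces to finiteness of $\int_{|x|\ge R}|x|^{2h}\mathrm{K}^{(w)}(x,x)\,\mathrm{d}x$ and $\int_{|x|<R}x^{2}\mathrm{K}^{(w)}(x,x)\,\mathrm{d}x$. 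Here $\mathrm{K}^{(w)}$ is the explicit Borodin--Olshanski kernel of $\mathcal{C}^{(w)}$.
\item Both integrals are finite by the small- and large-argument asymptotics of ${}_1F_1$.
\item Complex $h$ then follows from the real exponent $2\textnormal{Re}(h)$ by Scheff\'e's theorem. Your Vitali step would also work at that point. Convergence of $\E_N^{(w)}|Y_N|^{2h'}$ for some real $2h'\in(2\textnormal{Re}(h),\textnormal{Re}(2w)+1)$ already gives the uniform bound you need.
\end{itemize}
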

Since the proof of Proposition \ref{fullrange} is similar to that of \cite[Theorem 1.2]{AKW}, after extending the parameter $w$ from $\mathbb{R}$ to $\mathbb{C}$, we postpone the relevant details to Appendix \ref{full range}.

\begin{proof}[Proof of Proposition \ref{convergencetheorem}]
We combine Propositions \ref{arepresentationofasanexpectation} and \ref{fullrange}, and the limit
\beas
\lim_{N\rightarrow \infty}\frac{F_{N,\delta}(s,0)}{N^{s^2 + s(\delta + \overline{\delta})}}=\mathrm{d}(s,\delta),
\eeas
which follows from Lemma \ref{dsillustration}, where $\mathrm{d}(s,\delta)$ is given in \eqref{defofds}. The proof is complete.
\end{proof}

\section{Proof of Theorem  \ref{maintheoreminnote2}}\label{proofoftheorem16}
We first recall the Hankel determinant  representation in Proposition \ref{Hankelrepresentation}, i.e. for any $w$ with $\textnormal{Re}(2w)>-1$ and any $t\geq 0$ the identity
\begin{align}\label{i140630}
\mathbb{E}_{N}^{(w)}\bigg[\mathrm{e}^{-\im\frac{t}{2N}\sum_{j=1}^{N}x_{j}^{(N)}}\bigg]=
\frac{(-1)^{N(N-1)/2}}{\mathrm{C}_{N,w}}\bigg(\frac{2\pi}{\Gamma(N+\overline{w})}\bigg)^N2^{-N^2-N(w+\overline{w})}&\nonumber\\
\times\,\mathrm{e}^{-\frac{t}{2}}\mathcal{J}_N\bigg(\frac{N+w}{2},\frac{N+\overline{w}}{2},\frac{t}{N}\bigg).&
\end{align}
Next, we extend the left-hand side in \eqref{i140630} as function of $t$ from $\mathbb{R}_+$ to $\mathbb{H}_{+}$. To provide a more explicit expression of this analytic continuation, we first apply the authors' previous work \cite[Lemma 2.6]{BW} on certain transformations of the determinant appearing in (\ref{i140630}). 
\begin{lem}\label{transformation}
Let $N\in\mathbb{N}$ and let $\mu,\nu\in \mathbb{C}$ with $\textnormal{Re}(2\mu)>N-1$. Let $U(a,b,z)$ be the confluent hypergeometric function of the second kind. Then for all $z\in\mathbb{C}$ with $\textnormal{Re}(z)>0$,
\begin{align*}
\det\Big[U(1-2\nu,2-2\nu-2\mu+i+j,z)\Big]_{i,j=0}^{N-1}
=&\left(\prod_{\ell=0}^{N-2}(1-2\nu+\ell)^{N-1-\ell}\right)z^{N(2\mu+2\nu-N)}\left(\prod_{j=0}^{N-1}\frac{1}{\Gamma(2\mu-j)}\right)\\
&\times\,\det\left[\int_{0}^{\infty}\mathrm{e}^{-zx}x^{2\mu-j-1}(1+x)^{2\nu-i-1}dx\right]_{i,j=0}^{N-1}.
\end{align*}
\end{lem}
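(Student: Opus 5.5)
The plan is to prove the identity first on a sub-range of $\mu$ where every entry has an elementary Laplace-integral form, match the two determinants by row operations there, and then remove the extra restriction by analytic continuation in $\mu$.

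\emph{Reduction to large $\mu$.} Both sides are analytic in $\mu$ on $\{\textnormal{Re}(2\mu)>N-1\}$. On the left, $U(a,b,z)$ is entire in $(a,b)$ for fixed $z$ with $\textnormal{Re}(z)>0$. On the right, the integrals $\int_0^\infty \mathrm{e}^{-zx}x^{2\mu-j-1}(1+x)^{2\nu-i-1}\,\mathrm{d}x$ converge at $x=0$ exactly because $\textnormal{Re}(2\mu-j)>0$ for $0\le j\le N-1$, and $1/\Gamma(2\mu-j)$ is entire. So it suffices to prove the identity for $\textnormal{Re}(2\mu)>2N-2$. We may also assume $1-2\nu\notin\mathbb{Z}_{\le 0}$ and extend in $\nu$ by continuity, since both sides are analytic in $\nu$.

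\emph{Kummer transformation and integral representation.} In this range, apply \eqref{kummertransformation} to each entry:
\[
U(1-2\nu,2-2\nu-2\mu+i+j,z)=z^{2\mu+2\nu-1-i-j}\,U(2\mu-i-j,2\mu+2\nu-i-j,z).
\]
Since $\textnormal{Re}(2\mu-i-j)>0$ for all $0\le i,j\le N-1$, the standard Laplace representation $U(a,b,z)=\Gamma(a)^{-1}\int_0^\infty \mathrm{e}^{-zx}x^{a-1}(1+x)^{b-a-1}\,\mathrm{d}x$ applies. This gives
\[
\frac{z^{2\mu+2\nu-1-i-j}}{\Gamma(2\mu-i-j)}\int_0^\infty \mathrm{e}^{-zx}x^{2\mu-i-j-1}(1+x)^{2\nu-1}\,\mathrm{d}x .
\]
Pulling $z^{-i}$ out of row $i$, $z^{-j}$ out of column $j$ and $z^{2\mu+2\nu-1}$ out of each row produces the overall factor $z^{N(2\mu+2\nu-1)-N(N-1)}=z^{N(2\mu+2\nu-N)}$, which is exactly the stated power.

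\emph{Row operations (main step).} The remaining matrix has entries $\Gamma(2\mu-i-j)^{-1}\int \mathrm{e}^{-zx}x^{2\mu-j-1}\,x^{-i}(1+x)^{2\nu-1}\,\mathrm{d}x$, and the target has entries $\Gamma(2\mu-j)^{-1}\int \mathrm{e}^{-zx}x^{2\mu-j-1}(1+x)^{2\nu-1-i}\,\mathrm{d}x$. The mismatch is between $x^{-i}(1+x)^{2\nu-1}$ and $(1+x)^{2\nu-1-i}$ in the row index, together with $\Gamma(2\mu-i-j)$ versus $\Gamma(2\mu-j)$. I would handle this with contiguous relations rather than raw binomial expansion. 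Writing $V_{i,j}:=U(2\mu-i-j,2\mu+2\nu-i-j,z)$ and using DLMF 13.3.10 and 13.3.7-type relations, show inductively in $i$ that row $i$ of $[V_{i,j}]$ equals a combination of rows $0,\dots,i$ of the matrix $[\Gamma(2\mu-j)^{-1}\int \mathrm{e}^{-zx}x^{2\mu-j-1}(1+x)^{2\nu-1-k}\,\mathrm{d}x]_{k,j}$, up to the compensating powers of $z$ already extracted. The combination is triangular, with leading coefficient $(1-2\nu)(2-2\nu)\cdots(i-2\nu)$, up to sign and ordering conventions. Equivalently, one can integrate by parts in $x$: $\mathrm{d}/\mathrm{d}x$ of $(1+x)^{2\nu-k}$ lowers the exponent and produces the factor $(2\nu-k)$, while the boundary terms vanish for $\textnormal{Re}(2\mu-j)>0$. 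Taking the product of these leading coefficients over the rows yields a factor $\prod_{\ell}(1-2\nu+\ell)^{N-1-\ell}$ in the determinant.

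The main obstacle is this bookkeeping: making sure that each integration by parts, or each application of a contiguous relation, produces exactly one new lower-triangular contribution, and that the resulting multiplicities are $N-1-\ell$. I would verify the pattern directly for $N=2$ and $N=3$ before writing the induction. Once the triangular structure and its diagonal are established, the determinant identity holds for $\textnormal{Re}(2\mu)>2N-2$, and the first step extends it to $\textnormal{Re}(2\mu)>N-1$.
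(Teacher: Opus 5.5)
Your proposal is correct. The paper does not prove this lemma itself; it quotes it from its predecessor \cite{BW} (Lemma 2.6 there). So your argument is a self-contained substitute for a citation, and it gives a transparent reason for the constant: it is the product of the diagonal entries of a triangular row transformation.

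The preliminary steps all check out:
\begin{itemize}
\item both sides are analytic in $\mu$ on $\{\textnormal{Re}(2\mu)>N-1\}$, so restricting first to $\textnormal{Re}(2\mu)>2N-2$ is legitimate;
\item Kummer's transformation applies entrywise;
\item the Laplace representation is valid once $\textnormal{Re}(2\mu)>2N-2$;
\item the extracted power of $z$ is exactly $z^{N(2\mu+2\nu-N)}$.
\end{itemize}

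The step you left as bookkeeping closes in one line. Put
\[
\widetilde J_{k,j}:=\frac{1}{\Gamma(2\mu-j)}\int_0^\infty \mathrm{e}^{-zx}x^{2\mu-j-1}(1+x)^{2\nu-1-k}\,\mathrm{d}x=U(2\mu-j,\,2\mu+2\nu-j-k,\,z).
\]
Integrate $\frac{\mathrm{d}}{\mathrm{d}x}\bigl[\mathrm{e}^{-zx}x^{2\mu-j-1}(1+x)^{2\nu-1-k}\bigr]$ over $(0,\infty)$ and divide by $\Gamma(2\mu-j)$. This gives
\[
\widetilde J_{k,j+1}=z\,\widetilde J_{k,j}+(1-2\nu+k)\,\widetilde J_{k+1,j},
\]
which is the contiguous relation $(b-a)U(a,b,z)+U(a-1,b,z)-zU(a,b+1,z)=0$. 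Your integration-by-parts and contiguous-relation viewpoints therefore coincide.

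The $(i,j)$ entry of your Hankel matrix is $U(2\mu-i-j,2\mu+2\nu-i-j,z)=\widetilde J_{0,i+j}$. Iterating the shift $j\mapsto j+1$ shows that row $i$ of that matrix equals $(1-2\nu)(2-2\nu)\cdots(i-2\nu)$ times row $i$ of $[\widetilde J_{k,j}]_{k,j}$, plus a combination of rows $0,\dots,i-1$ with coefficients polynomial in $z$ and $\nu$. This is exactly the triangular structure you predicted, with no sign. Taking the product over $i$ gives $\prod_{\ell=0}^{N-2}(1-2\nu+\ell)^{N-1-\ell}$, and the factor $\prod_j\Gamma(2\mu-j)^{-1}$ comes out as column factors.

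The recursion uses the relation only at column indices $j\le 2N-3$. Its boundary term at $x=0$ vanishes when $\textnormal{Re}(2\mu-j-1)>0$, so your choice $\textnormal{Re}(2\mu)>2N-2$ is precisely the right range.

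Finally, this direction only multiplies by the factors $1-2\nu+k$ and never divides by them. Your side assumption $1-2\nu\notin\mathbb{Z}_{\le 0}$, with continuation in $\nu$ afterwards, is therefore unnecessary, though harmless.
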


Applying (\ref{i140630}), Lemma \ref{transformation} and Andr\'{e}ief's identity, we obtain the following integral representation of the left-hand side in \eqref{i140630}.
\begin{prop}\label{integralrepresentation}
Let $N\in\mathbb{N}$ and $w\in \mathbb{C}$ with $\textnormal{Re}(2w)>-1$. Then for $t\geq 0$, using principal branches,
\begin{align}
\E_N^{(w)}\bigg[\mathrm{e}^{-\im \frac{t}{2N}\sum_{j=1}^{N}x_{j}^{(N)}}\bigg]=&\frac{\mathrm{e}^{-\frac{t}{2}}}{N!\prod_{j=1}^{N}\Gamma(j)\Gamma(w+\bar{w}+j)}\nonumber\\
&\times\int_{0}^{\infty}\cdots\int_{0}^{\infty}\prod_{i=1}^{N}\bigg(y_{i}+\frac{t}{N}\bigg)^{\overline{w}}y_{i}^{w}\mathrm{e}^{-y_{i}}
\prod_{1\leq j<k\leq N}|y_{k}-y_{j}|^2\prod_{j=1}^{N}\mathrm{d}y_{j}.
\end{align}
\end{prop}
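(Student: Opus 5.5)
The plan is to start from the Hankel representation \eqref{i140630}, with $\mu=\frac12(N+w)$, $\nu=\frac12(N+\overline{w})$ and $z=t/N$. Lemma \ref{transformation} requires $\textnormal{Re}(2\mu)>N-1$, i.e. $\textnormal{Re}(w)>-1$, which holds. We substitute the lemma into the determinant and obtain
\[
\mathcal{J}_N=c_N\,z^{N(2\mu+2\nu-N)}\det\Big[\int_0^\infty \mathrm{e}^{-zx}x^{2\mu-j-1}(1+x)^{2\nu-i-1}\,\mathrm{d}x\Big]_{i,j=0}^{N-1},
\]
where $c_N$ collects the explicit Gamma and Pochhammer products. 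The idea is to rescale $x=y/z$ inside each integral. This gives
\[
\int_0^\infty \mathrm{e}^{-y}y^{2\mu-j-1}(y+z)^{2\nu-i-1}\,\mathrm{d}y\; z^{-2\mu-2\nu+i+j+1},
\]
and the accumulated powers of $z$ should exactly cancel the prefactor $z^{N(2\mu+2\nu-N)}$. Indeed, $\sum_{i,j}$ contributes $-N(2\mu+2\nu)+N(N-1)+N$, which is the negative of the prefactor's exponent. The result is a $z$-free prefactor times $\det\big[\int_0^\infty \mathrm{e}^{-y}y^{2\mu-j-1}(y+z)^{2\nu-i-1}\,\mathrm{d}y\big]$.

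Next we factor $y^{2\mu-N}(y+z)^{2\nu-N}=y^{w}(y+z)^{\overline{w}}$ out of the weight. The remaining entries are $y^{N-1-j}(y+z)^{N-1-i}$. Andréief's identity then turns the determinant into
\[
\frac{1}{N!}\int_{\mathbb{R}_+^N}\det\big[y_k^{N-1-j}\big]\det\big[(y_k+z)^{N-1-i}\big]\prod_k y_k^{w}(y_k+z)^{\overline{w}}\mathrm{e}^{-y_k}\,\mathrm{d}y_k .
\]
Both determinants are Vandermonde: the second is in the variables $y_k+z$, whose differences are still $y_k-y_l$. Their product is therefore $\prod_{j<k}(y_k-y_j)^2=\prod_{j<k}|y_k-y_j|^2$, the orientation signs of the two Vandermondes being equal and so cancelling. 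This yields the integrand claimed in the proposition, with $z=t/N$.

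It remains to identify the constant. The cleanest route is to avoid tracking all the products in $c_N$, $\mathrm{C}_{N,w}$ and the $(-1)^{N(N-1)/2}$ factor. Instead we evaluate both sides at $t=0$. The left side equals $1$. On the right, $\mathrm{e}^{-t/2}=1$, and the integral becomes the Laguerre Selberg integral
\[
\int_{\mathbb{R}_+^N}\prod_i y_i^{w+\overline{w}}\mathrm{e}^{-y_i}\prod_{j<k}|y_k-y_j|^2\,\mathrm{d}y=\prod_{j=1}^N\Gamma(j+1)\Gamma(w+\overline{w}+j).
\]
Since $\prod_{j=1}^N\Gamma(j+1)=N!\prod_{j=1}^N\Gamma(j)$, this is exactly the normalisation $N!\prod_{j}\Gamma(j)\Gamma(w+\overline{w}+j)$.

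For this shortcut to be valid, we must know that the $t$-independent constant produced by the chain of identities is the same at $t=0$ as for $t>0$. For $t>0$ the chain is legitimate directly. Both sides are continuous as $t\downarrow0$: the left side by dominated convergence, and the right side since $\textnormal{Re}(w)>-\frac12$ and $\textnormal{Re}(w+\overline{w})>-1$ make the integrand dominated by an integrable function uniformly for small $t$. Hence it suffices to compare the $t$-independent prefactor with its value at $t=0^+$. The cancellation of the powers of $z$ shows the prefactor is genuinely $t$-independent, so matching at $t\to0^+$ fixes it.

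The main obstacle is justifying this continuity and the branch choices. With principal branches, $(y+z)^{\overline{w}}$ for $y>0$ and $z\ge0$ is unambiguous, and the scaling $x=y/z$ for real $z>0$ introduces no branch issues. As a sanity check on signs, I would additionally verify the $N=1$ case against the Fourier transform formula in the paper.
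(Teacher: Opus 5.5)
Your proposal is correct and follows the paper's route: the Hankel representation \eqref{i140630}, Lemma \ref{transformation}, the rescaling $x=y/z$ (whose powers of $z$ cancel the prefactor $z^{N(2\mu+2\nu-N)}$), and Andr\'eief's identity, where the two Vandermonde determinants in $y_k$ and in $y_k+z$ combine into $\prod_{j<k}(y_k-y_j)^2$. The only difference is that you fix the $t$-independent constant by letting $t\downarrow 0$ and using the Laguerre Selberg integral, instead of multiplying out $\mathrm{C}_{N,w}$ and the Gamma/Pochhammer products; this is valid because both sides are continuous at $t=0$ under $\textnormal{Re}(2w)>-1$, and $\prod_{j}\Gamma(j+1)\Gamma(w+\overline{w}+j)\neq 0$.
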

Now we are ready to state an analytic continuation of the left-hand side in \eqref{i140630}, it comes from a similar argument to that used in \cite[Proposition 2.4]{ABGS}. For completeness, we move its proof details to Appendix \ref{proof of Proposition 6.3}.
\begin{prop}\label{theanalyticfunction}
Given $w\in \mathbb{C}$ with $\textnormal{Re}(2w)>-1$. For $z\in \mathbb{H}_{+}$, define 
\begin{align}\label{definitionofFN}
\mathsf{G}_{N}(z)=&\frac{\mathrm{e}^{-\frac{z}{2}}}{N!\prod_{j=1}^{N}\Gamma(j)\Gamma(w+\overline{w}+j)}\nonumber\\
&\times\int_{0}^{\infty}\cdots\int_{0}^{\infty}\prod_{i=1}^{N}\bigg(y_{i}+\frac{z}{N}\bigg)^{\overline{w}}y_{i}^{w}\mathrm{e}^{-y_{i}}\prod_{1\leq j<k\leq N}(y_{k}-y_{j})^2\prod_{j=1}^{N}\d y_{j}. 
\end{align}   
Here and throughout, the complex power
\[
\left(y+\frac{z}{N}\right)^{\overline{w}}
\]
is defined using the principal branch of the logarithm. This choice is well defined for \(y>0\) and \(z\in\mathbb{H}_{+}\), since \(y+z/N\) lies in the right half-plane. Then $z\mapsto\mathsf{G}_{N}(z)$ is analytic on $\mathbb{H}_{+}$ and there is an analytic function $\mathsf{G}(z)$ on 
$\mathbb{H}_{+}$, such that $\mathsf{G}(0)=1$ and 
\bea\label{restrictionontherealnumbers}
\mathsf{G}(t)=\E\Big[\mathrm{e}^{-\frac{\im t}{2}\mathsf{X}_{1}(w)}\Big],\ \ t>0.
\eea
Moreover, for $z\in \mathbb{H}_{+}$ and $m\in\mathbb{Z}_{\geq 0}$, we have
\bea\label{convergenceofthederivative}
\lim_{N\rightarrow \infty}\frac{\mathrm{d}^{m}}{\mathrm{d}z^{m}}\mathsf{G}_{N}(z)=\frac{\mathrm{d}^{m}}{\mathrm{d}z^{m}}\mathsf{G}(z)
\eea
uniformly on compact subsets of $\mathbb{H}_{+}\ni z$.
\end{prop}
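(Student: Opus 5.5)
Three things need proving: analyticity of $\mathsf{G}_N$ on $\mathbb{H}_+$; locally uniform convergence of $\mathsf{G}_N$ on $\mathbb{H}_+$ to an analytic limit $\mathsf{G}$; and the identification of $\mathsf{G}$ on $\mathbb{R}_+$ with the characteristic function of $\mathsf{X}_1(w)$. Convergence of derivatives, \eqref{convergenceofthederivative}, then follows from Weierstrass' theorem, since locally uniform limits of analytic functions are analytic and their derivatives converge locally uniformly.

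\textbf{Analyticity of $\mathsf{G}_N$.} I would argue as in Lemma \ref{lem2}. For $y_i>0$ and $z$ in a compact $K\subset\mathbb{H}_+$ (including points near $\textnormal{Re}\, z=0$, away from $z=0$), the principal power satisfies
\[
\big|(y+z/N)^{\overline{w}}\big|\le |y+z/N|^{\textnormal{Re}(w)}\,\e^{\pi|\textnormal{Im}(w)|/2}.
\]
Since $\textnormal{Re}(2w)>-1$, the integrand is dominated by a $z$-independent $L^1$ function: either $(y+C)^{\textnormal{Re}\,w}$ when $\textnormal{Re}\,w\ge0$, or $y^{\textnormal{Re}\,w}$ from $|y+z/N|\ge y$ when $\textnormal{Re}\,w<0$. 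The weight $y^{w}$ then leaves $y^{2\textnormal{Re}\,w}$ locally integrable at $0$. Fubini and Morera then give analyticity.

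\textbf{Locally uniform bounds.} On $\mathbb{R}_+$, Proposition \ref{integralrepresentation} gives
\[
\mathsf{G}_N(t)=\E_N^{(w)}\Big[\e^{-\im\frac{t}{2N}\sum_j x_j^{(N)}}\Big],
\]
which tends to $\E[\e^{-\frac{\im t}{2}\mathsf{X}_1(w)}]$ by \eqref{the convergence for the expectation}. To upgrade this to all of $\mathbb{H}_+$, I would follow \cite[Proposition 2.4]{ABGS} and show $\sup_N\sup_{z\in K}|\mathsf{G}_N(z)|<\infty$ for every compact $K\subset\mathbb{H}_+$. Bound $|\mathsf{G}_N(z)|$ by the same integral with $\overline{w}$ replaced by $\textnormal{Re}(w)$ and $z$ replaced by $|z|$, up to a factor $\e^{N\cdot O(|z|/N)}$ coming from the phases. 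More carefully, write
\[
\Big|\frac{(y+z/N)^{\overline w}}{(y+|z|/N)^{\textnormal{Re}\,w}}\Big|\le \e^{|\textnormal{Im}\,w|\,|\arg(y+z/N)|}\Big|\frac{y+z/N}{y+|z|/N}\Big|^{\textnormal{Re}\,w}.
\]
The product over $i$ of $\e^{|\textnormal{Im}\,w|\arg(\cdot)}$ is the delicate part. Here $\arg(y_i+z/N)\le |z|/(Ny_i)$ for $y_i$ not small, so the product is controlled by $\exp\big(C\frac{|z|}{N}\sum_i y_i^{-1}\wedge\pi\big)$. Under the Laguerre-type measure with parameter $2\textnormal{Re}\,w$, the quantity $\frac1N\sum_i\min(y_i^{-1},N)$ has exponential moments bounded uniformly in $N$, provided one handles the smallest eigenvalues. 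The remaining ratio relative to the real-$w$ normalization is controlled by Gamma-function asymptotics of the constants $\Gamma(w+\overline w+j)$, which are identical in both cases since they depend only on $w+\overline w$. The comparison integral is a real-$w'$ quantity whose boundedness in $N$ is known from \cite{ABGS}.

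\textbf{Identification of the limit.} With uniform local bounds, Montel's theorem gives subsequential locally uniform limits. Any two such limits agree on $\mathbb{R}_+$, and since $\mathbb{R}_+$ has accumulation points inside the domain of analyticity (approached as boundary values), the limits agree. To make this clean I would instead use real points $t\in(0,\infty)$ and note that $\mathsf{G}_N$ extends analytically across the positive axis, so the identity theorem applies. Hence the full sequence converges to some $\mathsf{G}$ satisfying \eqref{restrictionontherealnumbers}, and $\mathsf{G}(0)=1$ follows by continuity of characteristic functions. The main obstacle is the uniform-in-$N$ bound on compacts, because of the $N$-fold product of complex phases $\e^{-\textnormal{Im}(w)\arg(y_i+z/N)}$. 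I would first try dominating it via $\arg(y+z/N)=O(|z|/(N y))$ together with a small-$y$ cutoff, and use the rigidity of the Laguerre ensemble near the hard edge to ensure that only $O(1)$ eigenvalues lie below scale $1/N$.
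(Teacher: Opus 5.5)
Your skeleton (Morera for analyticity, uniform bounds on compacts, Montel, identification on $(0,\infty)$, Weierstrass for derivatives) is the paper's. You have also identified the real difficulty. With $\alpha=\textnormal{Re}(w)$ and $\beta=\textnormal{Im}(w)$ one has $|(y_i+z/N)^{\overline{w}}|=|y_i+z/N|^{\alpha}\e^{\beta\arg(y_i+z/N)}$, so bounding each phase separately only gives $\e^{N\frac{\pi}{2}|\beta|}$. The paper does not resolve this either: Appendix~\ref{proof of Proposition 6.3} bounds the product of these $N$ factors by $\e^{\frac{\pi}{2}|\textnormal{Im}(w)|}$, which drops the $N$-th power.

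\textbf{The gap.} Your proposal does not supply this step, and it is the only non-soft part of the proposition. First, the claim that $T_N:=\sum_i\min\{1,(Ny_i)^{-1}\}$ has exponential moments bounded uniformly in $N$ is only asserted. Hard-edge rigidity gives high-probability control, not exponential moments. Moreover, $T_N=O(1)$ depends on the $\sqrt{N/y}$ profile of the density across the whole range $1/N\ll y\ll N$, not only on the few smallest eigenvalues. Second, after you remove the phases, the remaining positive weight is $(y+M/N)^{\alpha}y^{\alpha}\e^{-y}$ (for $\alpha\geq0$, $M\geq\sup_K|z|$), not $y^{2\alpha}\e^{-y}$. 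To work under $\mathrm{LUE}(2\alpha)$ you would need H\"older, and then a uniform bound on $\E_{\mathrm{LUE}(2\alpha)}[\prod_i(1+\frac{M}{Ny_i})^{p\alpha}]$ with $p>1$. That quantity is not a real-$w$ characteristic function, so the real-$w$ theory does not cover it.

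\textbf{One way to close it.}
\begin{itemize}
\item Write $\mathsf{G}_N(z)=\e^{-z/2}\E_{\mathrm{LUE}(2\alpha)}[\prod_i(1+\frac{z}{Ny_i})^{\overline{w}}]$. For $\textnormal{Re}\,\zeta\geq0$ one has $|\arg(1+\zeta)|\leq\min\{\frac{\pi}{2},|\zeta|\}$ and $1\leq|1+\zeta|\leq1+|\zeta|$. Hence the phases are bounded by $\e^{cT_N}$ with $c=|\beta|\max\{\frac{\pi}{2},M\}$.
\item Any orthogonal polynomial ensemble with positive weight is determinantal with a self-adjoint projection kernel $K_N$. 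So for $g\geq0$, $\E[\prod_i(1+g(y_i))]=\det(I+\sqrt{g}K_N\sqrt{g})\leq\exp(\E[\sum_ig(y_i)])$. Taking $g=\e^{cf}-1\leq(\e^{c}-1)f$ with $0\leq f\leq1$ gives $\E[\e^{cT_N}]\leq\exp((\e^{c}-1)\E[T_N])$. Thus only a first-moment bound is needed.
\item For $\alpha>0$, apply this in the tilted ensemble with weight $(y+c')^{\alpha}y^{\alpha}\e^{-y}$, $c'=M/N$, partition function $\widetilde{Z}(c')$ and expectation $\widetilde{\E}$. Its mass relative to $\mathrm{LUE}(2\alpha)$ is $\e^{M/2}\mathsf{G}_N^{(\alpha)}(M)\leq\e^{M/2}$, where $\mathsf{G}_N^{(\alpha)}$ is $\mathsf{G}_N$ with $w$ replaced by $\alpha$.
\item Use $\min\{1,\frac{1}{Ny}\}\leq\frac{1+M}{N(y+c')}$ and $\widetilde{\E}[\sum_i(y_i+c')^{-1}]=\alpha^{-1}\partial_{c'}\log\widetilde{Z}(c')$. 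The integrand is jointly log-concave in $(c',y)$ on $\{c'\geq0\}\times\{y_1>\cdots>y_N>0\}$, so $\log\widetilde{Z}$ is concave by Pr\'ekopa's theorem. Hence $\partial_{c'}\log\widetilde{Z}(c')\leq\frac{1}{c'}\log\frac{\widetilde{Z}(c')}{\widetilde{Z}(0)}\leq\frac{N}{2}$, so $\widetilde{\E}[T_N]\leq\frac{1+M}{2\alpha}$, uniformly in $N$.
\item For $-\frac12<\alpha\leq0$, the modulus factor is at most $1$ and the reference measure is $\mathrm{LUE}(2\alpha)$ itself. You still need the hard-edge one-point estimate $\int_0^\infty\min\{1,\frac{1}{Ny}\}K_N(y,y)\,\d y=O(1)$, which must be proved or cited.
\end{itemize}

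\textbf{Minor point.} $(0,\infty)$ lies inside $\mathbb{H}_+=\{\textnormal{Re}\,z>0\}$, so the identity theorem applies directly. No boundary-value or extension-across-the-axis argument is needed.
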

Recall that the discrete zero locus of $\mathcal{J}_{N}\mathcal{J}_{N+1}$ is $E_{N,w}$ as given in \eqref{zeroparameterw}, 
that is
\bea\label{zero set}
E_{N,w}=\left\{z\in \mathbb{H}_{+}: \mathcal{J}_{N}\bigg(\frac{N+w}{2},\frac{N+\overline{w}}{2},z\bigg) \mathcal{J}_{N+1}\left(\frac{N+w}{2},\frac{N+\overline{w}}{2},z\right)=0\right\}.
\eea
Based on the above results, we obtain the following representation of 
$\mathsf{G}_N(z)$ in terms of a solution to the $\sigma$-Painlev\'e \textnormal{V} equation.

\begin{prop}\label{equation for finite N}
Let $N\geq 1$ be an integer. Let $w\in\mathbb{C}$ with $\textnormal{Re}(w)>-1/2$. Let $\mathsf{G}_{N}(z)$ be defined as in (\ref{definitionofFN}).
Let $E_{N,w}$ be given as in (\ref{zeroparameterw}).
 For $z\in \mathbb{H}_{+}\setminus(NE_{N,w})$, we have 
\bea\label{definitionoftauN}
u_{N}\left(\frac{z}{N};w\right)-\frac{z}{2}=z\frac{\mathrm{d}}{\mathrm{d}z}\log \mathsf{G}_{N}(z)
\eea
with $u_{N}(z;w)$ given as in (\ref{e28}). Moreover, define \bea\label{defoftauNw}
\tau_{N,w}(z):=u_{N}\left(\frac{z}{N};w\right)-\frac{z}{2},
\eea
then the same function satisfies the following $\sigma$-Painlev\'e {\rm V} equation,
\bea\label{theequationsatisfiedbytauNt}
&&\left(z\frac{\mathrm{d}^{2}\tau_{N,w}}{\mathrm{d} z^{2}}\right)^{2}=-4z \left(\frac{\mathrm{d}\tau_{N,w}}{\mathrm{d} z}\right)^{3}+\Bigg(4\tau_{N,w} +\Big( (w+\overline{w})^2+z^{2}/N^2\Big) +2(w-\overline{w})z/N\Bigg) \left(\frac{\mathrm{d}\tau_{N,w}}{\mathrm{d} z} \right)^{2}\nonumber \\
&& -\Bigg(2 \Big((w-\overline{w})/N +z/N^2\Big)\tau_{N,w}+ \Big((\overline{w}^{2}-w^{2}-z)  -z (\overline{w}+w )/N\Big)\Bigg)\frac{\mathrm{d}\tau_{N,w}}{\mathrm{d} z}+(\tau_{N,w})^{2}/N^2 \nonumber\\
&& -\left(1 +(w+\overline{w})/N\right) \tau_{N,w}+\frac{1}{4} \left(w-\overline{w}\right)^2.
\eea
\end{prop}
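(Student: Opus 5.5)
The plan has two parts. First we identify $\mathsf{G}_N(z)$ with the Hankel determinant expression through analytic continuation, which gives \eqref{definitionoftauN}. Then we rescale the $\sigma$-Painlev\'e V equation \eqref{equationbeforescaling} to obtain \eqref{theequationsatisfiedbytauNt}.

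\textbf{Step 1: matching on the positive axis.} For real $t>0$, Proposition \ref{integralrepresentation} says $\mathsf{G}_N(t)$ equals the left-hand side of \eqref{i140630}. By \eqref{i140630} this is
\[
c_{N,w}\,\mathrm{e}^{-t/2}\,\mathcal{J}_N\!\left(\tfrac{N+w}{2},\tfrac{N+\overline{w}}{2},\tfrac{t}{N}\right),
\]
with a nonzero constant $c_{N,w}$ independent of $t$.

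\textbf{Step 2: extension to $\mathbb{H}_+$.} Both sides are analytic in $z\in\mathbb{H}_+$:
\begin{itemize}
\item[(i)] $\mathsf{G}_N$ is analytic by Proposition \ref{theanalyticfunction};
\item[(ii)] $z\mapsto\mathcal{J}_N(\mu,\nu,z/N)$ is analytic by Lemmas \ref{lem1} and \ref{lem2}, or directly because the entries $U(a,b,z)$ are analytic for $\textnormal{Re}(z)>0$.
\end{itemize}
Since the two sides agree on $\mathbb{R}_+$, the identity theorem gives
\[
\mathsf{G}_N(z)=c_{N,w}\,\mathrm{e}^{-z/2}\,\mathcal{J}_N(\mu,\nu,z/N)\qquad\text{on all of }\mathbb{H}_+ .
\]

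\textbf{Step 3: the logarithmic derivative.} Fix $z\in\mathbb{H}_+\setminus NE_{N,w}$. Then $\mathcal{J}_N(\mu,\nu,z/N)\neq0$, so $\log\mathsf{G}_N$ is locally defined near $z$ for any fixed branch. Differentiating,
\[
z\frac{\mathrm{d}}{\mathrm{d}z}\log\mathsf{G}_N(z)=-\frac z2+\zeta\frac{\mathrm{d}}{\mathrm{d}\zeta}\ln\mathcal{J}_N(\mu,\nu,\zeta)\Big|_{\zeta=z/N},
\]
since $z\,\frac{\mathrm{d}}{\mathrm{d}z}=\zeta\,\frac{\mathrm{d}}{\mathrm{d}\zeta}$ under $\zeta=z/N$. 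By definition \eqref{continuationofuN}, the last term is $u_N(z/N;w)$, which proves \eqref{definitionoftauN}.

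\textbf{Step 4: rescaling the equation.} Proposition \ref{detailedanalysisofthesolution} gives \eqref{equationbeforescaling} for $u_N(\zeta;w)$, valid on $\mathbb{H}_+\setminus E_{N,w}$. Write $u_N(\zeta)=\tau(N\zeta)+\tfrac{N\zeta}{2}$ with $\tau=\tau_{N,w}$ and $z=N\zeta$. Then
\[
\frac{\mathrm{d}u_N}{\mathrm{d}\zeta}=N\tau'+\frac N2,\qquad \zeta\,\frac{\mathrm{d}^2u_N}{\mathrm{d}\zeta^2}=Nz\,\tau''.
\]
The left side of \eqref{equationbeforescaling} becomes $N^2(z\tau'')^2$.

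On the right side, the first bracket is
\[
\tau-N\overline{w}+2N^2\tau'^2+\big(2N^2-N(w-\overline{w}+2N)-z\big)\tau'+\tfrac{N^2}{2}-\tfrac{N(w-\overline{w}+2N)}{2}+\tfrac{N^2}{2},
\]
to be simplified. The second term is a product of four linear factors in $\tau'$:
\[
4\Big(N\tau'+\tfrac N2\Big)\Big(-\overline{w}-\tfrac N2-N\tau'\Big)\Big(\tfrac N2+w-N\tau'\Big)\Big(\tfrac N2-N\tau'\Big).
\]

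The plan for the algebra is as follows.
\begin{itemize}
\item[(a)] Expand both sides as polynomials in $\tau'$, $\tau$, $z$, $N^{-1}$.
\item[(b)] Check that the top-order pieces $N^4\tau'^4$ cancel between the square and the quartic product, as they must for $\sigma$-PV.
\item[(c)] Divide everything by $N^2$ and collect terms; the result should be \eqref{theequationsatisfiedbytauNt}.
\end{itemize}
The terms $\tau^2/N^2$, $z^2\tau'^2/N^2$ and $2z\tau\tau'/N^2$ should come from squaring the bracket part $\tau-z\tau'$, which carries no factor of $N$.

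Step 4 is the main obstacle. It is only bookkeeping, but error-prone: the $\tau'^4$ and $\tau'^3$ cancellations have to be tracked carefully to land on the cubic term $-4z\tau'^3$. I would verify the computation symbolically, and sanity-check it against the real case $w=\overline{w}$ treated in \cite{BW}.
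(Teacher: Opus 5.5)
Your proposal is correct and follows the paper's own argument. Like you, the paper identifies $\mathsf{G}_N(z)$ on all of $\mathbb{H}_+$ with a nonzero constant times $\mathrm{e}^{-z/2}\mathcal{J}_N\big(\frac{N+w}{2},\frac{N+\overline{w}}{2},\frac{z}{N}\big)$ (your identity-theorem step makes explicit what it leaves implicit), takes logarithmic derivatives, and obtains \eqref{theequationsatisfiedbytauNt} from \eqref{equationbeforescaling} by substituting $u_N(\zeta)=\tau_{N,w}(N\zeta)+\frac{N\zeta}{2}$.

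There is one arithmetic slip to fix before you expand. Your displayed first bracket has one $\frac{N^2}{2}$ too many; it should read $\tau-z\tau'+2N^2\tau'^2-N(w-\overline{w})\tau'-\frac{N(w+\overline{w})}{2}-\frac{N^2}{2}$. With that correction the $\tau'^4$ terms cancel and the $\tau'^3$ terms leave $-4N^2z\tau'^3$. After dividing by $N^2$, every coefficient of \eqref{theequationsatisfiedbytauNt} then comes out exactly.
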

\begin{proof}
By Proposition \ref{Hankelrepresentation} and Proposition \ref{theanalyticfunction}, we have
\beas
\mathsf{G}_{N}(z)=\frac{(-1)^{\frac{N(N-1)}{2} } }{\Gamma(\overline{w}+N)^N}\left(
\prod_{j=1}^N \frac{ \Gamma(w+j) \Gamma(\overline{w}+j) }{\Gamma(j) \Gamma(w+\overline{w}+j)}\right)\mathrm{e}^{-\frac{z}{2}}
\mathcal{J}_{N}\left(\frac{N+w}{2},\frac{N+\overline{w}}{2},\frac{z}{N}\right),
\eeas
and so the zeros of $z\mapsto\mathsf{G}_{N}(z)$ are the same as those of $z\mapsto\mathcal{J}_{N}(\frac{N+w}{2},\frac{N+\overline{w}}{2},\frac{z}{N})$. Thus, 
\begin{equation*}
-\frac{1}{2}+\frac{\d}{\d z}\log \mathcal{J}_{N}\bigg(\frac{N+w}{2},\frac{N+\overline{w}}{2},\frac{z}{N}\bigg)=\frac{\d}{\d z}\log \mathsf{G}_{N}(z),\ \ z\in \mathbb{H}_{+}\setminus (NE_{N,w}).
\end{equation*}
and so \eqref{definitionoftauN} and \eqref{theequationsatisfiedbytauNt} follow from Proposition \ref{detailedanalysisofthesolution}.
\end{proof}

\begin{rem}
By \eqref{defofsigma}, we see that
$
\tau_{N,w}(z)=\sigma_{N}(z/N;w)$ for $z\in\mathbb{H}_+\setminus(NE_{N,w})$.
\end{rem}

At this point we are prepared to prove Theorem \ref{maintheoreminnote2}.
%

\begin{proof}[Proof of Theorem \ref{maintheoreminnote2}]

Let $\mathcal{G}_{w}:= \{z\in \mathbb{H}_{+}: \mathsf{G}(z)=0\}\cup\bigcup_{N=1}^{\infty}(NE_{N,w})$, where $E_{N,w}$ is given in \eqref{zeroparameterw}, 
and $\mathsf{G}(z)$ is the uniform limit of $\mathsf{G}_{N}(z)$ in  Proposition \ref{theanalyticfunction}. We claim that if the set $\mathcal{G}_{w}$ has an accumulation point in $\mathbb{H}_+$, then the point must belong to
$
\{ z \in \mathbb{H}_+ : \mathsf{G}(z) = 0 \}.
$ Indeed, let $\alpha \in \mathbb{H}_+$ be an accumulation point of $\mathcal{G}_{w}$. Note that both $E_{N,w}$ and $\{ z\in\mathbb{H}_+ : \mathsf{G}(z) = 0 \}$ are discrete. Then there is a strictly increasing sequence $\{ n_j \}_{j=1}^\infty$ and points $\alpha_j \in E_{n_j, w}$ such that
$
n_j \alpha_j \to \alpha$ as $j \to \infty
$. Since $\mathsf{G}_{n_j}(n_j \alpha_j) = 0$ and $\mathsf{G}_N(z) \to \mathsf{G}(z)$ uniformly on any bounded closed subset of $\mathbb{H}_+$, we conclude that
$
\mathsf{G}(\alpha) = 0.
$
So $\mathcal{G}_{w}$ has no accumulation point in $\mathbb{H}_{+} \setminus \mathcal{G}_{w}$. 
For $z\in \mathbb{H}_{+}\setminus \mathcal{G}_{w}$, there is a neighborhood of $z$ in $\mathbb{H}_{+}$ disjoint from $\mathcal{G}_{w}$. By $(\ref{definitionoftauN})$ and by Proposition \ref{theanalyticfunction}, $\frac {\ \mathrm{d}^{p}}{\ \mathrm{d}z^{p}}\tau_{N,w}(z)$ converges uniformly to $\frac {\mathrm{\d}^{p}}{\mathrm{\d} z^{p}}\tau^{(w)}(z)$ in this neighbourhood, as $N\rightarrow\infty$, for $p\in\mathbb{Z}_{\geq 0}$, where 
\beas
	\tau_w(z):=z\frac{\mathrm{d}}{\mathrm{d}z}\log \mathsf{G}(z).
\eeas
Consequently we can take the limit $N \to \infty$ in both sides of equation \eqref{theequationsatisfiedbytauNt}. Note that $\tau_{w}(z)$ is analytic on
$
\mathbb{H}_{+}\setminus \{z\in \mathbb{H}_{+}:\mathsf{G}(z)=0\}.
$
Hence, equation \eqref{theequationsatisfiedbytauNt} holds for every
$
z\in \mathbb{H}_{+}\setminus \{z\in \mathbb{H}_{+}:\mathsf{G}(z)=0\}.
$
Let
\begin{equation}\label{defpfmathcalzw}
\mathcal{Z}_{w}
:=
\{z\in \mathbb{H}_{+}:\mathsf{G}(z)=0\}\cap\mathbb{R}_{+}.
\end{equation}
Since $\mathsf{G}(t)=\E[\mathrm{e}^{-\frac{\im t}{2}\mathsf{X}_{1}(w)}]$ by \eqref{restrictionontherealnumbers} for $t>0$, we obtain \eqref{limitequation} for $t\in\mathbb{R}_+\setminus\mathcal{Z}_w$ .
%
 For $t<0$, by the symmetry equation \eqref{symmetry} and a similar argument to that of $t>0$, we  conclude equation \eqref{limitequation} holds for $t\in\mathbb{R}\setminus\{\{0\}\cup\mathcal{Z}_w\cup(-\mathcal{Z}_{\overline{w}})\}$. Also, $0\notin\mathcal{Z}_w\cup(-\mathcal{Z}_{\overline{w}})$ and by \eqref{the convergence for the expectation}, $t=0$ is not an accumulation point of $\mathcal{Z}_{w} \cup (-\mathcal{Z}_{\overline{w}})$. We are now left to establish the boundary condition at $t=0$, see \eqref{boundaryconditionsforthelinitcaseatt=0}. By Lemma \ref{convergenceat0}, we have 
\beas
\lim_{N\rightarrow \infty}\frac{\mathrm{d}^{p}}{\mathrm{d}t^{p}}\E_{N}^{(w)}\bigg[\mathrm{e}^{-\frac{\im t}{2N}\sum_{j=1}^{N}x_{j}^{(N)}}\bigg]\bigg|_{t=0}=\frac{\mathrm{d}^{p}}{\mathrm{d}t^{p}}\E\bigg[\mathrm{e}^{-\frac{\im t}{2}\mathsf{X}_1(w)}\bigg]\bigg|_{t=0}
\eeas
holds for $p=0,1$ whenever $\textnormal{Re}(w)>0$, and $p=0,1,2$ whenever $\textnormal{Re}(2w)>1$. So 
\beas
\lim_{N\rightarrow \infty}\frac{\mathrm{d}^{p}}{\mathrm{d}t^{p}}\tau_{N,w}(t)\bigg|_{t=0}=\frac{\mathrm{d}^{p}}{\mathrm{d}t^{p}}\tau_w(t)\bigg|_{t=0}
\eeas
for $p=0$ whenever $\textnormal{Re}(w)>0$, and for $p=0,1$ whenever $\textnormal{Re}(2w)>1$. By \eqref{Cauchydata1new3}-\eqref{initialconditionnew2} and \eqref{defoftauNw}, we record the boundary conditions 
\begin{equation*}
	\tau_{N,w}(0)=0\ \ \ \ \ \textnormal{and}\ \ \ \   \frac{\mathrm{d}}{\mathrm{d}t}\tau_{N,w}(t)\Big|_{t=0}=\frac{\overline{w}-w}{2(w+\overline{w})}.
\end{equation*}
Hence, letting $N\rightarrow \infty$, we arrive at the conclusion in Theorem \ref{maintheoreminnote2}.
\end{proof}
\begin{prop}\label{maintheorem2innote3}
Let $(N,h)\in\mathbb{N}^2$ and $w\in \mathbb{C}$ with $\textnormal{Re}(2w)>2h-1$. Then
\bea\label{expressionfortheabosolutemomentsofXw}
\E\left[\left|\mathsf{X}_{1}(w)\right|^{2h}\right]=(-1)^{h}2^{2h}\frac{\mathrm{d}^{2h}}{\mathrm{d}t^{2h}}\exp\left[\int_{0}^{t}\tau_w(z)\frac{\mathrm{d}z}{z}\right]\Bigg|_{t=0},
\eea
with $\tau_w(t)$ as in Theorem \ref{maintheoreminnote2} and which solves the $\sigma$-Painlev\'e {\rm III$'$} equation \eqref{limitequation}, and where the integration path lies in a neighbourhood of the origin disjoint from $\mathcal{Z}_{w} \cup(-\mathcal{Z}_{\overline{w}})$, with $\mathcal{Z}_{w}$ as in Theorem \ref{maintheoreminnote2}.
Moreover,
\bea\label{arepresentationofthelimitderivativeintermsoftaylorcoefficientsnew2}
\E\left[\left|\mathsf{X}_{1}(w)\right|^{2h}\right]
=(-1)^{h}2^{2h}\frac{\mathrm{d}^{2h}}{\mathrm{d}t^{2h}}\exp\Big(\sum_{j=0}^{2h-1}\frac{\mathsf{n}_{j}(w,\overline{w})}{j+1}t^{j+1}\Big)\Bigg|_{t=0},
\eea
where $\mathsf{n}_{0}(w,\overline{w}) = \frac{\overline{w}-w}{2(w + \overline{w})}$, $\mathsf{n}_{1}(w,\overline{w}) = \frac{-w\overline{w}}{\left((w + \overline{w})^2 - 1\right)(w + \overline{w})^2}$,
and the remaining coefficients $\mathsf{n}_{2}, \ldots, \mathsf{n}_{2h-1}$ are determined recursively, utilising \eqref{limitequation}.
\end{prop}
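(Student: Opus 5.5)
Since $h\in\mathbb{N}$, $|\mathsf{X}_1(w)|^{2h}=\mathsf{X}_1(w)^{2h}$, so the target is an ordinary even moment. The plan is to read it off from the Taylor expansion at $t=0$ of the characteristic function $\phi(t):=\mathbb{E}[\mathrm{e}^{-\frac{\im t}{2}\mathsf{X}_1(w)}]$, whose logarithmic derivative is $\tau_w/t$ by Theorem \ref{maintheoreminnote2}.

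The first step is to show $\phi\in C^{2h}$ near $0$ with $\phi^{(2h)}(0)=(-\tfrac{\im}{2})^{2h}\mathbb{E}[\mathsf{X}_1(w)^{2h}]=(-1)^h2^{-2h}\mathbb{E}[\mathsf{X}_1(w)^{2h}]$. This holds provided $\mathbb{E}|\mathsf{X}_1(w)|^{2h}<\infty$. For $\textnormal{Re}(w)>0$, finiteness follows from Lemma \ref{uniform integrability} together with Fatou and the almost sure convergence $\textnormal{Tr}(X_N)/N\to\mathsf{X}_1(w)$, since $2h<\textnormal{Re}(2w)+1$. Note that $\textnormal{Re}(2w)>2h-1\ge 1$ forces $\textnormal{Re}(w)>0$, so this case always applies.

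The second step is to write $\phi$ as an exponential on a punctured neighbourhood of $0$. Since $\phi(0)=1$ and $\phi$ is continuous, $\phi$ is nonvanishing on a real interval $(-\epsilon,\epsilon)$, and this interval avoids $\mathcal{Z}_w\cup(-\mathcal{Z}_{\overline{w}})$. There $\tau_w(t)/t=\frac{\mathrm{d}}{\mathrm{d}t}\log\phi(t)$ for $t\neq 0$. Since $\tau_w(t)/t=\phi'(t)/\phi(t)$ extends continuously to $t=0$, it is integrable, and
\[
\phi(t)=\exp\Big[\int_0^t\tau_w(z)\tfrac{\mathrm{d}z}{z}\Big].
\]
Differentiating $2h$ times at $0$ and combining with the first step gives \eqref{expressionfortheabosolutemomentsofXw}.

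The third step gives \eqref{arepresentationofthelimitderivativeintermsoftaylorcoefficientsnew2}. Because $\phi\in C^{2h}$ and is nonvanishing, $\log\phi$ is $C^{2h}$ near $0$. Hence $\tau_w(z)/z=(\log\phi)'(z)=\sum_{j=0}^{2h-1}\mathsf{n}_j z^j+o(z^{2h-1})$, and only these coefficients affect the $2h$-th derivative at $0$. The coefficients $\mathsf{n}_j$ are found by substituting the expansion $\tau_w(z)=\sum_j \mathsf{n}_j z^{j+1}+\dots$ into \eqref{limitequation} and matching powers of $z$.
\begin{itemize}
\item The $z^0$ order gives the consistency relation $\mathsf{n}_0^2\cdot 4(\textnormal{Re}\,w)^2+2\im\,\textnormal{Im}(w^2)\mathsf{n}_0-(\textnormal{Im}\,w)^2=0$. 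This is consistent with $\mathsf{n}_0=\frac{\overline{w}-w}{2(w+\overline{w})}=-\frac{\im}{2}\frac{\textnormal{Im}\,w}{\textnormal{Re}\,w}$, as given by \eqref{boundaryconditionsforthelinitcaseatt=0}.
\item The $z^1$ order should be linear in $\mathsf{n}_1$ with a nonzero coefficient; solving it gives the stated $\mathsf{n}_1$.
\item Each higher order $z^{k}$ is linear in $\mathsf{n}_k$ with coefficient of the form $4(\textnormal{Re}\,w)^2\cdot 2\mathsf{n}_0(\cdot)+\dots$, which I expect to be a nonvanishing polynomial in $k$ and $w$, like $(w+\overline{w})^2-k^2$. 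This produces the recursion.
\end{itemize}

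The main obstacle is this third step. One must justify that the expansion of $\tau_w$ really has this power-series form up to order $z^{2h}$, i.e. that no non-analytic term $z^{w+\overline{w}}$ appears below that order. This follows from the $C^{2h}$ regularity, since $2h<\textnormal{Re}(2w)+1$. One must also verify that the linear coefficient in the recursion for each $\mathsf{n}_k$, $k\le 2h-1$, is nonzero under $\textnormal{Re}(2w)>2h-1$. If that coefficient degenerates, the alternative is to obtain $\mathsf{n}_j$ as limits of the finite-$N$ coefficients from \eqref{coefficient1}--\eqref{coefficient3}, using convergence of derivatives at $0$ via Lemma \ref{convergenceat0}.
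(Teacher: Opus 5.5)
Your first two steps are sound. They prove \eqref{expressionfortheabosolutemomentsofXw} along essentially the same lines as the paper: you get $\mathbb{E}|\mathsf{X}_1(w)|^{2h}<\infty$ from Fatou and Lemma \ref{uniform integrability}, while the paper uses convergence of the finite-$N$ derivatives at $t=0$. The gap is in the value of $\mathsf{n}_1$.

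Put $a=\textnormal{Re}(w)$, $c=\textnormal{Im}(w)$ and $\tau_w(z)=\sum_j\mathsf{n}_jz^{j+1}$. The order-$z^0$ part of \eqref{limitequation} is $(2a\mathsf{n}_0+\im c)^2=0$, so $\mathsf{n}_0$ is a double root. For every $k\ge1$, the coefficient of $\mathsf{n}_k$ at order $z^k$ is then $4(k+1)a(2a\mathsf{n}_0+\im c)=0$. In particular the order-$z^1$ equation is vacuous; it is not linear in $\mathsf{n}_1$ with a nonzero coefficient. At order $z^2$ you get
\[
\mathsf{n}_1\Big(\big((w+\overline{w})^2-1\big)\mathsf{n}_1+\frac{w\overline{w}}{(w+\overline{w})^2}\Big)=0,
\]
whose roots are the stated value and $0$. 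The root $0$ cannot be discarded using \eqref{limitequation} alone. Indeed, $\tau(t)=\mathsf{n}_0t$ is an exact solution satisfying both conditions \eqref{boundaryconditionsforthelinitcaseatt=0}; it is the $\tau$-function of the constant random variable $\textnormal{Im}(w)/\textnormal{Re}(w)$. Since $\mathsf{n}_1=-\tfrac14\textnormal{Var}(\mathsf{X}_1(w))$, ruling out $\mathsf{n}_1=0$ is equivalent to knowing that $\mathsf{X}_1(w)$ is not a.s. constant. That is what Corollary \ref{almostsurely} derives from this proposition, so you cannot assume it.

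The missing input has to come from finite $N$, which is the paper's route. It substitutes the small-$z$ expansion of $u_N$ into the finite-$N$ equation and lets $N\to\infty$. The same degeneracy appears in \eqref{equationbeforescaling}: $N\overline{w}/(w+\overline{w})$ is again a double root at order $z^0$. Writing $b_N$ for the $z^2$-coefficient of $u_N$, the order-$z^2$ relation is
\[
b_N\Big(\big((w+\overline{w})^2-1\big)b_N+\frac{Nw\overline{w}(N+w+\overline{w})}{(w+\overline{w})^2}\Big)=0 .
\]
Here the zero root can be excluded, because $b_N=-\tfrac14\textnormal{Var}(\textnormal{Tr}X_N)$ and $\textnormal{Tr}X_N$ has a density. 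The nonzero root divided by $N^2$ tends to the stated $\mathsf{n}_1$. This limit is the Taylor coefficient of $\tau_w$ because $\mathsf{G}_N'(0)$ and $\mathsf{G}_N''(0)$ converge when $\textnormal{Re}(2w)>1$. Your fallback sentence points in this direction, but without this computation and the root selection it gives no formula.

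Finally, your recursion for $k\ge2$ is right in spirit but off by one order. $\mathsf{n}_k$ is determined at order $z^{k+1}$, where it enters linearly with coefficient $\frac{4w\overline{w}\,(k^2-(w+\overline{w})^2)}{(w+\overline{w})^2((w+\overline{w})^2-1)}$. This is nonzero for $k\le 2h-1<\textnormal{Re}(2w)$.
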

\begin{proof}
By the almost sure convergence of the sequence $\left(\frac{1}{N}\sum_{j=1}^{N}x_{j}^{(N)}\right)_{N=1}^{\infty}$ to $\mathsf{X}_{1}(w)$, together with \eqref{the convergence for the expectation} and Lemma \ref{uniform integrability}, we obtain that for any integer $p \geq 0$,
\bea\label{thelimitexistswhentgoestozero}
\lim_{N\rightarrow\infty}\frac{\mathrm{d}^{p}}{\mathrm{d}t^{p}}\mathbb{E}_N^{(w)}\left[\mathrm{e}^{-\mathrm{i} \frac{t}{2N}\sum_{j=1}^{N}x_{j}^{(N)}}\right]
= \frac{\mathrm{d}^{p}}{\mathrm{d}t^{p}}\mathbb{E}\left[\mathrm{e}^{-\mathrm{i} \frac{t}{2}\mathsf{X}_{1}(w)}\right]
\eea
holds for any $t \in \mathbb{R}$, provided that $0 \leq p < \mathrm{Re}(2w) + 1$.  By Theorem \ref{maintheoreminnote2}, we know that $0$ is not an accumulation point of $\mathcal{Z}_{w}$ as defined in (\ref{defpfmathcalzw}). Therefore, there is a neighborhood of $0$ such that for any $t$ in this neighborhood,
\[
\mathbb{E}\left[\mathrm{e}^{-\mathrm{i} \frac{t}{2}\mathsf{X}_{1}(w)}\right]
= \exp\left( \int_{0}^{t} \tau_w(z) \frac{\mathrm{d}z}{z} \right),
\]
where the path of integration lies in $t\in \mathbb{R}\setminus \left(\mathcal{Z}_{w} \cup (-\mathcal{Z}_{\overline{w}})\right)$.
By \eqref{thelimitexistswhentgoestozero}, for integer $h$ with $0 \leq 2h < \mathrm{Re}(2w) + 1$,
\beas
\lim_{t\rightarrow 0}\frac{\mathrm{d}^{2h}}{\mathrm{d}t^{2h}}\mathbb{E}\left[\mathrm{e}^{-\mathrm{i} \frac{t}{2}\mathsf{X}_{1}(w)}\right]=(-1)^{h}\frac{1}{2^{2h}}\E\left[\left(\mathsf{X}_{1}(w)\right)^{2h}\right]
\eeas
Note that $2h$ is an even integer and $\mathsf{X}_{1}(w)$ is a real-valued random variable. Hence, the last right-hand side above can also be written as $\mathbb{E}\left[\left|\mathsf{X}_{1}(w)\right|^{2h}\right]$. So, we have arrived at \eqref{expressionfortheabosolutemomentsofXw}. Regarding the coefficients $\mathsf{n}_{j}(w,\overline{w})$ in \eqref{arepresentationofthelimitderivativeintermsoftaylorcoefficientsnew2}: by \eqref{thelimitexistswhentgoestozero}, we can perform a Taylor expansion of $\frac{\mathrm{d}}{\mathrm{d}t}\ln \mathbb{E}[\mathrm{e}^{-\mathrm{i} \frac{t}{2}\mathsf{X}_{1}(w)}]$ at $t=0$ up to order $2h-1$, that is, $\sum_{j=0}^{2h-1}\mathsf{n}_{j}(w,\bar{w})t^{j}$. Combining this with the definition of $\tau_w(t)$ in \eqref{defofthelimitsolution0701} and the boundary conditions \eqref{boundaryconditionsforthelinitcaseatt=0},
 we obtain \eqref{arepresentationofthelimitderivativeintermsoftaylorcoefficientsnew2}. Lastly, substituting the asymptotic expansions \eqref{Cauchydata1new}--\eqref{Cauchydata3new} at $0$ into \eqref{theequationsatisfiedbytauNt} and passing to the limit as $N\to\infty$, we obtain the stated expressions for $\mathsf{n}_{0}$ and $\mathsf{n}_{1}$. Since $\tau_w(t)$ solves the $\sigma$-Painlev\'e {\rm III$'$} equation \eqref{theequationsatisfiedbytauw}, we can recursively obtain $\mathsf{n}_{j}(w,\overline{w})$ from $\mathsf{n}_{0}, \ldots, \mathsf{n}_{j-1}$ for $j = 2, \ldots, 2h-1$. The proof is complete.
\end{proof}
It is known that $\mathsf{X}_{1}(w)$ is not almost surely zero for real $2w>-1$ in \cite{AKW}. This was done using the representation of $\mathsf{X}_{1}(w)$ in
terms of the principal value sum of points of a determinantal point process as proven in \cite{Q}, while for $w\notin \mathbb{R}$, to the best of our knowledge, there is no analogous representation for $\mathsf{X}_{1}(w)$. Instead of using this approach to prove the almost surely non-zero property of $\mathsf{X}_{1}(w)$ for general $w$, we use the connections established in Theorem \ref{maintheoreminnote2} between $\mathsf{X}_{1}(w)$ and integrable systems theory. 
\begin{proof}[Proof of Corollary \ref{almostsurely}]
Equation \eqref{thesecondmomentofXw} follows from
Proposition \ref{maintheorem2innote3}.
Since $\mathbb{E}\left[(\mathsf{X}_{1}(w))^2\right]>0$, we have in particular that $\mathsf{X}_{1}(w)$ is not almost surely zero.
\end{proof}

\begin{proof}[Proof of Proposition \ref{representationforintegerh}]
Now a simple combination of Propositions \ref{convergencetheorem} and \ref{maintheorem2innote3}.
\end{proof}

\section{Proof of Proposition \ref{representation0407} and Corollary \ref{non-zeroproperty}}\label{proofsofProposition111and112}

\begin{lem}\label{dsillustration}
Let $F_{N,\delta}(s,h)$ be defined as \eqref{object}. Then for $\delta\in \mathbb{C}$ with $\textnormal{Re}(2\delta)>-1$ and for $s\in \mathbb{R}$ with $2s+\textnormal{Re}(2\delta)>-1$,
\beas
\lim_{N\rightarrow \infty}\frac{F_{N,\delta}(s,0)}{N^{s^2+s(\delta+\overline{\delta})}}=\mathrm{d}(s,\delta),
\eeas
with $\mathrm{d}(s,\delta)$ as in \eqref{defofds}.
\end{lem}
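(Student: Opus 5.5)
We first compute $F_{N,\delta}(s,0)$ in closed form as a ratio of Selberg-type normalisation constants, and then take $N\to\infty$ using the asymptotics of the Barnes $G$-function.

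\textbf{Step 1: closed form at finite $N$.} By the definition \eqref{object} with $h=0$ and the joint density \eqref{definitionofJacobimeasure}, we have
\[
F_{N,\delta}(s,0)=\frac{1}{\mathrm{D}_{N,\delta}}\int_{[0,2\pi)^N}\prod_{j<k}|\mathrm{e}^{\im\theta_j}-\mathrm{e}^{\im\theta_k}|^2\prod_{j=1}^N|1-\mathrm{e}^{\im\theta_j}|^{2s}(1-\mathrm{e}^{-\im\theta_j})^{\delta}(1-\mathrm{e}^{\im\theta_j})^{\overline{\delta}}\,\mathrm{d}\theta_j .
\]
Since $s$ is real, $|1-\mathrm{e}^{\im\theta}|^{2s}=(1-\mathrm{e}^{-\im\theta})^{s}(1-\mathrm{e}^{\im\theta})^{s}$ on $(0,2\pi)$ with principal branches. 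The numerator is therefore exactly the normalisation integral of the same ensemble with $\delta$ replaced by $s+\delta$, i.e. $\mathrm{D}_{N,s+\delta}$. Hence
\[
F_{N,\delta}(s,0)=\frac{\mathrm{D}_{N,s+\delta}}{\mathrm{D}_{N,\delta}}.
\]
This is legitimate because $\textnormal{Re}(2(s+\delta))>-1$ is the integrability condition.

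Inserting $\mathrm{D}_{N,\delta}=2^{N^2+N(\delta+\overline\delta)}N!\,\mathrm{C}_{N,\overline\delta}$ and \eqref{definitionoftheconstant0407}, all powers of $2$ and all factors of $2\pi$ cancel, since $2^{N(\delta+\overline\delta)}$ is compensated by $2^{-N(N+w+\overline w)}$ inside $\mathrm{C}_{N,w}$. We are left with
\[
F_{N,\delta}(s,0)=\prod_{j=0}^{N-1}\frac{\Gamma(N+2s+\delta+\overline\delta-j)\,\Gamma(N+\delta-j)\,\Gamma(N+\overline\delta-j)}{\Gamma(N+\delta+\overline\delta-j)\,\Gamma(N+s+\delta-j)\,\Gamma(N+s+\overline\delta-j)} .
\]
This should be checked against the case $\delta=0$, where it must reduce to the Keating--Snaith formula $\prod_{j=1}^{N}\Gamma(j)\Gamma(j+2s)/\Gamma(j+s)^2$.

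\textbf{Step 2: conversion to Barnes $G$.} Reindex $k=N-j\in\{1,\dots,N\}$ and use $\prod_{k=1}^N\Gamma(k+a)=G(N+1+a)/G(1+a)$. This gives
\[
F_{N,\delta}(s,0)=\mathrm{d}(s,\delta)\cdot\frac{G(N+1+2s+\delta+\overline\delta)\,G(N+1+\delta)\,G(N+1+\overline\delta)}{G(N+1+\delta+\overline\delta)\,G(N+1+s+\delta)\,G(N+1+s+\overline\delta)},
\]
with $\mathrm{d}(s,\delta)$ exactly as in \eqref{defofds}. All arguments of $G$ have real part greater than $0$, so nothing vanishes.

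\textbf{Step 3: asymptotics.} Use the standard expansion, valid uniformly for complex $a$ in compact sets,
\[
\log G(N+1+a)=\tfrac{(N+a)^2}{2}\log N-\tfrac34N^2-\tfrac{a^2}{2}\cdot0+\ldots
\]
More usefully, use the form $\log G(N+1+a)-\log G(N+1)=aN\log N-aN+\tfrac{a}{2}\log(2\pi)+\tfrac{a^2}{2}\log N+o(1)$, which follows from the classical asymptotics of $\log G(z+1)$ (see \cite[\S5.17]{NIST}).

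The six arguments are $a=2s+\delta+\overline\delta,\ \delta,\ \overline\delta$ in the numerator and $\delta+\overline\delta,\ s+\delta,\ s+\overline\delta$ in the denominator. The linear terms in $a$ cancel, because the sums of the $a$'s agree: both equal $2s+2\delta+2\overline\delta$. The quadratic terms give the exponent of $\log N$:
\[
\tfrac12\big[(2s+\delta+\overline\delta)^2+\delta^2+\overline\delta^2-(\delta+\overline\delta)^2-(s+\delta)^2-(s+\overline\delta)^2\big]=s^2+s(\delta+\overline\delta).
\]
This yields the claimed limit. The only step requiring care is justifying the uniform $G$-asymptotics for complex shifts, which is standard. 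The main place an error could creep in is the bookkeeping of powers of $2$ in Step 1, which is why the check against the case $\delta=0$ is there.
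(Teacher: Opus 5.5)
Your proof is correct and follows the paper's route. The paper likewise reduces $F_{N,\delta}(s,0)=2^{2Ns}\mathrm{C}_{N,\overline{\delta}+s}/\mathrm{C}_{N,\overline{\delta}}$ (your ratio $\mathrm{D}_{N,s+\delta}/\mathrm{D}_{N,\delta}$) to the same Gamma product; the only difference is that it quotes \cite[Theorem~5.3 and Lemma~7.1]{DHR} for the large-$N$ limit, whereas you carry out the Barnes $G$ asymptotics yourself, correctly. You should delete the garbled first display in Step~3, which is not a valid expansion and is superseded by your second formula.
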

\begin{proof}
By definition of $Z_{A}(\theta)$,  $\mu_N^{(\delta)}(A)$ and $\mathrm{C}_{N,\delta}$ (or use \cite[Formula (4.2)]{BNR1} with $\beta=2$), 
\begin{align}
F_{N,\delta}(s,0)&=\int_{\mathbb{U}(N)}\left|Z_{A}(0)\right|^{2s}\mathrm{d}\mu_N^{(\delta)}(A)=2^{2Ns}\mathrm{C}_{N,\bar{\delta}+s}/\mathrm{C}_{N,\bar{\delta}}\label{33026}\\
&=\prod_{j=0}^{N-1} \frac{\Gamma(j+1+\delta)\Gamma(j+1+\overline{\delta})\Gamma(2s+\delta+\overline{\delta}+j+1)}{\Gamma(s+\delta+j+1)\Gamma(s+\overline{\delta}+j+1)\Gamma(\delta+\overline{\delta}+j+1)}\nonumber.
\end{align}
The above holds for $\delta\in \mathbb{C}$ with $\textnormal{Re}(2\delta)>-1$ and for $s\in \mathbb{R}$ with $2s+\textnormal{Re}(2\delta)>-1$.
Then, by \cite[Theorem~5.3 and Lemma~7.1]{DHR} with $\beta=2$,
\beas
\lim_{N\rightarrow \infty}\frac{F_{N,\delta}(s,0)}{N^{s^2+s(\delta+\overline{\delta})}}=\mathrm{d}(s,\delta),
\eeas
where 
\beas
\mathrm{d}(s,\delta)=\exp\left(f(2s+\delta+\overline{\delta})-f(\delta+\overline{\delta})+f(\delta)+f(\overline{\delta})-f(\delta+s)-f(\overline{\delta}+s)\right)
\eeas
with 
\begin{align*}
	f(z)=\frac{z}{2}\ln 2\pi-\log G(1+z),
\end{align*}
in terms of the Barnes G-function $G(z)$. From this \eqref{defofds} follows. 
\end{proof}

\begin{proof}[Proof of Proposition \ref{representation0407}]
We first recall that  $\mathbb{E}_{N}^{(w)}[\cdot]$ denotes the expectation with respect to $\mathfrak{m}^{(w)}_N$ in \eqref{defofmathfrakmN}. Then, by an argument similar to that of \cite[Proposition 3.1]{AGKW}, but with the Haar measure on $\mathbb{U}(N)$ replaced by \eqref{generalized measure}, 
we have the following expressions. With $K_{N,\delta}=2^{2sN}\mathrm{C}_{N,\bar{\delta}+s}/\mathrm{C}_{N,\bar{\delta}}$ and $\mathrm{C}_{N,w}$ defined in \eqref{definitionoftheconstant0407},
\begin{align*}
	\frac{\int_{\mathbb{U}(N)}\prod_{j=1}^{k}\big|Z_{A}^{(n_{j})}(0)\big|^{2h_{j}}\mathrm{d}\mu_N^{(\delta)}(A)}{K_{N,\delta}}=&\,{2^{-\sum_{j=1}^{k}2h_{j}n_{j}}}\mathbb{E}_{N}^{(s+\bar{\delta})}\left[\prod_{j=1}^{k}\left|\,\Theta_{n_{j}}\left(x_{1}^{(N)},\ldots,x_{N}^{(N)}\right)\right|^{2h_{j}}\right], 
\end{align*}
and 
\begin{align*}
	&\frac{\int_{\mathbb{U}(N)}\prod_{j=1}^{k}\big|V_{A}^{(n_{j})}(0)\big|^{2h_{j}}\mathrm{d}\mu_N^{(\delta)}(A)}{K_{N,\delta}}\\
	= &\,2^{-\sum_{j=1}^{k}2h_{j}n_{j}}\mathbb{E}_{N}^{(s+\bar{\delta})}\left[\prod_{j=1}^{k}\left|\sum_{m_{j}=0}^{n_{j}}(-\textnormal{i}N)^{m_{j}}\binom{n_{j}}{m_{j}}
	\Theta_{n_{j}-m_{j}}\left(x_{1}^{(N)},\ldots,x_{N}^{(N)}\right)\right|^{2h_j}\right],
\end{align*}
where the polynomials $\Theta_m$ equal
\begin{align}\label{defoftheta}
	\Theta_{m}(x_{1},x_2\ldots,x_{N})&=\sum_{\ell=0}^{m} a_{m,\ell}\,\mathrm{e}_{\ell}(x_{1},x_2,\ldots,x_{N}),
\end{align}
with coefficients $a_{n,\ell}$ given as follows,
\be\label{defofa}
	a_{n,\ell}=(-1)^{\frac{n+\ell}{2}}\sum_{\substack{\sum_{i=1}^{N}m_{i}=n\\\textnormal{only}~m_{1},m_{2},\ldots,m_{\ell}~\textnormal{are}~\textnormal{odd~integers}}}\binom{n}{m_{1}, \ldots, m_{N}}.
\ee 
By Lemma \ref{dsillustration} and (\ref{33026}), 
\beas
	\lim_{N\rightarrow \infty}\frac{K_{N,\delta}}{N^{s^2+s(\delta+\overline{\delta})}}=\mathrm{d}(s,\delta),
\eeas
with $\mathrm{d}(s,\delta)$ as in \eqref{defofds}. Let $\mathbb{E}^{(s+\delta)}[\cdot]$ denote averages taken with respect to the Hua-Pickrell measure $m^{(s+\delta)}$ on $\mathbf{H}$. So
\begin{equation*}
	\mathbb{E}^{(s+\delta)}\left[\left|\,\det\left(X_{n_j}\right)\right|^{2s}\right]= \mathbb{E}_{n_j}^{(s+\delta)} \left[\prod_{m=1}^{n_j} \left|\lambda_m^{(n_j)}(X_{n_j})\right|^{2s}\right]<\infty.
\end{equation*}
Hence, by applying \cite[Proposition 2.11]{AGKW}, while taking the probability measure $\mu_{\infty}$ therein to be $m^{(s+\delta)}$, and using an argument similar to that of \cite[Theorem 1.4]{AGKW}, we obtain the convergence result as claimed in the current proposition.
\end{proof}

\begin{proof}[Proof of Corollary \ref{non-zeroproperty}]
Let
\beas
f_{\delta}(\theta)=\frac{1}{2\pi}\frac{\Gamma(1+\delta)\Gamma(1+\overline{\delta})}{\Gamma(1+\delta+\overline{\delta})}\big(1-\mathrm{e}^{-\im\theta}\big)^{\delta}\big(1-\mathrm{e}^{\im\theta}\big)^{\overline{\delta}}.
\eeas
By \cite[Proposition 3.4]{B}, $f_{\delta}(\theta)d\theta$ is a probability measure on $[0,2\pi)$ associated with the monic orthonormal polynomials 
\beas
p_{n}^{(\delta)}(\mathrm{e}^{\im \theta})=\mathrm{e}^{\im n\theta}{}_2F_1\big(\delta,-n;-n-\overline{\delta};\mathrm{e}^{-\im \theta}\big),\ \ \ n\in\mathbb{Z}_{\geq 0},\ \ \ \theta\in[0,2\pi),
\eeas
where ${}_2F_1(a,b;c;z)$ is the hypergeometric function. Moreover,
\beas
	\big\|p_{n}^{(\delta)}\big\|^2_{L^2(\d f_{\delta})}=\frac{(\overline{\delta}+\delta+1)^{(n)} n! }{ (\overline{\delta}+1)^{(n)}(\delta+1)^{(n)} }.
\eeas
By \cite[Theorem 1]{AV} (with $  K=L=1$ therein), 
\bea\label{kernelexpression}
	\int_{\mathbb{U}(N)} V_{A}(\theta) V_{A^*}(-\theta) \mathrm{d}\mu_N^{(\delta)}(A) 
	=\big\|p_N^{(\delta)}\big\|^2_{L^2(\d f_{\delta})}
	\sum_{j=0}^N \frac{p_j^{(\delta)}(\mathrm{e}^{\im \tau}) \overline{p_j^{(\delta)}(\mathrm{e}^{\im \theta}) }  }{\big\|p_{j}^{(\delta)}\big\|^2_{L^2(\d f_{\delta})}}.
\eea
Observe that
\[
	\frac{\d^n}{\d\theta^n}V_{A^{\ast}}(\theta) \bigg|_{\theta=0} = 
	\overline{\frac{\mathrm{d}^n}{\d\theta^n} V_{A}(-\theta) \bigg|_{\theta=0}  },\ \ \ \ 
\frac{\mathrm{d}^n}{\d\tau^n} Z_{A^{\ast}}(-\tau ) \bigg|_{\tau=0} = (-1)^N
\overline{ \frac{\mathrm{d}^n}{\d\tau^n} Z_{A}(\tau) \bigg|_{\tau=0}},
\]
and therefore
\beas
	\int_{\mathbb{U}(N)} \big|V_A^{(n)}(0)\big|^2 \mathrm{d}\mu_N^{(\delta)}(A)
=\frac{\mathrm{d}^n}{\mathrm{d}\theta^n} \frac{\mathrm{d}^n}{\mathrm{d}\tau^n}   \int_{\mathbb{U}(N)} V_A(\theta) V_{A^{\ast}}(-\tau) \mathrm{d}\mu_N^{(\delta)}(A)\Big|_{\theta=\tau=0},\eeas
as well as
\bea\label{theexpressionofthederivativeofV}
	\int_{\mathbb{U}(N)} \big|Z_A^{(n)}(0)\big|^2 \mathrm{d}\mu_N^{(\delta)}(A)
	=\frac{\mathrm{d}^n}{\mathrm{d}\theta^n} \frac{\mathrm{d}^n}{\mathrm{d}\tau^n}  \mathrm{e}^{\frac{\im\theta}{2}N} \mathrm{e}^{-\frac{\im\tau}{2}N} \int_{\mathbb{U}(N)} V_A(\theta) V_{A^{\ast}}(-\tau) \mathrm{d}\mu_N^{(\delta)}(A) \bigg|_{\theta=\tau=0}. 
\eea
Additionally, by the properties of ${}_2F_1(a,b;c;z)$, for $n\in\mathbb{N}$ sufficiently large,
\bea\label{thederivativeofhyper}
	\frac{\mathrm{d}^{m}}{\d z^m}{}_2F_1(a,-n;c-n;z)\bigg|_{z=1}
= \frac{(a)^{(m)} (-n)^{(m)}}{(c-n)^{(m)}}  {}_2F_1(a+m, -n+m; c-n+m;1),
\eea
where, by the Chu-Vandermonde identity, 
\bea\label{derivativeat1}
{}_2F_1(a+m, -n+m; c-n+m;1)=\frac{(1-c+a+m)^{(n-m)}}{(1-c)^{(n-m)}}.
\eea
Consequently, by identities \eqref{theexpressionofthederivativeofV},\eqref{kernelexpression} and \eqref{thederivativeofhyper} and \eqref{derivativeat1}, 
\begin{align}
  \lim_{N\rightarrow \infty} &\,\frac{1}{N^{\delta+\overline{\delta}+1+2n}}
	\int_{\mathbb{U}(N)} \big|Z_A^{(n)}(0)\big|^2 \mathrm{d}\mu_N^{(\delta)}(A) \nonumber\\
=&\,\,\Gamma(\delta+1) \Gamma(\overline{\delta}+1) 
\sum_{i=0}^n\sum_{j=0}^n \binom{n}{i} \binom{n}{j} \bigg(\frac{1}{2} \bigg)^{2n-i-j}
\sum_{\ell_1=0}^i \sum_{\ell_2=0}^j (-1)^{i-\ell_1+j-\ell_2} \nonumber\\
& \times \, \frac{\delta^{(\ell_{1})} \bar{\delta}^{(\ell_{2})}}{\Gamma(\delta+\overline{\delta}+1+\ell_1) \Gamma(\delta+\overline{\delta}+1+\ell_2)}
\frac{1}{i+j+\delta+\overline{\delta}+1}.\label{relatedtotheexpectation} 
\end{align}
By Proposition \ref{representation0407}, 
\begin{align}\label{expected}
	\lim_{N\rightarrow \infty} \frac{1}{N^{\delta+\overline{\delta}+1+2n}}
	\int_{\mathbb{U}(N)} \big|Z_A^{(n)}(0)\big|^2 \mathrm{d}\mu_N^{(\delta)}(A) =\frac{1}{4^{n}}\mathrm{d}(1,\delta)\mathbb{E}\left[\big|\mathsf{R}_{n}(\bar{\delta}+1)\big|^2\right],
\end{align}
in terms of
\beas
	\mathrm{d}(1,\delta)=\lim_{N\rightarrow \infty} \frac{1}{N^{\delta+\overline{\delta}+1}}\int_{\mathbb{U}(N)} |Z_A(0)|^2 \mathrm{d}\mu_N^{(\delta)}(A);
\eeas
and by \eqref{defofds},
\bea\label{c1w}
	\mathrm{d}(1,\delta)=\frac{\Gamma(\delta+1)\Gamma{(\bar{\delta}+1)}}{\Gamma(\delta+\overline{\delta}+2)\Gamma(\delta+\overline{\delta}+1)}.
\eea
Hence, abbreviating $w =\delta + 1$ in the above and combining \eqref{relatedtotheexpectation} with \eqref{expected} and \eqref{c1w}, we obtain the first expression claimed in Corollary \ref{non-zeroproperty}. The second expression follows from a similar argument, so we are left with the positivity claim. To that end we note that the matrix
\begin{equation*}
	\mathbf{M}=\left[\frac{1}{1+i+j+\delta+\overline{\delta}}\right]_{i,j=0}^n
\end{equation*}
is Hermitian and positive definite due to properties of a Cauchy matrix. Since the right-hand side of \eqref{relatedtotheexpectation} equals $|\Gamma(\delta+1)|^2\,\mathbf{x}^T \mathbf{M}\,\overline{\mathbf{x}}$ with $\mathbf{x}=(x_{0},x_1,\ldots,x_{n})^{T}$ where
\beas
x_{i}=\binom{n}{i}\bigg(\frac{1}{2} \bigg)^{n-i}
\sum_{\ell=0}^i(-1)^{i-\ell} \frac{\delta^{(\ell)}} {\Gamma(\delta+\overline{\delta}+1+\ell) },
\eeas
the proof is now complete.
\end{proof}
\section{Connection to a $\sigma$-Painlev\'e~{\rm III$'$} equation}\label{sectiononjointmofasequenceofrandomvariables}
In this section, we shall prove Theorem \ref{painlevethm0328}. As mentioned earlier, for $w \in \mathbb{R}$, there is a description of $\mathsf{X}_{1}(w)$, see \eqref{describtionofX1w}, which expresses it as a certain principal value sum over the points of a determinantal point process. However, it is unknown whether an analogous result holds for general $w \in \mathbb{C}$. Consequently, it seems difficult to study the statistics related to $\mathsf{X}_{1}(w)$ using this approach through such a determinantal point process representation, especially the structure stated in Theorem \ref{painlevethm0328}, which represents all joint moments of the characteristic function of $\mathsf{X}_{1}(w)$ and $\mathsf{X}_{n}(w)$, $n=2,\ldots$, in terms of a solution of the $\sigma$-Painlev\'e {\rm III$'$} equation.\smallskip

Instead, we approximate the joint moments of $\mathsf{X}_1(w),\ldots,\mathsf{X}_{n}(w)$ by an appropriate finite-$N$ expression with a similar structure to the one stated in Theorem~\ref{painlevethm0328}. Our approach is similar to that of \cite[Theorem~1.9]{AGKW}, with the main difference being that all quantities in their proof depend only on a single parameter $w$ (which is $s$ in \cite{AGKW}), whereas in this paper, the analogous quantities depend on both $\textnormal{Re}(w)$ and $w$. When $w \in \mathbb{R}$, our setting reduces to the situation considered in \cite{AGKW}.
Due to the introduction of this extra parameter, we need to re-compute the corresponding expressions. In the following, we outline the proof steps and point out the differences compared to the analogous steps in \cite{AGKW}.

Recall that $\mathbb{E}_N^{(w)}\left[\cdot\right]$ is the average with respect to the finite dimensional Hua-Pickrell measure $\mathfrak{m}_N^{(w)}$ on the Weyl chamber
\[
\mathbb{W}^N
=
\left\{
(x_1,\ldots,x_N)\in\mathbb{R}^N:
x_1\geq\ldots\geq x_N\right\}.
\]
\begin{prop}\label{structureforgeneralfinitesize}
Let $k\in\mathbb{Z}_{\geq 2}$ and $n_{2},\ldots,n_{k}$ be non-negative integers with $\sum_{j=2}^{k} n_j >0$. Let $w\in \mathbb{C}$ with $\textnormal{Re}(w)>2^{-1}(\sum_{q=2}^k qn_q-1)$. Then, for all $t_{1}\geq 0$,
\begin{align}\label{general structure1}
    &\frac{(-2\textnormal{i})^{-\sum_{q=2}^{k}qn_{q}}}{N^{n_{2}+\cdots+n_{k}}}\mathbb{E}_{N}^{(w)}\Bigg[\mathrm{e}^{-\textnormal{i} t_{1}\frac{1}{N}\sum_{j=1}^{N}x^{(N)}_{j}}\prod_{q=2}^{k}\left(\sum_{j=1}^{N}\left(x^{(N)}_{j}-\textnormal{i}\right)^q\right)^{n_{q}}\Bigg]\nonumber \\=&\frac{1}{t_{1}^{\sum_{q=2}^{k}qn_{q}-1}}
    \sum_{m=0}^{\sum_{q=2}^{k}(q-1)n_{q}}t_{1}^{m-1}P_{m}^{(N,w,\overline{w})}(t_{1})\frac{\mathrm{d}^{m}}{\mathrm{d}t_{1}^{m}}\mathbb{E}_{N}^{(w)}\Bigg[\mathrm{e}^{-\textnormal{i} t_{1}\frac{1}{N}\sum_{j=1}^{N}x^{(N)}_{j}}\Bigg],
\end{align}
where $ t_1^{m-1} P_{m}^{(N,w,\overline{w})}(t_{1})$ are polynomials in $t_{1}$ of degree at most $\sum_{q=2}^{k}qn_{q}-1$.  Moreover, the coefficients of these polynomials are polynomials in $(N,w,\overline{w})$, and with degrees in terms of $N$ or $w$ or $\overline{w}$ not exceeding $\sum_{q=2}^{k}(q-1)n_{q}$. 
\end{prop}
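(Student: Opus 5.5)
Our plan is to follow the strategy of \cite[Theorem~1.9]{AGKW}: pass to the Laguerre-type integral representation of Proposition~\ref{integralrepresentation} and convert the power sums $\sum_j(x_j-\im)^q$ into differential operators acting on the characteristic function. The key observation is the change of variables underlying Lemma~\ref{transformation} and Proposition~\ref{integralrepresentation}. After the Fourier transform of the weight $(1+\im x)^{-N-w}(1-\im x)^{-N-\overline{w}}$, each eigenvalue $x_j$ corresponds to an integration variable $y_j\in(0,\infty)$. The shifted quantity $x_j-\im$ acts, up to the factor $-2\im$ and a rescaling by $t_1/N$, like multiplication by a rational function of $y_j$ and $t_1/N$. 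Concretely, we will write the left-hand side of \eqref{general structure1} as
\[
\frac{\mathrm{e}^{-t_1/2}}{N!\prod_{j}\Gamma(j)\Gamma(w+\overline{w}+j)}\int_{(0,\infty)^N}\prod_{q=2}^k\Big(\sum_{i=1}^N \Phi_q\big(y_i,\tfrac{t_1}{N}\big)\Big)^{n_q}\prod_{i=1}^N\Big(y_i+\tfrac{t_1}{N}\Big)^{\overline{w}}y_i^{w}\mathrm{e}^{-y_i}\Delta(y)^2\,\d y .
\]
Here $\Phi_q$ is obtained by tracing $(x-\im)^q$ through the Fourier representation. We expect $\Phi_q(y,\tau)$ to be a polynomial in $y$ divided by $\tau^{q}$, essentially $(-2\im)^q y^q/\tau^q$ plus lower-order corrections. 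This matches the prefactor $(-2\im)^{-\sum qn_q}$ and the overall denominator $t_1^{\sum qn_q-1}$.

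Next we handle the symmetric polynomial $\prod_q p_q(y)^{n_q}$, where $p_q=\sum_i y_i^q$. The idea is to express it through operators in $t_1$ acting on $\mathsf{G}_N(t_1)$. Differentiating $\prod_i(y_i+t_1/N)^{\overline{w}}$ in $t_1$ brings down factors $\frac{\overline{w}}{N}\sum_i (y_i+t_1/N)^{-1}$, together with further terms at higher order. To get the power sums $p_q$ themselves, we will use the dilation $y\mapsto \lambda y$: combined with integration by parts, it produces Euler-type operators $t_1\frac{\d}{\d t_1}$, and polynomial coefficients in $(N,w,\overline{w})$ arise from the Jacobian $\lambda^{N^2+N(w+\overline{w})}$ and from the factor $\mathrm{e}^{-\lambda y}$.

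More systematically, we follow \cite{AGKW} and use the Laguerre-type Schrödinger/Calogero operator that annihilates the weight. The operator $\sum_i\big(y_i\partial_{y_i}^2+\ldots\big)$ intertwines with $\partial_{t_1}$. Induction on $\sum_q (q-1)n_q$ then shows that inserting $p_q$ raises the order of the $t_1$-derivative by at most $q-1$ and introduces polynomial coefficients. This yields the structure $\sum_m t_1^{m-1}P_m\,\partial_{t_1}^m$, with the stated degree bounds in $t_1$ and in $(N,w,\overline{w})$. The $N^{-\sum n_q}$ normalisation comes from the $1/N$ rescaling of $t_1$.

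The main obstacle will be the bookkeeping with the two independent parameters $w$ and $\overline{w}$. In \cite{AGKW} the weight is $(y(y+t))^{s}$, which is symmetric, whereas here $y^{w}(y+t_1/N)^{\overline{w}}$ is not. The integration-by-parts identities therefore pick up asymmetric terms such as $(\overline{w}-w)$, and we must check that these still close into the claimed finite recursion with degree at most $\sum(q-1)n_q$ in each of $N$, $w$ and $\overline{w}$. We will first verify the case $k=2$, $n_2=1$ explicitly; this should reproduce the first formula of Corollary~\ref{painlevecor0324} after $N\to\infty$. We will then run the induction, tracking degrees via a weight-grading in which $y$ has degree one and $t_1\partial_{t_1}$ has degree zero. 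Finally, the condition $\textnormal{Re}(w)>\frac12(\sum qn_q-1)$ ensures that all the moments involved are finite, and that boundary terms at $y=0$ and $y=\infty$ vanish in the integrations by parts.
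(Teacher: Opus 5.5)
Your proposal has a genuine gap. The first step rests on an unjustified dictionary, and the mechanism that would actually prove the statement is missing.

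\textbf{The first step.} Proposition~\ref{integralrepresentation} is not obtained by a change of variables $x_j\mapsto y_j$. It comes from the Hankel determinant of $U$-functions, the row reduction of Lemma~\ref{transformation} and Andr\'eief's identity. So there is no pointwise rule turning $\prod_q(\sum_j(x_j-\im)^q)^{n_q}$ into $\prod_q(\sum_i\Phi_q(y_i,t_1/N))^{n_q}$, and you give no derivation of $\Phi_q$.

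If you trace the insertion through that route, it looks quite different. Since $x-\im=-\im(1+\im x)$, the factor $(x-\im)^q$ shifts the Hankel index $m\mapsto m+q$. In the $y$-variables this becomes $\partial_y^q$ acting on the column functions $y^{N+w-1-j}/\Gamma(N+w-j)$. Hence $\sum_j(x_j-\im)^q$ corresponds to $(-2\im)^q\sum_k\partial_{y_k}^q$ acting on a determinant, and an $n_q$-th power corresponds to the $n_q$-th power of this operator, not of a one-body sum. Two checks:
\begin{itemize}
\item For $N=1$ you get $(-2\im)^q\frac{\Gamma(1+w)}{\Gamma(1+w-q)}\,y^{-q}$. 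This is an inverse power with no $\tau$-dependence, not $(-2\im)^q y^q/\tau^q$.
\item For $N=2$, $q=2$ you get $(-2\im)^2\bigl(w^2p_2(y^{-1})-w\,p_1(y^{-1})^2\bigr)$, which has a genuine two-body term.
\end{itemize}
Your guessed $\Phi_q$ appears to be reverse-engineered from the prefactors in the statement. Moreover, these inverse powers are integrable at $y=0$ only when $\textnormal{Re}(w)>\sum_q qn_q-1$. So the hypothesis $\textnormal{Re}(w)>\frac12(\sum_q qn_q-1)$ does not make the boundary terms vanish, as you claim; you would also need an analytic continuation in $w$.

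\textbf{The missing mechanism.} The real content of the proposition is merely asserted. This content is that the insertion can be traded for $t_1$-derivatives of order at most $\sum_q(q-1)n_q$, with prefactor $t_1^{-(\sum_q qn_q-1)}$, and with coefficients polynomial in $t_1$ and in $(N,w,\overline w)$ of the stated degrees. Dilation gives you the insertion $p_1(y)$, and $\partial_{t_1}$ gives you $\sum_k(y_k+t_1/N)^{-1}$. Higher insertions would need the full loop equations with two-body terms. Neither the unspecified ``Calogero operator'' nor the weight grading controls the derivative order or explains where the negative powers of $t_1$ come from. The argument of \cite{AGKW} that the paper adapts is also not of this type.

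The paper deduces the statement from Proposition~\ref{mainproposition1} with $v=\overline w$ and stays on the Hankel side:
\begin{itemize}
\item The power sums are generated by extra times $t_2,\dots,t_k$ in the Hankel entries.
\item The only $t_1$-derivative identity is the single shift \eqref{partialderivativewithrespecttot10321}.
\item Larger index shifts are removed with the contiguous relation \eqref{indexaddedby20321} of the confluent hypergeometric entries. Its division by $2(t_1+t_2)$ is the source of the negative powers of $t_1$.
\item The resulting hook-partition-shifted determinants are closed up by the recursion of Lemma~\ref{generalresult0325} and Proposition~\ref{prop:recursivevector}, along which the degrees are tracked.
\item Carlson's lemma extends the result from $v\in\mathbb{Z}_{\ge0}$ to complex $v$.
\end{itemize}
Your sketch has no substitute for the contiguous relation or the partition recursion, and that is where the stated bounds come from.
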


In order to avoid confusion, we clarify that for $w\in \mathbb{C}$, the expectation  $\mathbb{E}_{N}^{(w)}$ used in \cite[Theorem 1.9]{AGKW} refers to taking averages with respect to 
\beas
	\frac{1}{\widetilde{\mathrm{C}}_{N,w}} \prod_{j=1}^N \left(1+x_j^2\right)^{-w-N}\prod_{1\leq j<k\leq N}|x_{k}-x_{j}|^2\prod_{\ell=1}^{N}\mathrm{d}x_{\ell},
\eeas while in our paper,  $\mathbb{E}_{N}^{(w)}$ denotes the average with respect to the probability measure
\beas
	\frac{1}{\mathrm{C}_{N,w}} \prod_{j=1}^N \left(1+\im x_j\right)^{-w-N}\left(1-\im x_j\right)^{-\overline{w}-N} \prod_{1\leq j<k\leq N}|x_{k}-x_{j}|^2\prod_{\ell=1}^{N}\mathrm{d}x_{\ell}
\eeas
with $\mathrm{C}_{N,w}$ given as in \eqref{definitionoftheconstant0407}. The main difference between Proposition \ref{structureforgeneralfinitesize} and  \cite[Theorem 1.9]{AGKW} is that we need to handle two parameters, $\textnormal{Re}(w)$ and $\overline{w}$, rather than a single parameter in \cite[Theorem 1.9]{AGKW}. In fact, our Proposition \ref{structureforgeneralfinitesize} is a direct consequence (after setting $v=\overline{w}$) of the following result about averages in the non-symmetric Cauchy ensemble.

\begin{prop}\label{mainproposition1}
Let $N\in\mathbb{N}$ and let $v\in \mathbb{C}$ with  $\textnormal{Re}(v)>-1$. Let $k\in\mathbb{Z}_{\geq 2}$ and $n_{2},\ldots,n_{k}\geq 0$ be integers. Suppose $w\in \mathbb{C}$ satisfies 
\bea\label{restrictiononparameters2h0318}
	\textnormal{Re}(w)>-1,  \ \ \ \ \textnormal{Re}(v+w)>\sum_{q=2}^{k}qn_{q}-1.
\eea
Abbreviate
\bea\label{definitionofflnopowerterms}
	\mathcal{F}(N; v,w)(t_{1})
	:=\int_{\mathbb{R}^N} \mathrm{e}^{-\im \frac{t_{1}}{N}\sum_{j=1}^{N}x_{j}}\left( \prod_{j=1}^N (1+\im x_j)^{-N-w} (1-\im x_j)^{-N-v}\right)\big(\Delta(\mathbf{x})\big)^2\mathrm{d}\mathbf{x}.
\eea
and 
\begin{align}\label{definitionoffl}
\mathcal{F}_{n_{2},\ldots,n_{k}}&\,(N; v,w)(t_{1})\nonumber\\
:=&\int_{\mathbb{R}^N} \mathrm{e}^{-\im \frac{t_{1}}{N}\sum_{j=1}^{N}x_{j}}\prod_{q=2}^{k}\Big(\sum_{j=1}^{N}(x_{j}-\im)^q\Big)^{n_{q}}\left( \prod_{j=1}^N (1+\im x_j)^{-N-w} (1-\im x_j)^{-N-v}\right)\big(\Delta(\mathbf{x})\big)^2\mathrm{d}\mathbf{x}.
\end{align}
Here, we use the shorthands $\Delta(\mathbf{x}):=\prod_{1\leq j<k\leq N}|x_{k}-x_{j}|^2$ and $\mathrm{d}\mathbf{x}:=\prod_{\ell=1}^{N}\mathrm{d}x_{\ell}$.
Then, for all $t_{1}\geq 0$,
\begin{align}\label{mainstructure1}
\frac{\mathcal{F}_{n_{2},\ldots,n_{k}}(N; v,w)(t_{1})}{N^{n_{2}+\cdots+n_{k}}}=\frac{(-2\textnormal{i})^{\sum_{q=2}^{k}qn_{q}}}{t_{1}^{\sum_{q=2}^{k}qn_{q}-1}}
    \sum_{m=0}^{\sum_{q=2}^{k}(q-1)n_{q}}t_{1}^{m-1}P_{m}^{(N,v,w)}(t_{1})\frac{\mathrm{d}^{m}}{\mathrm{d}t_{1}^{m}}\mathcal{F}(N; v,w)(t_{1}),
\end{align}
where $ t_1^{m-1} P_{m}^{(N,v,w)}(t_{1})$ are polynomials in $t_{1}$ of degree at most $\sum_{q=2}^{k}qn_{q}-1$.  Moreover, the coefficients of these polynomials are polynomials in $N,v,w$, with degrees in $N$ or $v$ or $w$ not exceeding $\sum_{q=2}^{k}(q-1)n_{q}$. Additionally, when $w=v=s\in \mathbb{R}$, $P_{m}^{(N,v,w)}(t_{1})$ coincides with $P_{m}^{(s,N)}(t_{1})$, as stated in \cite[Theorem 1.9]{AGKW}.
\end{prop}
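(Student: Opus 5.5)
We follow the scheme of \cite[Theorem 1.9]{AGKW}, replacing the symmetric Cauchy weight by the two-parameter weight $(1+\im x)^{-N-w}(1-\im x)^{-N-v}$. The starting point is the observation that the power sums $\sum_j (x_j-\im)^q$ can be generated by differential operators acting on $\mathcal{F}(N;v,w)$.

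\textbf{Step 1 (generating the power sums).} Introduce auxiliary deformation parameters: replace the weight by
\[
\prod_j (1+\im x_j)^{-N-w}(1-\im x_j)^{-N-v}\exp\Big(\sum_{q\ge 2} \tau_q (x_j-\im)^q\Big).
\]
Then $\mathcal{F}_{n_2,\ldots,n_k}$ is obtained by differentiating $n_q$ times in $\tau_q$ at $\tau=0$. Equivalently, each factor $(x_j-\im)^q=(-\im)^q(1+\im x_j)^q$ can be written as a lowering of the exponent of $(1+\im x_j)$.

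We prefer the following concrete route. Substitute $x_j=-\im+ \im y_j$ formally, or better, use the Fourier/Laplace representation from Proposition~\ref{integralrepresentation}. There, after Lemma~\ref{transformation} and Andr\'eief's identity, $\mathcal{F}(N;v,w)(t_1)$ is proportional to
\[
\mathrm{e}^{-t_1}\int_{(0,\infty)^N}\prod_i (y_i+t_1/N)^{v} y_i^{w}\mathrm{e}^{-y_i}\,\Delta(\mathbf y)^2\,\d\mathbf y .
\]
The same computation with the extra factor $\prod_q(\sum_j(x_j-\im)^q)^{n_q}$ is done by noting that, in Fourier space, multiplication by $(x-\im)$ on the weight corresponds to applying $\im\partial_\xi$ plus the shift coming from $\mathrm{e}^{-\xi}$. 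Thus $(x_j-\im)^q$ acts as $\partial_\xi^q$ on the $\mathrm{e}^{\xi}$-stripped kernel.

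\textbf{Step 2 (reduction to derivatives in $t_1$ by scaling).} In the $y$-integral the variable $t_1$ enters only through $y_i+t_1/N$. Rescaling $y_i = (t_1/N) u_i$ gives
\[
\mathcal{F}\propto \mathrm{e}^{-t_1}(t_1/N)^{N(N+v+w)}\int \prod_i (1+u_i)^v u_i^w \mathrm{e}^{-t_1u_i/N}\Delta(\mathbf u)^2\,\d\mathbf u .
\]
Consequently, powers of $\sum_j u_j$ are produced by $t_1$-derivatives. The power sums $\sum_j (x_j-\im)^q$ translate, via the Fourier correspondence of Step 1, into $\sum_j(\text{linear in } u_j,\ \partial_{u_j})^q$ applied inside the integral. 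Integrating by parts in $u_j$, using the Euler-type identity $u\partial_u[(1+u)^v u^w]=\big(w+v\,\tfrac{u}{1+u}\big)(1+u)^v u^w$, converts every term into polynomials in $\sum_j u_j^r$ with coefficients polynomial in $(N,v,w)$. Finally, the integration-by-parts (Schwinger–Dyson/loop) equations for this Laguerre-type ensemble express $\sum_j u_j^r$ moments through $t_1$-derivatives of $\mathcal{F}$, with polynomial coefficients in $t_1$ and $(N,v,w)$.

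\textbf{Step 3 (degree counting and induction).} We proceed by induction on $\sum_q (q-1)n_q$. At each stage, one factor $\sum_j (x_j-\im)^q$ raises the derivative order by at most $q-1$ and the polynomial degree in $(N,v,w)$ by at most $q-1$. This yields the range $0\le m\le\sum(q-1)n_q$, the prefactor $(-2\im)^{\sum qn_q} t_1^{1-\sum qn_q}$, and the stated degree bounds. The normalisation $N^{-\sum n_q}$ absorbs the extra factor $N$ coming from each power sum.

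Setting $v=w=s$ recovers precisely the recursion of \cite{AGKW}, so the coincidence $P_m^{(N,s,s)}=P_m^{(s,N)}$ holds by uniqueness of the recursive construction. The condition \eqref{restrictiononparameters2h0318} guarantees absolute convergence and the vanishing of boundary terms in every integration by parts. The asymmetric factor $\tfrac{u}{1+u}$ also requires checking that no negative powers survive; these are controlled by the same condition.

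\textbf{Main obstacle.} The main obstacle is Step 2: closing the loop equations so that only $t_1$-derivatives of $\mathcal{F}$ remain, now that the weight carries two distinct exponents. In the symmetric case the cancellations of $(1+u)^{-1}$ terms relied on $v=w$. Our first attempt will be to keep the $(1+u)^v$ factor intact and use $\partial_{t_1}$ itself to generate $\sum_j u_j$, verifying explicitly on the case $n_2=1$ (compare Corollary~\ref{painlevecor0324}) that the $(\overline w-w)/t_1$ term appears.
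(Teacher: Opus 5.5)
There is a genuine gap, and it sits exactly at the step you flag as the main obstacle. The ``Fourier correspondence'' of Step~1 only produces $\sum_j x_j$. Every $x_j$ carries the same frequency $t_1/N$, so $(\im N\partial_{t_1})^m$ inserts $(\sum_j x_j)^m$ and never $\sum_j x_j^q$ with $q\ge 2$. To generate $\sum_j(x_j-\im)^q$ by differentiation you would need independent frequencies $\xi_j$. With those, the $N$-fold integral is no longer a Hankel determinant, so Lemma~\ref{transformation}, Andr\'eief's identity, and hence the $y$-representation of Proposition~\ref{integralrepresentation} are all lost.

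More generally, that $y$-representation comes from a determinantal identity, not from a change of variables $x\mapsto y$. It is obtained by transforming entrywise, applying manipulations specific to the unshifted index structure $i+j$, and then Andr\'eief. So nothing turns a non-factorised symmetric function of the $x_j$ into ``$\sum_j(\text{linear in }u_j,\partial_{u_j})^q$'' inside the $u$-integral; you assert this without derivation.

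The claim that loop equations for the $u$-ensemble express the relevant moments through $t_1$-derivatives of $\mathcal F$, with polynomial coefficients, is not something you can invoke. It is precisely the content of the proposition. For products of power sums, the finite-$N$ loop equations generate multi-trace correlators and do not close on their own. Consequently the degree counting in Step~3 has nothing to rest on. Nor can the identification with $P_m^{(s,N)}$ at $v=w=s$ follow from ``uniqueness'': your construction is not the recursion of \cite{AGKW}, and the polynomials are defined through that specific recursion.

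The remark you make and then abandon in Step~1 is the one that works. Exponentiating the power sums with generating parameters $t_q$ makes the weight factorise. Since $(x-\im)^q=(-\im)^q(1+\im x)^q$ shifts the exponent of $(1+\im x)$, the Hankel entries become generating sums $\boldsymbol{\xi}_n$ of index-shifted functions $\xi_m(v,w;t)=U(1-N-v,2-2N-v-w+m;2t)\e^{-t}$. This is Proposition~\ref{proprepresented as a Hankel0321}, proved for $v\in\mathbb{Z}_{\ge 0}$ and extended to $\textnormal{Re}(v)>-1$ by Carlson's lemma (Lemma~\ref{reductiontoints}).

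The closure in terms of $t_1$-derivatives then rests on three ingredients:
\begin{itemize}
\item the identities $N\partial_{t_1}\boldsymbol{\xi}_n=\boldsymbol{\xi}_n-2\boldsymbol{\xi}_{n+1}$ and $\partial_{t_q}\mathbf{\Xi}_{N,\boldsymbol{\lambda}}=N^{-1}\mathbf{\Xi}_{N,\boldsymbol{\lambda},q}$ (Proposition~\ref{aboutshiftedby10324});
\item the contiguous three-term relation for $U$ in the index (Proposition~\ref{basic recursive formula}), whose index-dependent coefficients force the weighted determinants $\mathbf{\Xi}^{(\omega)}$ into the recursion of Proposition~\ref{generalrecursiveforshiftbypartition0325};
\item the hook-partition linear algebra of Lemma~\ref{generalresult0325} and Proposition~\ref{prop:recursivevector}, which rewrites all shifted determinants in terms of $t_1$-derivatives of $\mathbf{\Xi}_N$.
\end{itemize}
Running the induction of \cite{AGKW} on these relations gives the range of $m$ and the degree bounds. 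The asymmetry $v\neq w$ enters only through the matrices $\mathbf{Q}_0^{(\ell)}(v,w)$ and $\mathbf{Q}_2^{(\ell)}(w)$. Your proposal contains neither this machinery nor a working substitute for it, so as it stands it does not prove the statement.
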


\subsection{Proof of Theorem \ref{painlevethm0328} }
\begin{proof}[Proof of Theorem \ref{painlevethm0328} (assuming Proposition \ref{mainproposition1}) ]
Let $\left(x_1^{(N)},\dots,x_N^{(N)}\right)\in \mathbb{W}_N$ denote the eigenvalues of the top-left $N\times N$ corner of the infinite random matrix whose distribution is given by $m^{(w)}$, the Hua-Pickrell measure on $\mathbf{H}$. We first use a convergence result from \cite[Proposition 2.11]{AGKW} about the normalised elementary symmetric polynomial 
\begin{equation*}
	\frac{1}{N^{k}}\mathrm{e}_k\big(x_1^{(N)},x_2^{(N)},\ldots,x_N^{(N)}\big),\ \ \ k\in\mathbb{N},
\end{equation*}
which guarantees its almost sure convergence and its convergence in $L^{p}(m^{(w)})$ for any $p$ with  $1\leq p\leq \sum_{m=2}^{k}mn_{m}$. Secondly, we recall $\Omega$ in \eqref{definitionofOmega} and the probability measure $\mathcal{M}^{(w)}$ on $\Omega$. Next let $\mathbf{H}_{0}$ denote the subset of $\mathbf{H}$ consisting of all infinite matrices $X$ such that
\[
	\frac{1}{N}\Big(\max\big\{x_{i}^{(N)},0\big\}\Big)_{i=1}^\infty,\ \ \ \ \ \ 
	\frac{1}{N}\Big(\max\big\{-x_{N+1-i}^{(N)},0\big\}\Big)_{i=1}^\infty,
\]
\[
	\frac{1}{N}\sum_{i=1}^N x_i^{(N)},\ \ \ \ \ \ 
	\frac{1}{N^2}\sum_{i=1}^N \big(x_i^{(N)}\big)^2,
\]
converge as $N\to\infty$ almost surely. Correspondingly, we denote the limits by
$\big(\alpha_i^+\big)_{i=1}^\infty$, $\big(\alpha_i^-\big)_{i=1}^\infty$, $\gamma_{1}$ and $\delta$, respectively.
By \cite[Theorem 5.1]{Borodin_2001}, we have
$
m^{(w)}(\mathbf{H}_{0})=1.
$ By \cite[Theorem 1.1]{Q}, we have 
$
\delta=\sum_{i=1}^{\infty}(\alpha_{i}^{+})^2+\sum_{i=1}^{\infty}(\alpha_{i}^{-})^2
$
almost surely under $m^{(w)}$. Furthermore, by virtue of Proposition 2.3 and its proof in \cite{AN}, for $m\geq 3$,
\begin{equation*}
\frac{\sum_{j=1}^N \left(x^{(N)}_j\right)^m}{N^m}
\xrightarrow[]{N\to \infty} 
\sum_{i=1}^\infty \Big[\left(\alpha_i^{+}\right)^m+\left(-\alpha_{i}^{-}\right)^m\Big],
\end{equation*}
almost surely.
Note that there is a natural Borel map from $\mathbf{H}_{0}$ to $\Omega$, which maps $X$ to
\[
\left(\big(\alpha_i^+\big)_{i=1}^\infty, \big(\alpha_i^-\big)_{i=1}^\infty, \gamma_{1}, \delta\right).
\]
By \cite[Theorem 5.2]{Borodin_2001}, the pushforward of the measure $m^{(w)}|_{\mathbf{H}_{0}}$ under this Borel map is $\mathcal{M}^{(w)}$. 
Then, combining this with \eqref{defOfmathcalC} and the definition of
$\mathsf{X}_m(w)$ given in \eqref{defofxn}, we obtain
\begin{equation*}
\frac{\sum_{j=1}^N \left(x^{(N)}_j\right)^m}{N^m}
\xrightarrow[]{N\to \infty} \mathsf{X}_m(w),
\end{equation*}
in distribution for all $m=2,\ldots,k$, and thus
\begin{equation*}
\frac{\sum_{j=1}^N \left(x^{(N)}_j-\im\right)^m}{N^m}
\xrightarrow[]{N\to \infty} 
\mathsf{X}_m(w),
\end{equation*}
in distribution.
To prove
\begin{align}\label{eq:convergenceforpainleve0328}
       \frac{1}{N^{\sum_{m=2}^k mn_m}}\mathbb{E}_N^{(w)}\left[\mathrm{e}^{-\textnormal{i}\frac{t_1}{N}\sum_{j=1}^N x^{(N)}_j} \prod_{m=2}^k\left(\sum_{j=1}^N (x^{(N)}_j-\textnormal{i})^m\right)^{n_m}\right] &\xrightarrow[]{N\to \infty} \mathbb{E}\left[\mathrm{e}^{-\textnormal{i}t_1 \mathsf{X}_1(w)} 
        \prod_{m=2}^k \left(\mathsf{X}_m(w)\right)^{n_m}
        \right], 
\end{align}
it suffices to prove that the sequence  
\beas
	\left(\frac{1}{N^{\sum_{m=2}^k mn_m}}\prod_{m=2}^k\left|\sum_{j=1}^N \left(x^{(N)}_j-\textnormal{i}\right)^m\right|^{n_m}\right)_{N=1}^{\infty}
\eeas
is uniformly integrable and then it suffices to prove that
\beas
	\left(\prod_{\ell=1}^{k}\left(\frac{1}{N^{\ell}}\left|\mathrm{e}_{\ell}\big(x_1^{(N)},x_2^{(N)},\ldots,x_N^{(N)}\big)\right|\right)^{r_{\ell}}\right)_{N=1}^{\infty}
\eeas
is uniformly integrable for $1\leq \sum_{\ell=1}^{k}\ell r_{\ell}= \sum_{m=2}^{k}mn_{m}$. This follows from \cite[Proposition 2.11]{AGKW}. Note that the sequence of random variables $\{\frac{1}{N} {\rm Tr}(X_N)\}_{N\geq 1}$ converges almost surely to $\mathsf{X}_{1}(w)$. Then, using Lemma \ref{uniform integrability}, for any $m$ with $m<\textnormal{Re}(2w)+1$, 
\bea\label{0328afternoon1}
\frac{\mathrm{d}^m}{\mathrm{d}t_1^m} \mathbb{E}_N^{(w)}\left[\mathrm{e}^{-\textnormal{i}t_1\sum_{j=1}^N \frac{x^{(N)}_j}{N}}\right] &\xrightarrow[]{N\to \infty} \displaystyle\frac{\mathrm{d}^m}{\mathrm{d}t_{1}^m} \mathbb{E}\left[\mathrm{e}^{-\textnormal{i}t_1\mathsf{X}_1(w)}\right]
\eea
holds for $t_{1}\in \mathbb{R}$. Now by limits \eqref{eq:convergenceforpainleve0328} and \eqref{0328afternoon1}, and by Proposition \ref{structureforgeneralfinitesize}, we have that \eqref{formula10328} holds with the polynomials $\mathcal{A}_m^{(w,\overline{w})}(t_{1})$ given by,
    \begin{equation*}
        \mathcal{A}_m^{(w,\overline{w})}(t_{1})=\lim_{N\to \infty}\frac{P_{m}^{(N,w,\overline{w})}(t_{1})}{N^{\sum_{q=2}^k (q-1)n_q}}.
    \end{equation*}
This establishes the structure claimed in Theorem \ref{painlevethm0328}, except for the claim that the degrees of the polynomials $t_{1}^{m-1} \mathcal{A}_m^{(w,\overline{w})}(t_{1})$ in $t_1$ does not exceed $\sum_{q=2}^k (q-1)n_q-1$. To prove this, we use the following process. We begin by expanding \( \mathcal{S}^{(n_2,\ldots,n_k)}_{N,\ell,q}(v, w; t_1) \), defined in equation \eqref{definitionofpsi2} below, as a power series in \( N \):
\[
	\mathcal{S}^{(n_2,\ldots,n_k)}_{N,\ell,q}(v, w; t_1) = N^{\ell + \sum_{m=2}^{k}(m-1)n_m}\mathcal{R}^{(n_2,\ldots,n_k)}_{N,\ell,q}(v, w; t_1) + \mathcal{O}\left(N^{\ell + \sum_{m=2}^{k}(m-1)n_m - 1}\right),
\]
for all \( 1 \leq q \leq\ell \). A similar expansion applies to \( \mathcal{S}^{(n_2,\ldots,n_k)}_N(v, w; t_1) \), which is defined in equation \eqref{definitionofpsi1} below. We then examine the leading-order behaviour in \( N \) of the recursive relations satisfied by \( \mathcal{S}^{(n_2,\ldots,n_k)}_{N,\ell,q} \) and \( \mathcal{S}^{(n_2,\ldots,n_k)}_N \). These relations are similar to  those  in \cite[Propositions 4.13 and 4.15]{AGKW}. By comparing the highest order terms in \( N \), we obtain corresponding recursive formulas for the coefficients \( \mathcal{R}^{(n_2,\ldots,n_k)}_{N,\ell,q}(v, w; t_1) \). Comparing these with the recursive formulas for
\(\mathcal{S}^{(n_2,\ldots,n_k)}_{N,\ell,q}(v,w;t_1)\), we note that
some terms vanish because they contribute only to subleading orders in \(N\). Then, applying the inductive process described in the proof of \cite[Theorem 1.9]{AGKW} shows that the polynomials in $t_1$ contained in the terms $\mathcal{R}_{N,\ell,q}^{(n_2,\ldots,n_k)}(v,w;t_{1})$ have the desired degrees as stated in Theorem \ref{painlevethm0328}.  Finally, we let $v=\overline{w}$ and this completes the proof of the theorem.
\end{proof}
In the remaining part of this section, we prove Proposition \ref{mainproposition1}. We begin by writing $\mathcal{F}_{n_{2},\ldots,n_{k}}(N; v, w)(t_1)$ as a Hankel determinant. The approach follows a similar argument to that of \cite[Proposition 4.2]{AGKW}, with the key difference being that the determinant of the Hankel matrix in that case involves entries corresponding to the multivariate generating function of the Fourier transform of a certain function against the symmetric Cauchy-type weight $1/(1+x^2)^{N+w}$. In contrast, this paper replaces this weight with the non-symmetric  
$
1/((1 + \im x)^{N+w}(1 - \im x)^{N + \overline{w}}),
$
see, e.g., \cite[p.~398, formula~(9)]{GR} for its Fourier transform.
We omit the details of the proof here.

\begin{prop}\label{proprepresented as a Hankel0321}
Let $N\in\mathbb{N}$ and $v \in \mathbb{Z}_{\geq 0}$.
Let $k\geq 2$ and $n_{2},\ldots,n_{k}\geq 0$ be integers. Suppose $w\in \mathbb{C}$ satisfies \eqref{restrictiononparameters2h0318}. Let $U(a,b;z)$ be the confluent hypergeometric function of the second kind, and define
\bea\label{definition of am0318}
	 \xi_{m}(v,w;t):=U(1-N-v, 2-2N-v-w+m; 2t)\e^{-t}.
\eea 
Let $\mathcal{F}_{n_{2},\ldots,n_{k}}(N; v,w)$ be defined as in \eqref{definitionoffl}. Then, we have for $t_{1}\geq 0$,
\begin{align}
	(-2\im)^{-\sum_{q=2}^k q n_{q}}&\,\mathcal{F}_{n_{2},\ldots,n_{k}}(N; v,w)(Nt_{1})\\
=&\,\,
\mathrm{c}_{N} 
 \prod_{q=2}^k \frac{\partial^{n_q}}{\partial t_q^{n_{q}}}
 \det_{0\leq i,j\leq N-1}\left[\sum_{m_2,\ldots,m_k=0}^\infty \frac{t_2^{m_2}\cdots t_k^{m_k}}{m_2 !\cdots m_k!} \xi_{i+j+\sum_{q=2}^k q m_q}(v,w;t_1)
\right]\Bigg|_{t_2=\ldots=t_k=0}\nonumber,
\end{align}
where
\bea\label{definitioncnsnew0318}
\mathrm{c}_{N}=(-1)^{\frac{N}{2}(N-1)}N!\frac{2^{-N(N+w+v)}(2\pi)^{N}}{\Gamma(N+v)^{N}}.
\eea
\end{prop}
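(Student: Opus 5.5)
The plan is to follow the route of \cite[Proposition 4.2]{AGKW}, replacing the symmetric Cauchy weight with the non-symmetric one. First, we encode the power-sum insertions through a generating function. Introduce auxiliary variables $t_2,\ldots,t_k$ and observe that
\[
\prod_{q=2}^k\Big(\sum_{j=1}^N(x_j-\im)^q\Big)^{n_q}=\prod_{q=2}^k\frac{\partial^{n_q}}{\partial t_q^{n_q}}\prod_{j=1}^N\exp\Big(\sum_{q=2}^k t_q(x_j-\im)^q\Big)\Big|_{t_2=\cdots=t_k=0}.
\]
Under the integral in \eqref{definitionoffl} we insert this identity with $t_1$ replaced by $Nt_1$, so that the exponential factor becomes $\mathrm{e}^{-\im t_1 x_j}$. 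Derivatives in $t_q$ can be exchanged with the $x$-integral near $t_q=0$ if we expand the exponential as a power series first. Concretely, we work with the truncated sums $\sum_{m_2,\ldots,m_k}\prod_q t_q^{m_q}(x-\im)^{qm_q}/m_q!$. Only finitely many terms survive the derivatives at $0$, and the total degree $\sum_q qn_q$ is integrable against the weight by \eqref{restrictiononparameters2h0318}. This makes the interchange legitimate.

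Next, we apply Andr\'eief's identity (Heine's formula). This turns the $N$-fold integral with $\Delta(\mathbf{x})^2$ into $N!\det[\mu_{i+j}]_{i,j=0}^{N-1}$. Here the Vandermonde is written in the shifted variables $y=x-\im$, which is allowed by translation invariance of the Vandermonde. The resulting moments are
\[
\mu_{\ell}=\int_{\mathbb{R}}(x-\im)^{\ell}\,\mathrm{e}^{\sum_q t_q(x-\im)^q}\,\mathrm{e}^{-\im t_1x}(1+\im x)^{-N-w}(1-\im x)^{-N-v}\,\mathrm{d}x.
\]
Expanding in the $t_q$, each coefficient is a Fourier-type integral
\[
I_m(t_1)=\int_{\mathbb{R}}(x-\im)^{m}\mathrm{e}^{-\im t_1x}(1+\im x)^{-N-w}(1-\im x)^{-N-v}\mathrm{d}x,\qquad m=\ell+\sum_q qm_q .
\]

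The key computation is to evaluate $I_m$. We use $(x-\im)=-\im(1+\im x)$, so $(x-\im)^m(1+\im x)^{-N-w}=(-\im)^m(1+\im x)^{-N-w+m}$. This reduces $I_m$ to the Fourier transform already displayed before Proposition \ref{Hankelrepresentation}, now with exponents $(N+w-m,N+v)$, cf.\ \cite[p.~398, (9)]{GR}. For $\xi=t_1\ge0$ this gives
\[
I_m=(-\im)^m\frac{2^{2-2N-v-w+m}\pi}{\Gamma(N+v)}\mathrm{e}^{-t_1}U(1-N-v,2-2N-v-w+m;2t_1)
=(-\im)^m\frac{2^{2-2N-v-w+m}\pi}{\Gamma(N+v)}\,\xi_m(v,w;t_1).
\]
The assumption $v\in\mathbb{Z}_{\ge0}$ (or, more generally, the principal branches used there) guarantees that the formula applies. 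Convergence at $t_1=0$ requires $\operatorname{Re}(2N+v+w-m)>1$ for $m\le 2N-2+\sum_q qn_q$, which is exactly \eqref{restrictiononparameters2h0318}.

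The remaining step, and the main bookkeeping obstacle, is to pull the $m$-dependent prefactor $(-\im)^m2^m=(-2\im)^m$ out of the determinant. The factor $(-2\im)^{i+j}$ splits off as $(-2\im)^{i}(-2\im)^{j}$ across rows and columns, contributing $(-2\im)^{N(N-1)}=(-1)^{N(N-1)/2}2^{N(N-1)}$. The factors $(-2\im)^{qm_q}$ are absorbed by the rescaling $t_q\mapsto(-2\im)^q t_q$. Under that rescaling, $\partial_{t_q}^{n_q}$ produces exactly $(-2\im)^{qn_q}$, which accounts for the prefactor $(-2\im)^{-\sum qn_q}$ on the left-hand side. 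The constant $\pi 2^{2-2N-v-w}/\Gamma(N+v)$ appears in each of the $N$ rows. Combined with $N!$ and $2^{N(N-1)}$, it should collapse to $\mathrm{c}_N$ in \eqref{definitioncnsnew0318}. I expect only this constant check, including the signs and powers of $2$, to need careful verification. The structural part follows verbatim from \cite[Proposition 4.2]{AGKW}.
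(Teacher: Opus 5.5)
Your proposal is correct and is essentially the argument the paper intends: the paper omits the details and cites only \cite[Proposition 4.2]{AGKW} and the Fourier transform \cite[p.~398, formula~(9)]{GR}, and you fill them in with the generating function for the power sums, Andr\'eief's identity in the shifted variables $x-\im$, and $(x-\im)^m=(-\im)^m(1+\im x)^m$, which reduces each moment to that Fourier transform with exponents $(N+w-m,\,N+v)$. Your constant check closes, since $N!\bigl(2^{2-2N-v-w}\pi/\Gamma(N+v)\bigr)^N(-2\im)^{N(N-1)}=\mathrm{c}_N$; one small correction is that $v\in\mathbb{Z}_{\geq 0}$ is not what makes the Fourier formula apply, but rather what makes $\xi_m$ a Laguerre polynomial times $\mathrm{e}^{-t}$, so the infinite $t_q$-series inside the determinant actually converges.
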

By symmetry, for $t_{1}<0$, we can provide an expression analogous to the one in Proposition \ref{proprepresented as a Hankel0321}.
\begin{prop}
Subject to the assumption in Proposition \ref{proprepresented as a Hankel0321}, let
\begin{align*}
	\mathcal{G}_{n_{2},\ldots,n_{k}}&\,(N; v,w)(t_{1})\nonumber\\
=&\,\,\int_{\mathbb{R}^N} \mathrm{e}^{-\im \frac{t_{1}}{N}\sum_{j=1}^{N}x_{j}}\prod_{q=2}^{k}\Big(\sum_{j=1}^{N}(x_{j}+\im)^q\Big)^{n_{q}}\left( \prod_{j=1}^N (1+\im x_j)^{-N-w} (1-\im x_j)^{-N-v}\right)\big(\Delta(\mathbf{x})\big)^2\mathrm{d}\mathbf{x}.
\end{align*}
Then, for $t_{1}<0$, with $\mathrm{c}_N$ as in \eqref{definitioncnsnew0318},
\begin{align*}
	(2\im)^{-\sum_{q=2}^k q n_{q}}&\,\mathcal{G}_{n_{2},\ldots,n_{k}}(N; v,w)(Nt_{1})\\
=&\,\,
\mathrm{c}_{N}
 \prod_{q=2}^k \frac{\partial^{n_q}}{\partial t_q^{n_{q}}}
 \det_{0\leq i,j\leq N-1}\left[\sum_{m_2,\ldots,m_k=0}^\infty \frac{t_2^{m_2}\cdots t_k^{m_k}}{m_2 !\cdots m_k!}
 \xi_{i+j+\sum_{q=2}^k q m_q}(w,v;-t_1)
\right]\Bigg|_{t_2=\ldots=t_k=0}.\nonumber
\end{align*}
\end{prop}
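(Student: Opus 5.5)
The plan is to reduce the case $t_1<0$ to the case $t_1\geq 0$ already handled in Proposition \ref{proprepresented as a Hankel0321}, using the reflection $\mathbf{x}\mapsto-\mathbf{x}$ of the integration variables in $\mathbb{R}^N$. The one delicate point is matching the prefactor $\mathrm{c}_N$ afterwards.

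\textbf{Step 1: the reflection.} Substitute $x_j\mapsto -x_j$ for all $j$ in the definition of $\mathcal{G}_{n_2,\ldots,n_k}(N;v,w)(t_1)$. The Lebesgue measure and $\Delta(\mathbf{x})$ are invariant. The exponential becomes $\mathrm{e}^{-\im\frac{(-t_1)}{N}\sum_j x_j}$, and each power sum transforms as
$$\sum_j(-x_j+\im)^q=(-1)^q\sum_j(x_j-\im)^q .$$
The weight becomes $(1-\im x_j)^{-N-w}(1+\im x_j)^{-N-v}$, which is the original weight with the roles of $v$ and $w$ interchanged. Hence
$$\mathcal{G}_{n_2,\ldots,n_k}(N;v,w)(t_1)=(-1)^{\sum_q qn_q}\,\mathcal{F}_{n_2,\ldots,n_k}(N;w,v)(-t_1).$$
Absolute convergence is preserved because the hypotheses \eqref{restrictiononparameters2h0318}, together with $\textnormal{Re}(v)>-1$ from Proposition \ref{mainproposition1}, are symmetric in $(v,w)$.

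\textbf{Step 2: apply the $t_1\geq 0$ result.} Since $-t_1>0$, Proposition \ref{proprepresented as a Hankel0321} with $(v,w)$ swapped expresses $(-2\im)^{-\sum_q qn_q}\mathcal{F}_{n_2,\ldots,n_k}(N;w,v)(N(-t_1))$ as the generating-function Hankel determinant built from $\xi_m(w,v;-t_1)$. Combining this with Step 1 and the identity
$$(2\im)^{-\sum_q qn_q}(-1)^{\sum_q qn_q}=(-2\im)^{-\sum_q qn_q}$$
gives exactly the claimed right-hand side, with $\xi_{i+j+\sum_q qm_q}(w,v;-t_1)$ as the entries.

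\textbf{Main obstacle: the constant.} In \eqref{definitioncnsnew0318}, the factors $(-1)^{N(N-1)/2}N!\,2^{-N(N+w+v)}(2\pi)^N$ are symmetric in $(v,w)$, but $\Gamma(N+v)^{-N}$ is not. A literal swap would therefore produce $\Gamma(N+w)^{-N}$ rather than $\mathrm{c}_N$. Likewise, if the hypothesis $v\in\mathbb{Z}_{\geq 0}$ in Proposition \ref{proprepresented as a Hankel0321} is used in its derivation, swapping would require $w\in\mathbb{Z}_{\geq0}$.

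To settle both issues I would not invoke Proposition \ref{proprepresented as a Hankel0321} as a black box. Instead I would rerun its derivation for negative frequencies: apply Andr\'eief's identity to the Fourier transform of $(1+\im x)^{-N-w}(1-\im x)^{-N-v}x^m$ at $\xi<0$. This transform is obtained from the $\xi\geq 0$ formula (cf. \cite[p.~398, (9)]{GR}) by conjugation symmetry, and it produces $U(1-N-w,\,2-2N-v-w+m;\,2|t_1|)\mathrm{e}^{-|t_1|}=\xi_m(w,v;-t_1)$ times a prefactor. I would then track that prefactor explicitly.

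If the prefactor comes out as $\Gamma(N+w)^{-N}$, I would reconcile it with $\mathrm{c}_N$ using Kummer's transformation \eqref{kummertransformation}, which exchanges the first parameters $1-N-v$ and $1-N-w$ of $U$ and so converts between the two normalisations. Otherwise the correct normalisation should be recorded with $\Gamma(N+w)$ in place of $\Gamma(N+v)$. The remaining steps (differentiating in $t_2,\ldots,t_k$ and setting them to zero) are identical to the $t_1\geq0$ case.
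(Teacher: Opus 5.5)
Your reflection $\mathbf{x}\mapsto-\mathbf{x}$ gives $\mathcal{G}_{n_2,\ldots,n_k}(N;v,w)(t_1)=(-1)^{\sum_q qn_q}\mathcal{F}_{n_2,\ldots,n_k}(N;w,v)(-t_1)$. Following it with the $t_1\geq0$ Hankel formula, with $v$ and $w$ interchanged, is exactly the paper's one-line ``by symmetry'' justification, and your Steps 1 and 2 are correct. Your worry about the constant is justified, and the literal swap gives the right answer. The prefactor must be
\[
\widetilde{\mathrm{c}}_N=(-1)^{\frac{N}{2}(N-1)}N!\,\frac{2^{-N(N+w+v)}(2\pi)^N}{\Gamma(N+w)^N}=\mathrm{c}_N\Big(\frac{\Gamma(N+v)}{\Gamma(N+w)}\Big)^{N}.
\]
So the statement as printed, with $\mathrm{c}_N$ from \eqref{definitioncnsnew0318}, is off by the factor $(\Gamma(N+v)/\Gamma(N+w))^N$.

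Your direct rerun shows why. Since $x+\im=\im(1-\im x)$, each entry is a transform of $(1+\im x)^{-N-w}(1-\im x)^{-N-v+m}$ at negative frequency. Closing the contour in the upper half-plane picks up the branch point of $(1+\im x)^{-N-w}$, which produces $1/\Gamma(N+w)$ independently of $m$. A concrete check: take $N=1$, all $n_q=0$, $v=0$, $w=2$. A residue at $x=\im$ gives $\mathcal{G}_{0,\ldots,0}(1;0,2)(t_1)=\pi\mathrm{e}^{t_1}\big(\tfrac{t_1^2}{2}-\tfrac{t_1}{2}+\tfrac14\big)$ for $t_1<0$. On the other side, $\xi_0(2,0;-t_1)=\mathrm{e}^{t_1}U(-2,-2,-2t_1)=\mathrm{e}^{t_1}(4t_1^2-4t_1+2)$. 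These match with $\widetilde{\mathrm{c}}_1=\pi/8$, whereas $\mathrm{c}_1=\pi/4$ gives twice the true value.

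So drop the Kummer fallback; it cannot work. Kummer's transformation \eqref{kummertransformation} turns $U(1-N-w,2-2N-v-w+m,z)$ into $z^{2N+v+w-m-1}U(N+v-m,2N+v+w-m,z)$. It replaces the first parameter by $N+v-m$; it does not exchange $1-N-w$ and $1-N-v$. In any case, the mismatch is an overall scalar $\neq1$ multiplying the same determinant, so no identity among the entries can remove it. Commit to $\widetilde{\mathrm{c}}_N$, i.e.\ $\mathrm{c}_N$ with $v$ and $w$ interchanged.

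Your remark on integrality is also to the point. When $1-N-v\in\mathbb{Z}_{\leq0}$, $U(1-N-v,b,\cdot)$ is a polynomial whose size grows only polynomially in $b$. That is what makes the series over $m_2,\ldots,m_k$ in Proposition \ref{proprepresented as a Hankel0321} converge. For $1-N-w\notin\mathbb{Z}_{\leq0}$, the entries $\xi_m(w,v;-t_1)$ grow factorially in $m$. So in your direct derivation, which does not need $w\in\mathbb{Z}_{\geq0}$, read the determinant as a formal power series in $t_2,\ldots,t_k$. Only finitely many $\xi_m$ survive the derivatives at $t_2=\cdots=t_k=0$, so nothing is lost.
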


Although Proposition \ref{proprepresented as a Hankel0321} imposes a restriction on $v\in \mathbb{Z}_{\geq 0}$, it is sufficient for our purposes via an analytic continuation argument based on Carlson's lemma, following a similar argument to that in \cite[Lemma 4.1]{AGKW}.
\begin{lem}\label{reductiontoints}
Assume the assertion of Proposition \ref{mainproposition1} holds for all $v\in \mathbb{Z}_{\geq 0}$, $w\in \mathbb{C}$ with $w$ as in \eqref{restrictiononparameters2h0318}. Then, the same statement holds for all $v\in \mathbb{C}$ with $\textnormal{Re}(v)>-1$, and $w\in \mathbb{C}$ as in \eqref{restrictiononparameters2h0318}.
\end{lem}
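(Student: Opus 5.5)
The plan is to use Carlson's lemma in the variable $v$, as in \cite[Lemma 4.1]{AGKW}. Fix $N$, $k$, the $n_q$, $w$ and $t_1\geq 0$, and set $n:=\sum_{q=2}^{k}qn_q$, $d:=\sum_{q=2}^k(q-1)n_q$. The aim is a polynomial identity in $v$, valid for every $v\in\mathbb{Z}_{\geq0}$, that extends to the half-plane $\textnormal{Re}(v)>-1$.

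Rewriting \eqref{mainstructure1} gives the form to continue. The right-hand side is
\[
\sum_{m=0}^{d} C_m(v)\,\frac{\partial^m}{\partial t_1^m}\mathcal{F}(N;v,w)(t_1),
\]
where each $C_m(v)$ is a polynomial in $v$ of degree at most $d$ (with coefficients depending on $N,w,t_1$). The assertion is that such polynomials exist for every $v$ and take a specific form. Neither the $C_m$ nor their $v$-dependence is a priori canonical, so the first step is to make them canonical. I would take the polynomials $P^{(N,v,w)}_m$ to be those produced by the recursive construction in the integer case, namely the recursions analogous to \cite[Propositions 4.13, 4.15]{AGKW}. These recursions have coefficients polynomial in $(N,v,w)$, so they define $P_m^{(N,v,w)}$ for all complex $v$ by the same formulas.

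Next, set
\[
\Phi(v):=\frac{\mathcal{F}_{n_2,\ldots,n_k}(N;v,w)(t_1)}{N^{\sum n_q}}-\text{RHS of }\eqref{mainstructure1}.
\]
By hypothesis $\Phi$ vanishes on $\mathbb{Z}_{\geq0}$. Its analyticity on $\textnormal{Re}(v)>-1$ follows from the integrals \eqref{definitionofflnopowerterms}, \eqref{definitionoffl}. The $v$-dependence enters only through $(1-\im x)^{-N-v}$, and together with $(1+\im x)^{-N-w}$ the integrand has modulus decaying like $|x|^{-2N-\textnormal{Re}(v+w)}$. Differentiation in $t_1$ brings down at most $d$ factors of $x_j$, and the polynomial factor has degree $n$. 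Integrability of all these pieces uses $\textnormal{Re}(v+w)>n-1$ and the Vandermonde degree count. Analyticity then follows from Morera's theorem and dominated convergence, exactly as in Lemma \ref{lem2}.

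The last ingredient is the growth bound Carlson requires, $|\Phi(v)|\leq Ce^{c|v|}$ with $c<\pi$ on $\textnormal{Re}(v)\geq0$. Writing $v=a+\im b$, we have
\[
|(1-\im x)^{-v}|=(1+x^2)^{-a/2}e^{b\arg(1-\im x)},
\]
and $|\arg(1-\im x)|<\pi/2$. Hence the integrals grow at most like $e^{\frac{\pi}{2}|b|}$ times a bound uniform in $a\geq0$, and the polynomial prefactors in $v$ are harmless. Carlson's lemma then gives $\Phi\equiv0$ on $\textnormal{Re}(v)\geq0$, and analytic continuation gives $\Phi\equiv0$ on $\textnormal{Re}(v)>-1$. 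The degree statements carry over because they are properties of the defining recursions. The case $w=v=s$ agreeing with \cite{AGKW} also carries over, since it is a specialization of the same formulas.

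The step I expect to be the main obstacle is making the coefficients canonical, that is, checking that the $P_m$ given in the integer case are genuinely the restrictions of fixed polynomials in $v$. Without this the right-hand side is not a single analytic function of $v$. If the recursive definition proves awkward, the fallback is a finite-degree argument. For each fixed $m$ the coefficient functions are polynomials of degree at most $d$ in $v$, so they are determined by their values at $v=0,\ldots,d$. One can then Lagrange-interpolate those values to define the polynomials for all $v$ and apply Carlson to the resulting $\Phi$.
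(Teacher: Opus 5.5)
\paragraph{Verdict.} Your plan (Carlson's lemma in $v$, with the $P_m^{(N,v,w)}$ taken to be the polynomials produced by the recursion) is the kind of argument the paper points to. However, the growth estimate, which is the only substantive input to Carlson's lemma, is wrong, and once it is corrected the argument as you set it up fails.

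\paragraph{The growth estimate.} The $v$-dependence of the integrand is $\prod_{j=1}^{N}(1-\im x_j)^{-N-v}$. So for $v=a+\im b$ its modulus contains $\exp\bigl(b\sum_{j=1}^{N}\arg(1-\im x_j)\bigr)$, which is bounded by $e^{N\pi|b|/2}$, not by $e^{\pi|b|/2}$. You dropped the product over $j$, and your bound is correct only for $N=1$. For $N\geq 2$ this gives exponential type $N\pi/2\geq\pi$ on vertical lines. Carlson's lemma requires type strictly less than $\pi$, and $\sin(\pi v)$ shows that this cannot be relaxed.

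\paragraph{The loss is real.} Combine Proposition \ref{integralrepresentation} (the same computation works with $\overline{w}$ replaced by an independent $v$) with the normalisation \eqref{definitionoftheconstant0407}. For $t_1>0$ this gives $\mathcal{F}(N;v,w)(t_1)=c\,2^{-Nv}\prod_{j=1}^{N}\Gamma(v+j)^{-1}\,I(v)$, where $c$ is independent of $v$ and $I(v)=\int_{\mathbb{R}_+^N}\prod_{i}(y_i+2t_1/N)^{v}y_i^{w}\e^{-y_i}\prod_{j<k}(y_k-y_j)^2\,\d\mathbf{y}$. The integral $I(v)$ decays only polynomially in $|b|$, because the endpoint $y_i=0$ blocks exponential decay. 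So the individual terms of $\Phi$ genuinely grow like $e^{N\pi|b|/2}$ up to powers of $|b|$. No term-by-term bound can place $\Phi$ in Carlson's class, and vanishing on $\mathbb{Z}_{\geq0}$ then proves nothing.

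\paragraph{The missing idea.} Use the freedom in $w$ that the hypothesis gives you, and continue along a complex line on which $v$ and $w$ move together. This is exactly how $v=w=s$ move together in \cite{AGKW}, where the weight is $(1+x^2)^{-N-s}$.
\begin{itemize}
\item Regard $\Phi$ as a function of $(v,w)$, fix the target $(v_0,w_0)$, put $\sigma:=w_0-v_0$, and set $g(v):=\Phi(v,v+\sigma)$.
\item For real $x$ one has $\log(1+\im x)+\log(1-\im x)=\ln(1+x^2)$. Hence the weight becomes $\prod_{j}(1+\im x_j)^{-N-\sigma}(1-\im x_j)^{-N}(1+x_j^2)^{-v}$. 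Here $|(1+x_j^2)^{-v}|=(1+x_j^2)^{-\textnormal{Re}(v)}$ does not depend on $\textnormal{Im}(v)$ at all.
\item With your $n=\sum_q qn_q$, the admissible set $\{\textnormal{Re}(v)>-1,\ \textnormal{Re}(v+\sigma)>-1,\ \textnormal{Re}(2v+\sigma)>n-1\}$ is a half-plane $\textnormal{Re}(v)>v_*$ containing $v_0$.
\item On this half-plane $g$ is analytic. On every smaller half-plane $\textnormal{Re}(v)\geq v_1>v_*$ it is bounded by a polynomial in $|v|$.
\item $g$ vanishes at every integer $v>v_*$, since then $v\in\mathbb{Z}_{\geq0}$ and $(v,v+\sigma)$ is admissible.
\end{itemize}
Carlson's lemma after an integer shift, followed by the identity theorem, gives $g(v_0)=0$.

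\paragraph{Two smaller points.}
\begin{itemize}
\item Even with $w$ fixed, your $\Phi$ is defined only on $\textnormal{Re}(v)>\max(-1,n-1-\textnormal{Re}(w))$, not on all of $\textnormal{Re}(v)>-1$. So a shift is needed in any case.
\item The Lagrange-interpolation fallback is sound only if the integer-case coefficients already come from one polynomial family in $v$, in which case it is redundant. The coefficients in \eqref{mainstructure1} are not unique in general: already for $N=1$, $\mathcal{F}$ satisfies Kummer's second-order linear ODE in $t_1$. Interpolating choices made separately at $v=0,\ldots,d$ therefore does not force $\Phi$ to vanish at the remaining integers.
\end{itemize}
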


Based on Lemma \ref{reductiontoints}, from now on, we will restrict our attention to $v\in \mathbb{Z}_{\geq 0}$ in proving Proposition \ref{mainproposition1}.

\subsection{Setup for the proof of Proposition \ref{mainproposition1}}
Since we focus on the large $N$-behaviour of  
\beas
\det_{0\leq i,j\leq N-1}\left[ \sum_{m_2,\ldots,m_k=0}^\infty \frac{t_2^{m_2}\ldots t_k^{m_k}}{m_2 !\ldots m_k!} \xi_{i+j+\sum_{q=2}^k q m_q}(v,w;t_1)
\right]\Bigg|_{t_2=\cdots=t_k=0},
\eeas
we first apply the variable changes  $t_{1}\rightarrow t_{1}/N$, $\ldots$, $t_{k}\rightarrow t_{k}/N$ and introduce the shorthand
\bea\label{definitionofphi0324}
    \boldsymbol{\xi}_{n}(v,w; t_1,\ldots,t_k):=\sum_{m_2,\ldots,m_k=0}^\infty \frac{(t_2/N)^{m_2}\cdots (t_k/N)^{m_k}}{m_2 !\cdots m_k!} \xi_{n+\sum_{l=2}^k l m_l}(v,w;t_{1}/N).
\eea
Accordingly we also define
\[
\mathbf{\Xi}_{N}(v,w;t_1,\ldots,t_k):= \det_{0\leq i,j \leq N-1} \Big[\boldsymbol{\xi}_{i+j}(v,w; t_1,\ldots,t_k) \Big].
\]
which is well-defined by Proposition \ref{proprepresented as a Hankel0321}. Now, consider the following shifted and weighted versions of the Hankel matrix $\mathbf{\Xi}_{N}(v,w; t_1,\ldots,t_k)$: namely, for each integer partition $\boldsymbol{\lambda}=(\lambda_1\geq \cdots \geq \lambda_p > 0)$, we let
\beas
\mathbf{\Xi}_{N,\boldsymbol{\lambda}}(v,w;t_1,\ldots,t_{k}):=\det_{0\leq i,j\leq N-1}\left[\boldsymbol{\xi}_{i+j+\lambda_{N-j}}(v,w;t_1,\ldots,t_{k})\right]
\eeas
where if $p\leq N-1$, we let 
$\lambda_{p+1}=\cdots=\lambda_N=0.$ When $p>N$, we define $\mathbf{\Xi}_{N,\boldsymbol{\lambda}}(v,w;t_1,\ldots,t_{k})=0$.
Note that in the special case, $\boldsymbol{\lambda}=\emptyset = (0,0,\ldots,0)$, one has $\mathbf{\Xi}_{N,\emptyset}=\mathbf{\Xi}_N$.
Related to this, we also define the weighted version:
\[
\mathbf{\Xi}_{N,\boldsymbol{\lambda}}^{(\omega)}(v,w; t_1,\ldots,t_{k}) =
\det_{0\leq i,j\leq N-1}\left[(i+j+\lambda_{N-j})\boldsymbol{\xi}_{i+j+\lambda_{N-j}}(v,w; t_1,\ldots,t_{k})\right].
\]
Finally, we define the main objects that will allow us to obtain recurrences for derivatives of $\mathbf{\Xi}_{N,\boldsymbol{\lambda}}$.

\begin{defn}\label{shiftoperator0320}
Let $m\in\mathbb{Z}_{\geq 0}$. For each integer partition $\boldsymbol{\lambda}=(\lambda_1,\ldots, \lambda_p)$ with $p\leq N$, we define the functions $\mathbf{\Xi}_{N,\boldsymbol{\lambda},m}(v,w;t_1,\ldots,t_{k})$ by
\beas
\mathrm{Tr}\left[\mathrm{adj}\left(\left[\boldsymbol{\xi}_{i+j+\lambda_{N-j}}(v,w; t_1,\ldots,t_{k})\right]_{i,j=0}^{N-1}\right)\left[\boldsymbol{\xi}_{i+j+\lambda_{N-j}+m}(v,w; t_1,\ldots,t_{k})\right]_{i,j=0}^{N-1}\right], 
\eeas
and $\mathbf{\Xi}_{N,\boldsymbol{\lambda},m}^{(\omega)}(v,w;t_1,\ldots,t_{k})$ by
\begin{equation*}
\mathrm{Tr}\left[\mathrm{adj}\left(\left[\boldsymbol{\xi}_{i+j+\lambda_{N-j}}(v,w; t_1,\ldots,t_{k})\right]_{i,j=0}^{N-1}\right)\left[(i+j+\lambda_{N-j}+m)\boldsymbol{\xi}_{i+j+\lambda_{N-j}+m}(v,w; t_1,\ldots,t_{k})\right]_{i,j=0}^{N-1} \right],
\end{equation*}
where $\mathrm{adj}$ denotes the adjugate of a given matrix, i.e. the transpose of its cofactor matrix.
\end{defn}
The following proposition from the known properties of the confluent hypergeometric function, cf. \cite[p.257, formula (5) and p.258, formula (10)]{EMOT}.
\begin{prop}\label{basic recursive formula}
Let $\alpha \in \mathbb{C}$. Let $\xi_\alpha(v,w;t)$ be defined as in \eqref{definition of am0318}. Then
\[
\frac{\mathrm{d}}{\d t} \xi_\alpha(v,w;t) = \xi_\alpha(v,w;t) - 2\xi_{\alpha+1}(v,w;t)
\]
\[
\xi_{\alpha+2}(v,w;t)= \frac{2-2N-v-w+\alpha+2t }{2t} \xi_{\alpha+1}(v,w;t)
- \frac{1-N-w+\alpha}{2t } \xi_{\alpha}(v,w;t).
\]
\end{prop}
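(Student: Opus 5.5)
Both identities come from standard contiguous relations for the confluent hypergeometric function $U(a,b,z)$, applied with $z=2t$ and then multiplied by $\mathrm{e}^{-t}$. Throughout I abbreviate
\[
a:=1-N-v,\qquad b:=2-2N-v-w,\qquad U_\alpha(z):=U(a,b+\alpha,z),
\]
so that $\xi_\alpha(v,w;t)=U_\alpha(2t)\,\mathrm{e}^{-t}$. The only inputs are the two identities cited from \cite{EMOT}, which in DLMF form read
\[
U(a,b,z)-\frac{\partial}{\partial z}U(a,b,z)=U(a,b+1,z),\qquad (b-a-1)U(a,b-1,z)+(1-b-z)U(a,b,z)+zU(a,b+1,z)=0 .
\]
Both hold for all complex $a,b$ and $\textnormal{Re}(z)>0$, so no restriction on $\alpha$ is needed.

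\textbf{Derivative formula.} By the chain rule,
\[
\frac{\mathrm{d}}{\mathrm{d}t}\xi_\alpha=\big(2U_\alpha'(2t)-U_\alpha(2t)\big)\mathrm{e}^{-t}.
\]
The first identity, applied with $b$ replaced by $b+\alpha$, gives $U_\alpha'(2t)=U_\alpha(2t)-U_{\alpha+1}(2t)$. Substituting, the right-hand side becomes
\[
\big(U_\alpha(2t)-2U_{\alpha+1}(2t)\big)\mathrm{e}^{-t}=\xi_\alpha-2\xi_{\alpha+1}.
\]

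\textbf{Three-term recurrence.} Apply the second identity with $b$ replaced by $b+\alpha+1$ and $z=2t$:
\[
(b+\alpha-a)\,U_\alpha(2t)-(b+\alpha+2t)\,U_{\alpha+1}(2t)+2t\,U_{\alpha+2}(2t)=0 .
\]
Solve for $U_{\alpha+2}$, multiply by $\mathrm{e}^{-t}$, and use $b-a=1-N-w$ and $b=2-2N-v-w$. This gives
\[
\xi_{\alpha+2}=\frac{2-2N-v-w+\alpha+2t}{2t}\,\xi_{\alpha+1}-\frac{1-N-w+\alpha}{2t}\,\xi_\alpha,
\]
which is the claimed recurrence.

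The only real risk is sign or normalisation mismatches between the conventions of \cite{EMOT} and the form used above. As a safeguard, I would re-derive both identities from the integral representation \eqref{e2}. Differentiating under the integral sign produces an extra factor $-t$. Combining this with the decomposition $t=(1+t)-1$ gives the first identity. For the second, integrate by parts the derivative $\frac{\mathrm{d}}{\mathrm{d}t}\big[\mathrm{e}^{-zt}t^{a}(1+t)^{b-a}\big]$ along the Hankel-type contour; the boundary terms vanish there, and the result is the three-term relation.
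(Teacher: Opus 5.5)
Your proposal is correct and is essentially the paper's argument: the paper simply invokes the same two contiguous relations for $U$ from \cite{EMOT}, and your substitution $a=1-N-v$, $b\mapsto 2-2N-v-w+\alpha$, $z=2t$, together with $b-a=1-N-w$, reproduces both formulas exactly. In your optional integral-representation check, integrating $\frac{\mathrm{d}}{\mathrm{d}t}\big[\mathrm{e}^{-zt}t^{a}(1+t)^{b-a}\big]$ by parts produces $t^{a}$-terms, which you must rewrite via $t^{a}=t^{a-1}\big((1+t)-1\big)$ to stay within the family $U(a,\cdot,z)$; this yields $zU(a,b+2,z)-(z+b)U(a,b+1,z)+(b-a)U(a,b,z)=0$, which is the three-term relation you need.
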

From the last Proposition we deduce the following two recurrence relations for $\boldsymbol{\xi}_{n}(v,w; t_{1},\ldots,t_{k})$.
\begin{lem}
For each integer $n\geq 0$, 
\bea\label{partialderivativewithrespecttot10321}
N\frac{\partial \boldsymbol{\xi}_{n}}{\partial t_1}=\boldsymbol{\xi}_{n}-2\boldsymbol{\xi}_{n+1},
\eea
and
\bea\label{indexaddedby20321}
\boldsymbol{\xi}_{n+2}&=&\frac{N(N+w-n-1)}{2t_1+2t_2}\boldsymbol{\xi}_{n}+\left(\frac{N(2-2N-v-w+n)}{2(t_1+t_2)}+\frac{t_{1}}{t_{1}+t_{2}}\right)  \boldsymbol{\xi}_{n+1}\nonumber\\
&&+ \frac{1}{2(t_{1}+t_{2})}\sum_{q=3}^{k}((q-1)t_{q-1}-qt_{q})\boldsymbol{\xi}_{n+q}+\frac{kt_k}{2(t_1+t_2)}  \boldsymbol{\xi}_{n+k+1}.
\eea
\end{lem}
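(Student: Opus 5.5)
The plan is to derive both recurrences term by term from Proposition \ref{basic recursive formula}, applied at $t=t_1/N$ and with index $\alpha=n+\sum_{l=2}^k l m_l$, and then to resum over $(m_2,\ldots,m_k)$ with the weights $\prod_l (t_l/N)^{m_l}/m_l!$ that appear in \eqref{definitionofphi0324}.

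First, for \eqref{partialderivativewithrespecttot10321}: by the chain rule, $\partial_{t_1}\xi_\alpha(v,w;t_1/N)=\frac1N\big(\xi_\alpha-2\xi_{\alpha+1}\big)(v,w;t_1/N)$. I would multiply this identity by the weight and sum over all $(m_2,\ldots,m_k)$. Since $\alpha+1=(n+1)+\sum_l lm_l$, the two resulting series are exactly $\boldsymbol{\xi}_n$ and $\boldsymbol{\xi}_{n+1}$. Interchanging $\partial_{t_1}$ with the multiple series requires local uniform convergence of the series in \eqref{definitionofphi0324} and of its $t_1$-derivative. I would justify this with growth bounds on $U(a,b+m;z)$ in $m$ for fixed $z>0$, using the integral representation \eqref{e2}. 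These bounds are implicit in the well-definedness asserted via Proposition \ref{proprepresented as a Hankel0321}, and I would simply cite them there.

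Second, for \eqref{indexaddedby20321}: I would take the three-term relation of Proposition \ref{basic recursive formula} and multiply it by $2t=2t_1/N$, which gives
\begin{equation*}
\tfrac{2t_1}{N}\xi_{\alpha+2}=\big(2-2N-v-w+\alpha+\tfrac{2t_1}{N}\big)\xi_{\alpha+1}-(1-N-w+\alpha)\xi_\alpha .
\end{equation*}
The only difficulty is that $\alpha$ depends on the summation indices through $\sum_l lm_l$, so the coefficients are not constant under the resummation. The key reindexing identity handles this:
\begin{equation*}
\sum_{m}\frac{\prod_j(t_j/N)^{m_j}}{\prod_j m_j!}\,l m_l\,\xi_{r+\sum_j jm_j}=\frac{l t_l}{N}\,\boldsymbol{\xi}_{r+l},
\end{equation*}
which follows by shifting $m_l\mapsto m_l+1$. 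Applying it to both $\alpha$-terms and summing, I would obtain
\begin{equation*}
\tfrac{2t_1}{N}\boldsymbol{\xi}_{n+2}=\big(2-2N-v-w+n+\tfrac{2t_1}{N}\big)\boldsymbol{\xi}_{n+1}+\sum_{l=2}^k\tfrac{lt_l}{N}\boldsymbol{\xi}_{n+l+1}-(1-N-w+n)\boldsymbol{\xi}_n-\sum_{l=2}^k\tfrac{lt_l}{N}\boldsymbol{\xi}_{n+l}.
\end{equation*}

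Finally, I would rearrange this identity in three steps.
\begin{enumerate}
\item Move the $l=2$ term of the last sum, namely $\tfrac{2t_2}{N}\boldsymbol{\xi}_{n+2}$, to the left-hand side; this produces the factor $\tfrac{2(t_1+t_2)}{N}$ in front of $\boldsymbol{\xi}_{n+2}$.
\item Reindex the first sum by $q=l+1$ and combine it with the remaining terms $l=q\geq 3$ of the last sum. This yields $\sum_{q=3}^k\big((q-1)t_{q-1}-qt_q\big)\boldsymbol{\xi}_{n+q}+kt_k\boldsymbol{\xi}_{n+k+1}$, all divided by $N$.
\item Multiply through by $N/(2(t_1+t_2))$ and note that $-(1-N-w+n)=N+w-n-1$. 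This reproduces \eqref{indexaddedby20321} exactly.
\end{enumerate}

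The algebra is routine; the only step I expect to need care is the justification of the interchange of summation and differentiation, and the absolute convergence that permits the reindexing, for which I would rely on the growth bounds on $\xi_m$ mentioned above.
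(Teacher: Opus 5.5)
Your proposal is correct and is exactly the termwise deduction from Proposition~\ref{basic recursive formula} that the paper intends; the paper gives no further detail. The chain rule at $t=t_1/N$ gives the first identity, and the shift identity $\sum_{m}\prod_{j}\frac{(t_j/N)^{m_j}}{m_j!}\,l m_l\,\xi_{r+\sum_j j m_j}=\frac{l t_l}{N}\boldsymbol{\xi}_{r+l}$, followed by your three-step rearrangement, reproduces the second identity verbatim.

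One point should be made explicit. The local uniform convergence you need holds because this part of the paper assumes $v\in\mathbb{Z}_{\geq 0}$. Then $1-N-v\in\mathbb{Z}_{\leq 0}$, the contour in \eqref{e2} collapses to a residue at the origin, and $U(1-N-v,b,z)$ is a polynomial of degree $N+v-1$ in $b$. For $1-N-v\notin\mathbb{Z}_{\leq 0}$, by contrast, $U$ grows factorially in $b$, and the series \eqref{definitionofphi0324} diverges as soon as some $t_l\neq 0$ with $l\geq 2$. Your growth bound must therefore use this integrality.
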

We note that although the definition of  $\mathbf{\Xi}_{N}(v,w; t_1,\ldots,t_k)$ differs from that of $\mathbf{\Psi}_{N}( t_1,\ldots,t_k)$ as given in \cite[Section 4.3]{AGKW} (they coincide when $v=w$), the following Propositions \ref{aboutshiftedby10324} and \ref{representationshiftedby20324} are identical to \cite[Lemma 4.9]{AGKW} and 
\cite[Lemma 4.12]{AGKW}, respectively.

\begin{prop}\label{aboutshiftedby10324}
For each $q=2,\ldots,k$ and for any integer partition $\boldsymbol{\lambda}$,
\bea\label{partialderivativewithrespecttot1}
\mathbf{\Xi}_{N,\boldsymbol{\lambda},1} = - \frac{N}{2} \frac{\partial \mathbf{\Xi}_{N,\boldsymbol{\lambda}}}{\partial t_1} + \frac{N}{2} \mathbf{\Xi}_{N,\boldsymbol{\lambda}},
\eea
and
\bea\label{parialderivativewithrespecttotm}
\frac{\partial \mathbf{\Xi}_{N,\boldsymbol{\lambda}}}{\partial t_q}=
\frac{1}{N}\mathbf{\Xi}_{N,\boldsymbol{\lambda},q}.
\eea
\end{prop}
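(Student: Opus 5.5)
We prove the two identities directly from the definitions, using \eqref{partialderivativewithrespecttot10321} and \eqref{definitionofphi0324}; nothing specific to the two-parameter weight enters.

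\textbf{The first identity \eqref{partialderivativewithrespecttot1}.} We differentiate $\mathbf{\Xi}_{N,\boldsymbol{\lambda}}=\det[\boldsymbol{\xi}_{i+j+\lambda_{N-j}}]_{i,j=0}^{N-1}$ with respect to $t_1$ using Jacobi's formula,
\[
\partial_{t_1}\det M=\mathrm{Tr}\big[\mathrm{adj}(M)\,\partial_{t_1}M\big].
\]
Entrywise, \eqref{partialderivativewithrespecttot10321} gives
\[
N\partial_{t_1}\boldsymbol{\xi}_{i+j+\lambda_{N-j}}=\boldsymbol{\xi}_{i+j+\lambda_{N-j}}-2\boldsymbol{\xi}_{i+j+\lambda_{N-j}+1}.
\]
Substituting this, and using $\mathrm{Tr}[\mathrm{adj}(M)M]=N\det M$, we obtain
\[
N\partial_{t_1}\mathbf{\Xi}_{N,\boldsymbol{\lambda}}=N\mathbf{\Xi}_{N,\boldsymbol{\lambda}}-2\,\mathbf{\Xi}_{N,\boldsymbol{\lambda},1}
\]
by Definition \ref{shiftoperator0320}. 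Rearranging gives the claim.

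\textbf{The second identity \eqref{parialderivativewithrespecttotm}.} Fix $q\in\{2,\dots,k\}$. By \eqref{definitionofphi0324}, differentiating the series term by term in $t_q$ lowers $m_q$ by one and produces a factor $1/N$, with the index shifted by $q$. Thus
\[
\partial_{t_q}\boldsymbol{\xi}_n=\tfrac1N\boldsymbol{\xi}_{n+q}.
\]
Term-by-term differentiation is justified because the series converges locally uniformly in $(t_2,\dots,t_k)$: this is the same convergence that makes the determinant in Proposition \ref{proprepresented as a Hankel0321} well defined. Applying Jacobi's formula again gives
\[
\partial_{t_q}\mathbf{\Xi}_{N,\boldsymbol{\lambda}}=\tfrac1N\mathrm{Tr}\big[\mathrm{adj}(M)[\boldsymbol{\xi}_{i+j+\lambda_{N-j}+q}]\big]=\tfrac1N\mathbf{\Xi}_{N,\boldsymbol{\lambda},q}.
\]

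\textbf{Degenerate case.} If $\boldsymbol{\lambda}$ has more than $N$ parts, then $\mathbf{\Xi}_{N,\boldsymbol{\lambda}}=0$ by definition, so the identities are read with both sides zero. This matches the conventions in \cite[Lemmas 4.9, 4.12]{AGKW}.

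The only point requiring care is the analytic justification for exchanging differentiation and summation in the second identity. Given the exponential weights $1/m_q!$ and the bounds on $\xi_m$ for fixed $t_1>0$ (from \eqref{e2}), this is routine.
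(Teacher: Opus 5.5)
Your proof is correct and is the same argument the paper compresses into one line: Jacobi's formula applied to the adjugate--trace objects of Definition \ref{shiftoperator0320}, using \eqref{partialderivativewithrespecttot10321} for the $t_1$-derivative and the reindexing $\partial_{t_q}\boldsymbol{\xi}_n=\frac{1}{N}\boldsymbol{\xi}_{n+q}$ of \eqref{definitionofphi0324} for the $t_q$-derivative, a step the paper leaves implicit. The one point to state explicitly is why termwise differentiation is harmless: under the standing assumption $v\in\mathbb{Z}_{\geq 0}$, the parameter $1-N-v$ is a non-positive integer, so $\xi_m$ is $\mathrm{e}^{-t}$ times a Laguerre-type polynomial that grows only polynomially in $m$; for non-integer $v$ the $\xi_m$ grow factorially and \eqref{definitionofphi0324} would not converge for $t_q\neq 0$, so your appeal to ``bounds from \eqref{e2}'' works only because of this integrality.
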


\begin{proof}
This follows from Definition \ref{shiftoperator0320} and from relation \eqref{partialderivativewithrespecttot10321}.
\end{proof}

The next proposition represents $\Xi_{N,\boldsymbol{\lambda},2}$ as in terms of partial derivatives of $\Xi_{N,\boldsymbol{\lambda}}$.
\begin{prop}\label{representationshiftedby20324}
Let $\boldsymbol{\lambda}_{(2,1)}=(2)$, $\boldsymbol{\lambda}_{2,2}=(1,1)$ and $\boldsymbol{\lambda}_{1,1}=(1)$. Then 

\beas
\mathbf{\Xi}_{N,\boldsymbol{\lambda}_{2,1}} &=& \frac{N^2}{8} \frac{\partial^2 \mathbf{\Xi}_N}{\partial t_1^2} - \frac{N^2}{4} \frac{\partial \mathbf{\Xi}_N}{\partial t_1} + \frac{N^2}{8} \mathbf{\Xi}_N + \frac{N}{2} \frac{\partial \mathbf{\Xi}_N}{\partial t_2}, \\
\mathbf{\Xi}_{N,\boldsymbol{\lambda}_{2,2}} &=& \frac{N^2}{8} \frac{\partial^2 \mathbf{\Xi}_N}{\partial t_1^2} - \frac{N^2}{4} \frac{\partial \mathbf{\Xi}_N}{\partial t_1} + \frac{N^2}{8} \mathbf{\Xi}_N - \frac{N}{2} \frac{\partial\mathbf{\Xi}_N}{\partial t_2} .
\eeas
\end{prop}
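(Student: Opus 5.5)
We prove the two identities by expanding both determinants in Definition \ref{shiftoperator0320}, together with the partition-shifted determinants, as sums of shifted Hankel determinants. The only input is the first-order relation \eqref{partialderivativewithrespecttot10321}, i.e. $\boldsymbol{\xi}_{n+1}=\frac12\boldsymbol{\xi}_n-\frac N2\partial_{t_1}\boldsymbol{\xi}_n$. The second relation \eqref{indexaddedby20321} is not needed. This is the same route as \cite[Lemma 4.12]{AGKW}. Since the algebra only ever uses \eqref{partialderivativewithrespecttot10321} and never the specific form of $\xi_m$, the $v\neq w$ case goes through verbatim.

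Write the Hankel matrix columnwise with columns $C_j=(\boldsymbol{\xi}_{i+j})_{i}$. A shift $\boldsymbol{\lambda}$ adds $\lambda_{N-j}$ to the index in column $j$. Hence $\mathbf{\Xi}_{N,(2)}$ shifts the last column by $2$, and $\mathbf{\Xi}_{N,(1,1)}$ shifts each of the last two columns by $1$.

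\textbf{First derivative.} Differentiate $\mathbf{\Xi}_N$ in $t_1$ column by column, using $N\partial_{t_1}C_j=C_j-2C_j^{+}$, where $C_j^{+}$ denotes the column with index raised by one. In a Hankel determinant, raising column $j<N-1$ by one reproduces column $j+1$, so the determinant vanishes. Only the last column survives, giving
\[
N\partial_{t_1}\mathbf{\Xi}_N=N\mathbf{\Xi}_N-2\mathbf{\Xi}_{N,(1)}.
\]
This is \eqref{partialderivativewithrespecttot1} for $\boldsymbol{\lambda}=\emptyset$.

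\textbf{Second derivative.} Differentiate $\mathbf{\Xi}_{N,(1)}$ once more. Now two raisings survive: the last column goes to index shift $2$, and column $N-2$ goes to shift $1$. The result is
\[
N\partial_{t_1}\mathbf{\Xi}_{N,(1)}=N\mathbf{\Xi}_{N,(1)}-2\big(\mathbf{\Xi}_{N,(2)}+\mathbf{\Xi}_{N,(1,1)}\big).
\]
Substituting the first-derivative identity for $\mathbf{\Xi}_{N,(1)}$ expresses $\mathbf{\Xi}_{N,(2)}+\mathbf{\Xi}_{N,(1,1)}$ as
\[
\tfrac{N^2}{4}\partial_{t_1}^2\mathbf{\Xi}_N-\tfrac{N^2}{2}\partial_{t_1}\mathbf{\Xi}_N+\tfrac{N^2}{4}\mathbf{\Xi}_N .
\]
This is exactly twice the common part of the two claimed formulas.

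\textbf{The $t_2$-derivative.} From \eqref{parialderivativewithrespecttotm}, $N\partial_{t_2}\mathbf{\Xi}_N=\mathbf{\Xi}_{N,\emptyset,2}$. This quantity is a trace of the adjugate against the matrix with every index raised by $2$, which expands as $\sum_j$ (replace column $j$ by its $+2$ shift). For $j\le N-3$ the shifted column coincides with an existing column, so the term vanishes. For $j=N-2$ the shifted column is $C_N$, the first column beyond the matrix; reordering it past $C_{N-1}$ gives $-\mathbf{\Xi}_{N,(1,1)}$. For $j=N-1$ the term is $\mathbf{\Xi}_{N,(2)}$. So $N\partial_{t_2}\mathbf{\Xi}_N=\mathbf{\Xi}_{N,(2)}-\mathbf{\Xi}_{N,(1,1)}$, and adding and subtracting this from the second-derivative relation yields both claimed identities.

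\textbf{Main obstacle.} The delicate point is this last column-swap sign and the identification of the trace–adjugate with the column-replacement expansion. I would check them carefully, e.g. on the case $N=2$. Everything else is routine multilinearity.
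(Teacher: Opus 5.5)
Your argument is correct and is essentially the paper's proof. The paper also checks from Definition~\ref{shiftoperator0320} that $\mathbf{\Xi}_{N,(1),1}=\mathbf{\Xi}_{N,(2)}+\mathbf{\Xi}_{N,(1,1)}$ and $\mathbf{\Xi}_{N,\emptyset,2}=\mathbf{\Xi}_{N,(2)}-\mathbf{\Xi}_{N,(1,1)}$, then takes half-sum and half-difference and applies Proposition~\ref{aboutshiftedby10324}; your column-by-column $t_1$-differentiation is just that proposition written out. The sign you flagged is right: the $j=N-2$ term of the trace--adjugate expansion is $\det[C_0,\ldots,C_{N-3},C_N,C_{N-1}]=-\mathbf{\Xi}_{N,(1,1)}$, and for $N=1$ the $(1,1)$ terms vanish, consistent with the convention $\mathbf{\Xi}_{N,\boldsymbol{\lambda}}=0$ when $p>N$.
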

\begin{proof}
By Definition \ref{shiftoperator0320}, it is not hard to check that
\beas
\begin{pmatrix}
\mathbf{\Xi}_{N,\boldsymbol{\lambda}_{2,1}} \\
\mathbf{\Xi}_{N,\boldsymbol{\lambda}_{2,2}}
\end{pmatrix} &=& 
\begin{pmatrix}
1/2 & 1/2 \\
1/2 & -1/2
\end{pmatrix}
\begin{pmatrix}
\mathbf{\Xi}_{N,\boldsymbol{\lambda}_{1,1},1} \\
\mathbf{\Xi}_{N,\emptyset,2}
\end{pmatrix}.
\eeas
The claim then follows from Proposition \ref{aboutshiftedby10324}.
\end{proof}
By formula \eqref{indexaddedby20321} and the linearity of the trace, we obtain the following relation for $\mathbf{\Xi}_{N,\boldsymbol{\lambda},m}(v,w;t_1,\ldots,t_{k})$, which reduces to  \cite[Lemma 4.8]{AGKW} when $v=w$.
\begin{prop}\label{generalrecursiveforshiftbypartition0325}
Let $(m,k)\in\mathbb{Z}_{\geq 2}^2$ and let $\boldsymbol{\lambda}$ be a partition. Then
\begin{align*}
\mathbf{\Xi}_{N,\boldsymbol{\lambda},m}&\,(v,w;t_1,\ldots,t_{k})\\
=&\,\,\frac{N(N+w-1)}{2(t_1+t_2)} \mathbf{\Xi}_{N,\boldsymbol{\lambda},m-2}(v,w;t_1,\ldots,t_{k})
-\frac{N}{2(t_1+t_2)} \mathbf{\Xi}_{N,\boldsymbol{\lambda},m-2}^{(\omega)}(v,w;t_1,\ldots,t_{k}) \\
&+ \frac{N(1-2N-v-w)}{2(t_1+t_{2})}
\mathbf{\Xi}_{N,\boldsymbol{\lambda},m-1}(v,w;t_1,\ldots,t_{k})+\frac{N}{2(t_1+t_2)} \mathbf{\Xi}_{N,\boldsymbol{\lambda},m-1}^{(\omega)}(v,w;t_1,\ldots,t_{k})\\
& + \frac{t_{1}}{t_1+t_{2}}
\mathbf{\Xi}_{N,\boldsymbol{\lambda},m-1}(v,w;t_1,\ldots,t_{k})
+  \frac{kt_{k}}{2(t_1+t_{2})}
\mathbf{\Xi}_{N,\boldsymbol{\lambda},m+k-1}(v,w;t_1,\ldots,t_{k}) \\
& +\frac{1}{2(t_{1}+t_{2})}\sum_{l=3}^{k}\left((l-1)t_{l-1}-lt_{l}\right)\mathbf{\Xi}_{N,\boldsymbol{\lambda},m+l-2}(v,w;t_1,\ldots,t_{k}).
\end{align*}
\end{prop}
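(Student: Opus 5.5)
The plan is to read the identity off entrywise from the three-term relation \eqref{indexaddedby20321}, and then pass it through the linear functional
\[
\mathcal{L}_{\boldsymbol{\lambda}}(B):=\mathrm{Tr}\Big[\mathrm{adj}\Big(\big[\boldsymbol{\xi}_{i+j+\lambda_{N-j}}\big]_{i,j=0}^{N-1}\Big)\,B\Big],\qquad B\in\mathbb{C}^{N\times N}.
\]
By Definition \ref{shiftoperator0320}, $\mathbf{\Xi}_{N,\boldsymbol{\lambda},r}=\mathcal{L}_{\boldsymbol{\lambda}}\big([\boldsymbol{\xi}_{i+j+\lambda_{N-j}+r}]_{i,j}\big)$. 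Likewise $\mathbf{\Xi}^{(\omega)}_{N,\boldsymbol{\lambda},r}=\mathcal{L}_{\boldsymbol{\lambda}}\big([(i+j+\lambda_{N-j}+r)\boldsymbol{\xi}_{i+j+\lambda_{N-j}+r}]_{i,j}\big)$. Since $\mathcal{L}_{\boldsymbol{\lambda}}$ is linear in $B$, any identity between the matrices $[\boldsymbol{\xi}_{i+j+\lambda_{N-j}+r}]$ and their weighted versions transfers directly to the $\mathbf{\Xi}$'s, provided the coefficients do not depend on $(i,j)$. Here the coefficients are functions of $t_1,\ldots,t_k,N,v,w$ only.

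\emph{Step 1.} Fix $(i,j)$ and put $n:=i+j+\lambda_{N-j}+m-2\geq 0$, which is valid since $m\geq 2$. Apply \eqref{indexaddedby20321} to express $\boldsymbol{\xi}_{n+2}=\boldsymbol{\xi}_{i+j+\lambda_{N-j}+m}$ through $\boldsymbol{\xi}_n$, $\boldsymbol{\xi}_{n+1}$, $\boldsymbol{\xi}_{n+q}$ for $3\le q\le k$, and $\boldsymbol{\xi}_{n+k+1}$.

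\emph{Step 2.} Handle the $n$-dependence of the first two coefficients, which is the only point needing care. Split
\[
\frac{N(N+w-n-1)}{2(t_1+t_2)}\boldsymbol{\xi}_n=\frac{N(N+w-1)}{2(t_1+t_2)}\boldsymbol{\xi}_n-\frac{N}{2(t_1+t_2)}\,n\,\boldsymbol{\xi}_n .
\]
Since $n=i+j+\lambda_{N-j}+(m-2)$, the second term is exactly the $(i,j)$ entry of the weighted matrix defining $\mathbf{\Xi}^{(\omega)}_{N,\boldsymbol{\lambda},m-2}$. For the $\boldsymbol{\xi}_{n+1}$ term, write $n=(n+1)-1$, so that
\[
\frac{N(2-2N-v-w+n)}{2(t_1+t_2)}=\frac{N(1-2N-v-w)}{2(t_1+t_2)}+\frac{N(n+1)}{2(t_1+t_2)} .
\]
Because $n+1=i+j+\lambda_{N-j}+(m-1)$, this produces the $\mathbf{\Xi}_{N,\boldsymbol{\lambda},m-1}$ term and the $\mathbf{\Xi}^{(\omega)}_{N,\boldsymbol{\lambda},m-1}$ term. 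The extra piece $\frac{t_1}{t_1+t_2}\boldsymbol{\xi}_{n+1}$ gives the remaining $\mathbf{\Xi}_{N,\boldsymbol{\lambda},m-1}$ contribution.

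\emph{Step 3.} Identify the remaining terms. The index $n+q$ equals $i+j+\lambda_{N-j}+(m+q-2)$, so these terms give $\mathbf{\Xi}_{N,\boldsymbol{\lambda},m+l-2}$ for $l=3,\ldots,k$ with coefficient $((l-1)t_{l-1}-lt_l)/(2(t_1+t_2))$. The term $\boldsymbol{\xi}_{n+k+1}$ gives $\mathbf{\Xi}_{N,\boldsymbol{\lambda},m+k-1}$ with coefficient $kt_k/(2(t_1+t_2))$.

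\emph{Step 4.} Assemble the entrywise identities into a matrix identity and apply $\mathcal{L}_{\boldsymbol{\lambda}}$. This yields the stated formula.

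The main obstacle is only the bookkeeping in Step 2: the recursion coefficients depend on the running index, so they must be absorbed into weighted matrices carrying the correct shift, namely $m-2$ versus $m-1$. All other steps are direct consequences of linearity of the trace. Well-definedness of every series involved follows from Proposition \ref{proprepresented as a Hankel0321}.
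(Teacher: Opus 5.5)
Your proposal is correct and matches the paper's proof, which obtains this identity in one line from the recursion \eqref{indexaddedby20321} together with linearity of $B\mapsto\mathrm{Tr}[\mathrm{adj}(\cdot)\,B]$. Your Step 2 simply writes out the bookkeeping the paper leaves implicit: the index-dependent coefficients $n$ and $n+1$ are absorbed into $\mathbf{\Xi}^{(\omega)}_{N,\boldsymbol{\lambda},m-2}$ and $\mathbf{\Xi}^{(\omega)}_{N,\boldsymbol{\lambda},m-1}$, and that split is done correctly.
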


\subsection{Proof of Proposition \ref{mainproposition1}}
We observe that, following the setup in the preceding subsection, the proof of Proposition \ref{mainproposition1} is similar to that of \cite[Theorem 1.9]{AGKW}, with the main difference being that in \cite{AGKW} one has $v=w$, whereas for us $v$ and $w$ are two independent parameters. So we defer the proof to Appendix \ref{proofofmainproposition1}.

\subsection{Proof of Corollary \ref{painlevecor0324}}

We begin by providing a few examples of Proposition
\ref{mainproposition1} as follows, which lead to Corollary \ref{painlevecor0324}.

\begin{lem}\label{formulaforf1f20324}
Let $N\in\mathbb{N}$ and $v\in \mathbb{C}$ with  $\textnormal{Re}(v)>-1$. Then for $w\in \mathbb{C}$ with $\textnormal{Re}(v+w)>1$,
\bea
&&\mathcal{F}_{1}(N;v,w)(t_{1})\label{expressionforf1t10324}
\\
&=&
\left(-\frac{(v+w)N^2}{t_{1}}+2N\right)\frac{\mathrm{d}}{\d t_1}\mathcal{F}(N;v,w)(t_{1})+\left(\frac{(v-w)N^2}{t_{1}}-2N\right)\mathcal{F}(N;v,w)(t_{1})\nonumber.
\eea
and for $w\in \mathbb{C}$ with $\textnormal{Re}(v+w)>3$,
\begin{align}
\mathcal{F}_{2}&\,(N;v,w)(t_{1})\label{expressionforf2t10324} \\
&= \frac{4N^4}{t_{1}^2} \left(  (-\frac{v}{2}-\frac{w}{2}+\frac{t_{1}}{N})^2 + 1/2\right) \frac{\mathrm{d}^2}{\d t_1^2}\mathcal{F}(N;v,w)(t_{1}) \nonumber\\
&+ \frac{2N^4}{t_{1}^3} \left( -4\frac{t_{1}^3}{N^2}+4v\frac{t_{1}^2}{N}-4 \frac{t_{1}}{N}\left(-\frac{w^2}{4}N+\frac{v^2}{4}N-\frac{w}{4}-\frac{3v}{4}\right)-\frac{3}{2}(v+w)^2\right) 
	\frac{\mathrm{d}}{\d t_1} \mathcal{F}(N;v,w)(t_{1})\nonumber\\
&+ \frac{4N^4}{t_{1}^3}
\left( 
\frac{(v-w)^2t-2t+3v^2-3w^2}{4}
+\frac{-\left((v-w)t+3v/2-w/2\right)tN+t^3}{N^2}\right) \mathcal{F}(N;v,w)(t_{1})\nonumber.
\end{align}

\end{lem}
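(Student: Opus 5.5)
We prove Lemma \ref{formulaforf1f20324} by specialising the machinery of the preceding subsection to $k=2$, $n_2\in\{1,2\}$. By Lemma \ref{reductiontoints} (Carlson's lemma) it suffices to take $v\in\mathbb{Z}_{\geq 0}$, so that Proposition \ref{proprepresented as a Hankel0321} applies. After the rescaling $t_q\to t_q/N$ it gives
\[
\mathcal{F}_{n_2}(N;v,w)(t_1)=(-2\im)^{2n_2}\,\mathrm{c}_N\,N^{n_2}\,\frac{\partial^{n_2}}{\partial t_2^{n_2}}\mathbf{\Xi}_N(v,w;t_1,t_2)\Big|_{t_2=0},
\]
where the prefactor $N^{n_2}$ comes from the chain rule in $t_2$. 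The unperturbed quantity $\mathcal{F}(N;v,w)(t_1)$ equals $\mathrm{c}_N\,\mathbf{\Xi}_N|_{t_2=0}$ by the same proposition with all $n_q=0$. The task is therefore to express $\partial_{t_2}\mathbf{\Xi}_N$ and $\partial_{t_2}^2\mathbf{\Xi}_N$ at $t_2=0$ through $t_1$-derivatives of $\mathbf{\Xi}_N$.

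\textbf{Case $n_2=1$.} By \eqref{parialderivativewithrespecttotm}, $\partial_{t_2}\mathbf{\Xi}_N=N^{-1}\mathbf{\Xi}_{N,\emptyset,2}$. We apply Proposition \ref{generalrecursiveforshiftbypartition0325} with $m=2$, $k=2$, $\boldsymbol{\lambda}=\emptyset$, at $t_2=0$. This expresses $\mathbf{\Xi}_{N,\emptyset,2}$ through $\mathbf{\Xi}_{N,\emptyset,0}=N\mathbf{\Xi}_N$, $\mathbf{\Xi}_{N,\emptyset,1}$, and the weighted traces $\mathbf{\Xi}^{(\omega)}_{N,\emptyset,0}$ and $\mathbf{\Xi}^{(\omega)}_{N,\emptyset,1}$; the $kt_k$ term vanishes at $t_2=0$. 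The weighted traces are handled by the Hankel structure:
\begin{itemize}
\item $\mathbf{\Xi}^{(\omega)}_{N,\emptyset,0}=\sum_i 2i\,\mathbf{\Xi}_N=N(N-1)\mathbf{\Xi}_N$, using the adjugate identity.
\item $\mathbf{\Xi}^{(\omega)}_{N,\emptyset,1}$ reduces, by the same computation as in \cite[Lemma 4.9--4.12]{AGKW}, to a combination of $\mathbf{\Xi}_{N,\emptyset,1}$ and $\mathbf{\Xi}_{N,(1),1}$-type terms. It can in fact be rewritten via $\mathbf{\Xi}_{N,\emptyset,1}$ and $\mathbf{\Xi}_N$ using the row/column shift symmetry of Hankel matrices.
\end{itemize}
Then \eqref{partialderivativewithrespecttot1} converts $\mathbf{\Xi}_{N,\emptyset,1}=-\frac N2\partial_{t_1}\mathbf{\Xi}_N+\frac N2\mathbf{\Xi}_N$. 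Collecting terms, multiplying by $(-2\im)^2N=-4N$, and dividing by $\mathrm{c}_N$ should give \eqref{expressionforf1t10324}. The coefficient of $\partial_{t_1}\mathcal{F}$ comes out as $-(v+w)N^2/t_1+2N$, and the $t_1/(t_1+t_2)$ term produces the constant $2N$ pieces.

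\textbf{Case $n_2=2$.} We differentiate the identity $\partial_{t_2}\mathbf{\Xi}_N=N^{-1}\mathbf{\Xi}_{N,\emptyset,2}$ once more in $t_2$, keeping $t_2$ general before setting $t_2=0$. This requires $\partial_{t_2}$ of the right-hand side of Proposition \ref{generalrecursiveforshiftbypartition0325}, including the explicit $(t_1+t_2)^{-1}$ and $kt_k$ factors, and hence $\partial_{t_2}\mathbf{\Xi}_{N,\emptyset,m}$ for $m=1,2,3$. Derivatives of adjugate traces produce shifted-partition determinants. These are reduced with Proposition \ref{representationshiftedby20324}, which expresses $\mathbf{\Xi}_{N,(2)}$ and $\mathbf{\Xi}_{N,(1,1)}$ through $\partial_{t_1}^2$, $\partial_{t_1}$ and $\partial_{t_2}$ of $\mathbf{\Xi}_N$, together with the $n_2=1$ result for $\partial_{t_2}\mathbf{\Xi}_N$. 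The index-$3$ terms $\mathbf{\Xi}_{N,\emptyset,3}$ are lowered by one more application of Proposition \ref{generalrecursiveforshiftbypartition0325} with $m=3$. Everything then closes on $\mathbf{\Xi}_N$, $\partial_{t_1}\mathbf{\Xi}_N$ and $\partial_{t_1}^2\mathbf{\Xi}_N$, and after multiplying by $16N^2$ we read off \eqref{expressionforf2t10324}.

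\textbf{Main obstacle.} The main difficulty is the bookkeeping in the $n_2=2$ case, namely correctly handling the weighted traces $\mathbf{\Xi}^{(\omega)}$ and the $t_2$-derivatives of the $(t_1+t_2)^{-1}$ prefactors, where asymmetry in $v\ne w$ enters. As a check we will verify the final formulas against $v=w$, where they must reduce to \cite[Corollary 1.12]{AGKW}, and against small $N$ (e.g.\ $N=1$) by direct evaluation with $U$ and Proposition \ref{basic recursive formula}.
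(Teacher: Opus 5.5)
Your proposal is correct and follows the paper's own proof. Both use the Hankel representation with prefactor $(-4)^{n_2}N^{n_2}\mathrm{c}_N$; for $n_2=1$, Proposition \ref{generalrecursiveforshiftbypartition0325} with $k=m=2$, $\boldsymbol{\lambda}=\emptyset$, together with Proposition \ref{aboutshiftedby10324}; and for $n_2=2$, one more $t_2$-derivative, the $m=3$ recursion for $\mathbf{\Xi}_{N,\emptyset,3}$, and Proposition \ref{representationshiftedby20324}. The weighted traces you flag as the main obstacle reduce, identically in $t_2$, by the Hankel row/column-shift argument to $\mathbf{\Xi}^{(\omega)}_{N,\emptyset,1}=(2N-1)\mathbf{\Xi}_{N,\emptyset,1}$ and $\mathbf{\Xi}^{(\omega)}_{N,\emptyset,2}=\mathbf{\Xi}_{N,(1),1}+(2N-1)\mathbf{\Xi}_{N,\emptyset,2}$; carried through, the computation reproduces both stated formulas, with $t$ read as $t_1$ in the last line.
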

\begin{proof}
By Proposition \ref{proprepresented as a Hankel0321}, we have 
\beas
\mathcal{F}_{n_{2}}(N;v,w)(t_{1})
=\mathrm{c}_{N}(-4)^{n_{2}}N^{n_{2}}\frac{\partial^{n_{2}}}{\partial t_{2}^{n_{2}}}\mathbf{\Xi}_{N}(v,w;t_{1},t_{2})\Big|_{t_{2}=0}.
\eeas
Then, using Proposition~\ref{generalrecursiveforshiftbypartition0325} with $k=2$, $m=2$, and $\boldsymbol{\lambda}=\emptyset$, and evaluating at $t_{2}=0$, together with Proposition~\ref{aboutshiftedby10324}, we obtain equation \eqref{expressionforf1t10324}. For equation \eqref{expressionforf2t10324}, we first apply Proposition~\ref{generalrecursiveforshiftbypartition0325} with $k=2$, $m=2$, and $\boldsymbol{\lambda}=\emptyset$, which produces the term $\mathbf{\Xi}_{N,\emptyset,3}$. We then use Proposition~\ref{generalrecursiveforshiftbypartition0325} again, this time with $k=2$, $m=3$, to express this term. Next, we differentiate both sides of the resulting equation with respect to $t_{2}$ and evaluate at $t_{2}=0$. Finally, combining this with Propositions~\ref{aboutshiftedby10324} and~\ref{representationshiftedby20324} , we obtain equation~(\ref{expressionforf2t10324}).
\end{proof}
\begin{proof}[Proof of Corollary \ref{painlevecor0324}]
This follows from Lemma \ref{formulaforf1f20324} by setting $v=\overline{w}$, dividing \eqref{expressionforf1t10324} by $N^2$ and \eqref{expressionforf2t10324} by $N^4$ respectively,  and taking the limit as $N\rightarrow\infty$.
\end{proof}

\addcontentsline{toc}{section}{Appendix}

\appendix 
\section{Proof of Proposition \ref{fullrange}}\label{full range}
\begin{proof}
Firstly, we prove that $(\ref{solvedquestion})$ holds for $w,h\in \mathbb{C}$ with $\textnormal{Re}(w)>0$ and $0<\textnormal{Re}(2h)<\textnormal{Re}(2w)+1$. Let 
\begin{equation*}
	\mathsf{Y}_{N}:=\left|\frac{\sum_{j=1}^{N}x_{j}^{(N)}}{N}\right|^{2h}\ \ \ \ \  \textnormal{and}\ \ \ \ \ \  \mathsf{Y}:=\left|\mathsf{X}_{1}(w)\right|^{2h}.
\end{equation*}
Since $(\sum_{j=1}^{N}x_{j}^{(N)}/N)_{N=1}^{\infty}$ converges almost surely to $\mathsf{X}_{1}(w)$, $\mathsf{Y}_{N}$ converges almost surely to $\mathsf{Y}$. For a fixed $h$, choose a suitable $\delta>0$ such that $\textnormal{Re}(2h)(1+\delta)<\textnormal{Re}(2w)+1$, then by Lemma \ref{convergenceat0}, we have 
\begin{equation*}
	\sup_{N\geq 1}\E_{N}^{(w)}\Big[\left|\mathsf{Y}_{N}\right|^{1+\delta}\Big]<\infty,
\end{equation*}
i.e. the sequence $(\mathsf{Y}_{N})_{N=1}^{\infty}$ is uniformly integrable, and so $\E[\mathsf{Y}_{N}]$ converges to $\E[\mathsf{Y}]$ as $N\rightarrow \infty$, that is, \eqref{solvedquestion} holds.\smallskip

Secondly,  we show that \eqref{solvedquestion} holds for $w\in \mathbb{C}$, $h\in \mathbb{R}$ with $-1<\textnormal{Re}(2w)\leq 0$ and $0<h<\textnormal{Re}(w)+1/2$. By \cite[Lemma 2.12]{AKW}, it suffices to prove that
\bea\label{reduction1}
\mathbb{E}\left[\left(\sum_{i=1}^{\infty} (\alpha^+_i)^2 + \sum_{i=1}^{\infty} (\alpha^-_i)^2+\gamma_{2}\right)^{h}\right]<\infty,
\eea
where $X \mapsto \left(\{\alpha_{i}^{+}\}_{i=1}^{\infty},\{\alpha_{i}^{-}\}_{i=1}^{\infty},\mathsf{X}_{1}(w),\gamma_{2}\right)$ is a map from $\mathbf{H}$ to $\Omega$ with $\Omega$ in \eqref{definitionofOmega}. Here we recall that $\gamma_2=\delta-\sum(\alpha_{i}^{+})^2-\sum(\alpha_{i}^{-})^2$. Note that the pushforward measure of $m^{(w)}$ under this map is the spectral measure on $\Omega$ by \cite[Theorem 5.3]{Borodin_2001}, and the correlation function of the corresponding point process $\mathcal{C}^{(w)}$ exists by \cite[Theorem 6.1]{Borodin_2001}. Let $\mathrm{K}^{(w)}(x,y)$ denote the underlying correlation kernel. Then
\beas
\mathrm{K}^{(w)} (x',x'')
&=& \frac{1}{2\pi} \frac{\Gamma(w+1) \Gamma(\overline{w}+1)}{\Gamma(\textnormal{Re}(2w)+1) \Gamma(\textnormal{Re}(2w)+2)}
\frac{\widetilde{P}(x')Q(x'')-Q(x')\widetilde{P}(x'') }{x'-x''},
\eeas
where
\begin{align*}
	\widetilde{P}(x) =& \left|2/x \right|^{\textnormal{Re}(w)}
\mathrm{e}^{-\im/x+\pi \textnormal{Im}(w) \sgn(x)/2} {}_1F_1(w,\textnormal{Re}(2w)+1 ,2\im/x),\\
	Q(x)=&\,(2/x)\left|2/x \right|^{\textnormal{Re}(w)}
\mathrm{e}^{-\im/x+\pi \textnormal{Im}(w) \sgn(x)/2} {}_1F_1(w+1,\textnormal{Re}(2w)+2 ,2\im/x)
\end{align*}
with ${}_1F_1(a,b,z)$  the confluent hypergeometric function of the first kind.
Moreover by \cite[Theorem 2.1]{Q}, we have $\gamma_{2}=0$ for $m^{(w)}$-a.e. $X\in \mathbf{H}$. So by $(\ref{reduction1})$ and the above analysis, combining with a similar argument to that of \cite[formula (19)]{AKW}, it suffices to show that for $0<R<\infty$,
\beas\label{reduction2}
\int_{\{|x|\geq R\}}x^{2h}\mathrm{K}^{(w)} (x,x) dx<\infty,\ \ \ \ \ \ \ \int_{\{|x|< R\}} x^{2}\mathrm{K}^{(w)} (x,x) dx<\infty.
\eeas
Note that $\mathrm{K}^{(w)}(x',x'')$ remains invariant when $x',x'',w$ are replaced by $-x',-x'',\overline{w}$ (there is one more change of sign in the denominator $x'-x''$).
So without loss of the generality, we only show that
\bea\label{reduction 4}
\int_{R}^{\infty}x^{2h}\mathrm{K}^{(w)} (x,x) dx<\infty,\ \ \ \ \ \ \ \ \ \int_{0}^{R}x^{2}\mathrm{K}^{(w)} (x,x) dx<\infty.
\eea
But these follow simply from the definition of $\mathrm{K}^{(w)}(x',x'')$ and from the asymptotic formul\ae,
\beas
{}_1F_1(a,b,z)\sim 1+\frac{a}{b}z,\ \ \ \  z\rightarrow 0,
\eeas
and
\beas
{}_1F_1(a,b,z)\sim \Gamma(b)\left(\frac{\mathrm{e}^{z}z^{a-b}}{\Gamma(a)}+\frac{(-z)^{-a}}{\Gamma(b-a)}\right),\ \ \ \  |z|\rightarrow \infty\ \ \ \text{with}\ \ \ -\frac{3\pi}{2}<\arg z \leq \frac{\pi}{2},
\eeas
when combined with the change of variables $x\rightarrow 1/x$ in the integrand of \eqref{reduction 4}, we have obtained the desired. By \eqref{reduction1}, this completes the proof of the current case.\smallskip

Finally, we show that \eqref{solvedquestion} holds also for $w\in \mathbb{C}$, $h\in \mathbb{C}$ with $-1<\textnormal{Re}(2w)\leq 0$ and $0<\textnormal{Re}(2h)<\textnormal{Re}(2w)+1$. To that end recall Scheff\'e's theorem: if a sequence of the random variables $(\mathsf{Z}_{N})_{N=1}^{\infty}$ with finite expectations converges almost surely to another random variable $\mathsf{Z}$ with a finite expectation, and $\lim_{N\rightarrow \infty}\mathsf{E}[|\mathsf{Z}_{N}|]=\mathsf{E}[|\mathsf{Z}|]$, where $\mathsf{E}[\cdot]$ is the underlying expectation, then
\begin{equation*}
	 \lim_{N\rightarrow \infty}\mathsf{E}[\mathsf{Z}_{N}]=\mathsf{E}[\mathsf{Z}].
\end{equation*}
In our case we put $\mathsf{Z}_{N}=\left|\sum_{j=1}^{N}x_{j}^{(N)}/N\right|^{2h}$ and $\mathsf{Z}=\left|\mathsf{X}_{1}(w)\right|^{2h}$. Then $\mathsf{Z}_{N}$ converges almost surely to $\mathsf{Z}$. Moreover,
\beas
\E[|\mathsf{Z}_{N}|]=\E\left[\left|\frac{\sum_{j=1}^{N}x_{j}^{(N)}}{N}\right|^{2\textnormal{Re} (h)}\right]=\E_{N}^{(w)}\left[\left|\frac{\sum_{j=1}^{N}x_{j}^{(N)}}{N}\right|^{2\textnormal{Re} (h)}\right]<\infty,
\eeas
and $\E[|\mathsf{Z}|]=\E[|\mathsf{X}_{1}(w)|^{2\textnormal{Re}(h)}]<\infty$, as well as $\lim_{N\rightarrow \infty}\E[|\mathsf{Z}_{N}|]=\E[|\mathsf{Z}|]$. Consequently, by 
Scheff\'e's theorem, $$\lim_{N\rightarrow \infty}\E[\mathsf{Z}_{N}]=\E[\mathsf{Z}].$$
That is $(\ref{solvedquestion})$ holds. We have now completed the proof of Proposition \ref{fullrange}.
\end{proof}

\section{Proof of Proposition \ref{theanalyticfunction}}\label{proof of Proposition 6.3}
\label{Proof of a proposition}
\begin{proof}
Firstly, it is easy to see that for all $t\in (0,\infty)$, $\mathsf{G}_N(t) =\E_N^{(w)}[\mathrm{e}^{-\im \frac{t}{2N}\sum_{j=1}^{N}x_{j}^{(N)}}]$ by Proposition \ref{integralrepresentation} and $z\mapsto\mathsf{G}_N(z)$ is holomorphic on $\mathbb{H}_{+}$ by Morera's theorem. We now show that $\{\mathsf{G}_N(z)\}_{N=1}^{\infty}$ is uniformly bounded on any compact subsets of $\mathbb{H}_{+}$. If $K$ is such a compact subset, then there is a constant $M_K$ such that for $y\geq 0$,
\[
	\Big|\,y+\frac{z}{N}\Big| \leq y+\frac{M_K}{N}.
\]
Note that the pre-factor to the integral of right-hand side of \eqref{definitionofFN} equals
$
\frac{1}{N!\prod_{j=1}^{N}\Gamma(j)\Gamma(\textnormal{Re}(2w)+j)}>0$. Therefore, in the case $\textnormal{Re}(w)>0$, writing $\mathrm{e}^{w \log(y+\frac{z}{N})} = \mathrm{e}^{w \log|y+\frac{z}{N}| + \im w \arg (y+\frac{z}{N}) }$, it follows that for fixed $w$,
\[
|\mathsf{G}_N(z)| \leq \mathrm{e}^{\frac{\pi}{2}|\textnormal{Im}(w)|}\bigg|\mathrm{e}^{-\frac{z}{2}+\frac{1}{2}M_K}\bigg|\cdot \bigg|\E_N^{(\textnormal{Re}(w))} \bigg[\mathrm{e}^{-\im \frac{M_{k}}{2N}\sum_{j=1}^{N}x_{j}^{(N)}}\bigg]\bigg|\leq \mathrm{e}^{\frac{\pi}{2}|\textnormal{Im}(w)|}\mathrm{e}^{\frac{1}{2}M_K}.
\]
In the case $0\geq \textnormal{Re}(2w) >-1$, we note that for fixed $w$,
\[
	\bigg|\Big(y_j+\frac{z}{N}\Big)^w\bigg| \leq y_j^{\textnormal{Re}(w)}\mathrm{e}^{\frac{\pi}{2}|\textnormal{Im}(w)|}, \quad \textnormal{Re}(z) >0.
\]
So
\[
|\mathsf{G}_N(z)|\leq \mathrm{e}^{\frac{\pi}{2}|\textnormal{Im}(w)|} \Big|\E_N^{(\textnormal{Re}(w))}[1]\Big| =\mathrm{e}^{\frac{\pi}{2}|\textnormal{Im}(w)|}.
\]
By Montel's theorem, the family $\{\mathsf{G}_N(z)\}_{N=1}^{\infty}$ is normal and so every subsequence has a sub-subsequence converging uniformly on compact subsets of $\mathbb{H}_{+}$, and all these limits are holomorphic on $\mathbb{H}_+$. All these limits are analytic and take the same value on $(0,\infty)$ by (\ref{the convergence for the expectation}). Thus the sequence $(\mathsf{G}_{N}(z))_{N=1}^{\infty}$ converges uniformly on compact subsets to $\mathsf{G}(z)$, an analytic function on $\mathbb{H}_{+}$. Moreover, applying properties of uniform limits of analytic functions, any order of derivatives of $\mathsf{G}_{N}(z)$ also converges uniformly on compacts to the corresponding derivative of $\mathsf{G}(z)$.
\end{proof}

\section{Detailed proof of Proposition \ref{mainproposition1}}\label{proofofmainproposition1}
In this section, we provided the detailed proof of Proposition \ref{mainproposition1}. Before that, we need some intermediate results. We first introduce the following notation.
\begin{defn}
For $(n,m)\in\mathbb{N}^2$ with $1\leq m\leq n$, define the partition $\boldsymbol{\lambda}_{n,m}$ of $n$ to be the partition  $(n-m+1,1,\ldots,1)$ where the number of $1$'s that appear at the end is $m-1$. Moreover, for integer $q\geq 0$, define the column vector
\begin{equation*}
    \Xi_N^{(q)}[n;m]:=\left(\Xi_{N,\boldsymbol{\lambda}_{n,m}}, \Xi_{N,\boldsymbol{\lambda}_{n,m+1}},\ldots, \Xi_{N,\boldsymbol{\lambda}_{n,n}},0,\ldots,0 \right)^T,
\end{equation*}
where $q$ is the number of $0$'s that appear at the end. 
\end{defn}

The next lemma can be derived using an argument analogous to that of \cite[Theorem 17]{keating-fei}.
\begin{lem}\label{generalresult0325}
Let $k,m$ be integers with $k\geq 2$ and $0\leq m\leq k+1$.  Let $\ell\in\mathbb{Z}_{\geq 3}$ and denote the $\ell$-vectors $\mathcal{U}_n$, $n=1,2$ by 
\begin{equation*}
	(\mathcal{U}_n)_j:= 
\begin{cases}
        0,  & j=1, \\
             \noalign{\vskip4pt}
         \displaystyle\sum_{h=2}^{j} (-1)^h \mathbf{\Xi}_{N,\boldsymbol{\lambda}_{\ell-h,j+1-h},h+\alpha(n)-2} &  j=2,\ldots, \ell-1, \\ 
         \noalign{\vskip4pt}
        \mathbf{\Xi}_{N,\emptyset, \ell+\alpha(n)-2} & j=\ell,
       \end{cases}
 \end{equation*}
    with $\alpha(1)=1$ and $\alpha(2)=0$. Denote by 
${\mathcal{U}}_1^{(\omega)}
$ and ${\mathcal{U}}_2^{(\omega)}$ the vectors in $\mathbb{R}^{\ell}$ defined in the same way as $\mathcal{U}_{1}$, $\mathcal{U}_{2}$ above with $\mathbf{\Xi}$ replaced by its weighted version  $\mathbf{\Xi}^{(\omega)}$. Denote the $\ell$-vectors ${\mathcal{U}}_n^{'}$, $n=3,\ldots,k+1$ by
\begin{align*}
        (\mathcal{U}_n')_j &:= 
    \begin{cases}
         \displaystyle\sum_{h=1}^{j+m-2} (-1)^{h+m-2} \mathbf{\Xi}_{N,\boldsymbol{\lambda}_{\ell+m-2-h,j+m-1-h},h} &  j=1,\ldots, \ell-1, \\ 
         \noalign{\vskip4pt}
         \mathbf{\Xi}_{N,\emptyset, \ell+m-2} & j=\ell.
       \end{cases}
\end{align*}
Then there are matrices $\mathbf{B}^{(\ell)}$, $\mathbf{Q}_1^{(\ell)}$, $\widetilde{\mathbf{Q}}_{1}^{(\ell)}$, $\widehat{\mathbf{Q}}_{2}^{(\ell)}$, $\widetilde{\mathbf{Q}}_{2}^{(\ell)}$, $\mathbf{Q}_{n}^{(\ell)}$, $n=3,\ldots,k+1$ as in Appendix \ref{app:scripts}, independent of the parameters $v,w$, such that
\begin{align*}
\mathbf{B}^{(\ell)}\mathcal{U}_1=\mathbf{Q}_1^{(\ell)} \mathbf{\Xi}_{N}^{(0)}[\ell-1;1]\: &,\;\;\;\;  \mathbf{B}^{(\ell)}{\mathcal{U}}_1^{(\omega)}=\widetilde{\mathbf{Q}}_1^{(\ell)} \mathbf{\Xi}_{N}^{(0)}[\ell-1;1],\\
\mathbf{B}^{(\ell)}\mathcal{U}_2=\widehat{\mathbf{Q}}_2^{(\ell)} \mathbf{\Xi}_{N}^{(0)}[\ell-2;1]\: &,\;\;\;\;  \mathbf{B}^{(\ell)}{\mathcal{U}}_2^{(\omega)}=\widetilde{\mathbf{Q}}_2^{(\ell)} \mathbf{\Xi}_{N}^{(0)}[\ell-2;1],\\ \noalign{\vskip4pt}
    \mathbf{B}^{(\ell)}\mathcal{U}_n'=  \: \mathbf{Q}_n^{(\ell)} & \mathbf{\Xi}_{N}^{(0)}[\ell+n-2;1].
\end{align*}
\end{lem}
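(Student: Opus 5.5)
We prove Lemma \ref{generalresult0325} by induction on $\ell$, in the style of \cite[Theorem 17]{keating-fei}. The point is that none of the identities involves $v$ or $w$.

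\textbf{Basic identities.} Each entry of $\mathcal{U}_1,\mathcal{U}_2,\mathcal{U}_n'$ is either a shifted quantity $\mathbf{\Xi}_{N,\boldsymbol{\lambda},m}$ or its weighted analogue. By Definition \ref{shiftoperator0320}, $\mathbf{\Xi}_{N,\boldsymbol{\lambda},m}$ is the trace of the adjugate times the matrix with every column shifted by $m$. Multilinearity in columns (Jacobi's formula for the adjugate–trace) gives
\[
\mathbf{\Xi}_{N,\boldsymbol{\lambda},m}=\sum_{c}\det\big(\text{matrix with column $c$ shifted by }m\big).
\]
The right-hand side involves only index bookkeeping $i+j+\lambda_{N-j}$ and never the functional form of $\boldsymbol{\xi}_n$. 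Each summand is either zero, because two columns coincide, or $\pm$ a determinant $\mathbf{\Xi}_{N,\boldsymbol{\mu}}$ with $\boldsymbol{\mu}$ obtained from $\boldsymbol{\lambda}$ by adding a border strip of size $m$. This is the Murnaghan–Nakayama rule for Schur-type determinants. For hooks $\boldsymbol{\lambda}=\boldsymbol{\lambda}_{n,r}$ the resulting partitions are again hooks, or they arise through the alternating sums $\sum_h(-1)^h\mathbf{\Xi}_{N,\boldsymbol{\lambda}_{\cdot,\cdot},h}$ that define $\mathcal{U}$. For the weighted version the same expansion applies, with the extra linear factor $(i+j+\lambda_{N-j}+m)$ carried along.

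\textbf{Construction of the matrices.} Next I would write out the entries $(\mathcal{U}_n)_j$, $(\mathcal{U}_n')_j$ using this border-strip expansion. The alternating sums over $h$ telescope, so that each entry becomes a signed combination of the hook determinants $\mathbf{\Xi}_{N,\boldsymbol{\lambda}_{n',r}}$ with $n'=\ell-1$, $\ell-2$, or $\ell+n-2$ respectively. The integer coefficients are then collected into $\mathbf{Q}^{(\ell)}$. The matrix $\mathbf{B}^{(\ell)}$ is the lower-triangular unipotent matrix that undoes the nested alternating sums, i.e.\ the inverse of the summation operator that produces $(\mathcal{U})_j$ from the hook components. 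It depends only on $\ell$.

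\textbf{Induction on $\ell$.} The base case $\ell=3$ is an explicit check, analogous to Proposition \ref{representationshiftedby20324}. For the step from $\ell$ to $\ell+1$, I would note that $(\mathcal{U})_{j}$ at level $\ell+1$ differs from level $\ell$ by one more border strip. This gives the relations $\mathbf{B}^{(\ell+1)}=\begin{pmatrix}\mathbf{B}^{(\ell)}&0\\ *&1\end{pmatrix}$, and $\mathbf{Q}^{(\ell+1)}$ is obtained from $\mathbf{Q}^{(\ell)}$ by appending one row.

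The weighted versions $\mathcal{U}^{(\omega)}$ use the same $\mathbf{B}^{(\ell)}$. The factor $(i+j+\lambda_{N-j}+m)$ changes only the combinatorial coefficients, through a sum of shift weights along the strip, which yields the different matrices $\widetilde{\mathbf{Q}}$.

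\textbf{Main obstacle.} The hardest part will be the sign and weight bookkeeping in the border-strip expansion for the weighted determinants. There the factor $(i+j+\lambda_{N-j}+m)$ depends on the column index, so it has to be tracked through each column swap. I would first verify small cases by computer, matching the scripts referenced in Appendix \ref{app:scripts}, and then extract the general pattern.
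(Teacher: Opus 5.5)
\textbf{There is a genuine gap in the weighted identities.} In $\mathbf{\Xi}^{(\omega)}_{N,\boldsymbol{\lambda},m}$ (Definition \ref{shiftoperator0320}), the replacing column $c$ has entries $(i+L_c+m)\,\boldsymbol{\xi}_{i+L_c+m}$, where $L_c=c+\lambda_{N-c}$.

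\begin{itemize}
\item The weight depends on the \emph{row} index $i$. So this column is not a scalar multiple of the shifted column, and your Murnaghan--Nakayama column-reordering step does not apply to it.
\item The difficulty is therefore not tracking a column-dependent factor through swaps, as your ``main obstacle'' paragraph says.
\item Split $i+L_c+m=(L_c+m)+i$. The first piece is indeed the border-strip sum, weighted by the new label.
\item The second piece equals $\sum_i i\,\det(\text{base matrix } [\boldsymbol{\xi}_{i+L_j}] \text{ with row } i \text{ shifted by } m)$. This adds a strip on the row side, and it produces shapes that are not $\boldsymbol{\lambda}$ plus a border strip.
\end{itemize}

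Concretely, take $N=2$, $\boldsymbol{\lambda}=(1)$, $m=2$ and $a_n=\boldsymbol{\xi}_n$. One gets $\mathbf{\Xi}_{2,(1),2}=a_0a_5-a_1a_4=\mathbf{\Xi}_{2,(3)}$, but
\[
\mathbf{\Xi}^{(\omega)}_{2,(1),2}=5a_0a_5-4a_1a_4-a_2a_3=5\,\mathbf{\Xi}_{2,(3)}+\mathbf{\Xi}_{2,(2,1)} .
\]
In general one finds $\mathbf{\Xi}^{(\omega)}_{N,(1),2}=(2N+1)\mathbf{\Xi}_{N,(3)}+\mathbf{\Xi}_{N,(2,1)}-(2N-3)\mathbf{\Xi}_{N,(1,1,1)}$, although $(2,1)/(1)$ is not a border strip. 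This quantity already enters $(\mathcal{U}^{(\omega)}_1)_3$ for $\ell=4$.

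So your claim that the weight ``changes only the combinatorial coefficients'' is false. Extra shapes with $N$-dependent coefficients appear. For larger hooks they include non-hooks; for example, $\mathbf{\Xi}^{(\omega)}_{N,(2,1),2}$ contains $\mathbf{\Xi}_{N,(2,2,1)}$. Your sketch never shows how these cancel in $\mathbf{B}^{(\ell)}\mathcal{U}^{(\omega)}_n$. Hence the two identities involving $\widetilde{\mathbf{Q}}_1^{(\ell)}$ and $\widetilde{\mathbf{Q}}_2^{(\ell)}$ are not established by your argument.

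\textbf{Smaller issues.}
\begin{itemize}
\item The hook-only conclusion for the unweighted vectors, which you dispose of with ``the alternating sums telescope'', is the actual content of the lemma and needs an argument. One option is the Andr\'eief model, where $\mathbf{\Xi}_{N,\boldsymbol{\lambda}}$ is an average of $s_{\boldsymbol{\lambda}}$. There the sums $\sum_h(-1)^hp_hs_{(a,1^{b-h})}$ reduce to hooks via Newton's identity $\sum_{h=1}^{K}(-1)^{h-1}p_he_{K-h}=Ke_K$ together with $s_{(a,1^c)}=\sum_i(-1)^ih_{a+i}e_{c-i}$.
\item The lemma asserts the specific matrices of \cite[Appendix]{AGKW}, which are reused verbatim in Proposition \ref{prop:recursivevector}. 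An ad hoc unipotent $\mathbf{B}^{(\ell)}$ built by an unproved block induction does not show that those matrices satisfy the identities.
\end{itemize}

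\textbf{How the paper avoids this.} It observes that the derivation in \cite[Theorem 17]{keating-fei} uses only the Hankel structure. Hence those identities hold as formal polynomial identities in the sequence $(\boldsymbol{\xi}_n)$, and they transfer unchanged, matrices included, to $\boldsymbol{\xi}_n(v,w;\cdot)$. Your observation that nothing depends on $v,w$ is exactly this point, and invoking it directly is the simplest repair.
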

\begin{rem}
We note that the proof of Lemma \ref{generalresult0325} relies only on the fact that $\mathbf{\Xi}_{N}$
is a Hankel determinant and it does not depend on the specific elements in its entries. 
\end{rem}
Next, we give a linear representation for the vector $(\mathbf{\Xi}_{N,\boldsymbol{\lambda}_{\ell,1}},\ldots, \mathbf{\Xi}_{N,\boldsymbol{\lambda}_{\ell,1}})^T$ in terms of vectors formed by Hankel determinants $\mathbf{\Xi}_{N}$ shifted by all partitions of $\ell-2,\ell-1,\ell+1,\ell+2,\ldots,\ell+k-1$ of hook forms and their derivatives. All the coefficients are functions of $t_1,t_2,\ldots,t_k$.

The following conclusion follows from Lemma \ref{generalresult0325} and a similar argument to that of \cite[Proposition 4.11]{AGKW}. This result is an analogue of \cite[Proposition 4.11]{AGKW}, where the matrices $\mathbf{Q}_{0}^{(\ell)}(s)$
and $\mathbf{Q}_{2}^{(\ell)}(s)$ are replaced by $\mathbf{Q}_{0}^{(\ell)}(v,w)$ and $\mathbf{Q}_{2}^{(\ell)}(w)$, respectively, as follows. Moreover, when $v=w=s$, $\mathbf{Q}_{0}^{(\ell)}(v,w)=\mathbf{Q}_{0}^{(\ell)}(s)
$.

\begin{prop}\label{prop:recursivevector}
Let $(N,k,\ell)\in\mathbb{N}^3$ with $N\geq 1, k\geq 2, \ell\geq 3$. Then, there are matrices $\mathbf{B}^{(\ell)}$, $\mathbf{Q}_n^{(\ell)}$ with $n=0,1,\ldots,k+1$, $n\neq 0,2$, as given in Lemma \ref{generalresult0325}, and $\mathbf{Q}_{0}^{(\ell)}(v,w)$,$\mathbf{Q}_{2}^{(\ell)}(w)$ as given in Appendix \ref{app:scripts}, all of which, except $\mathbf{Q}_{2}^{(\ell)}(w)$, have all entries independent of $N$, so that:
    \begin{align*}
       \mathbf{\Xi}^{(0)}_{N}&[\ell;1]= \frac{k t_k (-1)^{k+1} \mathbf{B}^{(\ell)}}{2(t_1+t_2)} \left(-\frac{N}{2} \frac{\partial}{\partial t_1} + \frac{N}{2}\right)\mathbf{\Xi}_{N}^{(1)}[\ell+k-2;k]  \\
        &+\mathbf{B}^{(\ell)} \Bigg(\frac{N}{2} \frac{\partial}{\partial t_1} - \frac{N}{2} + \frac{N}{2(t_1+t_2)}\Bigg(
    \sum_{m=3}^k  \mathcal{D}_{m-1,m} \frac{\partial}{\partial t_{m-1}} + kt_k \frac{\partial}{\partial t_k} \Bigg) \Bigg) \mathbf{\Xi}_{N}^{(1)}[\ell-1;1]
    \\ & + \frac{1}{2(t_1+t_2)} \sum_{m=1}^{k-2}  \mathcal{D}_{m+1,m+2}  (-1)^m \mathbf{Q}_{m+2}^{(\ell)}   \mathbf{\Xi}_{N}^{(0)}[\ell+m;1]\\
    &+ \left( \frac{t_1}{t_1+t_2} \mathbf{Q}_1^{(\ell)} + \frac{N}{2(t_1+t_2)} {\mathbf{Q}}_0^{(\ell)}(v,w)  \right) \mathbf{\Xi}_{N}^{(0)}[\ell-1;1]
+ \frac{N}{2(t_1+t_2)} \mathbf{Q}_2^{(\ell)}(w) \mathbf{\Xi}_{N}^{(0)}[l-2;1] \\ &- \frac{\mathbf{B}^{(\ell)}}{2(t_1+t_2)} \sum_{m=0}^{k-3}\mathcal{D}_{m+2,m+3}(-1)^m (-\frac{N}{2} \frac{\partial}{\partial t_1} + \frac{N}{2}) \mathbf{\Xi}_{N}^{(1)}[\ell+m;m+2]\\ & + \frac{1}{2(t_1+t_2)}  (-1)^{k-1} k t_k \mathbf{Q}_{k+1}^{(l)}\mathbf{\Xi}_{N}^{(0)}[\ell+k-1;1]\\& +\frac{\mathbf{B}^{(\ell)} Nkt_{k}}{2(t_{1}+t_{2})}\sum_{h=2}^{k-1} (-1)^{k+h} \frac{\partial }{\partial t_h} \mathbf{\Xi}_{N}^{(1)}[\ell+k-1-h;k+1-h] \\& + \frac{\mathbf{B}^{(\ell)} N}{2(t_1+t_2)} \sum_{m=4}^k (-1)^{m-1} \mathcal{D}_{m-1,m} \sum_{h=2}^{m-2} (-1)^h \frac{\partial }{\partial t_h} \mathbf{\Xi}_{N}^{(1)}[\ell+m-2-h;m-h],
    \end{align*}
    where we denote
    \begin{equation*}
        \mathcal{D}_{p,q}= \mathcal{D}_{p,q}(t_1,\ldots,t_k):=pt_p-qt_q.
    \end{equation*}
\end{prop}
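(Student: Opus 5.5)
The plan is to derive the vector identity for $\mathbf{\Xi}^{(0)}_N[\ell;1]$ by applying the three-term relation \eqref{indexaddedby20321} column by column inside the hook-shifted Hankel determinants, following the route of \cite[Proposition 4.11]{AGKW}. The only new ingredient is keeping $v$ and $w$ separate. Each entry $\mathbf{\Xi}_{N,\boldsymbol{\lambda}_{\ell,j}}$ will be written, via a Laplace/Jacobi expansion along the column carrying the largest shift, as an alternating sum of terms $\mathbf{\Xi}_{N,\boldsymbol{\lambda}_{\ell-h,\cdot},h}$. This is exactly the combinatorial identity encoded in $\mathbf{B}^{(\ell)}$ acting on vectors of the type $\mathcal{U}_n$, $\mathcal{U}_n'$ in Lemma \ref{generalresult0325}, which (as remarked) uses only the Hankel structure and is independent of $v,w$. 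Concretely, the first step inverts the relation of Lemma \ref{generalresult0325} to express $\mathbf{\Xi}^{(0)}_N[\ell;1]$ as $\mathbf{B}^{(\ell)}$ applied to a vector whose entries are of the form $\mathbf{\Xi}_{N,\boldsymbol{\lambda},m}$ with $m\ge 2$.

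The second step applies Proposition \ref{generalrecursiveforshiftbypartition0325} to each such $\mathbf{\Xi}_{N,\boldsymbol{\lambda},m}$. This produces five families of terms.
\begin{itemize}
\item The $t_1/(t_1+t_2)$ term with shift $m-1$ regroups, by Lemma \ref{generalresult0325}, into $\mathbf{Q}_1^{(\ell)}\mathbf{\Xi}^{(0)}_N[\ell-1;1]$.
\item The terms $(l-1)t_{l-1}-lt_l$ and $kt_k$ with shifts $m+l-2$ and $m+k-1$ regroup through the vectors $\mathcal{U}_n'$ into the $\mathbf{Q}_{m+2}^{(\ell)}$ and $\mathbf{Q}_{k+1}^{(\ell)}$ contributions. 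The remaining pieces, where a shift-by-one or shift-by-$q$ lands on a hook partition, are converted into derivatives using Proposition \ref{aboutshiftedby10324}. For $\mathbf{\Xi}_{N,\boldsymbol{\lambda},1}$ this is $-\frac N2\partial_{t_1}+\frac N2$, and for $\mathbf{\Xi}_{N,\boldsymbol{\lambda},q}$ it is $N\partial_{t_q}$. This yields the operator terms acting on $\mathbf{\Xi}^{(1)}_N[\cdot;\cdot]$, including the double sums over $h$.
\item The $N/(2(t_1+t_2))$ terms involving the weighted $\mathbf{\Xi}^{(\omega)}$, together with the unweighted ones carrying coefficients $N(N+w-1)$ and $N(1-2N-v-w)$, are the only place where $v,w$ enter. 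Using the weighted identities of Lemma \ref{generalresult0325} ($\widetilde{\mathbf{Q}}_1,\widetilde{\mathbf{Q}}_2$), I expect the shift-$(m-1)$ pieces to combine into $N\mathbf{Q}_0^{(\ell)}(v,w)\mathbf{\Xi}^{(0)}_N[\ell-1;1]$. Here $\mathbf{Q}_0^{(\ell)}(v,w)=\widetilde{\mathbf{Q}}_1^{(\ell)}+(1-v-w)\mathbf{Q}_1^{(\ell)}$ up to the $N$-dependent piece. Similarly, the shift-$(m-2)$ pieces should combine into $N\mathbf{Q}_2^{(\ell)}(w)\mathbf{\Xi}^{(0)}_N[\ell-2;1]$ with $\mathbf{Q}_2^{(\ell)}(w)=(N+w-1)\widehat{\mathbf{Q}}_2^{(\ell)}-\widetilde{\mathbf{Q}}_2^{(\ell)}$.
\end{itemize}

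The main obstacle is this last bookkeeping. The claimed structure requires that the $-2N$ in the coefficient $N(1-2N-v-w)$ cancel against the weight contributions $(i+j+\lambda_{N-j}+m)$, where the $N$-linear part of $\widetilde{\mathbf{Q}}_1$ appears, so that $\mathbf{Q}_0^{(\ell)}(v,w)$ is $N$-independent. The $N$-dependence of $\mathbf{Q}_2^{(\ell)}(w)$, on the other hand, is allowed to survive. To check this cancellation I would compute the trace of the weights $\sum_j (2j+\lambda_{N-j})$ on the hook partitions: the sum over the $N$ columns gives a term $N\cdot\mathrm{id}$ that must match the $-2N$. I would verify this first for $\ell=3$ and compare with the $v=w=s$ case in \cite{AGKW}, where the answer is known.

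Finally, I would set $v=w$ to check that $\mathbf{Q}_0^{(\ell)}(v,w)$ reduces to $\mathbf{Q}_0^{(\ell)}(s)$, which confirms consistency.
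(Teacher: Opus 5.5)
Your plan is the same as the paper's, which obtains this statement simply by combining Lemma \ref{generalresult0325} with the argument of \cite[Proposition 4.11]{AGKW}, and your identifications $\mathbf{Q}_0^{(\ell)}(v,w)=(1-2N-v-w)\mathbf{Q}_1^{(\ell)}+\widetilde{\mathbf{Q}}_1^{(\ell)}$ and $\mathbf{Q}_2^{(\ell)}(w)=(N+w-1)\widehat{\mathbf{Q}}_2^{(\ell)}-\widetilde{\mathbf{Q}}_2^{(\ell)}$ are exactly what yield the Appendix formulas (one small slip: the starting vector also carries shift-one entries, not only $m\ge 2$, and these produce the $\mathbf{B}^{(\ell)}(\frac N2\partial_{t_1}-\frac N2)\mathbf{\Xi}_N^{(1)}[\ell-1;1]$ term). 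Your ``main obstacle'' needs no weight computation, because the matrices of Lemma \ref{generalresult0325} do not depend on $v,w$: the $N$-dependent part of $(1-2N-v-w)\mathbf{Q}_1^{(\ell)}+\widetilde{\mathbf{Q}}_1^{(\ell)}$ is therefore the same as in the case $v=w=s$, where it vanishes since the $\mathbf{Q}_0^{(\ell)}(s)$ of \cite{AGKW} is $N$-independent, so $\mathbf{Q}_0^{(\ell)}(v,w)$ is just $\mathbf{Q}_0^{(\ell)}(s)$ with $2s$ replaced by $v+w$ and $\mathbf{Q}_2^{(\ell)}(w)$ is $\mathbf{Q}_2^{(\ell)}(s)$ with $s$ replaced by $w$; note also that your trace $\sum_j(2j+\lambda_{N-j})$ is the shift-zero quantity feeding $\widetilde{\mathbf{Q}}_2^{(\ell)}$, not $\widetilde{\mathbf{Q}}_1^{(\ell)}$.
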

Now for the $t_{q}$-derivatives of  $\mathbf{\Xi}_{N}(v,w; t_{1},\ldots,t_{k})$, we shall expand them as a power series in $t_{q},\ldots,t_{k}$, for $q=2,\ldots,k$. The basic idea is to iteratively use Proposition \ref{generalrecursiveforshiftbypartition0325}. Before presenting the statement, let us introduce a two parameter ($v$,$w$)-version of $\mathsf{D}_{i}$ as given in \cite[Section 4.3]{AGKW}. Note that they coincide when $v=w$. Define
\begin{align*}
  &\mathsf{Z}_{i}(v,w;t_{1},\ldots, t_{k})\\
  :=&N\left\{(N+w-1) \mathbf{\Xi}_{N,\emptyset,i}-\mathbf{\Xi}^{(\omega)}_{N,\emptyset,i} +(1-2N-v-w)\mathbf{\Xi}_{N,\emptyset, i+1}\nonumber
+\mathbf{\Xi}^{(\omega)}_{N,\emptyset, i+1}+2\frac{t_1}{N}\mathbf{\Xi}_{N,\emptyset,i+1}\right\},
\end{align*}
where the parameters $(v,w)$ in the above are the same parameter that are used to define $\mathbf{\Xi}_{N,\boldsymbol{\lambda},m}$ and $\mathbf{\Xi}_{N,\boldsymbol{\lambda},m}^{(\omega)}$. Sometimes we write $\mathsf{Z}_{i}(v,w)$ for  brevity. Our goal is to study the derivatives of these functions with respect to $t_{2},\ldots,t_{k}$ evaluated at $t_{2}=\cdots=t_{k}=0$. Thus we define
\bea\label{definitionofDi}
\mathsf{Z}_{i}^{(n_{2},\ldots,n_{k})}(v,w;t_{1})
:=\frac{\partial^{n_{2}}}{\partial t_{2}^{n_{2}}}\cdots\frac{\partial^{n_{k}}}{\partial t_{k}^{n_{k}}}\mathsf{Z}_{i}(v,w;t_{1},\ldots,t_{k})\Big|_{t_{2}=\ldots=t_{k}=0},
\eea
\bea\label{definitionofpsi1}
\mathcal{S}_{N}^{(i_{2},n_{3},\ldots,n_{k})}(v,w;t_{1}):= \frac{\partial^{i_{2}}}{\partial t_{2}^{i_{2}}}\frac{\partial^{n_{3}}}{\partial t_{3}^{n_{3}}}\cdots\frac{\partial^{n_{k}}}{\partial t_{k}^{n_{k}}}\mathbf{\Xi}_{N}(v,w;t_{1},\ldots,t_{k})\Bigg|_{t_{2}=\ldots=t_{k}=0},
\eea
and
\bea\label{definitionofpsi2}
\mathcal{S}_{N,\ell,q}^{(i_{2},n_{3},\ldots,n_{k})}(v,w;t_{1}):= \frac{\partial^{i_{2}}}{\partial t_{2}^{i_{2}}}\frac{\partial^{n_{3}}}{\partial t_{3}^{n_{3}}}\cdots\frac{\partial^{n_{k}}}{\partial t_{k}^{n_{k}}}\mathbf{\Xi}_{N,\boldsymbol{\lambda}_{\ell,q}}(v,w;t_{1},\ldots,t_{k})\Bigg|_{t_{2}=\ldots=t_{k}=0}
\eea
with $\ell\geq 3$ and $1\leq q\leq 3$. When there is no risk of confusion, we use the notation
$\mathsf{Z}_{i}^{(n_{2},\ldots,n_{k})}(v,w)$, $\mathcal{S}_{N}^{(i_{2},n_{3},\ldots,n_{k})}(v,w)$ and $\mathcal{S}_{N,\ell,q}^{(i_{2},n_{3},\ldots,n_{k})}(v,w)$ as shorthands. We are now finally ready to prove Proposition \ref{mainproposition1}.
%
%
%
%
%
%
%
\begin{proof}[Proof of Proposition \ref{mainproposition1}]
After taking partial derivatives on both sides of the expression in Proposition \ref{prop:recursivevector}, identity \eqref{partialderivativewithrespecttot1}, and the expressions of $\mathbf{\Xi}_{N,\boldsymbol{\lambda}_{2,1}}$ and $\mathbf{\Xi}_{N,\boldsymbol{\lambda}_{2,2}}$ in Proposition \ref{representationshiftedby20324} with respect to $t_{2},\ldots,t_{k}$ and let $t_{2}=\cdots=t_{k}=0$, we have an analogue of formula (59) in \cite[Proposition 4.13]{AGKW}. Here the slight difference is that we set $\mathcal{L}_{N,\ell,q}^{(i_{2},n_{3},\ldots,n_{k})}(t_{1})$ to $\mathcal{S}_{N,\ell,q}^{(i_{2},n_{3},\ldots,n_{k})}(v,w;t_{1})$, as defined in (\ref{definitionofpsi2}), and set $\mathbf{Q}_{0}^{(\ell)}$ and $\mathbf{Q}_{2}^{(\ell)}$ to $\mathbf{Q}_{0}^{(\ell)}(v,w)$ and $\mathbf{Q}_{2}^{(\ell)}(w)$, respectively, as given in Appendix \ref{app:scripts}. By iteratively applying Proposition \ref{generalrecursiveforshiftbypartition0325}, we expand $\partial{\mathbf{\Xi}_{N}(v,w; t_{1},\ldots,t_{k})}/{\partial  t_{q}}$ as a power series in the variables $t_{q},\ldots,t_{k}$, which is almost identical to the statement in \cite[Lemma 4.14]{AGKW}, except that we replace $\mathbf{\Psi}_N(t_{1},\ldots,t_{k})$ and $\mathsf{D}_{i}(t_{1},\ldots,t_{k})$ with their two-parameter versions, that is, with $\mathbf{\Xi}_N(v,w;t_{1},\ldots,t_{k})$ and $\mathsf{Z}_{i}(v,w;t_{1},\ldots,t_{k})$, respectively. Hence, we obtain an analogue of the expression in \cite[Proposition 4.15]{AGKW} by setting $
\mathsf{D}_{i}^{(n_{2},\ldots,n_{k})}(t_{1})$ to $
\mathsf{Z}_{i}^{(n_{2},\ldots,n_{k})}(v,w; t_{1})$, as defined in (\ref{definitionofDi}). Then by \cite[Lemma 4.17]{AGKW} and the recursive process described in the proof of \cite[Theorem 1.9]{AGKW}, we have 
\begin{equation}\label{defofmathfrakS0325}
\mathcal{S}_{N}^{(n_2,\ldots, n_k)}(v,w;t_{1})
=\frac{1}{t_{1}^{\sum_{q=2}^{k}qn_{q}-1}}
\sum_{m=0}^{\sum_{q=2}^{k}(q-1)n_{q}}
t_1^{m-1}
P_{m}^{(N,v,w)}(t_{1})\frac{\mathrm{d}^{m}}{\mathrm{d}t_{1}^{m}}\mathbf{\Xi}_{N}(v,w;t_{1},0,\ldots,0),
\end{equation}
for polynomials $P_{m}^{(N,v,w)}$ as described in the statement of this proposition when $t_{1}>0$. Note that the left hand-side of the above equation is continuous at $t_{1}=0$, so the right-hand side has a limit when $t_{1}\rightarrow 0$. Then the above equation holds for $t_{1}\geq 0$. It is easy to check that equation (\ref{defofmathfrakS0325}) is equivalent to equation (\ref{mainstructure1}). 
Moreover, as in the previous analysis, when $v=w=s\in \mathbb{R}$, all extensions of the notation involving the new parameters $v$ and $w$ coincide with the corresponding notation used in the proof of \cite[Theorem 1.9]{AGKW}. Consequently, $P_{m}^{(N,v,w)}$ is reduced to $P_{m}^{(s,N)}$ as stated in \cite[Theorem 1.9]{AGKW}. The proof is complete.
\end{proof}

\section{Explicit forms of some matrices}
\label{app:scripts}
As mentioned previously, here we give explicit formulae for the matrices that were used in the proof of proposition \ref{mainproposition1}. The matrices 
$\mathbf{B}^{(\ell)}$, $\mathbf{Q}_n^{(\ell)}$ with $n=0,1,\ldots,k+1$, $n\neq 0,2$, and
$\widetilde{\mathbf{Q}}_{1}^{(\ell)}$,
$\widehat{\mathbf{Q}}_{2}^{(\ell)}$,
$\widetilde{\mathbf{Q}}_{2}^{(\ell)}$  are the same as those in \cite[Appendix]{AGKW}. The matrices $\mathbf{Q}_{0}^{(\ell)}:=\mathbf{Q}_{0}^{(\ell)}(v,w)$ and $\mathbf{Q}_{2}^{(\ell)}:=\mathbf{Q}_{2}^{(\ell)}(w)$ are given as follows: for $i=1,\ldots,\ell$, $j=1,\ldots,\ell-1$, we define
\begin{align*}
 \left(\mathbf{Q}_{0}^{(\ell)}(v,w)\right)_{ij}&=
\begin{cases}\label{definitionofC1}
(-1)^{i+j}\frac{j(\ell-j-2)+1-v-w}{j(j+1)}& \text{if } i\leq j\leq \ell-1;\\
\frac{j(j+v+w)-\ell+1}{j+1}& \text{if } j=i-1; \\
  0 & \text{if } j<i-1.
\end{cases}
\end{align*}
And for $i=1,\ldots,\ell$, $j=1,\ldots,\ell-2$, we define
\begin{align*}
\left(\mathbf{Q}_{2}^{(\ell)}(w)\right)_{ij}  &= \begin{cases}
(-1)^{i+j}\frac{(w-2)N+j(j+3)-(j+2)(\ell-2)+2(w-1)}{(j+1)(j+2)}  & \text{if }  i-1\leq j \leq \ell-2; \\
\frac{(j+w)(j-N)}{j+2} & \text{if } j=i-2; \\
0 & \text{if } 1\leq j <  i-2.
\end{cases}
\end{align*}



\end{document}